\definecolor{lightred}{RGB}{255,127,127}
\definecolor{lightgreen}{RGB}{127,255,127}
\definecolor{lightblue}{RGB}{127,127,255}
\definecolor{linkcolor}{rgb}{0,0,0.6}
\numberwithin{equation}{section}
\theoremstyle{plain}
\newtheorem{theorem}{Theorem}[section]
\newtheorem{lemma}[theorem]{Lemma}
\newtheorem{proposition}[theorem]{Proposition}
\newtheorem{remarkx}{Remark}
\newcommand{\vp}{\varphi}
\newcommand{\noi}{\noindent}
\newcommand{\p}{\partial}
\newcommand{\g}{\mathfrak{g}}
\newcommand{\ti}[1]{_{\bm{\underline{#1}}}}
\newcommand{\dd}{\text{d}}
\newcommand{\lt}{\bm{\ell}}
\newcommand{\Tc}{\mathcal{T}}
\newcommand{\ls}[2]{\ell^{\hspace{1pt}#1}_{#2}}
\newcommand{\lb}[2]{\ell^{\hspace{1pt}(#1)}_{#2}}
\newcommand{\kb}[2]{\eta^{\hspace{1pt}#1}_{#2}}
\newcommand{\mb}[2]{\mu^{\hspace{1pt}#1}_{#2}}
\newcommand{\C}{\mathbb{C}}
\newcommand{\R}{\mathbb{R}}
\newcommand{\D}{\mathbb{D}}
\newcommand{\Z}{\mathbb{Z}}
\newcommand{\Ac}{\mathcal{A}}
\newcommand{\rd}{\mathtt{r}}
\newcommand{\cd}{\mathtt{c}}
\newcommand{\Pc}{\mathcal{P}}
\newcommand{\Id}{\text{Id}}
\newcommand{\Sg}{\Gamma}
\newcommand{\Si}{\Sigma}
\newcommand{\Sib}{\overline{\Sigma\rule{0pt}{3mm}}}
\newcommand{\Jt}[2]{J^{#1}_{[#2]}}
\newcommand{\J}[2]{\mathcal{J}^{#1}_{[#2]}}
\newcommand{\s}{\sigma}
\newcommand{\po}{z}
\newcommand{\pb}{\bm{z}}
\newcommand{\Rc}{\mathcal{R}}
\newcommand{\Lc}{\mathcal{L}}
\newcommand{\Dc}[2]{\mathcal{D}^{#1}_{[#2]}}
\newcommand{\Hc}{\mathcal{H}}
\newcommand{\Mc}{\mathcal{M}}
\newcommand{\Mcs}[3]{\mathcal{M}(#1\hspace{1.4pt};\hspace{0.4pt}#2,#3)}
\newcommand{\Gc}{K}
\newcommand{\Kc}{\mathcal{K}}
\newcommand{\ze}{\zeta}
\newcommand{\Q}{\mathcal{Q}}
\newcommand{\eb}{\bm{\epsilon}}
\newcommand{\lat}{\widetilde{z}}
\newcommand{\vpt}{\widetilde{\vp}}
\newcommand{\pot}{\widetilde{\po}}
\newcommand{\zet}{\widetilde{\ze}}
\newcommand{\ltt}[1]{\bm{\ell}_{#1}}
\newcommand{\ebb}[1]{\bm{\epsilon}_{#1}}
\newcommand{\W}[1]{I_{\text{W}\hspace{-1pt}\text{Z}}\left[#1\right]}
\newcommand{\Ww}[1]{I_{\text{W}\hspace{-1pt}\text{Z}}\bigl[#1\bigr]}
\newcommand{\Te}[2]{T^{#1}_{{\color{white}#1} #2}}
\newcommand{\Og}{\mathcal{A}_{G_0}}
\newcommand{\Oga}{\mathcal{A}_{\mathfrak{g}_0}}
\newcommand{\kt}{\widetilde{\kay}}
\newcommand{\lc}[1]{\ell_{#1}}
\newcommand{\kc}[1]{\kay_{#1}}
\newcommand{\gb}[1]{g^{(#1)}}
\newcommand{\Xb}[1]{X^{(#1)}}
\newcommand{\jb}[1]{j^{(#1)}}
\newcommand{\Jb}[1]{J^{(#1)}}
\newcommand{\Wb}[1]{W^{(#1)}}
\newcommand{\Rb}[1]{R^{(#1)}}
\newcommand{\TG}{T^\ast G_0}
\newcommand{\Ad}{\text{Ad}}
\newcommand{\ad}{\text{ad}}
\newcommand{\hYB}{\text{hYB}}
\newcommand{\iYB}{\text{iYB}}
\newcommand{\Lch}{\mathcal{L}^{\text{hYB}}}
\newcommand{\Mch}{\mathcal{M}^{\text{hYB}}}
\newcommand{\vppm}[1]{\varphi_{\pm,#1}}
\newcommand{\vpmp}[1]{\varphi_{\mp,#1}}
\newcommand{\vpp}[1]{\varphi_{+,#1}}
\newcommand{\vpm}[1]{\varphi_{-,#1}}
\newcommand{\m}{\mathrm{m}}
\newcommand{\Oc}[2]{\mathcal{O}_{#1 #2}}
\newcommand{\Ot}[2]{\widetilde{\mathcal{O}}_{#1 #2}}
\newcommand{\gammah}{\hat{\gamma}}
\newcommand{\vt}{\vartheta}
\newcommand{\vd}{\mathrm{v}}
\def\beqz{\begin{equation*}}
\def\eeqz{\end{equation*}}
\def\ha{\mbox{\small $\frac{1}{2}$}}
\DeclareSymbolFont{stixsymbols}{LS1}{stixscr}{m}{n}
\DeclareMathSymbol{\kay}{\mathalpha}{stixsymbols}{"6B}
\def\res{\mathop{\text{res}\,}}
\definecolor{myGreen}{rgb}{0.0,0.4,0.0}
\let\@keywords\@empty
\let\@subject\@empty
\providecommand{\keywords}[1]{\gdef\@keywords{#1}}
\providecommand{\subject}[1]{\gdef\@subject{#1}}
\def\thetitle{\@title}
\def\theauthor{\@author}
\def\thesubject{\@subject}
\def\thedate{\@date}
\def\thekeywords{\@keywords}
\title{Assembling integrable \texorpdfstring{$\bm\sigma$-models}{sigma-models} as affine Gaudin models}
\author{F. Delduc, S. Lacroix, M. Magro, B. Vicedo}
\begin{document}

\begin{flushright}
[ZMP-HH/19-4]
\end{flushright}

\begin{center}

\vspace*{2cm}

\begingroup\Large\bfseries\thetitle\par\endgroup

\vspace{1.5cm}

\begingroup
F. Delduc$^{a,}$\footnote{E-mail:~francois.delduc@ens-lyon.fr},
S. Lacroix$^{b,}$\footnote{E-mail:~sylvain.lacroix@desy.de},
M. Magro$^{a,}$\footnote{E-mail:~marc.magro@ens-lyon.fr},
B. Vicedo$^{c,}$\footnote{E-mail:~benoit.vicedo@gmail.com}
\endgroup

\vspace{1cm}

\begingroup
$^a$\it Univ Lyon, Ens de Lyon, Univ Claude Bernard, CNRS, Laboratoire de Physique,
\\
F-69342 Lyon, France,\\

\smallskip

$^b$\it II. Institut f\"ur Theoretische Physik, Universit\"at Hamburg,
\\
Luruper Chaussee 149, 22761 Hamburg, Germany
\\
Zentrum f\"ur Mathematische Physik, Universit\"at Hamburg, \\
Bundesstrasse 55, 20146 Hamburg, Germany
\\

\smallskip

$^c$\it Department of Mathematics, University of York, York YO10 5DD, U.K.\\
\endgroup

\end{center}

\vspace{2cm}

\begin{abstract}
We explain how to obtain new classical integrable field theories 
by assembling two affine Gaudin models into a single one.
We show that the resulting affine Gaudin model depends  
on a parameter $\gamma$ in such 
a way that the limit $\gamma \to 0$ corresponds to the decoupling limit.  
Simple conditions ensuring 
Lorentz invariance are also presented. A first application 
of this method for $\sigma$-models leads to the action 
announced in \cite{Delduc:2018hty} and 
which couples an arbitrary number $N$ of principal chiral model 
fields on the same Lie group, each with a Wess-Zumino term. The affine Gaudin model descriptions of various 
integrable $\sigma$-models that can be 
used as elementary building blocks in the assembling construction are then given. 
This is in particular used in a second application of the method which 
consists in assembling  $N - 1$ copies of the 
principal chiral model each with a Wess-Zumino term and one homogeneous 
Yang-Baxter deformation of the principal chiral model. 

\end{abstract}

\newpage

\setcounter{tocdepth}{2}
\tableofcontents

\section{Introduction}

Constructing classical field theories which are integrable is difficult. Indeed, the property of integrability is established at the Lagrangian level if one is able to find a Lax connection whose zero curvature equation is equivalent to the equations of motion derived from the given action. This ensures, in particular, the existence of infinitely many integrals of motion in the theory. Yet finding such a Lax connection is somewhat of an art. Furthermore, in order to establish integrability at the Hamiltonian level one also needs to prove the existence of an infinite number of integrals of motion in involution. This is guaranteed if the Poisson bracket of the Lax matrix can be brought to a certain form. But doing so is also, in and of itself, an arduous task.

Faced with the difficulty of finding new integrable field theories from scratch, one approach to exploring the landscape of all possible Hamiltonian integrable field theories is to build new ones from old ones. One way of doing so is by deforming. For a very broad class of integrable field theories, the Poisson bracket of the Lax matrix is controlled by a certain rational function $\varphi(z)$ of the spectral parameter $z$, called the twist function. As shown in \cite{Delduc:2013fga, Vicedo:2015pna}, this twist function can be used at the Hamiltonian level to deform the theory in question, all the while preserving integrability. We may then perform an inverse Legendre transform to construct the actions of the resulting deformed field theories, which by construction will automatically be integrable.

It was recently shown by one of us in \cite{Vicedo:2017cge} that classical integrable field theories admitting a twist function can be seen as realisations of classical Gaudin models associated with untwisted affine Kac-Moody algebras. In fact, all of the examples considered in \cite{Vicedo:2017cge}, which includes many integrable $\sigma$-models and their deformations, essentially correspond to affine Gaudin models with a single site. However, since the number of sites in an affine Gaudin model can be arbitrary, this opens up another avenue for constructing new integrable field theories from existing ones.

Indeed, the procedure detailed in the present article consists, roughly speaking, in taking existing integrable field theories which admit an affine Gaudin model description and `placing' them at the individual sites of an $N$-site affine Gaudin model. This results in a new integrable field theory which couples these individual theories together. In fact, the strength of the coupling will be determined by the relative location of each site in the complex plane. Finally, since affine Gaudin models are intrinsically formulated at the Hamiltonian level, to complete the procedure one has to perform an inverse Legendre transform in order to construct the action for this new model. In \cite{Delduc:2018hty} we presented the result of applying such a procedure to couple together $N$ principal chiral fields on the same Lie group, each with a Wess-Zumino term.

\medskip

To motivate why building an $N$-site affine Gaudin model in this way corresponds to coupling together the individual field theories, let us consider here the simpler situation of a Gaudin model associated with a finite-dimensional complex semisimple Lie algebra $\g$ \cite{Gaudin_book83}. Let $u_i$ for $i =1, \ldots, N$ be a collection of $N$ distinct complex numbers. Let $I^{a(i)}$ for $a = 1, \ldots, \dim \g$ denote the copy of a basis of the Lie algebra $\g$ that we think of as being formally `attached' to the point $u_i$, for $i = 1, \ldots, N$. One may regard the basis elements $I^{a(i)}$, $i = 1, \ldots, N$ of the $N$ copies of $\g$ as the degrees of freedom of the model. They satisfy the Kostant-Kirillov Poisson bracket
\begin{equation} \label{KK bracket}
\{ I^{a(i)}, I^{b(j)} \} = \delta_{ij} f^{ab}_{\quad c} I^{c(i)}.
\end{equation}
Also let $I_a^{(i)}$ for $a = 1, \ldots, \dim \g$ be the dual basis of $I^{a(i)}$ with respect to $\kappa(\cdot, \cdot)$ defined as the opposite of the Killing form on $\g$. The quadratic Poisson commuting Hamiltonians of the Gaudin model are then given by
\begin{equation} \label{quad Ham intro pre}
H_i = \sum_{\substack{k=1\\ k \neq i}}^N \frac{I^{a(i)} I_a^{(k)}}{u_i - u_k}
\end{equation}
for $i=1, \ldots, N$, where the sum over the repeated Lie algebra index $a = 1, \ldots, \dim \g$ is implicit. The basis elements $I^{a(i)}$ may be packaged together into 
$\g$-valued observables $J^{(i)} \coloneqq I_a \otimes I^{a(i)}$ attached to each site $u_i$ for $i = 1, \ldots, N$, and in terms of which the Lax matrix takes the form
\begin{equation} \label{Lax fin intro}
L(z) = \sum_{i=1}^N \frac{J^{(i)}}{z - u_i}.
\end{equation}
We may then express the quadratic Gaudin Hamiltonians \eqref{quad Ham intro pre} as
\begin{equation} \label{Hi from L}
H_i = \ha \textup{res}_{u_i} \kappa(L(z), L(z)) dz.
\end{equation}
A finite-dimensional integrable system is said to be a realisation of this Gaudin model if there is a morphism from the algebra of observables of the Gaudin model to that of the given integrable system. Moreover, the Hamiltonian of the integrable system should coincide, in the simplest case, with the image of some linear combination of the Hamiltonians \eqref{quad Ham intro pre} under this morphism. Concretely, the existence of such a morphism means that the integrable system should admit a Lax matrix of the same form as \eqref{Lax fin intro} and satisfying the same Poisson bracket relations as \eqref{Lax fin intro} does. In particular, the $\g$-valued observables $J^{(i)}$ and $J^{(j)}$ should commute for $i \neq j$.

Suppose now that we break up the collection of $N$ sites $\{ u_i \}_{i=1}^N$ into two subsets of sizes $N_1$ and $N_2$, respectively, which without loss of generality we take to be $\{ u_i \}_{i=1}^{N_1}$ and $\{ u_{j+N_1} \}_{j=1}^{N_2}$. Fixing a non-zero complex number $\gamma \neq 0$, we then choose to parametrise these two subsets of points as $u_i = z_i - \ha \gamma^{-1}$, $i =1, \ldots, N_1$ and $u_{j + N_1} = \hat z_j + \ha \gamma^{-1}$, $j = 1, \ldots, N_2$. For convenience we also use the notation $\hat{I}^{a(j)} \coloneqq I^{a (j+N_1)}$ and $\hat{I}_a^{(j)} \coloneqq I_a^{(j+N_1)}$ for $j = 1, \ldots, N_2$. The collection of Hamiltonians \eqref{quad Ham intro pre} then naturally split into two subsets as well, namely
\begin{subequations} \label{quad Ham intro}
\begin{align}
H_i &= \sum_{\substack{k=1\\ k \neq i}}^{N_1} \frac{I^{a(i)} I_a^{(k)}}{z_i - z_k} + \sum_{j=1}^{N_2} \frac{I^{a(i)} \hat I_a^{(j)}}{z_i - \hat z_j - \gamma^{-1}}, \\
\hat H_j \coloneqq H_{j+N_1} &= \sum_{\substack{k=1\\ k \neq j}}^{N_2} \frac{\hat I^{a(j)} \hat I_a^{(k)}}{\hat z_j - \hat z_k} + \sum_{i=1}^{N_1} \frac{\hat I^{a(j)} I_a^{(i)}}{\hat z_j - z_i + \gamma^{-1}},
\end{align}
\end{subequations}
for every $i = 1, \ldots, N_1$ and $j = 1, \ldots, N_2$. We now observe that the first sums in each of the above expressions \eqref{quad Ham intro} is just the usual expression for the Gaudin Hamiltonians, as in \eqref{quad Ham intro pre}, but corresponding to one of the two subsets of points.
Moreover, the second sums on the right hand sides of \eqref{quad Ham intro} describe interaction terms between the degrees of freedom attached to the first $N_1$ sites and those attached to the last $N_2$ sites. It is clear from \eqref{quad Ham intro} that one can decouple these two sets of degrees of freedom by letting the `coupling' $\gamma$ go to zero. This corresponds to moving apart the first $N_1$ sites from the last $N_2$ sites while keeping all the relative positions within these two sets of sites fixed. In this way one can view the Gaudin model with $N = N_1 + N_2$ sites as the coupling of two separate Gaudin models, one with $N_1$ sites and the other with $N_2$ sites. It is important to note that all three Gaudin models are associated with the same Lie algebra $\g$.

In fact, by reversing the logic of the previous paragraphs, we see that the coupling of finite-dimensional integrable systems which are realisations of Gaudin models associated with the same Lie algebra $\g$ is easy to perform at the level of the Lax matrix. Starting with the individual Lax matrices for both models, namely
\begin{equation} \label{Lax fin decoupled intro}
L_1(z) = \sum_{i=1}^{N_1} \frac{J_1^{(i)}}{z - z_i} \qquad \textup{and} \qquad
L_2(z) = \sum_{i=1}^{N_2} \frac{J_2^{(i)}}{z - \hat z_i},
\end{equation}
where $J_1^{(i)} = I_a \otimes I^{a(i)}$ for $i = 1, \ldots, N_1$ and $J_2^{(i)} = I_a \otimes \hat I^{a(i)}$ for $i = 1, \ldots, N_2$, the Lax matrix of the coupled system in \eqref{Lax fin intro} is obtained simply by adding the Lax matrices \eqref{Lax fin decoupled intro} together after suitably shifting their spectral parameters by some amount of the inverse $\gamma^{-1}$ of the coupling parameter $\gamma$. Explicitly, we have
\begin{equation} \label{adding Lax}
L(z) = L_1\big( z + \ha \gamma^{-1} \big) + L_2\big( z - \ha \gamma^{-1} \big).
\end{equation}
The Hamiltonian of the coupled system, which includes all the interaction terms as in \eqref{quad Ham intro}, is then constructed from this Lax matrix in the usual way as a linear combination of \eqref{Hi from L}.

\medskip

We now return to the field theory setting. Let $\widetilde{\g}$ be the untwisted affine Kac-Moody algebra corresponding to $\g$. It is useful to start by briefly recalling how classical integrable field theories which admit a twist function can be viewed as Gaudin models associated with $\widetilde{\g}$, referring the reader to \cite{Vicedo:2017cge} for the details.

The Gaudin model associated with the infinite dimensional Lie algebra $\widetilde{\g}$ is defined in exactly the same way as in the finite-dimensional case recalled above, replacing $\g$ everywhere by $\widetilde{\g}$. The only subtlety is that, in doing so, the implicit finite sums over $a = 1, \ldots, \dim \g$ are now replaced by infinite sums over an index $\widetilde{a}$ labelling a basis of $\widetilde{\g}$, which one has to make sense of by working in a suitable completion of the algebra of observables. See \cite{Vicedo:2017cge} for a further discussion of this technical point which we shall not consider any further here. In particular, the Lax matrix of the affine Gaudin model takes on the exact same form as in \eqref{Lax fin intro} but where now the $\widetilde{\g}$-valued observables
\begin{equation*}
J^{(i)} = I_{\widetilde{a}} \otimes I^{\widetilde{a}(i)} = {\mathsf d} \otimes {\mathsf k}^{(i)} + {\mathsf k} \otimes {\mathsf d}^{(i)} + \sum_{n \in \mathbb{Z}} I_{a, -n} \otimes I_n^{a(i)}
\end{equation*}
are expressed in terms of dual bases of $\widetilde{\g}$. Here $I^a_n \coloneqq I^a \otimes t^n \in \widetilde{\g}$ and $I_{a, -n} \coloneqq I_a \otimes t^{-n} \in \widetilde{\g}$, while $\mathsf k$ is the central element of $\widetilde{\g}$ and $\mathsf d$ is the derivation element corresponding to the homogeneous gradation of $\widetilde{\g}$. In this article we focus on the local realisation of affine Gaudin models, using the terminology of \cite{Vicedo:2017cge}, whereby the first tensor factor of $J^{(i)}$ is realised in terms of $\g$-valued connections on the circle and the central elements ${\mathsf k}^{(i)}$ are realised as complex numbers ${\rm i} \ell^i$, called the levels. Explicitly, under this realisation we have
\begin{equation*}
J^{(i)} \longmapsto \ell^i \partial_x + J^{(i)}(x),
\end{equation*}
where $J^{(i)}(x) \coloneqq I_a \otimes J^{a (i)}(x)$ is a $\g$-valued field on the circle with $J^{a(i)}(x) \coloneqq \sum_{n \in \mathbb{Z}} I_n^{a(i)} e^{-i n x}$. The Kostant-Kirillov bracket \eqref{KK bracket} for the infinite basis $I^{\widetilde{a}(i)}$ translates to the statement that the $J^{(i)}(x)$ are pairwise Poisson commuting Kac-Moody currents with levels $\ell^i$. The Lax matrix of the local affine Gaudin model thus takes the form $\varphi(z) \partial_x + \Gamma(z, x)$ where
\begin{equation*}
\varphi(z) = \sum_{i=1}^N \frac{\ell^i}{z - u_i} \qquad \textup{and} \qquad
\Gamma(z, x) = \sum_{i=1}^N \frac{J^{(i)}(x)}{z - u_i}.
\end{equation*}
The former is precisely the twist function of the affine Gaudin model and we refer to the latter as the Gaudin Lax matrix.

Among the many possible generalisations of the Gaudin model, one of particular interest for the present article involves including higher order poles at each $u_i$, $i=1, \ldots, N$ in the Lax matrix \eqref{Lax fin intro}, or also higher order singularities at infinity by introducing polynomial terms. In fact, all the field theories we shall consider correspond to Gaudin models with the Lax matrices having at most double poles at each $u_i$, $i = 1, \ldots, N$ and at infinity. Such Lax matrices can then also be rewritten in the form $\varphi(z) \partial_x + \Gamma(z, x)$ to read off the twist function and Gaudin Lax matrix.

So far the presentation we gave of Gaudin models in the affine setting has been completely parallel to the one given earlier in the finite case. However, in order to connect affine Gaudin models with integrable field theories we must now deviate slightly from the finite setting. Indeed, the Lax matrix $\mathscr L(z, x)$ of a classical integrable field theory is naturally thought of as the spatial component of a connection $\partial_x + \mathscr L(z, x)$ on the circle. For this reason, we say that an integrable field theory is a realisation of a local affine Gaudin model if its Lax matrix is of the form
\begin{equation} \label{Lax vs Gaudin}
\mathscr L(z, x) = \varphi(z)^{-1} \Gamma(z, x),
\end{equation}
in which the $\g$-valued fields such as $J^{(i)}(x)$ are expressed in terms of the observables of the theory, and if its twist function is given by $\varphi(z)$. Moreover, the Hamiltonian of the field theory should be expressible as a linear combination of the Hamiltonians of the local affine Gaudin model, just as in the finite-dimensional setting. It is clear from the relation \eqref{Lax vs Gaudin} and the common pole structure of the twist function and Gaudin Lax matrix that the zeroes of $\varphi(z)$ coincide with the poles of $\mathscr L(z, x)$. In fact, it turns out that these zeroes, which we assume are all simple and call $\zeta_i$ for $i =1, \ldots, M$, determine a particularly nice set of Hamiltonians for the local affine Gaudin model defined by the densities
\begin{subequations} \label{densities qi intro}
\begin{equation} \label{densities qi intro 1}
q_i(x) \coloneqq - \textup{res}_{\zeta_i} \ha \varphi(z)^{-1} \kappa(\Gamma(z, x), \Gamma(z, x)) dz.
\end{equation}
These are to be compared with \eqref{Hi from L}, used in the finite setting. In fact, it was shown in \cite{Lacroix:2017isl} that the zeroes of the twist function also play an important role in describing the higher local charges in integrable field theories with twist function, beyond the quadratic one given by \eqref{densities qi intro 1}.

In this article we shall consider the class of integrable field theories which are realisations of affine Gaudin models associated with $\widetilde{\g}$, whose twist functions have simple zeroes and with Hamiltonian density given by a linear combination
\begin{equation} \label{densities qi intro 2}
\sum_{i=1}^M \epsilon_i q_i(x),
\end{equation}
\end{subequations}
for some parameters $\epsilon_i$, $i = 1, \ldots, M$.

\medskip

We are now ready to formulate the central problem addressed in the present paper. Consider two integrable field theories belonging to the above class. We let $\varphi_i(z)$ and $\Gamma_i(z, x)$ for $i = 1, 2$ denote their corresponding twist functions and Gaudin Lax matrices. We would like to construct a new integrable field theory depending on a parameter $\gamma \neq 0$ which couples these together in such a way that we recover the two uncoupled theories in the limit $\gamma \to 0$. The above analogy with finite-dimensional Gaudin models serves as a guide for how to do this. Indeed, one can use the exact same construction as in \eqref{adding Lax} to define the Lax matrix of the affine Gaudin model corresponding to this new integrable field theory. And from this one can then also extract the twist function $\varphi(z)$ and Gaudin Lax matrix $\Gamma(z, x)$ of this coupled field theory. However, since we are now working with a Hamiltonian given by a density of the form \eqref{densities qi intro 2} associated with zeroes of the twist function, as opposed to the Hamiltonians naturally associated with the poles $u_i$ as in \eqref{Hi from L}, the coupling procedure in the affine case is more subtle. The details of the latter are formulated in Theorem \ref{Thm:Decoupling}, which is the first main result of the paper.

As already emphasised earlier, the description of integrable field theories in terms of affine Gaudin models is intrinsically Hamiltonian. It is therefore not clear, a priori, how to determine whether such a field theory is relativistic from its Gaudin model description. In particular, in the coupling procedure outlined above, even if one starts with two relativistic integrable field theories it is not obvious whether the resulting coupled field theory is also relativistic. Remarkably, for the class of integrable field theories under consideration, namely with Hamiltonian densities of the form \eqref{densities qi intro 2}, there exists simple conditions ensuring Lorentz invariance. Specifically, we show that an integrable field theory within this class is relativistic if $\epsilon_i = \pm 1$ for each $i = 1, \ldots, M$. This is the second main result of the paper.

\medskip

The plan of this article is the following. In Section \ref{Sec:AGM} we begin by recalling the general formalism of local affine Gaudin models from \cite{Vicedo:2017cge}. In paragraph \ref{SubSubSec:Coupling}, we then give a general discussion of how two affine Gaudin models can be coupled together to form a third affine Gaudin model depending on an extra parameter $\gamma$ in such a way that the decoupling limit corresponds to $\gamma \to 0$. We go on to discuss the space-time symmetries of such models in Section \ref{SubSec:SpaceTime} and in particular derive a simple condition on the form of the Hamiltonian for the theory to be Lorentz invariant. In Section \ref{Sec:SigmaModels} we apply this formalism to construct the action presented in \cite{Delduc:2018hty} for $N$ coupled principal chiral models with Wess-Zumino terms. As a warm-up, we begin by recalling in Subsection \ref{SubSec:1site} how the principal chiral model with a Wess-Zumino term, corresponding to the case $N=1$, is described in the language of affine Gaudin models, before moving on to the case of general $N$ in Subsection \ref{SubSec:CoupledPCM} where we also discuss the symmetries of the resulting model. In Section \ref{Sec:otherreal} we give the affine Gaudin model descriptions of various integrable $\sigma$-models that may be used as building blocks for constructing more general coupled integrable $\sigma$-models following the procedure presented in Section \ref{Sec:AGM}. Finally, we collate some technical results in five appendices. Specifically, Appendix \ref{App:SSHam} concerns the pole structure of the generating function of the quadratic Hamiltonians of a local affine Gaudin model. Appendix \ref{App:Interpolingrat} contains two elementary lemmas on the interpolation of rational functions. In Appendix \ref{App:Decoupling} we provide details of the coupling procedure presented in Section \ref{Sec:AGM}, including a proof of Theorem \ref{Thm:Decoupling}. In Appendix \ref{App:hYB} we apply the general construction to couple together $N - 1$ copies of the principal chiral model each with a Wess-Zumino term and one homogeneous Yang-Baxter deformation of the principal chiral model. This can be seen as the result of taking the model constructed in Section \ref{Sec:SigmaModels} with the Wess-Zumino term removed for one of the principal chiral fields, and then applying a homogeneous Yang-Baxter deformation to the latter. Finally, Appendix \ref{App:Table} contains a summary of Section \ref{Sec:otherreal} in the form of a table, listing various integrable $\sigma$-models and their affine Gaudin model descriptions that can be used as elementary building blocks in the general construction of Section \ref{Sec:AGM}.

\section{Realisations of local Affine Gaudin models}
\label{Sec:AGM}

This section is devoted to the study of Affine Gaudin Models (AGM) and their realisations. The first two subsections consist mainly in a review of the article~\cite{Vicedo:2017cge} (see also \cite{Lacroix:2018njs} for a detailed review). In Subsection  \ref{SubSec:PhaseSpace}, we shall in particular discuss Takiff currents and realisations of Takiff algebras. In Subsection \ref{SubSec:Ham} we start by recalling the definitions of the Gaudin Lax matrix and the twist function and their relation to the usual Lax matrix which satisfies a Maillet non-ultralocal bracket. We also review the construction of quadratic Hamiltonians from which the Hamiltonian is extracted. The dynamics governed by this Hamiltonian takes the form of the zero curvature equation for a Lax pair, whose expression is given. We end this Subsection by explaining the presence of a global diagonal symmetry for all local AGM. In Subsection \ref{SubSec: landscape}, we describe the parameters characterising a local AGM and its realisations. We show then how it is possible to couple two realisations of AGM and indicate the mechanism which enables to recover the decoupled limit. The Subsection \ref{SubSec:SpaceTime} is devoted to the study of space-time symmetries. Since the construction of local AGM is at the Hamiltonian level, the analysis carried out in that Subsection allows to choose the parameters of a local AGM in order to obtain a field theory which is relativistic invariant.

\subsection{Phase space and realisations of local Affine Gaudin models}
\label{SubSec:PhaseSpace}

\subsubsection{Definitions and conventions}
\label{SubSec:Conv}

\paragraph{Lie algebras.} Let $\g$ be a finite-dimensional simple complex Lie algebra. We denote by $\kappa$ the opposite of its Killing form, which is a non-degenerate bilinear form on $\g$. Let $( I_a )_{a\in\lbrace 1,\cdots,n\rbrace}$ be a basis of $\g$ and $( I^a )_{a\in\lbrace 1,\cdots,n\rbrace}$ be its dual basis with respect to $\kappa$. We then consider the split quadratic Casimir of $\g$, defined as the element
\begin{equation}\label{Eq:Cas}
C\ti{12} = I_a \otimes I^a
\end{equation}
of $\g \otimes \g$, which does not depend on the choice of basis $( I_a )_{a=1,\cdots,n}$. The ad-invariance of the form $\kappa$ then translates to the following identity on $C\ti{12}$, where we use the standard tensorial index notations:
\begin{equation}\label{Eq:IdCas}
\bigl[ C\ti{12}, X\ti{1} + X\ti{2} \bigr] = 0, \;\;\;\;\;\; \forall \, X\in\g.
\end{equation}
Let us also note the following completeness relation:
\beqz
\kappa\ti{1}\left( C\ti{12}, X\ti{1} \right) = X, \;\;\;\;\;\; \forall \, X\in\g.
\eeqz

Let us fix a real form $\g_0$ of $\g$. It can be seen as the subalgebra of fixed point $\g_0 = \lbrace X\in\g, \; \tau(X)=X \rbrace$ of an antilinear involutive automorphism $\tau$ of $\g$. We will suppose that the basis $( I_a )_{a\in\lbrace 1,\cdots,n\rbrace}$ of $\g$ chosen above is such that all $I_a$'s belong to $\g_0$ (there always exists such a basis): it then forms a basis of $\g_0$ over $\R$. The split quadratic Casimir \eqref{Eq:Cas} of the algebra is real, in the sense that it satisfies
\beqz
\tau\ti{1} C\ti{12} = \tau\ti{2}C\ti{12} = C\ti{12}.
\eeqz
Finally, let us note that if we choose $\g_0$ to be the compact form of $\g$, the bilinear form $\kappa$ reduces to a positive scalar product on $\g_0$.

\paragraph{Hamiltonian field theory.}\label{Par:FieldTheory} In this article, we will consider field theories on a one-dimensional space $\D$, which will be either the real line $\R$, parametrised by a coordinate $x$ in $] -\infty, +\infty [$, or the circle $\mathbb{S}^1$, parametrised by a periodic coordinate $x$ in $[0,2\pi[$. We denote by $\delta_{xy} = \delta(x-y)$ the Dirac distribution on $\D$ and by $\delta'_{xy} = \p_x \delta(x-y)$ its derivative. For any field $\phi(x)$, we then have
\beqz
\int_\D \dd y \; \phi(y) \delta_{xy} = \phi(x) \;\;\;\;\; \text{ and } \;\;\;\;\; \int_\D \dd y \; \phi(y) \delta'_{xy} = \p_x\phi(x).
\eeqz

We will consider Poisson algebras describing the local observables of Hamiltonian field theories on $\D$. Such an algebra $\Ac$ contains fundamental dynamical fields $\phi_1(x),\cdots,\phi_k(x)$: $\Ac$ is then generated by these fields, in the sense that it contains algebraic combinations of these fields and their derivatives (evaluated at the same point $x\in\D$) and integrals of such combinations over $\D$\footnote{More precisely, for $\D=\mathbb{S}^1$, $\Ac$ is given by an appropriate completion of the algebra generated by the Fourier coefficients of the fields $\phi_i(x)$. We shall not enter into these details here and refer to~\cite{Vicedo:2017cge} for more details.}. More physically, this means that $\Ac$ contains the local fields and local charges (integrals) constructed from the fundamental fields $\phi_i(x)$. The Poisson structure on $\Ac$ is then entirely determined by the Poisson bracket between the fundamental fields $\phi_i$, as one can then extend the bracket to the whole algebra $\Ac$ by Leibniz rule and linearity.

Suppose we are given two such Poisson algebras $\Ac$ and $\widetilde{\Ac}$, with fundamental fields $\phi_1(x),\cdots,\phi_k(x)$ and $\widetilde{\phi}_1(x),\cdots,\widetilde{\phi}_l(x)$. One can consider the algebra generated (in the sense above) by the fields $\phi_i$ and $\widetilde{\phi}_i$, equipped with a Poisson structure which extends the ones between the $\phi_i$'s and between the $\widetilde{\phi}_i$'s by requiring
\beqz
\lbrace \phi_i(x), \widetilde{\phi}_j(y) \rbrace = 0.
\eeqz
Mathematically, this algebra is then the tensor product $\Ac \otimes \widetilde{\Ac}$ (or more precisely an appropriate completion of this tensor product). In the rest of this article, we will understand tensor products between Poisson algebras in this sense.

\subsubsection{Phase space: Takiff currents}
\label{SubSubSec:TakiffCurrents}

\paragraph{Sites and levels.}\label{Par:Levels} The phase space of a local AGM is characterised by the following data:
\begin{itemize}\setlength\itemsep{0.1em}
\item two finite abstract sets $\Si_\rd$ and $\Si_\cd$, whose elements we call the \textit{real} and \textit{complex sites} of the model, for reasons to be explained below ;
\item for any $\alpha\in\Si_\rd \sqcup \Si_\cd$, a positive integer $m_\alpha \in \Z_{\geq 1}$, that we call \textit{multiplicity} of the site $\alpha$ ;
\item for any $\alpha\in\Si_\rd \sqcup \Si_\cd$, a collection of $m_\alpha$ numbers $\ls{\alpha}{p}$, $p\in\lbrace 0,\cdots,m_\alpha-1\rbrace$, called the \textit{levels} of the site $\alpha$, that belong to $\R$ if $\alpha\in\Si_\rd$ and to $\C$ if $\alpha\in\Si_\cd$.
\end{itemize}
For reasons to be explained in Subsubsection \ref{SubSubSec:SS}, we shall also suppose that for all sites $\alpha$, the highest levels $\ls\alpha{m_\alpha-1}$ are non-zero. For compactness, we will gather these data in a unique object:
\beqz
\lt = \Bigl( \bigl( \ls{\alpha}{p} \bigr)^{\alpha \in \Si_\rd}_{p\in\lbrace 0,\cdots,m_\alpha-1\rbrace}, \bigl( \ls{\alpha}{p} \bigr)^{\alpha \in \Si_\cd}_{p\in\lbrace 0,\cdots,m_\alpha-1\rbrace} \Bigr),
\eeqz
that we will call a Takiff datum.

For any complex site $\alpha\in\Si_\cd$, we introduce a label $\bar\alpha$, which we call the \textit{conjugate} site to $\alpha$. We then form the set $\Sib_\cd$ of conjugate sites and the set $\Si=\Si_\rd \sqcup \Si_\cd \sqcup \Sib_\cd$ of all sites. For all conjugate sites $\bar\alpha\in\Sib_\cd$, we define the corresponding levels $\ls{\bar\alpha}{p}=\overline{\ls{\alpha}{p}}$, for $p\in\lbrace 0,\cdots,m_{\bar\alpha}-1\rbrace$, with multiplicity $m_{\bar\alpha}=m_\alpha$.

\paragraph{Takiff currents and observables.} With each site $\alpha\in\Si$, we associate $m_\alpha$ currents $\Jt{\alpha}{p}(x)$, $p\in\lbrace 0,\cdots,m_\alpha-1\rbrace$. If $\alpha$ is a real site in $\Si_\rd$, we suppose that these currents are real in the sense that they belong to $\g_0$ (and are thus invariant under $\tau$). Moreover, we suppose that the currents associated with a complex site $\alpha$ in $\Si_\cd$ are related to the ones associated with the conjugate site $\bar\alpha\in\Sib_\cd$ by $\tau\bigl(\Jt\alpha{p}\bigr)=\Jt{\bar\alpha}{p}$.

The algebra of observables $\Tc_{\lt}$ of the local AGM is then defined as the algebra generated by the components of all currents $\Jt\alpha p$ (in the sense defined in Paragraph \ref{Par:FieldTheory}). One then defines a Poisson structure on $\Tc_{\lt}$ by specifying the Poisson bracket between these currents:
\begin{equation}\label{Eq:Takiff}
\left\lbrace \Jt\alpha p\,\ti{1}(x), \Jt\beta q\,\ti{2}(y) \right\rbrace = \delta_{\alpha\beta} \left\lbrace \begin{array}{ll}
\left[ C\ti{12}, \Jt\alpha{p+q}\,\ti{1}(x) \right] \delta_{xy} - \ls\alpha{p+q}\, C\ti{12}\, \delta'_{xy} & \text{ if } p+q < m_\alpha \\[5pt]
0 & \text{ if } p+q \geq m_\alpha
\end{array}  \right. ,
\end{equation}
for all $\alpha,\beta\in\Si$, $p\in\lbrace 0,\cdots,m_\alpha-1\rbrace$ and $q\in\lbrace 0,\cdots,m_\beta-1\rbrace$. This means in particular that Takiff currents associated with different sites Poisson commute. Using the identity \eqref{Eq:IdCas}, one checks that Equation \eqref{Eq:Takiff} indeed defines a Poisson bracket, in the sense that $\lbrace \cdot, \cdot \rbrace$ is skew-symmetric and satisfies the Jacobi identity. We shall call currents $\Jt\alpha p$ satisfying such a bracket \textit{Takiff currents}. Note in particular that the currents $\Jt\alpha0$ satisfy the closed bracket
\beqz
\left\lbrace \Jt\alpha 0\,\ti{1}(x), \Jt\alpha 0\,\ti{2}(y) \right\rbrace = \left[ C\ti{12}, \Jt\alpha0\,\ti{1}(x) \right] \delta_{xy} - \ls\alpha 0\, C\ti{12}\, \delta'_{xy}.
\eeqz
These are called \textit{Kac-Moody currents}.

\subsubsection{Generalised Segal-Sugawara integrals and momentum}
\label{SubSubSec:SS}

\paragraph{Generalised Segal-Sugawara integrals.}\label{Par:SS} Let us fix a site $\alpha\in\Si$. Following~\cite{Vicedo:2017cge}, we define complex numbers $\kb \alpha p$ for $p\in\lbrace 0, \cdots, 2m_\alpha-2 \rbrace$ by the set of linear equations (see Appendix \ref{App:SSHam}, Lemma \ref{Lem:InvertTwist} for more details about the numbers $\kb \alpha p$):
\begin{equation}\label{Eq:DefKb}
\forall\, q,r \in \lbrace 0,\cdots,m_\alpha-1 \rbrace, \;\;\;\;\;\; \sum_{p=0}^{m_\alpha-1-r} \kb\alpha{p+q}\,\ls\alpha{p+r} = \delta_{q,r}.
\end{equation}
As explained in~\cite{Vicedo:2017cge}, Lemma 4.1, these equations admit a unique solution $(\kb\alpha p)_{p\in\lbrace 0,\cdots,2m_\alpha-2\rbrace}$, which can be expressed in terms of the levels $(\ls\alpha p)_{p\in\lbrace 0,\cdots,m_\alpha-1\rbrace}$. It is this result which necessitates that the highest levels $\ls\alpha{m_\alpha-1}$ are non-zero for all sites $\alpha\in\Si$. Moreover, one has (see Appendix \ref{App:SSHam}, Remark \ref{Rem:Eta})
\begin{equation}\label{Eq:KbZero}
\kb \alpha p = 0, \;\;\;\;\;\; \forall\, p\in\lbrace 0, \cdots, m_\alpha-2 \rbrace.
\end{equation}
For $p\in\lbrace0,\cdots,m_\alpha-1\rbrace$, we then define the quantity
\begin{equation}\label{Eq:SS}
D^\alpha_{[p]} = \frac{1}{2} \sum_{\substack{q,r=0 \\ q+r \geq p}}^{m_\alpha-1} \kb\alpha{q+r-p}  \int_\D \dd x \; \kappa\! \left( J^\alpha_{[q]}(x), J^\alpha_{[r]}(x) \right),
\end{equation}
that we call \textit{generalised Segal-Sugawara integral} (note that the sum on $q$ and $r$ can actually be restricted to the domain $q+r \geq m_\alpha-1+p$ as $\kb\alpha s=0$ for $0 \leq s \leq m_\alpha-2$). As a local charge constructed from the fundamental fields $\Jt\alpha p(x)$ of the model, $D^\alpha_{[p]}$ is then an element of the algebra of observables $\Tc_{\lt}$ of the Gaudin model. Starting from the Takiff Poisson algebra \eqref{Eq:Takiff}, one then shows~\cite{Vicedo:2017cge} that
\begin{equation}\label{Eq:PbSS}
\left\lbrace D^\alpha_{[p]}, D^\beta_{[q]} \right\rbrace = 0 \hspace{20pt} \text{and} \hspace{20pt} \left\lbrace D^\alpha_{[p]}, J^\beta_{[q]}(x) \right\rbrace = \delta_{\alpha\beta}\left\lbrace \begin{array}{ll}
\p_x \Jt\alpha{p+q}(x) & \text{ if } p+q < m_\alpha \\[5pt]
0 & \text{ if } p+q \geq m_\alpha
\end{array}  \right.
\end{equation}
for all $\alpha,\beta\in\Si$, $p\in\lbrace 0,\cdots,m_\alpha-1\rbrace$ and $q\in\lbrace 0,\cdots,m_\beta-1\rbrace$.

If $\alpha$ is a real site in $\Si_\rd$, $D^\alpha_{[p]}$ is real. If $\alpha$ is a complex site in $\Si_\cd$, then $D^\alpha_{[p]}$ is complex and its conjugate is given by $\overline{D^\alpha_{[p]}} = D^{\bar\alpha}_{[p]}$, for all $p\in\lbrace 0,\cdots, m_\alpha-1 \rbrace$.

\paragraph{Cases of multiplicity 1 and 2.} For the examples of local AGM that we shall consider in this article, we will consider sites with multiplicities at most 2. Let us then describe more explicitly the construction of the Segal-Sugawara integrals $D^\alpha_{[p]}$ for multiplicities $m_\alpha$ equal to 1 and 2.

Let us start with the case of multiplicity $m_\alpha=1$. In this case, one only has one level $\ls\alpha 0$ and one Takiff current $\Jt\alpha 0$ (which is then a Kac-Moody current). We are then searching for only one number $\kb\alpha 0$, which satisfies $\kb\alpha 0 \ls\alpha 0 = 1$ (note that this equation admits a solution if the highest level $\ls\alpha 0$ is non-zero, as expected from the general discussion above). The corresponding Segal-Sugawara integral is then given by
\begin{equation}\label{Eq:SSMult1}
D^\alpha_{[0]} = \frac{1}{2\ls\alpha 0} \int_\D \dd x \; \kappa \bigl( \Jt\alpha 0(x), \Jt\alpha 0(x) \bigr).
\end{equation}
The density of this charge is a classical version of the Segal-Sugawara energy-momentum tensor associated with the Kac-Moody current $\Jt\alpha 0(x)$, which is used in the context of Conformal Field Theory (see for instance~\cite{DiFrancesco:1997nk}). This observation justifies the name of generalised Segal-Sugawara integrals for the case with an arbitrary multiplicity.\\

Let us now investigate the case of multiplicity $m_\alpha=2$ (which can be seen as the classical limit of the construction~\cite{Babichenko:2012uq}), for which we have two levels $\ls\alpha 0$ and $\ls\alpha 1$. One then has to search for three numbers $\kb\alpha p$, for $p=0,1,2$, satisfying
\beqz
\kb\alpha0 \ls\alpha0 + \kb\alpha1 \ls\alpha1 = 1, \;\;\;\;\; \kb\alpha1 \ls\alpha0 + \kb\alpha2 \ls\alpha1 = 0, \;\;\;\;\;  \kb\alpha0 \ls\alpha1 = 0 \;\;\;\;\; \text{ and } \;\;\;\;\; \kb\alpha1 \ls\alpha1 = 1.
\eeqz
One easily checks that this system of equations has a unique solution, given by
\begin{equation*}
\kb\alpha0=0, \;\;\;\;\; \kb\alpha1= \frac{1}{\ls\alpha1} \;\;\;\;\; \text{ and } \;\;\;\;\; \kb\alpha2=-\frac{\ls\alpha0}{(\ls\alpha1)^2}.
\end{equation*}
Note that this requires that the highest level $\ls\alpha1$ is non-zero and that $\kb\alpha0=0$, as expected from equation \eqref{Eq:KbZero}. One then easily constructs the two Segal-Sugawara integrals by applying Equation \eqref{Eq:SS}:
\begin{equation}\label{Eq:SSMult2}
D^\alpha_{[0]} = \frac{1}{\ls\alpha 1}\int_\D \dd x \left(  \kappa\bigl( \Jt\alpha0(x), \Jt\alpha1(x) \bigr) -\frac{\ls\alpha0}{2\ls\alpha1} \kappa\bigl( \Jt\alpha1(x), \Jt\alpha1(x) \bigr) \right),\;\;\; D^\alpha_{[1]} = \frac{1}{\ls\alpha 1}\int_\D \dd x \;  \kappa\bigl( \Jt\alpha1(x), \Jt\alpha1(x) \bigr).
\end{equation}

\paragraph{Momentum of the local AGM.}\label{Par:Momentum} Let us consider the sum
\begin{equation}\label{Eq:Momentum}
\Pc_{\lt} = \sum_{\alpha\in\Si} D^\alpha_{[0]} = \sum_{\alpha\in\Si_\rd} D^\alpha_{[0]} + \sum_{\alpha\in\Si_\cd} \left( D^\alpha_{[0]} + D^{\bar\alpha}_{[0]} \right).
\end{equation}
One easily checks from the Poisson bracket \eqref{Eq:PbSS} that
\begin{equation}\label{Eq:PbMomentum}
\bigl\lbrace \Pc_{\lt}, \Jt\alpha p(x) \bigr\rbrace = \p_x \Jt\alpha p(x), \;\;\;\; \forall\, \alpha\in\Si \;\; \text{ and } \;\; \forall\, p \in \lbrace 0,\cdots,m_\alpha-1 \rbrace.
\end{equation}
As all fields in the algebra $\Tc_{\lt}$ are combinations of the Takiff currents $\Jt\alpha p(x)$, the Hamiltonian flow of $\Pc_{\lt}$ then generates the spatial derivative with respect to $x$ on all these fields. Hence, $\Pc_{\lt}$ is the \textit{momentum} of the local AGM. Note from the second expression in Equation \eqref{Eq:Momentum} that $\Pc_{\lt}$ is real, as $D^\alpha_{[0]}$ is real if $\alpha\in\Si_\rd$ and $D^{\bar\alpha}_{[0]} = \overline{D^\alpha_{[0]}}$ for $\alpha\in\Si_\cd$.

\subsubsection{Realisations of local Affine Gaudin Models}

\paragraph{Realisations of Takiff algebra.}\label{Par:Realisation} Let us consider a Poisson algebra $\Ac$, describing the observables of a field theory on $\D$. We say that $\Ac$ is a \textit{realisation} of $\Tc_{\lt}$ if there exists a (conjugacy-equivariant) Poisson map
\beqz
\pi: \Tc_{\lt} \longrightarrow \Ac.
\eeqz
Concretely, finding such a realisation of $\Ac$ amounts to constructing commuting Takiff currents from the dynamical fields in $\Ac$ with appropriate reality conditions. More precisely, one has to construct currents $\J\alpha p(x)$ for all $\alpha\in\Si$ and $p\in\lbrace 0,\cdots,m_\alpha-1 \rbrace$, satisfying the same Poisson bracket as the abstract Takiff current $\Jt\alpha p(x)$, \textit{i.e.} such that
\begin{equation}\label{Eq:TakiffReal}
\left\lbrace \J\alpha p\,\ti{1}(x), \J\beta q\,\ti{2}(y) \right\rbrace_\Ac = \delta_{\alpha\beta} \left\lbrace \begin{array}{ll}
\left[ C\ti{12}, \J\alpha{p+q}\,\ti{1}(x) \right] \delta_{xy} - \ls\alpha{p+q}\, C\ti{12}\, \delta'_{xy} & \text{ if } p+q < m_\alpha \\[5pt]
0 & \text{ if } p+q \geq m_\alpha
\end{array}  \right. ,
\end{equation}
where $\lbrace\cdot,\cdot\rbrace_\Ac$ denotes the Poisson bracket on $\Ac$. In addition, we ask for the currents $\J\alpha p$ to satisfy the same reality conditions as the abstract Takiff currents and thus require that
\begin{equation}\label{Eq:RealityJc}
\tau \bigl( \J\alpha p(x) \bigr) = \J\alpha p(x), \;\;\;\; \forall \, \alpha\in\Si_\rd \;\;\;\;\;\;\;\;\; \text{ and } \;\;\;\;\;\;\;\; \tau \bigl( \J\alpha p(x) \bigr) = \J{\bar\alpha} p(x), \;\;\;\; \forall \, \alpha\in\Si_\cd.
\end{equation}
The map $\pi$ defining the realisation is then given by
\beqz
\pi \bigl( \Jt\alpha p \bigr) = \J\alpha p, \;\;\;\;\;\; \forall \, \alpha \in \Si \; \text{ and } \; \forall\, p\in\lbrace 0,\cdots, m_\alpha-1 \rbrace.
\eeqz
As the currents $\Jt\alpha p$ generate the algebra $\Tc_{\lt}$, these relations define entirely the map $\pi$, which is then a Poisson map as the $\J\alpha p$ satisfy the same Poisson bracket as the $\Jt\alpha p$. The condition \eqref{Eq:RealityJc} then ensures that the map $\pi$ is conjugacy-equivariant, \textit{i.e.} that
\beqz
\overline{\pi(X)} = \pi \bigl( \overline{X} \bigr),
\eeqz
for any observable $X\in\Tc_{\lt}$.

\paragraph{Suitable realisations.}\label{Par:LocReal} As we saw in Paragraph \ref{Par:Momentum}, the momentum of the local AGM is given by $\Pc_{\lt}$. Applying the Poisson map $\pi$ to the bracket \eqref{Eq:PbMomentum}, we get
\beqz
\bigl\lbrace \pi\bigl(\Pc_{\lt}\bigr), \J\alpha p(x) \bigr\rbrace_{\Ac} = \p_x \J\alpha p(x), \;\;\;\; \forall\, \alpha\in\Si \;\; \text{ and } \;\; \forall\, p \in \lbrace 0,\cdots,m_\alpha-1 \rbrace.
\eeqz
Thus, the Hamiltonian flow of $\pi\bigl(\Pc_{\lt}\bigr)$ coincides with the one of the 
momentum $\Pc_{\Ac}$ of $\Ac$ on the currents $\J\alpha p(x)$. We will say that the 
realisation $\pi$ is a \textit{suitable realisation} if
\begin{equation}\label{Eq:DefLocal}
\pi\bigl(\Pc_{\lt}\bigr) = \Pc_{\Ac}.
\end{equation}
Note that this condition is not automatically satisfied by all realisations, as the algebra $\Ac$ can contain more fields than the currents $\J\alpha p(x)$, on which the Hamiltonian flow of $\pi\bigl(\Pc_{\lt}\bigr)$ may not generate the spatial derivative.

\paragraph{Physical models as realisations of local AGM.} The notion of realisation presented above allows one to realise the observables $\Tc_{\lt}$ of a local AGM in terms of the observables of another algebra $\Ac$. The reference~\cite{Vicedo:2017cge} studies the local AGM on $\Tc_{\lt}$ and constructs for it an integrable dynamics, in particular a Hamiltonian whose equations of motion can be recast as a zero curvature equation, with a Lax matrix satisfying a Maillet bracket. All these constructions can then be transferred to the realisation $\Ac$, hence allowing the construction of an integrable field theory on $\Ac$.

In the Subsection \ref{SubSec:Ham}, we will explain in details the construction of such an integrable model. For simplicity, we will consider directly the construction in $\Ac$, \textit{i.e.} in the realisation of $\Tc_{\lt}$. Note however that all this construction also applies to the abstract local AGM on $\Tc_{\lt}$.

Considering realisations of local AGM instead of the local AGM itself allows the description of a larger class of integrable models (although these can be seen as ``images'' of AGM). In particular, this allows to construct models which also possess a Lagrangian formulation. Indeed, the abstract Takiff algebra $\Tc_{\lt}$ cannot be understood as the algebra of Hamiltonian observables of a Lagrangian field theory, generated by coordinates fields $\phi^i(x)$ and their conjugate momenta $\pi_i(x)$. However, it is possible in some cases to find a realisation $\Ac$ of $\Tc_{\lt}$ which comes from such a Lagrangian field theory. In this case, one can then perform the inverse Legendre transform and write an action for the considered model. This will be for example the case for the integrable $\s$-models that we will consider in Sections \ref{Sec:SigmaModels} and \ref{Sec:otherreal} of this article.

Finally, let us note that the notion of suitable realisation introduced in the previous paragraph will allow us to discuss space-time symmetries of realisations of local AGM in Subsection \ref{SubSec:SpaceTime}. In particular, as a suitable realisation provides a reinterpretation of the momentum $\Pc_{\Ac}$ of the theory in terms of the Takiff currents $\J\alpha p(x)$, this will allow us to give a simple condition on the parameters defining the integrable model on $\Ac$ for this model to be Lorentz invariant.

\subsection{Lax matrix, Hamiltonian and integrability}
\label{SubSec:Ham}

\subsubsection{Lax matrix and twist function}

\paragraph{Position of the sites.} In addition to the Takiff datum $\lt$ defining its algebra of observables $\Tc_{\lt}$, a local AGM depends on other parameters, including the \textit{positions} of the sites. These are points
\beqz
\po_\alpha \in \R, \;\; \text{ for } \alpha\in\Si_\rd \;\;\;\;\;\; \text{ and } \;\;\;\;\;\; \po_\alpha \in \C, \;\; \text{ for } \alpha\in\Si_\cd
\eeqz
in the complex plane. We gather these positions in a unique object
\beqz
\pb = \Bigl( ( \po_\alpha )_{\alpha \in \Si_\rd}, ( \po_\alpha )_{\alpha \in \Si_\cd} \Bigr).
\eeqz
For complex sites $\alpha\in\Si_\cd$, we also associate a position $\po_{\bar\alpha}=\overline{\po_\alpha}$ with the conjugate site $\bar\alpha\in\Sib_\cd$. By a slight abuse of language and as it is commonly done in the literature about Gaudin models, we will sometimes refer to the positions of the sites as the sites themselves.

\paragraph{Gaudin Lax matrix and twist function.} One of the fundamental ingredients of a local AGM is its \textit{Gaudin Lax matrix}. We define it here for the realisation $\Ac$ of the local AGM as
\begin{equation}\label{Eq:S}
\Sg(z,x) = \sum_{\alpha\in\Si} \sum_{p=0}^{m_\alpha-1} \frac{\J\alpha p(x)}{(z-\po_\alpha)^{p+1}}.
\end{equation}
It is a $\g$-valued field in $\Ac$, depending on an auxiliary parameter $z\in\C$, called the \textit{spectral parameter}. Note the analytical structure of $\Sg(z,x)$ as a rational function of the spectral parameter: it possesses poles at the positions $\po_\alpha$ of the sites, whose orders are the corresponding multiplicities $m_\alpha$. Let us also introduce the following function of the spectral parameter:
\begin{equation}\label{Eq:Twist}
\vp(z) = \sum_{\alpha\in\Si} \sum_{p=0}^{m_\alpha-1} \frac{\ls\alpha p}{(z-\po_\alpha)^{p+1}} - \ell^\infty,
\end{equation}
where $\ell^\infty$ is a real number that we will suppose to be non-zero\footnote{The number $\ell^\infty$ can be seen as the level of a site with position $\infty$ and multiplicity 2. As explained in details in~\cite{Vicedo:2017cge}, such a site is treated in a slightly different way than the others. In this article, we shall not enter into these details, as we do not consider other types of site at infinity.}. This function $\vp(z)$ is the so-called \textit{twist function} of the model. Except for the constant term $\ell^\infty$, its analytical structure is similar to the one of $\Sg(z,x)$. One easily checks from the reality conditions on the currents and the levels that the Gaudin Lax matrix and the twist function obey the following (conjugacy-equivariance) reality conditions:
\begin{equation}\label{Eq:RealSPhi}
\tau\bigl( \Sg(z,x) \bigr) = \Sg(\overline{z},x) \;\;\;\;\; \text{ and } \;\;\;\;\; \overline{\vp(z)} = \vp( \overline{z} ).
\end{equation}

As explained in~\cite{Vicedo:2017cge}, the Gaudin Lax matrix $\Sg(z,x)$ and the twist function $\vp(z)$ are both parts of the formal Lax matrix of the AGM (which is not to be confused with the usual Lax matrix of a field theory, as in Equation \eqref{Eq:Lax} below). An important characteristic of this formal Lax matrix (see~\cite{Vicedo:2017cge}, Proposition 4.12) is that it satisfies a linear Skyanin bracket in the affine untwisted Kac-Moody algebra associated with $\g$. We shall not enter into this formalism in this article and will just use the fact that this bracket translates as the following bracket on the Gaudin Lax matrix:
\begin{equation}\label{Eq:PbGaudin}
\bigl\lbrace \Sg\ti{1}(z,x), \Sg\ti{2}(w,y) \bigr\rbrace = \left[ \frac{C\ti{12}}{w-z}, \Sg\ti{1}(z,x)-\Sg\ti{1}(w,x) \right] \delta_{xy} - \bigl(\vp(z)-\vp(w)\bigr) \frac{C\ti{12}}{w-z} \delta'_{xy},
\end{equation}
as can be also shown directly from the Poisson brackets \eqref{Eq:TakiffReal} (for simplicity, we dropped the subscript $\Ac$ for the Poisson bracket on $\Ac$ as we are now working exclusively in this fixed realisation).

\paragraph{Lax matrix, Maillet bracket and $\bm{\Rc}$-matrix.}\label{Par:Lax} We define the \textit{Lax matrix} of the model as
\begin{equation}\label{Eq:Lax}
\Lc(z,x) = \frac{\Sg(z,x)}{\vp(z)}.
\end{equation}
As $\Sg(z,x)$, it is a $\g$-valued field in $\Ac$, depending rationally on the spectral parameter and satisfying the reality condition:
\beqz
\tau\bigl( \Lc(z,x) \bigr) = \Lc(\overline{z},x).
\eeqz
The Poisson bracket \eqref{Eq:PbGaudin} on $\Sg(z,x)$ translates into the following bracket on $\Lc(z,x)$:
\begin{align}\label{Eq:PBR}
\hspace{-60pt}\bigl\lbrace \Lc\ti{1}(z,x), \Lc\ti{2}(w,y) \bigr\rbrace
&= \bigl[ \Rc\ti{12}(z,w), \Lc\ti{1}(z,x) \bigr] \delta_{xy} - \bigl[ \Rc\ti{21}(w,z), \Lc\ti{2}(w,x) \bigr] \delta_{xy}\\
& \hspace{40pt} - \left( \Rc\ti{12}(z,w) + \Rc\ti{21}(w,z) \right) \delta'_{xy}, \notag
\end{align}
with the \textit{$\Rc$-matrix}
\begin{equation}\label{Eq:RMat}
\Rc\ti{12}(z,w) = \frac{C\ti{12}}{w-z} \vp(w)^{-1}.
\end{equation}
The bracket \eqref{Eq:PBR} is a \textit{Maillet non-ultralocal bracket}, as introduced in~\cite{Maillet:1985fn,Maillet:1985ek}. The fact that it satisfies the Jacobi identity is ensured by the fact that the $\Rc$-matrix is a solution of the classical Yang-Baxter equation
\beqz
\left[ \Rc\ti{12}(z_1,z_2), \Rc\ti{13}(z_1,z_3) \right] + \left[ \Rc\ti{12}(z_1,z_2), \Rc\ti{23}(z_2,z_3) \right] + \left[ \Rc\ti{32}(z_3,z_2), \Rc\ti{13}(z_1,z_3) \right] = 0.
\eeqz
We end this paragraph by the following remark, which will be useful for Section \ref{Sec:SigmaModels}. From Equations \eqref{Eq:S}, \eqref{Eq:Twist} and \eqref{Eq:Lax}, one gets
\begin{equation}\label{Eq:LaxPoles}
\Lc(\po_\alpha,x) = \frac{\J\alpha{m_\alpha-1}(x)}{\ls\alpha{m_\alpha-1}}
\end{equation}
for all $\alpha\in\Si$.

\subsubsection{Quadratic Hamiltonians and zero curvature equation}
\label{SubSubSec:Ham}

\paragraph{Quadratic Hamiltonians.} Let us define the quantity
\begin{equation}\label{Eq:P}
\Pc(z) = \sum_{\alpha\in\Si} \sum_{p=0}^{m_\alpha-1} \frac{\Dc\alpha p}{(z-\po_\alpha)^{p+1}},
\end{equation}
where
\begin{equation}\label{Eq:ImageSS}
\Dc\alpha p = \pi \bigl( D^\alpha_{[p]} \bigr)
\end{equation}
are the generalised Segal-Sugawara integrals in the realisation $\Ac$. The quantity $\Pc(z)$ has the same analytical structure than the Gaudin Lax matrix $\Sg(z,x)$ and the twist function $\vp(z)$: as explained in~\cite{Vicedo:2017cge}, this comes from the fact that together, they form the formal Lax matrix of the model. It satisfies the reality condition
\beqz
\overline{\Pc(z)} = \Pc(\overline{z}).
\eeqz
From the bracket \eqref{Eq:PbSS} (and applying the Poisson map $\pi$), one finds that
\begin{equation}\label{Eq:PbPS}
\lbrace \Pc(z), \Pc(w) \rbrace = 0 \;\;\;\;\; \text{ and  } \;\;\;\;\; \lbrace \Pc(z), \Sg(w,x) \rbrace = - \frac{\p_x \Sg(z,x)-\p_x \Sg(w,x)}{z-w}. 
\end{equation}

In~\cite{Vicedo:2017cge}, Propositions 4.14 and 4.15, it is explained how to obtain quadratic charges in involution for local AGM as part of a spectral parameter dependent Hamiltonian constructed from the formal Lax matrix. Here, we will express it, for the realisation $\Ac$, in terms of the Gaudin Lax matrix $\Sg(z,x)$, the twist function $\vp(z)$ and the quantity $\Pc(z)$ as
\begin{equation}\label{Eq:HamSpec}
\Hc(z) = \frac{1}{2} \int_\D \dd x \; \kappa\bigl( \Sg(z,x), \Sg(z,x) \bigr) - \vp(z) \Pc(z).
\end{equation}
It is a local charge in the algebra of observables $\Ac$, depending rationally on the spectral parameter, whose density is a quadratic polynomial in the Takiff currents $\J\alpha p(x)$. It also satisfies the reality condition
\begin{equation}\label{Eq:RealH}
\overline{\Hc(z)} = \Hc(\overline{z}).
\end{equation}
One deduces from~\cite{Vicedo:2017cge} or directly from the Poisson brackets \eqref{Eq:PbGaudin} and \eqref{Eq:PbPS} that
\beqz
\lbrace \Hc(z), \Hc(w) \rbrace = 0
\eeqz
for all $z,w \in \C$. This construction then gives a set of quadratic Hamiltonians in involution as the quantities extracted linearly from $\Hc(z)$: evaluations at special values of $z$, residues (or more generally coefficients of poles in $z$), combinations of those, ...\\

A particular way of extracting a complete set of such Hamiltonians is to determine the partial fraction decomposition of $\Hc(z)$ seen as a rational function of $z$. It is clear from the expressions \eqref{Eq:S}, \eqref{Eq:Twist} and \eqref{Eq:P} of $\Sg(z)$, $\vp(z)$ and $\Pc(z)$ that $\Hc(z)$ has poles only at the positions $\po_\alpha$ of the sites $\alpha\in\Si$ and of order at most $2m_\alpha$, with $m_\alpha$ the multiplicity of the site $\alpha$. An important result for Subsection \ref{SubSubSec:Zeros} is that these poles are actually of order at most $m_\alpha$. This is a consequence of the definition of the generalised Segal-Sugawara integrals in Paragraph \ref{Par:SS}. As the demonstration of this result is rather technical, we detail it in the Appendix \ref{App:SSHam}.  As a consequence, we can write the quadratic Hamiltonian $\Hc(z)$ as
\begin{equation}\label{Eq:HamDES}
\Hc(z) = \sum_{\alpha\in\Si} \sum_{p=0}^{m_\alpha-1} \frac{\Hc^\alpha_{[p]}}{(z-\po_\alpha)^{p+1}}.
\end{equation}
We then have
\beqz
\bigl\lbrace \Hc^\alpha_{[p]}, \Hc^\beta_{[q]} \bigr\rbrace = 0,
\eeqz
for all $\alpha,\beta \in \Si$, $p\in\lbrace 0,\cdots,m_\alpha-1\rbrace$ and $q\in\lbrace 0,\cdots,m_\beta-1\rbrace$. Moreover, any quantity linearly extracted from $\Hc(z)$ can be seen as a linear combination of the $\Hc^\alpha_{[p]}$'s.  In particular, we shall choose the Hamiltonian of the model as such a linear combination:
\begin{equation}\label{Eq:HamPoles}
\Hc = \sum_{\alpha\in\Si} \sum_{p=0}^{m_\alpha-1} \gamma^\alpha_p \Hc^\alpha_{[p]},
\end{equation}
for some $\gamma^\alpha_p \in \C$. The dynamics of the model is then given by the Hamiltonian flow
\beqz
\p_t = \lbrace \Hc, \cdot \rbrace.
\eeqz
The Hamiltonians $\Hc^{\alpha}_{[p]}$ are then, by construction, conserved charges in involution for this model.

\paragraph{Lax pair and zero curvature equation.}\label{Par:ZCE} We shall now show that the dynamics of the model, defined by the Hamiltonian \eqref{Eq:HamPoles}, takes the form of a zero curvature equation. Starting from the Poisson bracket \eqref{Eq:PbGaudin} and \eqref{Eq:PbPS}, one can compute the Hamiltonian flow of the spectral dependent Hamiltonian \eqref{Eq:HamSpec} on the Lax matrix \eqref{Eq:Lax}. One then finds (see also~\cite{Vicedo:2017cge}, Corollary 4.16)
\begin{equation}\label{Eq:ZceSpec}
\lbrace \Hc(w), \Lc(z,x) \rbrace - \p_x \Mcs wzx + \bigl[ \Mcs wzx, \Lc(z,x) \bigr] = 0,
\end{equation}
with
\begin{equation}\label{Eq:MSpec}
\Mcs wzx = \vp(w)\frac{\Lc(z,x)-\Lc(w,x)}{z-w}.
\end{equation}
Thus, the Hamiltonian flow of $\Hc(w)$ takes the form of a zero curvature equation on the Lax pair with spatial component $\Lc(z,x)$ and temporal component $\Mcs wzx$.

In particular, one can extract from the equation \eqref{Eq:ZceSpec} a zero curvature equation describing the dynamics under the time flow $\p_t$ defined by the Hamiltonian $\Hc$. Let us consider the partial fraction decomposition of $\Mcs wzx$ as a rational function of $w$:
\beqz
\Mcs wzx = \sum_{\alpha\in\Si} \sum_{p=0}^{m_\alpha-1} \frac{\Mc^\alpha_{[p]}(z,x)}{(w-\po_\alpha)^{p+1}}.
\eeqz
Then, we have
\begin{equation}\label{Eq:Zce}
\p_t \Lc(z,x) - \p_x \Mc(z,x) + \bigl[ \Mc(z,x), \Lc(z,x) \bigr] = 0,
\end{equation}
with
\begin{equation}\label{Eq:MPoles}
\Mc(z,x) = \sum_{\alpha\in\Si} \sum_{p=0}^{m_\alpha-1} \gamma^\alpha_p \Mc^\alpha_{[p]}(z,x).
\end{equation}
We shall not enter into more details about the expression of the $\Mc^\alpha_{[p]}(z,x)$'s, as in Subsection \ref{SubSubSec:Zeros} we will give another parametrisation of the Hamiltonian yielding a simpler expression of $\Mc(z,x)$.\\

\paragraph{Integrability.} The fact that the equations of motion of the model can be recast as a zero curvature equation \eqref{Eq:Zce} implies the existence of an infinite number of conserved quantities for this model, extracted from the monodromy matrix of the Lax matrix $\Lc(z,x)$. Moreover, the fact that the Lax matrix satisfies the Maillet bracket \eqref{Eq:PBR} ensures that these conserved charges are pairwise in involution. Thus, realisations of local AGM are by construction integrable two dimensional field theories.

As the $\Rc$-matrix  governing the Maillet bracket is of the form \eqref{Eq:RMat}, they in fact belong to a more restrictive class of integrable field theories, called models with twist function~\cite{Maillet:1985ec, Reyman:1988sf, Sevostyanov:1995hd, Vicedo:2010qd} (see also~\cite{Lacroix:2018njs}). This additional structure allows a deeper understanding of these models and their properties. In particular, one can apply the method developed in~\cite{Lacroix:2017isl} to construct an infinite integrable hierarchy of local charges. We will come back to this below.

\subsubsection{Zeros of the twist function and integrable hierarchy}
\label{SubSubSec:Zeros}

\paragraph{Zeros of the twist function.} Let us consider the twist function \eqref{Eq:Twist} of the model. One can write it as the quotient of two co-prime polynomials of degree
\beqz
M = \sum_{\alpha\in\Si} m_\alpha,
\eeqz
as we assumed that $\ell^\infty$ is non-zero. Thus, the twist function admits $M$ zeros $\ze_i$, $i\in\lbrace 1, \cdots, M \rbrace$, in the complex plane. One can then write
\beqz
\vp(z) = -\ell^\infty \frac{\displaystyle \prod_{i=1}^M(z-\ze_i)}{\displaystyle \prod_{\alpha\in\Si}(z-\po_\alpha)^{m_\alpha}}.
\eeqz
For the rest of this article, we will suppose that the zeros $\ze_i$ of $\vp(z)$ are simple.

\paragraph{Extracting Hamiltonians from the zeros.} Recall the spectral parameter dependent Hamiltonian $\Hc(z)$ introduced in \eqref{Eq:HamSpec}. As $\Sg(z)$ and $\Pc(z)$ are regular at $\ze_i$ (because the zeros $\ze_i$ of the twist function are all different than its poles $\po_\alpha$), the evaluation of $\Hc(z)$ at $\ze_i$ is given by
\begin{equation}\label{Eq:HEvalZeros}
\Hc(\ze_i) = \frac{1}{2} \int_\D \dd x \; \kappa \bigl( \Sg(\ze_i,x), \Sg(\ze_i,x) \bigr).
\end{equation}
As $\Hc(z)$ possesses the partial fraction decomposition \eqref{Eq:HamDES}, it is clear that the $\Hc(\ze_i)$'s, $i\in\lbrace 1,\cdots,M\rbrace$, are linear combinations of the $\Hc^\alpha_{[p]}$'s, $\alpha\in\Si$ and $p\in\lbrace 0,\cdots,m_\alpha-1\rbrace$. As we supposed that the twist function has simple zeros, so that the $\ze_i$'s are pairwise distinct, the converse is also true according to the auxiliary Lemma \ref{Lem:PolesToZeros}. Thus, the sets of Hamiltonians $\bigl(\Hc(\ze_i)\bigr)_{i\in\lbrace 1,\cdots,M\rbrace}$ and $\bigl(\Hc^\alpha_{[p]}\bigr)^{\alpha\in\Si}_{p\in\lbrace 0,\cdots,m_\alpha-1\rbrace}$ both form complete sets of quantities extracted linearly from $\Hc(z)$.

Instead of the evaluations $\Hc(\ze_i)$, we shall use slightly different quantities in the rest of this article, constructed as follows. Let us introduce the local charge
\begin{equation}\label{Eq:QSpec}
\Q(z) = - \frac{1}{2\varphi(z)} \int_\D \dd x \; \kappa \bigl( \Sg(z,x), \Sg(z,x) \bigr) =  - \frac{\varphi(z)}{2} \int_\D \dd x \; \kappa \bigl( \Lc(z,x), \Lc(z,x) \bigr).
\end{equation}
We then define
\begin{equation}\label{Eq:QRes}
\Q_i = \res_{z=\ze_i} \Q(z) \, \dd z,
\end{equation}
for all $i\in\lbrace 1,\cdots, M \rbrace$. Recall that we suppose the zeros $\ze_i$ of $\vp(z)$ to be simple. Then, as $\Sg(z)$ is regular at $\ze_i$, we have
\begin{equation}\label{Eq:QHZeros}
\Q_i = - \frac{\Hc(\ze_i)}{\vp'(\ze_i)} = -\frac{1}{2\vp'(\zeta_i)} \int_\D \dd x \; \kappa \bigl( \Sg(\ze_i,x), \Sg(\ze_i,x) \bigr),
\end{equation}
where $\vp'(z)$ denotes the derivative of the twist function with respect to the spectral parameter (note that $\vp'(\ze_i)$ is non-zero as $\ze_i$ is a simple zero of $\vp(z)$). The charges $\Q_i$ are then pairwise in involution:
\beqz
\lbrace \Q_i, \Q_j \rbrace = 0, \;\;\;\;\; \forall \, i,j\in\lbrace 1,\cdots,M \rbrace.
\eeqz

The set of charges $(\Q_i)_{i\in\lbrace 1,\cdots,M\rbrace}$ forms a complete set of quantities extracted linearly from $\Hc(z)$. In particular, the Hamiltonian $\Hc$ defined in Equation \eqref{Eq:HamPoles} can be written as a linear combination
\begin{equation}\label{Eq:Ham}
\Hc = \sum_{i=1}^M \epsilon_i \, \Q_i,
\end{equation}
for some numbers $\epsilon_i$. For the rest of this article, we shall use this parametrisation of the Hamiltonian of the model, rather than the one \eqref{Eq:HamPoles}.

\paragraph{Lax pair and zeros.} Recall the Lax matrix $\Lc(z,x)$ of the model, defined by Equation \eqref{Eq:Lax}. As $\vp(z)$ and $\Sg(z,x)$ have poles at the same positions $\po_\alpha$, $\alpha\in\Si$, and with the same orders $m_\alpha$ (as we required that highest levels $\ls\alpha{m_\alpha-1}$ are non-zero), the Lax matrix has no singularities at the $\po_\alpha$'s. Thus, it has only singularities at the zeros of the twist function. Moreover, as the zeros of the twist function are simple, these singularities are simple poles. The Lax matrix can then be written
\begin{equation}\label{Eq:LZeros}
\Lc(z,x) = \sum_{i=1}^M \frac{1}{\vp'(\ze_i)} \frac{\Sg(\ze_i,x)}{z-\ze_i}.
\end{equation}

Let us now consider the temporal component of the Lax pair, $\Mc(z,x)$. It can be expressed as \eqref{Eq:MPoles} if one uses the parametrisation \eqref{Eq:HamPoles} of the Hamiltonian $\Hc$. As explained in this subsection, we shall now use the parametrisation \eqref{Eq:Ham} of $\Hc$ instead, using the zeros of the twist function. From Equation \eqref{Eq:QHZeros} and the spectral parameter dependent zero curvature equation \eqref{Eq:ZceSpec}, we see that the matrix $\Mc(z,x)$ can be written in this parametrisation as
\beqz
\Mc(z,x) = -\sum_{i=1}^M \frac{\epsilon_i}{\vp'(\ze_i)} \Mcs{\ze_i}zx.
\eeqz
From the expression \eqref{Eq:MSpec} of $\Mcs wzx$, we then deduce that
\begin{equation}\label{Eq:MZeros}
\Mc(z,x) = \sum_{i=1}^M \frac{\epsilon_i}{\vp'(\ze_i)} \frac{\Sg(\ze_i,x)}{z-\ze_i}.
\end{equation}
We note the similarity of this expression with the one \eqref{Eq:LZeros} of the Lax matrix. This is one of the motivations for the use of the parametrisation \eqref{Eq:Ham} of the Hamiltonian, as it allows to treat the spatial and temporal components of the Lax pair in similar ways. This and related ideas will be useful to treat the space-time symmetries and Lorentz invariance of AGM in Subsection \ref{SubSec:SpaceTime}.

\paragraph{Positivity of the Hamiltonian.}\label{Par:Positivity} For this paragraph only we shall suppose that the real form $\g_0$ of the Lie algebra $\g$ is the compact form. The bilinear form $\kappa$ restricted to $\g_0$ is then positive. Let us also suppose that the zeros $\ze_i$ of the twist function are real. We then deduce from the reality conditions \eqref{Eq:RealSPhi} that $\Sg(\ze_i,x)$ belongs to the real form $\g_0$ and $\vp'(\ze_i)$ is real. In particular, the evaluation $\Hc(\ze_i)$, given by \eqref{Eq:HEvalZeros}, is positive for all configurations of the fields and the charge $\Q_i$, given by \eqref{Eq:QHZeros}, has the sign of $-\vp'(\ze_i)$. Thus, if we choose $\epsilon_i$ to be of this same sign, the Hamiltonian $\Hc$ is positive (and in particular bounded below).

\paragraph{Local integrable hierarchy.} Let us end this subsection with a brief discussion about integrable hierarchies in realisations of local AGM. It was shown in~\cite{Lacroix:2017isl} that models with twist function possess infinite integrable hierarchies of local charges associated with the zeros of the twist function\footnote{In~\cite{Lacroix:2017isl}, hierarchies are associated more precisely with so-called regular zeros $\ze_i$ of the twist function, for which the Gaudin Lax matrix $\Sg(z,x)$ is regular at $z=\ze_i$. In the context of local AGM and their realisations, all zeros of the twist function are automatically regular, as they are different from the positions $\po_\alpha$ of the sites $\alpha\in\Si$. Note also that the result of~\cite{Lacroix:2017isl} was proved only for algebras $\g$ of classical types A, B, C and D (although we do believe that a similar result should exist for exceptional types).}. These charges are given by
\begin{equation}\label{Eq:Hierarchy}
\Q^d_i = -\frac{1}{(d+1) \, \vp'(\ze_i)} \int_\D \dd x \; \Phi_d\bigl( \Sg(\ze_i,x) \bigr),
\end{equation}
where the $\Phi_d$'s are well-chosen invariant polynomials~\cite{Evans:1999mj} on $\g$ of degree $d+1$. The integers $d$ are restricted to specific values, namely the exponents of the untwisted affine algebra of $\g$. In particular, all affine algebras possess $d=1$ as an exponent, associated with quadratic charges $\Q^1_i$. The corresponding polynomial $\Phi_1$ is then the bilinear form $\kappa$. Thus, these quadratic charges $\Q^1_i$ coincide with the Hamiltonians $\Q_i$ introduced above.

The charges $\Q_i^d$, for $i\in\lbrace 1,\cdots,M \rbrace$ and $d$ running over the affine exponents of $\g$, are pairwise in involution. As the Hamiltonian $\Hc$ of the model is a linear combination of the quadratic charges $\Q^1_i=\Q_i$, these charges are also conserved. We then get $M$ infinite towers of local conserved charges in involution. Moreover, it is shown in~\cite{Lacroix:2017isl} that the Hamiltonian flows of these charges on the Lax matrix $\Lc(z,x)$ take the form of compatible zero curvature equations, generalising the result of Paragraph \ref{Par:ZCE} for quadratic Hamiltonians. This forms what is called an integrable hierarchy.

\subsubsection{Global diagonal symmetry of local AGM}
\label{SubSubSec:DiagSym}

\paragraph{Conserved charge.} Let us consider the field
\beqz
\Kc^\infty(x) = \sum_{\alpha\in\Si} \J\alpha0(x) = - \res_{z=\infty} \; \Sg(z,x) \, \dd z.
\eeqz
By the reality condition \eqref{Eq:RealityJc}, it is valued in the real form $\g_0$. From Equation \eqref{Eq:ZceSpec}, one finds that $\Kc^\infty$ is the time component of a conserved current, as
\begin{equation}\label{Eq:KConserved}
\lbrace \Hc(z), \Kc^\infty(x) \rbrace = \ell^\infty \p_x \Sg(z,x)
\end{equation}
and as the Hamiltonian of the model is extracted linearly from $\Hc(z)$. Thus, the $\g_0$-valued local charge
\beqz
\Gc^\infty = \int_\D \dd x \; \Kc^\infty(x)
\eeqz
is a conserved charge of the model. Moreover, from Equation \eqref{Eq:PbGaudin}, one shows that $\Gc^\infty$ satisfies the Kirillov-Kostant bracket of $\g_0$:
\beqz
\left\lbrace \Gc^\infty\ti{1}, \Gc^\infty\ti{2} \right\rbrace = \left[ C\ti{12}, \Gc^\infty\ti{1} \right].
\eeqz

\paragraph{Associated diagonal symmetry.} By the Noether theorem, this conserved charge is associated with an infinitesimal symmetry of the model with underlying symmetry algebra $\g_0$. The corresponding variation of an observable $\mathcal{O}\in\Ac$ with infinitesimal parameter $\epsilon\in\g_0$ is given by
\beqz
\delta^\infty_\epsilon \mathcal{O} = \kappa\left( \epsilon, \left\lbrace\Gc^\infty, \mathcal{O} \right\rbrace \right).
\eeqz
This transformation leaves invariant the Hamiltonian $\Hc$ of the model and more generally the spectral parameter dependent Hamiltonian $\Hc(z)$, as seen by integrating equation \eqref{Eq:KConserved} over $x$:
\beqz
\delta^\infty_\epsilon \Hc(z) = 0, \;\;\;\;\; \forall \, \epsilon\in\g_0, \;\; \forall \, z\in\C.
\eeqz
From Equation \eqref{Eq:PbGaudin}, one shows that the variation of the Gaudin Lax matrix is given by
\beqz
\delta^\infty_\epsilon \Sg(z,x) = \bigl[ \Sg(z,x), \epsilon \bigr].
\eeqz
As a consequence, one has, using Equation \eqref{Eq:S},
\beqz
\delta^\infty_\epsilon \J\alpha p(x) = \bigl[ \J\alpha p(x), \epsilon \bigr], \;\;\;\;\; \forall \, \alpha\in\Si, \;\;\;\; \forall \, p\in\lbrace 0,\cdots,m_\alpha-1\rbrace.
\eeqz
We call $\delta^\infty$ the diagonal symmetry of the model, as it acts similarly on each Takiff current $\J\alpha p$. This infinitesimal symmetry lifts to a global action of the adjoint group $G_0$ of $\g_0$:
\beqz
\J\alpha p \longmapsto h^{-1} \J\alpha p h,
\eeqz
parametrized by $h\in G_0$.

By invariance of the form $\kappa$, one then sees that the generalised Segal-Sugawara integrals $\Dc\alpha p$ are also left unchanged under this transformation. Similarly, by invariance of the polynomials $\Phi_d$, one gets that the whole integrable hierarchy of local charges $\Q^d_i$ defined in Equation \eqref{Eq:Hierarchy} is invariant under this transformation, generalising the invariance of the quadratic charges $\Q^1_i$.

\subsection{The landscape of realisations of local AGM}
\label{SubSec: landscape}
\subsubsection{Parameters of the model}
\label{SubSubSec:Param}

Let us summarise what are the parameters characterising the model introduced in the two previous subsections:\vspace{-4pt}
\begin{itemize}\setlength\itemsep{0.1em}
\item the Takiff datum $\lt$ (which contains the data of the sites, their multiplicities and their levels), defining, together with the Takiff realisation $\pi$, the algebra of observables of the model ;
\item the positions $\pb$ of the sites ;
\item the constant term $\ell^\infty$ in the twist function ;
\item the coefficients $\eb=(\epsilon_i)_{i=1,\cdots,M}$ defining the Hamiltonian through Equation \eqref{Eq:Ham}.
\end{itemize}
Note that the datum of $\lt$, $\pb$ and $\ell^\infty$ is exactly equivalent to the choice of a twist function $\vp(z)$ as in \eqref{Eq:Twist}. To stress this dependence, we will sometimes denote the Hamiltonian \eqref{Eq:Ham} as $\Hc = \Hc^{\vp,\pi}_{\eb}$ when we have to discuss several (realisations of) affine Gaudin models at the same time. Similarly, we will sometimes use the notation $\mathbb{M}^{\vp,\pi}_{\eb}$ to identify in a compact way the model we are considering.

Note that there is a constraint on the parameters defining $\mathbb{M}^{\vp,\pi}_{\eb}$. Indeed, to describe the Hamiltonian of the theory from the quadratic charges extracted from the zeros of the twist function, we supposed that these zeros are simple.

\subsubsection{Change of spectral parameter}
\label{SubSubSec:ChangeSpec}

\paragraph{Change of coordinates in the spectral plane.} A crucial element in the construction of local AGM and their realisations is the spectral parameter $z$, which plays the role of an arbitrary auxiliary parameter. It is then natural to wonder if there are other possible choices of spectral parameter which preserve the structure of local AGM.

Let us first determine what would be such a possible change of spectral parameter. It would be mathematically represented as an invertible map from the complex plane $\C$ to itself. As we are considering rational functions of the spectral parameter, this map should send rational functions to rational functions and thus be rational itself. Yet, an invertible rational map from $\C$ to itself is a linear polynomial, \textit{i.e.} an affine transformation\footnote{Rational functions are more naturally defined on the Riemann sphere $\mathbb{P}^1$, which can be seen as $\C$ plus the point at infinity. One could then imagine considering invertible rational maps from $\mathbb{P}^1$ to itself. These are larger than the affine transformations, as they are given by the M\"obius transformations. However, considering such transformations here would spoil the particular status of the point at infinity (corresponding to the level $\ell^\infty$ without Takiff currents) and thus would bring us out of the class of models that we consider.}. Moreover, this map should send the positions of the sites $\pb$ to another set of admissible positions: in particular, it should send real points to real points and pair of conjugated points to pair of conjugated points. We shall then consider the maps
\beqz
\begin{array}{rccc}
\tau_{a,b} : & \C & \longrightarrow & \C \\
                      &  z & \longmapsto & a z + b
\end{array},
\eeqz
with $a\in\R^*$ and $b\in\R$.

\paragraph{Transformation of the twist function.} Let us then fix such a map and consider the change of spectral parameter $z \mapsto \lat = \tau_{a,b}(z)=a z+b$. In the construction of local AGM presented in~\cite{Vicedo:2017cge}, the twist function appears naturally as a differential 1-form $\vp(z) \, \dd z$. Thus, the twist function should transform as
\beqz
\vp(z) \, \dd z = \vpt(\lat) \, \dd \lat.
\eeqz
Considering the expression of the affine map $\tau_{a,b}$, one simply gets here that
\begin{equation}\label{Eq:ChangeTwist}
\vpt(\lat) = \frac{1}{a} \, \vp\left(\frac{\lat-b}{a}\right).
\end{equation}
Let us consider the partial fraction decomposition \eqref{Eq:Twist} of the twist function $\vp(z)$. One then finds from equation \eqref{Eq:ChangeTwist} a similar decomposition for the new twist function:
\begin{equation}\label{Eq:TwistTilde}
\vpt(\lat) = \sum_{\alpha\in\Si} \sum_{p=0}^{m_\alpha-1} \frac{a^p \,\ls\alpha p}{\left(\lat-\pot_\alpha\right)^{p+1}} - \frac{\ell^\infty}{a},
\end{equation}
where $\pot_\alpha = \tau_{a,b} (\po_\alpha) = a \po_\alpha+b$ are the positions of the sites for the model with spectral parameter $\lat$. This twist function then corresponds to a Takiff datum with levels
\begin{equation}\label{Eq:LevelTilde}
\widetilde{\ell}^\alpha_{p} = a^p\,\ls\alpha p.
\end{equation}

\paragraph{Transformation of the realisation.} The model initially considered is defined from a realisation $\pi: \Tc_{\lt} \rightarrow \Ac$ of the Takiff algebra with levels $\lt$. This realisation was given by  Takiff currents $\J\alpha p(x)$ in $\Ac$ (see Paragraph \ref{Par:Realisation}). It is easy to check that the currents
\begin{equation}\label{Eq:CurrentsTilde}
\widetilde{\mathcal{J}}^\alpha_{[p]} = a^p \, \J\alpha p
\end{equation}
satisfy the Poisson brackets of Takiff currents with levels $\widetilde{\lt}$. This then defines a realisation
\beqz
\widetilde{\pi} :  \Tc_{\widetilde{\lt}} \longrightarrow \Ac
\eeqz
of the algebra with Takiff datum $\widetilde{\lt}$. It is easy to check that the the corresponding Gaudin Lax matrix is given by
\beqz
\widetilde{\Sg}(\lat,x) = \sum_{\alpha\in\Si} \sum_{p=0}^{m_\alpha-1} \frac{a^p \,\J\alpha p(x)}{\left(\lat-\pot_\alpha\right)^{p+1}} = \frac{1}{a} \,\Sg\left(\frac{\lat-b}{a},x\right).
\eeqz
Thus, the Gaudin Lax matrix also obeys the transformation law of a 1-form:
\beqz
\Sg(z,x) \, \dd z = \widetilde{\Sg}(\lat,x)\, \dd \lat,
\eeqz
which is coherent, as it also appears naturally as a 1-form in the reference~\cite{Vicedo:2017cge}.

\paragraph{Transformation of the quadratic Hamiltonians.} Let us consider the images $\zet_i=\tau_{a,b}(\ze_i)$, $i\in\lbrace 1,\cdots,M \rbrace$ of the zeros $\ze_i$ of the twist function $\vp(z)$. By construction, they are the zeros of the transformed twist function $\vpt(\lat)$. Let us also consider the quantity $\Q(z)$ defined in \eqref{Eq:QSpec}. Considering the transformations above of the twist function and the Gaudin Lax matrix, one sees that this quantity transforms according to
\beqz
\widetilde{\Q}(\lat) = \frac{1}{a} \, \Q\left( \frac{\lat-b}{a} \right),
\eeqz
and thus as a 1-form, \textit{i.e.} such that
\beqz
\Q(z)\,\dd z = \widetilde{\Q}(\lat) \, \dd \lat.
\eeqz
Let
\beqz
\widetilde{\Q}_i = \res_{\lat=\zet_i} \widetilde\Q(\lat) \, \dd \lat
\eeqz
be the equivalent for the transformed model of the quadratic charges $\Q_i$ defined in \eqref{Eq:QRes}. As residues of 1-form are invariant under change of coordinates, the transformed charges $\widetilde\Q_i$ actually coincide with the initial charges $\Q_i$. In particular, we then have the equality of the Hamiltonians of the models:
\beqz
\Hc^{\vpt,\widetilde{\pi}}_{\eb} = \Hc^{\vp,\pi}_{\eb},
\eeqz
without redefining the parameters $\eb$. Note that other sets of quadratic charges, such as the evaluations or the residues of $\Hc(z)$, would not be invariant under the change of spectral parameter. This also motivates the choice of the $\Q_i$'s as a ``good'' basis of quadratic charges and of $\eb$ as ``good'' parameters of the model.

\paragraph{Transformation of the Lax pair.} From the expressions \eqref{Eq:LZeros} and \eqref{Eq:MZeros} of the matrices $\Lc(z,x)$ and $\Mc(z,x)$, together with the transformation laws of $\Sg(z,x)$ and $\vp(z)$, one checks that the Lax pair of the model transforms as a function under a change of spectral parameter, \textit{i.e.} that
\beqz
\widetilde{\Lc}(\lat,x) = \Lc(z,x) \;\;\;\;\; \text{ and } \;\;\;\;\; \widetilde{\Mc}(\lat,x) = \Mc(z,x).
\eeqz
It is clear that the zero curvature equation \eqref{Eq:Zce} for the Lax pair $\bigl( \Lc(z,x), \Mc(z,x) \bigr)$ is equivalent to the one for the Lax pair
\begin{equation}\label{Eq:TransfoLax}
\left( \Lc\left( \frac{\lat-b}{a},x\right), \Mc\left( \frac{\lat-b}{a},x\right) \right)
\end{equation}
after the change of spectral parameter.

\paragraph{Summary.} As a conclusion, we have proved that the models $\mathbb{M}^{\vp,\pi}_{\eb}$ and $\mathbb{M}^{\vpt,\widetilde\pi}_{\eb}$ are identical, as they possess the same Hamiltonian. This then exhibits a redundancy in the set of parameters $\vp$, $\pi$, $\eb$ used to describe the model. Let us discuss this redundancy in more details.

Dilations $\tau_{a,0}$ modify the levels of the model through the relation \eqref{Eq:LevelTilde}. Thus, they modify the data defining the algebra of observables of the local AGM. However, this modification is counter-balanced by the one of the realisation $\pi$ through Equation \eqref{Eq:CurrentsTilde}, so that the two models still share the same algebra of observables $\Ac$. Moreover, as we see in Equation \eqref{Eq:TwistTilde}, dilations also modify the level at infinity $\ell^\infty$, \textit{i.e.} the constant term in the twist function, by a factor $1/a$. One can then use the redundancy coming from dilations to fix the value of $\ell^\infty$.

Translations $\tau_{0,b}$ act differently on the parameters of the model. Indeed, they only modify the positions $z_\alpha$ of the sites. Thus, they can be used to fix the position of one site to a particular point in the complex plane, without changing the Takiff datum of the model.

\subsubsection{Coupling and decoupling realisations of local AGM}
\label{SubSubSec:Coupling}

\paragraph{Combining two Takiff realisations.} Let us consider two Takiff data
\beqz
\ltt k = \Bigl( \bigl( \ls{\alpha}{p} \bigr)^{\alpha \in \Si^{(k)}_\rd}_{p\in\lbrace 0,\cdots,m_\alpha-1\rbrace}, \bigl( \ls{\alpha}{p} \bigr)^{\alpha \in \Si^{(k)}_\cd}_{p\in\lbrace 0,\cdots,m_\alpha-1\rbrace} \Bigr),
\eeqz
for $k\in\lbrace 1, 2 \rbrace$. We denote the levels of both Takiff data using the same notation $\ls\alpha p$, the distinction between the two being contained in the fact that the labels of the sites $\alpha$ are taken in different sets $\Si^{(k)}_{\rd}$ and $\Si^{(k)}_{\cd}$. Let us then form the sets
\beqz
\Si^{(1\otimes 2)}_\rd = \Si^{(1)}_\rd \sqcup \Si^{(2)}_\rd \;\;\;\;\; \text{ and } \;\;\;\;\; \Si^{(1\otimes 2)}_\cd = \Si^{(1)}_\cd \sqcup \Si^{(2)}_\cd
\eeqz
and consider the Takiff datum
\begin{equation}\label{Eq:CoupledDatum}
\ltt{1\otimes 2} = \Bigl( \bigl( \ls{\alpha}{p} \bigr)^{\alpha \in \Si^{(1\otimes 2)}_\rd}_{p\in\lbrace 0,\cdots,m_\alpha-1\rbrace}, \bigl( \ls{\alpha}{p} \bigr)^{\alpha \in \Si^{(1\otimes 2)}_\cd}_{p\in\lbrace 0,\cdots,m_\alpha-1\rbrace} \Bigr),
\end{equation}
which contains the sites and levels of both $\ltt 1$ and $\ltt 2$. The Takiff algebra $\Tc_{\ltt{1\otimes 2}}$ is generated by Takiff currents $\Jt\alpha p(x)$ for $\alpha$ in $\Si^{(1)}$ and $\Si^{(2)}$. The Poisson bracket between the Takiff currents for $\alpha\in\Si^{(k)}$ is the same as in the algebra $\Tc_{\ltt k}$ of $\ltt k$ alone. Moreover, the Poisson bracket between Takiff currents of sites in $\Si^{(1)}$ and $\Si^{(2)}$ is zero. Thus, the algebra $\Tc_{\ltt{1 \otimes 2}}$ can be seen as the tensor product
\beqz
\Tc_{\ltt{1 \otimes 2}} = \Tc_{\ltt 1} \otimes \Tc_{\ltt 2},
\eeqz
as defined in Paragraph \ref{Par:FieldTheory}. Let us suppose that we have realisations $\pi_k : \Tc_{\ltt k} \rightarrow \Ac_k$ for both $k$ equal to 1 and 2. Then, the map
\begin{equation}\label{Eq:CoupledReal}
\pi_{1\otimes 2} = \pi_1 \otimes \pi_2 : \Tc_{\ltt{1 \otimes 2}} \longrightarrow \Ac_1 \otimes \Ac_2
\end{equation}
defines a realisation. Moreover, this realisation is suitable (see Paragraph \ref{Par:LocReal}) if and only if both realisations $\pi_1$ and $\pi_2$ are.

\paragraph{Coupling realisations of AGM.}\label{Par:Coupling} Let us consider two models $\mathbb{M}^{\vp_1,\pi_1}_{\ebb 1}$ and $\mathbb{M}^{\vp_2,\pi_2}_{\ebb 2}$, with corresponding Takiff data $\ltt 1$ and $\ltt 2$ (and constructed from the same underlying Lie algebra $\g$). They are integrable models, with observables $\Ac_1$ and $\Ac_2$ respectively. In this subsection, we aim to construct an integrable model coupling these two models. The algebra of observables of such a model would be the algebra generated by the observables in both $\Ac_1$ and $\Ac_2$, \textit{i.e.} the tensor product $\Ac_1 \otimes \Ac_2$. As seen above, this tensor product is a realisation of the Takiff algebra $\Tc_{\ltt{1\otimes 2}}$, through the map $\pi_{1 \otimes 2}$. We shall then construct the coupled model as a realisation of a local AGM with Takiff datum $\ltt{1 \otimes 2}$ and sites $\Si^{(1\otimes 2)}=\Si^{(1)} \sqcup \Si^{(2)}$.\\

According to Subsection \ref{SubSubSec:Param}, the next step in the construction of the coupled model is to choose the positions of the sites and the level at infinity $\ell^\infty$, which together with the Takiff datum $\ltt{1\otimes 2}$ will determine the twist function of the model. At first sight, it could seem difficult to choose in a canonical way the constant term $\ell^\infty$, as one could for example take the one of the twist function $\vp_1$ or the one of the twist function $\vp_2$. However, this difficulty can be overcome using dilations of the spectral parameter. Indeed, as explained in Subsection \ref{SubSubSec:ChangeSpec}, such a dilation for the model $\mathbb{M}^{\vp_2,\pi_2}_{\ebb 2}$ would modify the constant term in $\vp_2$ without changing the model (and in particular its Hamiltonian). Thus, one can perform an appropriate dilation so that this constant term agrees with the one of $\vp_1$. We shall then choose $\ell^\infty$ to be this common value. Note that this dilation changes the other levels $\ltt 2$ and the realisation $\pi_2$ describing the model $\mathbb{M}^{\vp_2,\pi_2}_{\ebb 2}$: thus, one has to apply this dilation before considering the coupling of the two Takiff realisations as explained above (we will suppose that this transformation has already been done and will keep the same notations for $\ltt 2$ and $\pi_2$).\\

Let us now choose the positions of the sites of the coupled model. Here, a natural choice would be to associate with each site $\alpha$ in $\Si^{(1\otimes 2)}=\Si^{(1)} \sqcup \Si^{(2)}$ its position $z_\alpha$ in the model $\mathbb{M}^{\vp_k,\pi_k}_{\ebb k}$, whether it belongs to $\Si^{(1)}$ or $\Si^{(2)}$. Remember however from Subsection \ref{SubSubSec:ChangeSpec} that one can modify these positions without changing the models $\mathbb{M}^{\vp_k,\pi_k}_{\ebb k}$, by applying translations to their spectral parameters. Thus, one can define the positions of the sites of the coupled model to be
\beqz
w_\alpha = z_\alpha + a_k, \;\;\;\;\; \forall \, \alpha\in\Si^{(k)},
\eeqz
where $a_1$ and $a_2$ are two real numbers. Note that one can also act by a translation of the spectral parameter on the coupled model too, thus shifting all positions $w_\alpha$ by a common number. Hence, one can choose $a_1=0$. For future convenience, we shall parametrise $a_2$ as $1/\gamma$, with $\gamma$ a real number that we will call the \textit{coupling parameter}. In conclusion, we have
\begin{equation}\label{Eq:CoupledPositions}
w_\alpha = z_\alpha \;\; \text{ if } \alpha \in \Si^{(1)} \;\;\;\;\;\; \text{ and } \;\;\;\;\;\; w_\alpha = z_\alpha + \frac{1}{\gamma} \;\; \text{ if } \alpha \in \Si^{(2)}.
\end{equation}
We will denote by $\vp_{1 \otimes 2,\gamma}$ the twist function of the coupled model, thus defined by the Takiff datum $\ltt{1\otimes 2}$, the positions $w_\alpha$ and the constant term $\ell^\infty$. The last parameters to fix to define entirely the coupled model are the coefficients $\eb$ defining the Hamiltonian of the model. We will discuss how to choose these coefficients canonically in the next paragraph.

\paragraph{Coupled Hamiltonian.} Let us consider the parameters $\ebb 1$ and $\ebb 2$ entering the definition of the models $\mathbb{M}^{\vp_k,\pi_k}_{\ebb k}$. They are associated respectively with the zeros of the twist functions $\vp_1$ and $\vp_2$. We define
\beqz
M_1 = \sum_{\alpha\in\Si^{(1)}} m_\alpha, \;\;\;\;\; M_2 = \sum_{\alpha\in\Si^{(2)}} m_\alpha \;\;\;\;\; \text{and} \;\;\;\;\; M=M_1+M_2.
\eeqz
There are then $M_1$ zeros of $\vp_1$, that we shall denote by $\zeta^{(1)}_i$ for $i\in\lbrace 1,\cdots,M_1 \rbrace$ and $M_2$ zeros of $\vp_2$, that we shall denote by $\zeta^{(2)}_i$ for $i\in\lbrace M_1+1,\cdots,M \rbrace$. We then label the parameters in $\ebb 1$ and $\ebb 2$ in a similar way:
\beqz
\ebb 1 = \bigl( \epsilon^{(1)}_i \bigr)_{i=1,\cdots,M_1} \;\;\;\;\;\; \text{ and } \;\;\;\;\;\; \ebb 2 = \bigl( \epsilon^{(2)}_i \bigr)_{i=M_1+1,\cdots,M}. 
\eeqz
Let us note that to consider the models $\mathbb{M}^{\vp_k,\pi_k}_{\ebb k}$, we have to suppose that the zeros $\ze_i^{(k)}$ are simple.\\

By construction, the twist function $\vp_{1 \otimes 2,\gamma}$ of the coupled model possesses $M$ zeros $\ze_i(\gamma)$, $i\in\lbrace 1,\cdots,M \rbrace$, which depend on the coupling parameter $\gamma$. The Hamiltonian $\Hc^{\vp_{1 \otimes 2,\gamma}\,,\,\pi_{1\otimes 2}}_{\eb}$ of the coupled model is then determined by the choice of parameters $\eb=(\epsilon_i)_{i=1,\cdots,M}$ (provided that the zeros of $\vp_{1 \otimes 2,\gamma}$ are simple). The resulting model is then an integrable model for all choices of $\eb$.

Let us note that there is a model on $\Ac_1 \otimes \Ac_2$ which is also automatically integrable: the model with Hamiltonian $\Hc^{\vp_1,\pi_1}_{\ebb 1} + \Hc^{\vp_2,\pi_2}_{\ebb 2}$, corresponding to the two models $\mathbb{M}^{\vp_k,\pi_k}_{\ebb k}$ combined without any coupling. As we want to construct a model which couples $\mathbb{M}^{\vp_1,\pi_1}_{\ebb 1}$ and $\mathbb{M}^{\vp_2,\pi_2}_{\ebb 2}$, we are looking for a choice of parameters such that the decoupled model is obtained in some limit of these parameters. The important result of this Subsection is then the following.

\begin{theorem}\label{Thm:Decoupling}
For $\gamma$ small enough, one can order the zeros $\ze_i(\gamma)$, $i\in\lbrace 1,\cdots,M\rbrace$ in such a way that $\ze_i(\gamma)$ is canonically associated with the zero $\ze_i^{(k)}$, with $k$ equal to 1 or 2 whether $i\in\lbrace 1,\cdots, M_1\rbrace$ or $i\in\lbrace M_1+1,\cdots,M \rbrace$. Moreover, for $\gamma$ small enough, these zeros are simple.

\noi Besides, if we choose the parameters $\epsilon_i$ to be equal to the corresponding $\epsilon_i^{(k)}$, we have
\beqz
\Hc^{\vp_{1 \otimes 2,\gamma}\,,\,\pi_{1\otimes 2}}_{\eb} \xrightarrow{\gamma \to 0} \, \Hc^{\vp_1,\pi_1}_{\ebb 1} + \Hc^{\vp_2,\pi_2}_{\ebb 2}.
\eeqz
\end{theorem}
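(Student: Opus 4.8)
The plan is to prove the two assertions in turn: the statement about the zeros by a continuity-of-roots argument, and the convergence of the Hamiltonian by passing to the limit in the closed formula \eqref{Eq:QHZeros} for the quadratic charges. The starting point is the remark that, with the positions \eqref{Eq:CoupledPositions}, the twist function and the Gaudin Lax matrix of the coupled model factorise through those of the two building blocks: from \eqref{Eq:S}, \eqref{Eq:Twist} and the equality $\pi_{1\otimes 2}=\pi_1\otimes\pi_2$ one gets
\[
\vp_{1\otimes 2,\gamma}(z) = \vp_1(z) + \vp_2\bigl(z-\gamma^{-1}\bigr) + \ell^\infty, \qquad \Sg_{1\otimes 2,\gamma}(z,x) = \Sg_1(z,x) + \Sg_2\bigl(z-\gamma^{-1},x\bigr),
\]
the shift by $\gamma^{-1}$ acting only on the spectral parameter. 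Since $\vp_k$ and $\Sg_k(\cdot,x)$ are rational with no pole at infinity, $\vp_k(w)\to-\ell^\infty$ and $\Sg_k(w,x)\to 0$ as $w\to\infty$ (with derivatives of order $O(w^{-2})$), so for $\gamma\to 0$ one has $\vp_{1\otimes 2,\gamma}\to\vp_1$ and $\Sg_{1\otimes 2,\gamma}\to\Sg_1$ locally uniformly on compacts of the $z$-plane, while after the shift $\vp_{1\otimes 2,\gamma}(\gamma^{-1}+\tilde z)\to\vp_2(\tilde z)$ and $\Sg_{1\otimes 2,\gamma}(\gamma^{-1}+\tilde z,x)\to\Sg_2(\tilde z,x)$ locally uniformly in $\tilde z$.

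For the zeros, I would fix pairwise disjoint closed disks $D_i$ around the simple zeros $\ze^{(1)}_i$, $i\in\lbrace 1,\dots,M_1\rbrace$, of $\vp_1$, each avoiding the poles of $\vp_1$, and similarly disks $\widetilde D_i$ around the zeros $\ze^{(2)}_i$, $i\in\lbrace M_1+1,\dots,M\rbrace$, of $\vp_2$. On their boundaries $\vp_1$ (resp.\ $\vp_2$) is bounded away from $0$, so by Rouch\'e's theorem and the uniform convergence above, for $\gamma$ small enough $\vp_{1\otimes 2,\gamma}$ has exactly one zero in each $D_i$ and exactly one in each $\gamma^{-1}+\widetilde D_i$. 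For $\gamma$ small the poles $w_\alpha$ of $\vp_{1\otimes 2,\gamma}$ avoid all of these disks, so its numerator and denominator remain coprime; the numerator has degree $M=M_1+M_2$ since $\ell^\infty\neq 0$, so these are all the zeros. Setting $\ze_i(\gamma)$ to be the zero in $D_i$ for $i\le M_1$ and $\ze_i(\gamma)=\gamma^{-1}+\tilde\ze_i(\gamma)$ with $\tilde\ze_i(\gamma)\in\widetilde D_i$ for $i>M_1$ gives the asserted ordering and the canonical association ($\ze_i(\gamma)\to\ze^{(1)}_i$, resp.\ $\tilde\ze_i(\gamma)\to\ze^{(2)}_i$); simplicity for $\gamma$ small follows from Hurwitz's theorem, or directly from the implicit function theorem since $\vp_1'(\ze^{(1)}_i)\neq 0$ and $\vp_2'(\ze^{(2)}_i)\neq 0$.

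For the Hamiltonian, I would use that, the $\ze_i(\gamma)$ being simple, formula \eqref{Eq:QHZeros} gives $\Q_i(\gamma)=-\tfrac{1}{2}\,\vp_{1\otimes 2,\gamma}'(\ze_i(\gamma))^{-1}\int_\D\dd x\,\kappa\bigl(\Sg_{1\otimes 2,\gamma}(\ze_i(\gamma),x),\Sg_{1\otimes 2,\gamma}(\ze_i(\gamma),x)\bigr)$ for the quadratic charges of the coupled model. For $i\le M_1$, $\ze_i(\gamma)\to\ze^{(1)}_i$, and Cauchy's estimate together with the local uniform convergence give $\vp_{1\otimes 2,\gamma}'(\ze_i(\gamma))\to\vp_1'(\ze^{(1)}_i)\neq 0$ and $\Sg_{1\otimes 2,\gamma}(\ze_i(\gamma),x)=\Sg_1(\ze_i(\gamma),x)+\Sg_2(\ze_i(\gamma)-\gamma^{-1},x)\to\Sg_1(\ze^{(1)}_i,x)$, whence $\Q_i(\gamma)\to\Q^{(1)}_i$, the $i$-th quadratic charge of $\mathbb{M}^{\vp_1,\pi_1}_{\ebb 1}$. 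For $i>M_1$ the same computation in the shifted variable — with $\vp_{1\otimes 2,\gamma}'(\gamma^{-1}+\tilde z)=\vp_1'(\gamma^{-1}+\tilde z)+\vp_2'(\tilde z)\to\vp_2'(\tilde z)$ and $\Sg_{1\otimes 2,\gamma}(\gamma^{-1}+\tilde\ze_i(\gamma),x)\to\Sg_2(\ze^{(2)}_i,x)$ — gives $\Q_i(\gamma)\to\Q^{(2)}_i$. Each $\Q_i(\gamma)$ is the integral of a fixed quadratic expression in the Takiff currents with scalar coefficients depending continuously on $\gamma$, so the convergence is coefficient-wise; summing against $\epsilon_i=\epsilon^{(k)}_i$ and using \eqref{Eq:Ham} then yields $\Hc^{\vp_{1\otimes 2,\gamma},\pi_{1\otimes 2}}_{\eb}=\sum_{i=1}^M\epsilon_i\Q_i(\gamma)\xrightarrow{\gamma\to 0}\sum_{i=1}^{M_1}\epsilon^{(1)}_i\Q^{(1)}_i+\sum_{i=M_1+1}^{M}\epsilon^{(2)}_i\Q^{(2)}_i=\Hc^{\vp_1,\pi_1}_{\ebb 1}+\Hc^{\vp_2,\pi_2}_{\ebb 2}$.

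The hard part will be the first assertion: controlling the zeros of $\vp_{1\otimes 2,\gamma}$ uniformly as $\gamma\to 0$, i.e.\ showing that exactly $M_1$ of them stay near the zeros of $\vp_1$ while the other $M_2$ run off to infinity along $\gamma^{-1}+\ze^{(2)}_i$, and that all of them become and remain simple — in other words that for small $\gamma$ the coupling is a genuine perturbation of the decoupled configuration. Once this is secured, the limit of the Hamiltonian is a routine passage to the limit in \eqref{Eq:QHZeros}. One could alternatively organise the same argument through the interpolation lemmas of Appendix \ref{App:Interpolingrat}, exhibiting $\vp_{1\otimes 2,\gamma}$ and the coefficients of its zeros as quantities that depend smoothly on $\gamma$, but the Rouch\'e-type reasoning seems the most transparent.
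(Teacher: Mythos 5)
Your proposal is correct and follows the same overall strategy as the paper's proof in Appendix \ref{App:Decoupling}: the same decomposition $\vp_{1\otimes 2,\gamma}(z)=\vp_1(z)+\vp_2(z-\gamma^{-1})+\ell^\infty$ (and its analogue for the Gaudin Lax matrix), the same pair of limits --- one in the fixed frame recovering $(\vp_1,\Sg_1)$ and one in the frame shifted by $\gamma^{-1}$ recovering $(\vp_2,\Sg_2)$ --- and the same passage to the limit in the explicit formula \eqref{Eq:QHZeros} for the charges $\Q_i$. The one genuine difference is the mechanism used to track the zeros. The paper invokes an algebraic continuation lemma (Lemma \ref{Lem:LimitSol}) applied to the numerators $f^{(\gamma)}_{1\otimes 2}(z)$ and $f^{(\gamma)}_{1\otimes 2}(z+\gamma^{-1})$, normalised so as to depend algebraically on $\gamma$; you use Rouch\'e's theorem on small disks around the $\ze_i^{(k)}$ together with a degree count (note that the coprimality of numerator and denominator follows from the non-vanishing of the highest levels $\ls\alpha{m_\alpha-1}$ rather than from the location of the disks), and in fact Rouch\'e already yields simplicity, since each disk contains exactly one zero counted with multiplicity, so the appeal to Hurwitz is redundant. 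Both routes are valid and give the canonical labelling. What the paper's version buys in addition is that the $\ze_i(\gamma)$ depend algebraically on $\gamma$ and admit the expansions $\ze_i(\gamma)=\ze_i^{(1)}+O(\gamma)$ and $\ze_i(\gamma)=\gamma^{-1}+\ze_i^{(2)}+O(\gamma)$ of Lemma \ref{Lem:DecouplingZeros}; this extra precision is not needed for Theorem \ref{Thm:Decoupling} itself but is reused elsewhere, notably in Lemma \ref{Lem:RealityCoupling} on the reality of the coupled zeros. Your Cauchy-estimate argument for the convergence $\vp'_{1\otimes 2,\gamma}\bigl(\ze_i(\gamma)\bigr)\to\vp'_k\bigl(\ze_i^{(k)}\bigr)$ replaces the paper's direct computation from the partial-fraction form in Lemma \ref{Lem:LimitDerTwist} and is equally sound.
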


\begin{proof}
As the proof of this theorem is rather long and technical, we give it in Appendix \ref{App:Decoupling}.
\end{proof}

Theorem \ref{Thm:Decoupling} provides an answer to the question of constructing a model coupling $\mathbb{M}^{\vp_1,\pi_1}_{\ebb 1}$ and $\mathbb{M}^{\vp_2,\pi_2}_{\ebb 2}$. It first ensures that, at least for small enough values of $\gamma$, the twist function $\vp_{1 \otimes 2,\gamma}$ is simple and hence that the Hamiltonian of the corresponding realisation of local AGM is determined by parameters $\eb$. Moreover, it provides a canonical choice of these parameters such that the limit $\gamma \to 0$ of the model $\mathbb{M}^{\vp_{1 \otimes 2,\gamma}\,,\,\pi_{1\otimes 2}}_{\eb}$ corresponds to the decoupled combination of $\mathbb{M}^{\vp_1,\pi_1}_{\ebb 1}$ and $\mathbb{M}^{\vp_2,\pi_2}_{\ebb 2}$. For $\gamma \neq 0$ but small enough, we thus constructed an integrable model on $\Ac_1 \otimes \Ac_2$, which couples non trivially $\mathbb{M}^{\vp_1,\pi_1}_{\ebb 1}$ and $\mathbb{M}^{\vp_2,\pi_2}_{\ebb 2}$. This justifies \textit{a posteriori} the name of coupling parameter for $\gamma$.

\paragraph{Coupled and decoupled Lax pair.}\label{Par:DecouplingLax} Let us now discuss the Lax pair of the coupled model and its behaviour in the decoupling limit. By construction, the model $\mathbb{M}^{\vp_{1 \otimes 2,\gamma}\,,\,\pi_{1\otimes 2}}_{\eb}$ is integrable and possesses a Lax pair $\bigl( \Lc_\gamma(z), \Mc_\gamma(z) \bigr)$, satisfying a zero curvature equation in the Lie algebra $\g$. Moreover, the decoupled model with Hamiltonian $\Hc^{\vp_1,\pi_1}_{\ebb 1} + \Hc^{\vp_2,\pi_2}_{\ebb 2}$ is also integrable and admits a natural Lax pair
\beqz
\Lc^{(1\otimes 2)}(z) = \bigl( \Lc^{(1)}(z), \Lc^{(2)}(z) \bigr) \;\;\;\;\; \text{ and } \;\;\;\;\; \Mc^{(1\otimes 2)}(z) = \bigl( \Mc^{(1)}(z), \Mc^{(2)}(z) \bigr)
\eeqz
valued in $\g \times \g$, where $\bigl(\Lc^{(k)}(z),\Mc^{(k)}(z)\bigr)$, for $k$ equal to 1 or 2, is the Lax pair of the model $\mathbb{M}^{\vp_k,\pi_k}_{\eb_k}$.  It would seem at first that the Lax pairs of the coupled and decoupled model are unrelated, as they do not belong to the same Lie algebra. In fact, as shown in Appendix \ref{App:Decoupling}, one has
\beqz
\bigl( \Lc_\gamma(z), \Mc_\gamma(z) \bigr) \xrightarrow{\gamma \to 0} \bigl( \Lc^{(1)}(z), \Mc^{(1)}(z) \bigr),
\eeqz
so that one recovers only the Lax pair of the model $\mathbb{M}^{\vp_1,\pi_1}_{\eb_1}$ in the decoupled limit $\gamma\to 0$.\\

Let us explain how one can also recover the Lax pair of the model $\mathbb{M}^{\vp_2,\pi_2}_{\eb_2}$. Recall from Equation \eqref{Eq:CoupledPositions} that we introduced the coupling constant $\gamma$ through a shift of the positions of the sites of the model $\mathbb{M}^{\vp_2,\pi_2}_{\eb_2}$, relatively to the positions of the sites of the model $\mathbb{M}^{\vp_1,\pi_1}_{\eb_1}$ that we supposed fixed. One could have also considered the coupled model by keeping the same positions for the second model and shifting the positions of the first model. This model is equivalent to $\mathbb{M}^{\vp_{1 \otimes 2,\gamma}\,,\,\pi_{1\otimes 2}}_{\eb}$ \textit{via} a translation of the spectral parameter $z$ by $\gamma^{-1}$. The Lax pair of this translated model is naturally given by (see Equation \eqref{Eq:TransfoLax})
\beqz
\bigl( \Lc_\gamma(z+\gamma^{-1}), \Mc_\gamma(z+\gamma^{-1}) \bigr).
\eeqz
Similarly to what is discussed above (see Appendix \ref{App:Decoupling} for more details), one has
\beqz
\bigl( \Lc_\gamma(z+\gamma^{-1}), \Mc_\gamma(z+\gamma^{-1}) \bigr) \xrightarrow{\gamma \to 0} \bigl( \Lc^{(2)}(z), \Mc^{(2)}(z) \bigr).
\eeqz
As a conclusion, we have the following.

\begin{proposition}\label{Prop:DecouplingLax}
The $(\g\!\times\!\g)$-valued matrices
\beqz
\Lc^{(1\otimes 2)}_\gamma(z) = \bigl( \Lc_\gamma(z), \Lc_\gamma(z+\gamma^{-1}) \bigr) \;\;\;\;\; \text{ and } \;\;\;\;\; \Mc^{(1\otimes 2)}_\gamma(z) = \bigl( \Mc_\gamma(z), \Mc_\gamma(z+\gamma^{-1}) \bigr)
\eeqz
form a Lax pair of the coupled model $\mathbb{M}^{\vp_{1 \otimes 2,\gamma}\,,\,\pi_{1\otimes 2}}_{\eb}$. Moreover, in the decoupling limit, one has:
\begin{equation}\label{Eq:DecouplingLax12}
\bigl( \Lc^{(1\otimes 2)}_\gamma(z), \Mc^{(1\otimes 2)}_\gamma(z) \bigr) \xrightarrow{\gamma \to 0}  \bigl( \Lc^{(1\otimes 2)}(z), \Mc^{(1\otimes 2)}(z) \bigr).
\end{equation} 
\end{proposition}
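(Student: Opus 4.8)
The plan is to split Proposition~\ref{Prop:DecouplingLax} into its two assertions and to reduce each to material already in place, rather than to carry out a fresh computation. The first assertion is that $\bigl(\Lc^{(1\otimes 2)}_\gamma(z),\Mc^{(1\otimes 2)}_\gamma(z)\bigr)$ is a Lax pair for the coupled model $\mathbb{M}^{\vp_{1 \otimes 2,\gamma}\,,\,\pi_{1\otimes 2}}_{\eb}$; the second is the decoupling limit \eqref{Eq:DecouplingLax12}. For the first I would invoke the translation-covariance of the Lax pair established in Subsubsection~\ref{SubSubSec:ChangeSpec}, together with the elementary remark that a pair of zero curvature equations valued in $\g$ is the same thing as one zero curvature equation valued in $\g\times\g$. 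For the second I would simply recombine componentwise, in $\g\times\g$, the two limits recalled in the paragraphs just before the proposition.

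In more detail, for the Lax pair claim: the coupled model is integrable and its Lax pair $\bigl(\Lc_\gamma(z),\Mc_\gamma(z)\bigr)$ satisfies the zero curvature equation \eqref{Eq:Zce} for the time flow $\p_t=\bigl\{\Hc^{\vp_{1 \otimes 2,\gamma}\,,\,\pi_{1\otimes 2}}_{\eb},\cdot\bigr\}$, and this identity holds for every value of the spectral parameter. Since $\p_t$ does not act on the spectral parameter, specialising \eqref{Eq:Zce} once at $z$ and once at $z+\gamma^{-1}$ gives two zero curvature equations for the same flow, one for $\bigl(\Lc_\gamma(z),\Mc_\gamma(z)\bigr)$ and one for $\bigl(\Lc_\gamma(z+\gamma^{-1}),\Mc_\gamma(z+\gamma^{-1})\bigr)$; the latter pair is precisely the Lax pair obtained from the former by the translation $\tau_{1,\gamma^{-1}}$ of the spectral parameter, under which the model is unchanged, as discussed around \eqref{Eq:TransfoLax}. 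Stacking these two $\g$-valued equations into a single equation valued in $\g\times\g$ with the componentwise bracket yields
\beqz
\p_t\Lc^{(1\otimes 2)}_\gamma(z)-\p_x\Mc^{(1\otimes 2)}_\gamma(z)+\bigl[\Mc^{(1\otimes 2)}_\gamma(z),\Lc^{(1\otimes 2)}_\gamma(z)\bigr]=0,
\eeqz
and since each of its two components already encodes the full equations of motion of the coupled model, so does this $(\g\times\g)$-valued zero curvature equation; hence $\bigl(\Lc^{(1\otimes 2)}_\gamma(z),\Mc^{(1\otimes 2)}_\gamma(z)\bigr)$ is a Lax pair. For the decoupling limit I would invoke the two convergences
\beqz
\bigl(\Lc_\gamma(z),\Mc_\gamma(z)\bigr)\xrightarrow{\gamma\to0}\bigl(\Lc^{(1)}(z),\Mc^{(1)}(z)\bigr),\qquad
\bigl(\Lc_\gamma(z+\gamma^{-1}),\Mc_\gamma(z+\gamma^{-1})\bigr)\xrightarrow{\gamma\to0}\bigl(\Lc^{(2)}(z),\Mc^{(2)}(z)\bigr),
\eeqz
which are stated just before the proposition and proved in Appendix~\ref{App:Decoupling}, and read off \eqref{Eq:DecouplingLax12} by taking them componentwise in $\g\times\g$.

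The real content thus sits in those two component limits. I would establish them in Appendix~\ref{App:Decoupling} starting from the partial-fraction expressions \eqref{Eq:LZeros} and \eqref{Eq:MZeros}, writing $\Lc_\gamma(z)=\sum_{i=1}^{M}\vp_{1\otimes 2,\gamma}'(\ze_i(\gamma))^{-1}\,(z-\ze_i(\gamma))^{-1}\,\Sg(\ze_i(\gamma),x)$ and the analogous formula for $\Mc_\gamma(z)$ carrying the extra factors $\epsilon_i$. Using the labelling supplied by Theorem~\ref{Thm:Decoupling}, one splits the sum into the block $i\le M_1$, whose zeros stay near the finite poles $\po_\alpha$, $\alpha\in\Si^{(1)}$, with $\ze_i(\gamma)\to\ze_i^{(1)}$ and the attached residue data (the value of $\Sg$ at the zero, and $\vp_{1\otimes 2,\gamma}'$ there) converging to those built from $\vp_1$, and the block $i>M_1$, whose zeros cluster around the shifted poles $\po_\alpha+\gamma^{-1}$, $\alpha\in\Si^{(2)}$, and hence run off to infinity at rate $\gamma^{-1}$; for fixed $z$ the factor $(z-\ze_i(\gamma))^{-1}$ then kills the contribution of the second block, its residue coefficients remaining bounded, and what is left is exactly $\Lc^{(1)}(z)$, respectively $\Mc^{(1)}(z)$, where one uses that by Theorem~\ref{Thm:Decoupling} the chosen $\eb$ restricts to $\ebb 1$ on the first block. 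The shifted limit is the same argument with the two blocks interchanged, the block $i>M_1$ now satisfying $\ze_i(\gamma)-\gamma^{-1}\to\ze_i^{(2)}$ while the block $i\le M_1$ runs off to infinity. The main obstacle in all of this is the asymptotic control, as $\gamma\to0$, of the zeros $\ze_i(\gamma)$ and of $\vp_{1\otimes 2,\gamma}'$ at those zeros — that they organise into one bounded cluster and one cluster escaping at speed $\gamma^{-1}$, with no collisions so that all zeros stay simple — which is exactly what Theorem~\ref{Thm:Decoupling} and its proof provide; granting that input, the proposition follows by the bookkeeping just described.
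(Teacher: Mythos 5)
Your proposal is correct and follows essentially the same route as the paper: the Lax-pair assertion is reduced to the translation covariance of \eqref{Eq:TransfoLax} together with the componentwise reading of the zero curvature equation in $\g\times\g$, and the decoupling limit \eqref{Eq:DecouplingLax12} is obtained by combining the two component limits, which the paper likewise proves in Appendix \ref{App:Decoupling} from the partial-fraction forms \eqref{Eq:LZeros}--\eqref{Eq:MZeros} together with the limits \eqref{Eq:Limit}, \eqref{Eq:LimitShift} and the asymptotics of Lemmas \ref{Lem:DecouplingZeros} and \ref{Lem:LimitDerTwist}. Your explicit splitting of the sum over zeros into the bounded block and the block escaping at rate $\gamma^{-1}$ is exactly the mechanism the paper leaves implicit in its ``direct application'' step.
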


Note that the $(\g\!\times\!\g)$-valued Lax pair $\bigl( \Lc^{(1\otimes 2)}_\gamma(z), \Mc^{(1\otimes 2)}_\gamma(z) \bigr)$ of the model $\mathbb{M}^{\vp_{1 \otimes 2,\gamma}\,,\,\pi_{1\otimes 2}}_{\eb}$ is redundant. Indeed, its two $\g$-valued factors $\bigl( \Lc_\gamma(z), \Mc_\gamma(z) \bigr)$ and $\bigl( \Lc_\gamma(z+\gamma^{-1}), \Mc_\gamma(z+\gamma^{-1}) \bigr)$ contain the same informations and their zero curvature equations are equivalent. Despite this redundancy, the introduction of the Lax pair $\bigl( \Lc^{(1\otimes 2)}_\gamma(z), \Mc^{(1\otimes 2)}_\gamma(z) \bigr)$ in Proposition \ref{Prop:DecouplingLax} is necessary to get the decoupling limit \eqref{Eq:DecouplingLax12}, which cannot be obtained with only one of the $\g$-valued factor.

\subsection{Space-time symmetries}
\label{SubSec:SpaceTime}

In this Subsection, we consider a realisation $\mathbb{M}^{\vp,\pi}_{\eb}$ of local AGM. From now on, we will suppose that the Takiff realisation $\pi$ is suitable, as defined in Paragraph \ref{Par:LocReal}. This will be motivated by Proposition \ref{Prop:Momentum} which states that for such a suitable realisation, the momentum of the model has a simple expression in terms of the quadratic charges $\Q_i$. This will allow us to discuss various space-time symmetries of the model. In particular, we will give a simple characterisation of the relativistic invariance of the model in terms of the parameters $\eb$ of the model.

\subsubsection{Space-time translations, energy-momentum tensor and scale invariance}
\label{SubSubSec:EnergyMom}

\paragraph{Hamiltonian and momentum.} Recall that the Hamiltonian of the model is defined as \eqref{Eq:Ham} in terms of the quadratic charges $\Q_i$. It does not depend explicitly on the time parameter $t$ and is thus conserved. This is a sign of the invariance of the model under time translation. Moreover, the quadratic charge $\Q_i$ is expressed as an integral of a local density, which depends on the space coordinate $x$ only through the dynamical field $\Sg(\zeta_i,x)$ (see Equation \eqref{Eq:QHZeros}). Thus, $\Q_i$ is invariant under space translation and so is the Hamiltonian $\Hc$ of the model. As a consequence, the momentum $\Pc_\Ac$ of the model, which generates the spatial derivatives of dynamical fields, is conserved. An important result of this Subsection is the following proposition, which gives a simple expression of the momentum $\Pc_\Ac$ in terms of the charges $\Q_i$.

\begin{proposition}\label{Prop:Momentum}
We have (for a suitable realisation)
\begin{equation}\label{Eq:MomentumQi}
\Pc_\Ac = \sum_{i=1}^M \Q_i.
\end{equation}
\end{proposition}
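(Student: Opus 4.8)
The plan is to compute the momentum $\Pc_\Ac$ directly from its characterisation as the generator of spatial translations, and show it coincides with $\sum_{i=1}^M \Q_i$. The starting point is the observation from Paragraph \ref{Par:Momentum} that, on the abstract Takiff algebra, the momentum is $\Pc_{\lt} = \sum_{\alpha\in\Si} D^\alpha_{[0]}$, and from the suitability hypothesis \eqref{Eq:DefLocal} that $\Pc_\Ac = \pi(\Pc_{\lt}) = \sum_{\alpha\in\Si} \Dc{\alpha}{0}$, where $\Dc\alpha0 = \pi(D^\alpha_{[0]})$. So the task reduces to a purely ``rational function'' identity: I must show that $\sum_{\alpha\in\Si} \Dc{\alpha}{0}$ equals $\sum_{i=1}^M \Q_i = \sum_{i=1}^M \res_{z=\ze_i} \Q(z)\,\dd z$.

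First I would recall the quantity $\Pc(z) = \sum_{\alpha\in\Si}\sum_{p=0}^{m_\alpha-1} \Dc\alpha p/(z-\po_\alpha)^{p+1}$ from \eqref{Eq:P} and note that $\sum_{\alpha\in\Si}\Dc\alpha0 = -\res_{z=\infty}\Pc(z)\,\dd z$, since the residue at infinity picks out exactly the coefficient of $1/z$, which is $\sum_\alpha \Dc\alpha0$. Next, from the definition \eqref{Eq:HamSpec} of $\Hc(z)$ and the definition \eqref{Eq:QSpec} of $\Q(z)$, one has $\Hc(z) = -\vp(z)\Q(z) - \vp(z)\Pc(z)$, hence $\vp(z)\bigl(\Q(z)+\Pc(z)\bigr) = -\Hc(z)$. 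The key point is then to analyse the pole structure of $\Q(z)+\Pc(z)$. By \eqref{Eq:HamDES}, $\Hc(z)$ has poles only at the $\po_\alpha$, of order at most $m_\alpha$; since $\vp(z)$ has a pole of order exactly $m_\alpha$ at each $\po_\alpha$ (highest level non-zero), the quotient $-\Hc(z)/\vp(z) = \Q(z)+\Pc(z)$ is regular at every $\po_\alpha$. Its only singularities are therefore at the simple zeros $\ze_i$ of $\vp(z)$, coming from the $1/\vp(z)$ factor in $\Q(z)$ (note $\Pc(z)$ is regular at the $\ze_i$). Moreover $\Q(z)+\Pc(z)$ vanishes at infinity: $\Q(z) = -\frac{1}{2\vp(z)}\int \kappa(\Sg,\Sg)\,\dd x \to 0$ because $\vp(z)\to -\ell^\infty \neq 0$ while $\Sg(z,x)\to 0$, and $\Pc(z)\to 0$ as well. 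So $\Q(z)+\Pc(z)$ is a rational function that is regular away from the simple poles $\ze_i$ and vanishes at infinity; by partial fractions it equals $\sum_{i=1}^M \res_{z=\ze_i}\bigl(\Q(z)+\Pc(z)\bigr)/(z-\ze_i)$, and since $\Pc(z)$ is regular at $\ze_i$ this residue is just $\res_{z=\ze_i}\Q(z)\,\dd z = \Q_i$.

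From this I would conclude via the residue theorem on $\mathbb{P}^1$: the sum of all residues of $\bigl(\Q(z)+\Pc(z)\bigr)\dd z$ is zero, so $\res_{z=\infty}\bigl(\Q(z)+\Pc(z)\bigr)\dd z = -\sum_{i=1}^M \Q_i$. On the other hand $\Q(z)$ behaves like $O(1/z^{2})$ or better near infinity (since $\Sg(z,x) = O(1/z)$ so $\kappa(\Sg,\Sg) = O(1/z^2)$ and $\vp(z)^{-1}$ is bounded), hence $\res_{z=\infty}\Q(z)\,\dd z = 0$ — I would double-check this decay rate, as it is the one quantitative input that makes the argument work. Therefore $\res_{z=\infty}\Pc(z)\,\dd z = -\sum_{i=1}^M \Q_i$, i.e. $\sum_{\alpha\in\Si}\Dc\alpha0 = \sum_{i=1}^M \Q_i$, which together with $\Pc_\Ac = \sum_{\alpha\in\Si}\Dc\alpha0$ yields \eqref{Eq:MomentumQi}. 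The main obstacle, such as it is, is bookkeeping the behaviour at infinity carefully — confirming that $\Q(z)$ has no residue there and that $\Pc(z)$ and $\Q(z)$ are individually regular at the $\ze_i$ and at the $\po_\alpha$ respectively — but none of this is deep; the essential structural fact (poles of $\Hc(z)$ have order $\le m_\alpha$) is already established in Appendix \ref{App:SSHam}.
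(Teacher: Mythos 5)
Your proof is correct and follows essentially the same route as the paper: both reduce $\Pc_\Ac$ to $-\res_{z=\infty}\Pc(z)\,\dd z$ via suitability, both rely on the Appendix~\ref{App:SSHam} result that $\Hc(z)/\vp(z)$ is regular at the $\po_\alpha$, and both invoke the vanishing of the sum of residues of a 1-form on $\mathbb{P}^1$ together with the regularity of $\Q(z)\,\dd z$ at infinity. The only (cosmetic) difference is that you apply the global residue theorem to $\bigl(\Q(z)+\Pc(z)\bigr)\dd z$ in one step, whereas the paper first cancels the residues of $\Pc(z)$ at the $\po_\alpha$ against those of $\Q(z)$ and then applies the residue theorem to $\Q(z)\,\dd z$ alone.
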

\begin{proof}
Combining the definition \eqref{Eq:DefLocal} of a suitable realisation with Equations \eqref{Eq:Momentum}, \eqref{Eq:ImageSS} and \eqref{Eq:P}, one gets
\beqz
\Pc_\Ac = \sum_{\alpha\in\Si} \Dc\alpha0 = \sum_{\alpha\in\Si}  \res_{z=\po_\alpha} \Pc(z) \, \dd z = - \res_{z=\infty} \Pc(z) \, \dd z.
\eeqz
From Equations \eqref{Eq:HamSpec} and \eqref{Eq:QSpec}, one gets
\beqz
\Pc(z) = - \Q(z) - \frac{\Hc(z)}{\vp(z)}.
\eeqz
Recall that $\Hc(z)$ has poles at the $\po_\alpha$'s, of same orders $m_\alpha$ as the twist function $\vp(z)$ (see Equation \eqref{Eq:HamDES} and Appendix \ref{App:SSHam}). Thus the quantity $\Hc(z)/\vp(z)$ has no residues at $z=\po_\alpha$, for all $\alpha\in\Si$. In particular, we get
\begin{equation}\label{Eq:PResSites}
\Pc_\Ac = - \sum_{\alpha\in\Si} \res_{z=\po_\alpha} \Q(z) \, \dd z.
\end{equation}
Recall the expression \eqref{Eq:S} and \eqref{Eq:Twist} of $\Sg(z,x)$ and $\vp(z)$. For small $u$, we have
\beqz
\Sg\left(\frac{1}{u},x\right) = O(u) \;\;\;\;\; \text{ and } \;\;\;\;\; \vp\left(\frac{1}{u}\right) = -\ell^\infty + O(u).
\eeqz
Thus, the 1-form
\beqz
\Q\left(\frac{1}{u}\right) \,\dd \left(\frac{1}{u}\right) = - \Q\left(\frac{1}{u}\right) \, \frac{\dd u}{u^2} = \frac{\dd u}{2 u^2 \,\vp\left(\frac{1}{u}\right)} \int_{\D} \dd x \; \kappa\left( \Sg\left(\frac{1}{u},x\right), \Sg\left(\frac{1}{u},x\right) \right)
\eeqz
is regular at $u=0$, or in other words the 1-form $\Q(z)\,\dd z$ is regular at $z=\infty$. As a consequence, $\Q(z)\,\dd z$ has poles only at the positions $z_\alpha$ of the sites $\alpha\in\Si$ and at the zeros $\ze_i$ of the twist function. Recall that the sum of all residues of a 1-form on $\mathbb{P}^1$ vanishes. Thus, we have
\beqz
\sum_{\alpha\in\Si} \res_{z=\po_\alpha} \Q(z) \, \dd z + \sum_{i=1}^M \res_{z=\ze_i} \Q(z) \, \dd z = 0.
\eeqz
Combining this with Equation \eqref{Eq:PResSites}, we get
\beqz
\Pc_\Ac = \sum_{i=1}^M \res_{z=\ze_i} \Q(z) \, \dd z = \sum_{i=1}^M \Q_i,
\eeqz
ending the proof of the proposition.
\end{proof}

Note the similarity of Equation \eqref{Eq:MomentumQi} with the expression \eqref{Eq:Ham} of the Hamiltonian $\Hc$. Proposition \ref{Prop:Momentum} thus allows to treat in a very similar way the momentum and the Hamiltonian of the theory. This reminds of the similarity of the expressions \eqref{Eq:LZeros} and \eqref{Eq:MZeros} for the spatial and temporal components of the Lax pair of the model. It is another justification for the parametrisation of the Hamiltonian in terms of charges $\Q_i$ and parameters $\epsilon_i$.

\paragraph{Energy-momentum tensor.} For $i\in\lbrace 1,\cdots,M \rbrace$, we denote by $q_i(x)$ the density
\begin{equation}\label{Eq:QDensities}
q_i(x) = -\frac{1}{2\vp'(\zeta_i)} \kappa \bigl( \Sg(\ze_i,x), \Sg(\ze_i,x) \bigr),
\end{equation}
of the charge $\Q_i$. In this subsection, we will use greek labels $\mu,\nu,\rho$ equal to 0 and 1 as space-time labels. In particular, we have
\beqz
x^0 = t, \;\;\;\; x^1 = x, \;\;\;\; \p_0 = \p_t = \lbrace \Hc, \cdot \rbrace \;\;\;\; \text{and} \;\;\;\; \p_1 = \p_x = \lbrace \Pc_\Ac, \cdot \rbrace.
\eeqz

\begin{proposition}\label{Prop:EnergyMomentum}
The components of the energy-momentum tensor of the model are
\begin{equation}
\Te01(x) = \sum_{i=1}^M q_i(x), \;\;\;\;\; \Te00(x) = -\Te11(x) = \sum_{i=1}^M \epsilon_i\, q_i(x) \;\;\;\;\; \text{and} \;\;\;\;\; \Te10(x) = -\sum_{i=1}^M \epsilon_i^2\, q_i(x).
\end{equation}
\end{proposition}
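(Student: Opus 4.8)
The plan is to reduce the whole statement to a single pointwise identity, $\partial_t q_i(x) = \epsilon_i\,\partial_x q_i(x)$ for each $i\in\lbrace 1,\cdots,M\rbrace$, and then to read off the four components of the energy-momentum tensor from the two continuity equations $\partial_0\Te0\nu+\partial_1\Te1\nu=0$, $\nu=0,1$. Two of the components are essentially fixed by the canonical construction: since the Hamiltonian is $\Hc=\sum_i\epsilon_i\Q_i$ with the densities \eqref{Eq:QDensities}, the energy density is $\Te00(x)=\sum_i\epsilon_i q_i(x)$; and by Proposition~\ref{Prop:Momentum} the momentum is $\Pc_\Ac=\sum_i\Q_i$, so the momentum density is $\Te01(x)=\sum_i q_i(x)$. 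The remaining components $\Te10$ and $\Te11$ are then determined — as local densities in the fields — by requiring the continuity equations to hold on the equations of motion.

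The core of the argument is the computation of $\partial_t\Sg(\ze_i,x)$. Since the zeros $\ze_i$ of the twist function are constants, $\partial_t$ commutes with $\res_{z=\ze_i}$; using \eqref{Eq:LZeros} and \eqref{Eq:MZeros} one has $\res_{z=\ze_i}\Lc(z,x)\,\dd z=\Sg(\ze_i,x)/\vp'(\ze_i)$ and $\res_{z=\ze_i}\Mc(z,x)\,\dd z=\epsilon_i\,\Sg(\ze_i,x)/\vp'(\ze_i)$, so taking the residue at $\ze_i$ of the zero curvature equation \eqref{Eq:Zce} gives
\beqz
\partial_t\Sg(\ze_i,x)=\epsilon_i\,\partial_x\Sg(\ze_i,x)-\vp'(\ze_i)\,\res_{z=\ze_i}\bigl[\Mc(z,x),\Lc(z,x)\bigr]\,\dd z.
\eeqz
In the residue of the commutator the double-pole contribution is proportional to $[\Sg(\ze_i,x),\Sg(\ze_i,x)]=0$, and the simple-pole contribution is a sum of commutators each of which carries $\Sg(\ze_i,x)$ in one of its two entries. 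Pairing the displayed identity with $\Sg(\ze_i,x)$ through $\kappa$ and using ad-invariance, i.e.\ $\kappa\bigl(\Sg(\ze_i,x),[\Sg(\ze_i,x),\,\cdot\,]\bigr)=0$, annihilates the commutator term entirely and leaves $\kappa\bigl(\Sg(\ze_i,x),\partial_t\Sg(\ze_i,x)\bigr)=\epsilon_i\,\kappa\bigl(\Sg(\ze_i,x),\partial_x\Sg(\ze_i,x)\bigr)$; dividing by $-2\vp'(\ze_i)$ and recalling \eqref{Eq:QDensities} yields $\partial_t q_i(x)=\epsilon_i\,\partial_x q_i(x)$.

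With this identity in hand the rest is bookkeeping. The energy continuity equation $\partial_t\Te00+\partial_x\Te10=0$ forces $\partial_x\Te10=-\partial_t\sum_i\epsilon_i q_i=-\partial_x\sum_i\epsilon_i^2 q_i$, hence $\Te10=-\sum_i\epsilon_i^2 q_i$; similarly $\partial_t\Te01+\partial_x\Te11=0$ gives $\partial_x\Te11=-\partial_t\sum_i q_i=-\partial_x\sum_i\epsilon_i q_i$, hence $\Te11=-\sum_i\epsilon_i q_i=-\Te00$. Together with $\Te01=\sum_i q_i$ this is exactly the claim. I expect the only step requiring genuine care to be the commutator-residue computation: one has to check the cancellation of the double pole and then that every surviving term drops upon contraction with $\Sg(\ze_i,x)$ — everything afterwards follows directly from the continuity equations and from the expressions for $\Hc$, $\Pc_\Ac$, $\Lc$ and $\Mc$ already established in the excerpt.
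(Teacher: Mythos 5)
Your proof is correct, and its overall skeleton matches the paper's: read off $\Te00$ and $\Te01$ from $\Hc=\sum_i\epsilon_i\Q_i$ and $\Pc_\Ac=\sum_i\Q_i$, then determine $\Te10$ and $\Te11$ from the continuity equations, the whole burden resting on the single identity $\p_t q_i=\epsilon_i\,\p_x q_i$. Where you differ is in how that identity is obtained. The paper derives it purely algebraically from the Poisson bracket of the densities, $\lbrace q_i(x),q_j(y)\rbrace=-\delta_{ij}\bigl(\p_x q_i(x)\delta_{xy}+2q_i(x)\delta'_{xy}\bigr)$ (its Eq.~\eqref{Eq:PbDensities}), integrating against $\Hc$ so that the $\delta'_{xy}$ term drops and only $\epsilon_i\,\p_x q_i$ survives. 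You instead take the residue at $z=\ze_i$ of the zero curvature equation \eqref{Eq:Zce}, using \eqref{Eq:LZeros}--\eqref{Eq:MZeros}, and kill the commutator contribution by contracting with $\kappa\bigl(\Sg(\ze_i,x),\cdot\bigr)$ and invoking ad-invariance; your residue computation is sound (the diagonal $j=k=i$ term, which would produce the double pole, vanishes identically as $[\Sg(\ze_i),\Sg(\ze_i)]=0$, and every off-diagonal term carries $\Sg(\ze_i)$ in one slot of the commutator, so it is annihilated by the contraction). The trade-off: the paper's route needs only the kinematical bracket of the $q_i$'s and exposes the Virasoro-type structure of the densities, which it reuses elsewhere (e.g.\ for the involution of the light-cone densities $\Pc_\pm$); your route recycles the already-established Lax-pair expressions and the zero curvature equation, at the cost of being an on-shell argument one step further downstream from the Poisson structure. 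Both leave $\Te10$ and $\Te11$ fixed only up to an $x$-independent ambiguity, which is the standard and harmless indeterminacy in reading a density off its spatial derivative.
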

\begin{proof}
The components $\Te0\mu$ of the energy-momentum tensor are defined as the densities of the Hamiltonian and the momentum of the model:
\begin{equation*}
\Hc = \int_{\D} \dd x \, \Te00(x) \;\;\;\;\; \text{ and } \;\;\;\;\; \Pc_\Ac = \int_\D \dd x \, \Te01(x).
\end{equation*}
From Equations \eqref{Eq:Ham} and \eqref{Eq:MomentumQi}, we easily read
\begin{equation*}
\Te00(x) = \sum_{i=1}^M \epsilon_i\, q_i(x) \;\;\;\;\; \text{ and } \;\;\;\;\; \Te01(x) = \sum_{i=1}^M q_i(x).
\end{equation*}
As $\Hc$ and $\Pc_\Ac$ are conserved charges, their densities should satisfy local conservation equations, which serve as a definition of the components $\Te1\nu$:
\begin{equation*}
\p_\mu \Te\mu\nu = \p_0 \Te0\nu + \p_1 \Te1\nu = 0.
\end{equation*}
To obtain the expressions of these components, we will need the Poisson brackets of the densities $q_i$. From Equation \eqref{Eq:PbSS}, one gets
\begin{equation}\label{Eq:PbDensities}
\big\lbrace q_i(x), q_j(y) \big\rbrace = -\delta_{ij} \bigl( \p_x q_i(x) \delta_{xy} + 2 q_i(x) \delta'_{xy} \bigr).
\end{equation}
Thus, one has
\begin{eqnarray*}
\p_0 \Te01(x) &=& - \bigl\lbrace \Te01(x), \Hc \bigr\rbrace \\
 &=& -\sum_{i,j=1}^M \epsilon_j \int_{\D} \dd y \, \bigl\lbrace q_i(x), q_j(y) \bigr\rbrace \\
&=&  \sum_{i,j=1}^M \delta_{ij} \epsilon_j \int_{\D} \dd y \, \bigl( \p_x q_i(x) \delta_{xy} + 2 q_i(x) \delta'_{xy} \bigr) \\
&=& \sum_{i=1}^M \epsilon_i\, \p_x q_i(x)
\end{eqnarray*}
As $\p_0 \Te01(x)=-\p_1 \Te11(x)$, this allows to read $\Te11(x)$ as
\begin{equation*}
\Te11(x) = -\sum_{i=1}^M \epsilon_i\, q_i(x)
\end{equation*}
A similar computation yields
\begin{equation*}
\Te10(x) = -\sum_{i=1}^M \epsilon_i^2 \, q_i(x). \qedhere
\end{equation*}
\end{proof}

\paragraph{Classical scale invariance.} It is clear from Proposition \ref{Prop:EnergyMomentum} that the trace $\Te\mu\mu=\Te00+\Te11$ of the energy-momentum tensor vanishes. It is well known that this implies the classical scale invariance of the model. The conserved current associated with this invariance is
\beqz
\Delta^\mu = \Te\mu\nu\, x^\nu , \;\;\;\;\;\;\; \p_\mu \Delta^\mu =0.
\eeqz
The associated conserved charge\footnote{More precisely, this charge is conserved only if we are considering a field theory on the real line, \textit{i.e.} if $\D=\R$.} is
\beqz
\int_{\D} \dd x \, \Delta^0(x) = t \, \Hc + \sum_{i=1}^M \int_\D \dd x \, x \, q_i(x).
\eeqz

\subsubsection{Relativistic invariance and light-cone formulation}
\label{SubSubSec:Relat}

\paragraph{Lorentz invariance.} Let us consider the Minkoswki metric $\eta=\text{diag}(+1,-1)$. It allows us to change contravariant indices into covariant ones and in particular to define
\beqz
T_{\mu\nu} = \eta_{\mu\rho}\Te\rho\nu.
\eeqz
It is clear from Proposition \eqref{Prop:EnergyMomentum} that
\beqz
T_{01}(x) = \sum_{i=1}^M q_i(x), \;\;\;\;\; T_{00}(x) = T_{11}(x) = \sum_{i=1}^M \epsilon_i\, q_i(x) \;\;\;\;\; \text{and} \;\;\;\;\; T_{10}(x) = \sum_{i=1}^M \epsilon_i^2\, q_i(x).
\eeqz
It is a classical result that the model is Lorentz invariant if $T_{\mu\nu}$ is a symmetric tensor, \textit{i.e.} if $T_{01}=T_{10}$. Thus, one sees that the model is relativistic if the parameters $\eb$ satisfy
\beqz
\epsilon_i = \pm 1, \;\;\;\;\;\; \forall i\in\lbrace 1,\cdots,M \rbrace.
\eeqz
This remarkably simple condition is the main reason for the use of the parameters $\eb$ to define the Hamiltonian of the model. In the rest of this article, we shall assume that this condition is satisfied. We will need the following sets
\begin{equation}\label{Eq:Ipm}
I_\pm = \bigl\lbrace i\in\lbrace 1,\cdots,M\rbrace \, \bigr| \, \epsilon_i = \pm 1 \bigr\rbrace,
\end{equation}
which form a partition of $\lbrace 1,\cdots,M \rbrace$.

The Lorentz invariance of the model implies the existence of a current satisfying a local conservation equation. If the model is defined on the real line, \textit{i.e.} if $\D=\R$, this gives a conserved charge, the \textit{Lorentz boost}
\begin{equation}\label{Eq:Boost}
\mathcal{B} = t\, \Pc_{\Ac} + \int_\R \dd x \, x \, \Te00(x).
\end{equation}

\paragraph{Light-cone coordinates and Lax pair.} Let us introduce the light-cone coordinates
\beqz
x^\pm = \frac{t \pm x}{2}.
\eeqz
We denote by $\p_\pm = \p_t \pm \p_x$ the corresponding derivatives. These derivatives can be seen as the Hamiltonian flows
\beqz
\p_\pm = \lbrace \Pc_\pm, \cdot \rbrace, \;\;\;\;\; \text{with} \;\;\;\;\; \Pc_\pm = \Hc \pm \Pc_\Ac.
\eeqz
From the expressions \eqref{Eq:Ham} and \eqref{Eq:MomentumQi} of  $\Hc$ and $\Pc_\Ac$, one gets
\begin{equation}\label{Eq:ChargesChirales}
\Pc_\pm = \pm 2 \sum_{i\in I_\pm} \Q_i.
\end{equation}
It is interesting to see that the charges $\Q_i$ separate into two sets appearing respectively in $\Pc_+$ for $i\in I_+$ and in $\Pc_-$ for $i\in I_-$. As all $\Q_i$'s Poisson commute, the charges $\Pc_+$ and $\Pc_-$ are in involution. One has in fact the stronger result that even the densities of $\Pc_+$ and $\Pc_-$ are in involution, as a consequence of Equation \eqref{Eq:PbDensities}.

The zero curvature equation \eqref{Eq:Zce} can also be rewritten in light-cone formulation as
\beqz
\p_+ \Lc_-(z) - \p_- \Lc_+(z) + \bigl[ \Lc_+(z), \Lc_-(z) \bigr] = 0,
\eeqz
where we introduced the light-cone lax pair
\beqz
\Lc_\pm(z) = \Mc(z) \pm \Lc(z).
\eeqz
From the equations \eqref{Eq:LZeros} and \eqref{Eq:MZeros}, we get
\begin{equation}\label{Eq:LightConeLax}
\Lc_\pm(z,x) = \pm 2 \sum_{i\in I_\pm} \frac{1}{\vp'(\ze_i)} \frac{\Sg(\ze_i,x)}{z-\ze_i}.
\end{equation}
Similarly to the expression \eqref{Eq:ChargesChirales}, one sees that the expressions of $\Lc_\pm(z)$ separate the currents $\Sg(\ze_i)$ depending on whether $i$ belongs to $I_+$ or $I_-$. Moreover, they have a nice consequence on the analytic structure of the Lax pair in terms of the spectral parameter: although $\Lc(z)$ and $\Mc(z)$ possess poles at all $\ze_i$'s for $i\in\lbrace 1,\cdots,M \rbrace$, the light-cone components $\Lc_+(z)$ and $\Lc_-(z)$ only have poles at the $\ze_i$'s for $i$ in $I_+$ and $I_-$ respectively. This will be an important feature for Section \ref{Sec:SigmaModels}.

\paragraph{Spins of the currents.} In this paragraph, we consider a model with space coordinate $x$ on the real line $\R$. Let $\mathcal{F}(x,t)$ be a local field (for this paragraph, we will write explicitly the dependence of the fields in $t$ although we are still considering the model in the Hamiltonian formulation). Let us also consider a Lorentz boost with infinitesimal rapidity $\vartheta$, which then changes the coordinates by $\delta^{\mathcal{B}} x = \vartheta\, t$ and $\delta^{\mathcal{B}} t = \vartheta\, x$. Under this transformation, the field $\mathcal{F}(x,t)$ changes because of its dependence on the coordinates $x$ and $t$ but also because of its internal degree of spin. More precisely, we say that $\mathcal{F}(x,t)$ has spin $s$ if its variation under the Lorentz boost is given by
\beqz
\delta^{\mathcal{B}}\mathcal{F}(x,t) = \delta^{\mathcal{B}}x \, \p_x\mathcal{F}(x,t) + \delta^{\mathcal{B}}t \, \p_t\mathcal{F}(x,t) + \vartheta \, s \, \mathcal{F}(x,t).
\eeqz
One then has the following proposition.

\begin{proposition}\label{Eq:Spin}
For $i\in\lbrace 1,\cdots,M \rbrace$, the current $\Sg(\ze_i)$ has spin $\epsilon_i$.
\end{proposition}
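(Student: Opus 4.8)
The plan is to read off the spin of $\Sg(\ze_i)$ from the action of the Lorentz boost charge $\mathcal{B} = t\,\Pc_\Ac + \int_\R \dd x\, x\, \Te00(x)$ of \eqref{Eq:Boost}. First I would note that, writing the equal-time Poisson bracket of the Hamiltonian density $\Te00(y)$ with a local field $\mathcal{F}(x)$ in its generic local form $\{\Te00(y),\mathcal{F}(x)\} = A(x)\,\delta_{xy} + B(x)\,\delta'_{xy}$, and using $\p_t = \{\Hc,\cdot\}$, $\p_x = \{\Pc_\Ac,\cdot\}$ and $\Hc = \int_\D \dd x\,\Te00(x)$, one gets on the real line
\[
\int_\R \dd y\,\{\Te00(y),\mathcal{F}(x)\} = A(x) = \p_t\mathcal{F}(x), \qquad \int_\R \dd y\, y\,\{\Te00(y),\mathcal{F}(x)\} = x\, A(x) + B(x),
\]
so that $\delta^{\mathcal{B}}\mathcal{F}(x) = \vartheta\{\mathcal{B},\mathcal{F}(x)\} = \vartheta\, t\,\p_x\mathcal{F}(x) + \vartheta\, x\,\p_t\mathcal{F}(x) + \vartheta\, B(x) = \delta^{\mathcal{B}}x\,\p_x\mathcal{F} + \delta^{\mathcal{B}}t\,\p_t\mathcal{F} + \vartheta\, B(x)$. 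Comparing with the definition of the spin, this means that the spin of $\mathcal{F}$ is the number $s$ with $B(x) = s\,\mathcal{F}(x)$, i.e.\ the $\delta'_{xy}$-coefficient of $\{\Te00(y),\mathcal{F}(x)\}$ read off as a multiple of $\mathcal{F}(x)$. Everything then reduces to checking that this coefficient equals $\epsilon_i\,\Sg(\ze_i,x)$ when $\mathcal{F} = \Sg(\ze_i)$. (I use $\D = \R$ so that the boundary terms in the above integrations by parts vanish, and treat the explicit $t$ in $\mathcal{B}$ as the current time.)

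Next I would substitute $\Te00(y) = \sum_{j=1}^M \epsilon_j\, q_j(y)$ from Proposition \ref{Prop:EnergyMomentum}, with $q_j$ given by \eqref{Eq:QDensities}, so that by the Leibniz rule $\{q_j(y),\Sg(\ze_i,x)\}$ is $-1/\vp'(\ze_j)$ times the contraction of one tensor factor of $\bigl\{\Sg\ti1(\ze_j,y),\Sg\ti2(\ze_i,x)\bigr\}$ with $\Sg(\ze_j,y)$. The key observation is that in the Gaudin bracket \eqref{Eq:PbGaudin} the only term proportional to $\delta'_{xy}$ carries the factor $\vp(z)-\vp(w)$, which vanishes identically once $z$ and $w$ are set to zeros $\ze_j,\ze_i$ of $\vp$ with $i\neq j$; hence for $j\neq i$ the bracket $\{q_j(y),\Sg(\ze_i,x)\}$ is proportional to $\delta_{xy}$ alone and contributes nothing to $B(x)$. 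The only surviving contribution is the one with $j = i$, which requires the coincident-point limit $w\to z = \ze_i$ of \eqref{Eq:PbGaudin}, in which $\frac{\vp(z)-\vp(w)}{w-z}\to -\vp'(\ze_i)\neq 0$ — this is precisely where the simplicity of the zero $\ze_i$ enters.

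For $j = i$ I would take that limit, obtaining $\bigl\{\Sg\ti1(\ze_i,y),\Sg\ti2(\ze_i,x)\bigr\} = -\bigl[C\ti{12},(\p_z\Sg)\ti1(\ze_i,y)\bigr]\delta_{xy} - \vp'(\ze_i)\,C\ti{12}\,\delta'_{xy}$, where $\p_z\Sg$ denotes differentiation in the spectral parameter before evaluation at $\ze_i$. Contracting the first tensor factor against $\Sg(\ze_i,y)$ and using the ad-invariance identity \eqref{Eq:IdCas} together with the completeness relation $\kappa\ti1(C\ti{12},X\ti1) = X$, the $\delta'_{xy}$-part of the contraction is $-\vp'(\ze_i)\,\Sg(\ze_i,y)\,\delta'_{xy}$; the prefactor $-1/\vp'(\ze_i)$ in $q_i$ cancels this, leaving the $\delta'_{xy}$-part of $\{q_i(y),\Sg(\ze_i,x)\}$ equal to $\Sg(\ze_i,y)\,\delta'_{xy} = \Sg(\ze_i,x)\,\delta'_{xy} + \p_x\Sg(\ze_i,x)\,\delta_{xy}$, using $f(y)\,\delta'_{xy} = f(x)\,\delta'_{xy} + f'(x)\,\delta_{xy}$. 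The $(\p_z\Sg)$-term and the $\p_x\Sg$-term only modify $A(x)$ and are irrelevant for the spin. Hence the $\delta'_{xy}$-coefficient of $\{\Te00(y),\Sg(\ze_i,x)\} = \sum_j \epsilon_j\,\{q_j(y),\Sg(\ze_i,x)\}$ is $\epsilon_i\,\Sg(\ze_i,x)$, i.e.\ $\Sg(\ze_i)$ has spin $\epsilon_i$.

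The hard part will be the coincident-point limit of the non-ultralocal Gaudin bracket and the ensuing distributional bookkeeping: one has to cleanly separate the genuine $\delta'_{xy}$ contribution — the one proportional to $\Sg(\ze_i,x)$, which alone fixes the spin — from the various $\delta_{xy}$ terms generated by the spectral-parameter derivative $\p_z\Sg(\ze_i,\cdot)$ and by the reshuffling $f(y)\,\delta'_{xy} = f(x)\,\delta'_{xy} + f'(x)\,\delta_{xy}$. As a consistency check I would run the same computation for $\mathcal{F} = q_i$ using \eqref{Eq:PbDensities}; it yields spin $2\epsilon_i$, as expected from $q_i$ being quadratic in $\Sg(\ze_i)$ (mirroring the parallel between \eqref{Eq:Ham} and \eqref{Eq:MomentumQi}).
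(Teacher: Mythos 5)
Your proposal is correct and follows essentially the same route as the paper: act with the boost charge $\mathcal{B}$ of \eqref{Eq:Boost} on $\Sg(\ze_i,x)$, peel off the orbital pieces $\vartheta t\,\p_x$ and $\vartheta x\,\p_t$, and identify the spin with the residual term, which amounts to reading off the $\delta'_{xy}$-coefficient of $\bigl\lbrace \Te00(y),\Sg(\ze_i,x)\bigr\rbrace$. The paper omits the computation of $\sum_j \epsilon_j\int \dd y\, y\,\bigl\lbrace q_j(y),\Sg(\ze_i,x)\bigr\rbrace$ ``for brevity''; the way you supply it — observing that the non-ultralocal term of \eqref{Eq:PbGaudin} carries the factor $\vp(\ze_j)-\vp(\ze_i)$, which vanishes for $j\neq i$ and survives only in the coincident limit $j=i$ where $\bigl(\vp(z)-\vp(w)\bigr)/(w-z)\to -\vp'(\ze_i)$ cancels the normalisation of $q_i$ in \eqref{Eq:QDensities} — is correct and cleanly isolates the spin term.
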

\begin{proof}
The transformation of $\Sg(\ze_i,x,t)$ under the boost is given by
\beqz
\delta^{\mathcal{B}}\Sg(\ze_i,x,t) = \vartheta \bigl\lbrace \mathcal{B}, \Sg(\ze_i,x,t) \bigr\rbrace,
\eeqz
where $\mathcal{B}$ is defined in \eqref{Eq:Boost}. By Proposition \ref{Prop:EnergyMomentum}, we get
\beqz
\mathcal{B} = t\, \Pc_{\Ac} + \sum_{j=1}^M \epsilon_j \int_{\R} \dd x \, x \,q_j(x,t).
\eeqz
Thus, we have
\beqz
\delta^{\mathcal{B}}\Sg(\ze_i,x,t) = \vartheta\, t \, \bigl\lbrace \Pc_\Ac, \Sg(\ze_i,x,t) \bigr\rbrace + \vartheta \sum_{j=1}^M \epsilon_j \int_\R \dd y \, y \, \bigl\lbrace q_j(y,t), \Sg(\ze_i,x,t) \bigr\rbrace.
\eeqz
The first term in this equation is easy to express as $\lbrace \Pc_{\Ac}, \cdot \rbrace = \p_x$. The second term can be computed starting from the Poisson bracket \eqref{Eq:PbSS} and the expression \eqref{Eq:QDensities} of $q_i$. For the sake of brievity, we will not detail this here. After a few manipulations, one finally obtains
\beqz
\delta^{\mathcal{B}}\Sg(\ze_i,x,t) = \vartheta\, t \, \p_x \Sg(\ze_i,x,t) + \vartheta \, x \, \p_t \Sg(\ze_i,x,t) + \vartheta\, \epsilon_i \,\Sg(\ze_i,x,t),
\eeqz
which ends the demonstration, as $\vartheta\,t = \delta^{\mathcal{B}}x$ and $\vartheta\,x = \delta^{\mathcal{B}}t$.
\end{proof}

Proposition \ref{Eq:Spin} shows that the observables which are natural to consider to discuss space-time properties of the model are the currents $\Sg(\ze_i)$, as they possess a definite spin (which is equal to $\pm 1$ for $i\in I_\pm$). This emphasises the importance of the zeros $\ze_i$ of the twist function, which was already observed in the previous paragraphs. Moreover, Proposition \ref{Eq:Spin} proves that the local charges $\Q_i^d$ of the integrable hierarchy \eqref{Eq:Hierarchy} have spin $\epsilon_i(d+1)$. Thus, the hierarchy attached to a zero $\ze_i$ with $i\in I_+$ (resp. $i\in I_-$) is composed of charges of positively increasing (resp. negatively decreasing) spins.

\paragraph{Coupling relativistic realisations of AGM.}\label{Par:CouplingRelat} We end this section by a short discussion combining the ideas of the present subsection and the ones of Subsection \ref{SubSubSec:Coupling} about coupled realisations of AGM. Let us then consider two suitable realisations of AGM $\mathbb{M}^{\vp_1,\pi_1}_{\ebb 1}$ and $\mathbb{M}^{\vp_2,\pi_2}_{\ebb 2}$, following the notations of Subsection \ref{SubSubSec:Coupling}. We suppose that the parameters $\epsilon_i^{(1)}$ in $\eb_1$ and $\epsilon_i^{(2)}$ in $\eb_2$ all square to 1, ensuring the Lorentz invariance of the models. It is explained in Subsection \ref{SubSubSec:Coupling} how to construct an integrable coupled model $\mathbb{M}^{\vp_{1 \otimes 2,\gamma}\,,\,\pi_{1\otimes 2}}_{\eb}$. In particular, the parameters $\eb$ defining this coupled model are a reordering of the parameters in $\eb_1$ and $\eb_2$ and hence all square to 1.  Thus, the coupling of two relativistic models $\mathbb{M}^{\vp_1,\pi_1}_{\ebb 1}$ and $\mathbb{M}^{\vp_2,\pi_2}_{\ebb 2}$ also yields a relativistic model.\\

Let us also discuss positivity properties of the Hamiltonians of these models. For that we restrict to the case of a compact real form $\g_0$ and suppose that the zeros $\ze_i^{(1)}$ of $\vp_1$ and $\ze_i^{(2)}$ of $\vp_2$ are real, following Paragraph \ref{Par:Positivity}. As explained in this paragraph, one ensures the positivity of the Hamiltonians $\Hc^{\vp_1,\pi_1}_{\ebb 1}$ and $\Hc^{\vp_2,\pi_2}_{\ebb 2}$ by choosing $\epsilon_i^{(k)}$ to be of the sign of $-\vp_k'\bigl(\ze_i^{(k)}\bigr)$. As we also consider all $\epsilon_i^{(k)}$ to be equal to $+1$ or $-1$ for Lorentz invariance, we thus have
\beqz
\epsilon_i^{(k)} = - \text{sign} \left( \vp'_k\Bigl(\ze_i^{(k)}\Bigr) \right).
\eeqz
As explained in Subsection \ref{SubSubSec:Coupling}, for $\gamma$ small enough, the zeros $\ze_i(\gamma)$ of the coupled twist function $\vp_{1\otimes2,\gamma}$ are in one-to-one correspondence with the $\ze_i^{(k)}$'s. Moreover, as explained in Appendix \ref{App:Decoupling}, Lemma \ref{Lem:RealityCoupling}, the fact that we supposed the $\ze_i^{(k)}$'s to be real ensures, at least for $\gamma$ small enough, that the $\ze_i(\gamma)$'s are also real and that $\vp'_{1\otimes2,\gamma}\bigl( \ze_i(\gamma) \bigr)$ and $\vp'_k\bigl(\ze_i^{(k)}\bigr)$ have the same sign. Thus, we have
\beqz
\epsilon_i = - \text{sign} \Bigl( \vp'_{1\otimes2,\gamma}\bigl( \ze_i(\gamma) \bigr) \Bigr)
\eeqz
by construction. This ensures that for a certain range of the coupling parameter $\gamma$ around $0$, the coupled Hamiltonian $\Hc^{\vp_{1 \otimes 2,\gamma}\,,\,\pi_{1\otimes 2}}_{\eb}$ is Lorentz invariant and positive if one started with Lorentz invariant and positive Hamiltonians $\Hc^{\vp_1,\pi_1}_{\ebb 1}$ and $\Hc^{\vp_2,\pi_2}_{\ebb 2}$.

\section[New integrable $\s$-models]{New integrable $\bm\s$-models}
\label{Sec:SigmaModels}

\subsection[Fields on $\TG$ and Takiff realisation of multiplicity 2]{Fields on $\bm{\TG}$ and Takiff realisation of multiplicity 2}
\label{SubSec:TStar}

\subsubsection[Canonical Poisson structure on $\TG$]{Canonical Poisson structure on $\bm{\TG}$}

\paragraph{Coordinates and conjugate momenta.} Let us consider a real Lie group $G_0$ with Lie algebra $\g_0$, described by local coordinates\footnote{Technically, one might need to consider different sets of such local coordinates defined on the different open patches of the manifold $G_0$. For simplicity, we will focus here on one patch.} $\phi^i$ for $i\in\lbrace 1,\cdots,n=\dim\g_0 \rbrace$. Let us also consider the cotangent bundle $\TG$ of $G_0$. In addition of the coordinates $\phi^i$, it is locally described by momenta $\pi_i$ conjugate to these coordinates. Let us now consider a field on $\D$ valued in $\TG$, given locally by scalar fields $\phi^i(x)$ and $\pi_i(x)$. As $\TG$ is a cotangent bundle, it is equipped naturally with a symplectic structure. This transfers to the canonical Poisson bracket on the fields $\phi^i(x)$ and $\pi_i(x)$:
\begin{equation}\label{Eq:CanPB}
\lbrace \pi_i(x), \phi^j(y) \rbrace = \delta^j_{\;i} \, \delta_{xy} \;\;\;\;\; \text{ and } \;\;\;\;\; \lbrace \pi_i(x), \pi_j(y) \rbrace = \lbrace \phi^i(x), \phi^j(y) \rbrace =  0,
\end{equation}
for all $i,j\in\lbrace 1,\cdots,n\rbrace$. We will denote by $\Og$ the Poisson algebra generated by the fields $\phi^i(x)$ and $\pi_i(x)$ (in the sense of Paragraph \ref{Par:FieldTheory}).

\paragraph{Group-valued and algebra-valued fields.} We now want a coordinate-free description of this phase space. Acting by translation on the base $G_0$ of $\TG$, one can always send each cotangent space $T^*_pG_0$ at $p\in G_0$ to the cotangent space $T^*_{\Id}G_0$ at the identity $\Id\in G_0$. Yet, $T^*_{\Id}G_0$ is nothing but the dual $\g^*_0$ of the Lie algebra $\g_0$. Moreover, as $\g_0$ is a semi-simple Lie algebra, it admits the non-degenerate bilinear form $\kappa$, yielding a canonical isomorphism $\g^*_0 \simeq \g_0$. Hence, one has a canonical isomorphism between $\TG$ and $G_0 \times \g_0$. Thus, one should be able to described the $\TG$-valued field as a pair of field $g(x)$ in $G_0$ and $X(x)$ in $\g_0$. Let us describe more concretely these fields.

As the $\phi^i$'s are local coordinates on $G_0$, the fields $\phi^i(x)$ naturally recombine to form the $G_0$-valued field $g(x)$. Let us denote by $\p_i$ the derivative with respect to $\phi^i$. Then $g^{-1}\p_i g$ is a $\g_0$-valued object. Let us write it in a basis $(I_a)_{a\in\lbrace 1,\cdots,n \rbrace}$ of $\g_0$ as
\beqz
g^{-1} \p_i g = L^a_{\;i} I_a.
\eeqz
The matrix $\bigl( L^a_{\;\,i} \bigr)_{i,a=1,\cdots,n}$ is then invertible. We shall write $\bigl( L^i_{\;a} \bigr)_{i,a=1,\cdots,n}$ its inverse. It verifies
\begin{equation*}
L^a_{\;\,i} L^i_{\;b} = \delta^a_{\;\,b} \;\;\;\; \text{ and } \;\;\;\; L^i_{\;a} L^a_{\;\,j} = \delta^i_{\;j}.
\end{equation*}
We then define the $\g_0$-valued field
\begin{equation}\label{Eq:DefX}
X = L^i_{\;\,a} \pi_i \, I^a,
\end{equation}
where $(I^a)_{a=1,\cdots,n}$ is the dual basis of $(I_a)_{a=1,\cdots,n}$ with respect to the bilinear form $\kappa_{ab}$. One can check that this field $X$ is independent of the choice of coordinates $(\phi^i)_{i\in\lbrace 1,\cdots,n \rbrace}$ and basis $(I_a)_{a\in\lbrace 1,\cdots,n \rbrace}$.

\paragraph{Coordinate-free Poisson structure.} The canonical Poisson bracket \eqref{Eq:CanPB} on the fields $\phi^i(x)$ and $\pi_i(x)$ can be rewritten as a coordinate-free Poisson bracket on the fields $g(x)$ and $X(x)$. More precisely, we have
\begin{subequations}\label{Eq:PBTstarG}
\begin{align}
\left\lbrace g\ti{1}(x), g\ti{2}(y) \right\rbrace & = 0, \\
\left\lbrace X\ti{1}(x), g\ti{2}(y) \right\rbrace & = g\ti{2}(x) C\ti{12} \delta_{xy},\label{Eq:PBXg} \\
\left\lbrace X\ti{1}(x), X\ti{2}(y) \right\rbrace & = \left[ C\ti{12}, X\ti{1}(x) \right] \delta_{xy}.\label{Eq:PBXX}
\end{align}
\end{subequations}
Let us define the $\g_0$-valued current
\beqz
j(x) = g(x)^{-1} \p_x g(x).
\eeqz
From the Poisson brackets \eqref{Eq:PBTstarG}, one finds that it satisfies the following brackets:
\begin{subequations}\label{Eq:PBj}
\begin{align}
\left\lbrace g\ti{1}(x), j\ti{2}(y) \right\rbrace &= 0,\\
\left\lbrace j\ti{1}(x), j\ti{2}(y) \right\rbrace &= 0, \\
\left\lbrace X\ti{1}(x), j\ti{2}(y) \right\rbrace &= \bigl[ C\ti{12}, j\ti{1}(x) \bigr] \delta_{xy} - C\ti{12} \delta'_{xy}.
\end{align}
\end{subequations}

\paragraph{Momentum.} Let us define
\begin{equation}\label{Eq:MomTStar}
\Pc_{G_0} = \int_\D \dd x \; \kappa \bigl( j(x), X(x) \bigr).
\end{equation}
This is the momentum of the algebra of observables $\Og$. Indeed, from the Poisson brackets \eqref{Eq:PBTstarG} and \eqref{Eq:PBj}, one finds that its Hamiltonian flow generates the spatial derivative with respect to $x$ on the fields $g(x)$ and $X(x)$:
\beqz
\left\lbrace \Pc_{G_0}, g(x) \right\rbrace = g(x) j(x) = \p_x g(x) \;\;\;\;\;\; \text{ and } \;\;\;\;\;\;\; \left\lbrace \Pc_{G_0}, X(x) \right\rbrace = \p_x X(x).
\eeqz

\subsubsection{Wess-Zumino term}

\paragraph{Canonical 3-form on $\bm{G_0}$.} Let us recall the coordinates $\phi^i$ on the group $G_0$ and the corresponding derivatives $\p_i$. Let us introduce the following tensor:
\begin{equation}\label{Eq:3form}
\omega_{ijk} = \kappa \Bigl( g^{-1} \p_i g, \bigl[ g^{-1} \p_j g, g^{-1} \p_k g \bigr] \Bigr).
\end{equation}
The ad-invariance of the bilinear form $\kappa$ implies that the tensor $\omega$ is totally skew-symmetric. It defines a 3-form on $G_0$ by
\beqz
\Omega = \omega_{ijk} \; \dd \phi^i \wedge \dd \phi^j \wedge \dd \phi^k.
\eeqz
One checks that this form is closed in the de-Rham cohomology of $G_0$:
\beqz
\dd \Omega =0.
\eeqz
Thus, it is locally exact and can be locally written as
\beqz
\Omega = \dd \Lambda,
\eeqz
where
\beqz
\Lambda = \lambda_{ij} \; \dd \phi^i \wedge \dd \phi^j
\eeqz
is a 2-form on an open subset of $G_0$. In terms of coordinates, this translates to the following relation:
\beqz
\omega_{ijk} = \p_i \lambda_{jk} + \p_j \lambda_{ki} + \p_k \lambda_{ij}.
\eeqz

\paragraph{The current $\bm{W(x)}$.} Let us introduce the following $\g_0$-valued current:
\begin{equation}\label{Eq:WZ}
W = \lambda_{ij} \, \p_x \phi^i \, L^j_{\;a} \, I^a.
\end{equation}
From the canonical Poisson bracket \eqref{Eq:CanPB}, one shows that this current satisfies
\begin{equation}\label{Eq:PbW1}
\bigl\lbrace g\ti{1}(x), W\ti{2}(y) \bigr\rbrace = 0, \;\;\;\;\;\;\; \bigl\lbrace j\ti{1}(x), W\ti{2}(y) \bigr\rbrace = 0
\end{equation}
and
\begin{equation}\label{Eq:PbW2}
\bigl\lbrace X\ti{1}(x), W\ti{2}(y) \bigr\rbrace + \bigl\lbrace W\ti{1}(x), X\ti{2}(y) \bigr\rbrace = \bigl[ C\ti{12}, W\ti{1}(x)-j\ti{1}(x) \bigr] \delta_{xy}.
\end{equation}
Note also that the skew-symmetry of $\lambda_{ij}$ implies that the currents $j$ and $W$ are orthogonal with respect to the Killing form $\kappa$:
\begin{equation}\label{Eq:OrthoWj}
\kappa(j,W) = 0.
\end{equation}

\paragraph{Wess-Zumino term.} For this paragraph, we will consider the fields to depend on a time coordinate $t\in \R$ in addition to the spatial coordinate $x$ (as we are working in the Hamiltonian formulation, this time dependence is implicitly defined by the choice of an Hamiltonian). We will denote by $\mathbb{W}=\R \times \D$ the corresponding space-time manifold parametrised by $t \in \R$ and $x\in\D$. In particular, we can consider the $G_0$-valued field $g(x,t)$ on $\mathbb{W}$. Let us extend it to a 3-dimensional manifold $\mathbb{B}$ whose boundary is $\p\mathbb{B}=\mathbb{W}$, parametrised by the coordinates $x$ and $t$ along with a third coordinate $\xi$. We then get a field $g(x,t,\xi)$. The so-called \textit{Wess-Zumino term} is defined as~\cite{Wess:1971yu,Novikov:1982ei,Witten:1983ar}
\beqz
\W g = \iiint_{\mathbb{B}} \dd t \,\dd x \, \dd \xi \; \kappa\Bigl( g^{-1} \p_\xi g, \bigl[ g^{-1} \p_x g, g^{-1} \p_t g \bigr] \Bigr).
\eeqz
According to the definition \eqref{Eq:3form} of the tensor $\omega_{ijk}$, this can be written in terms of the coordinates fields $\phi^i$ as
\beqz
\W g = \iiint_{\mathbb{B}} \dd t \,\dd x \, \dd \xi \; \omega_{ijk} \, \p_\xi \phi^i \, \p_x \phi^j \, \p_t \phi^k.
\eeqz
As the form $\Omega$ is closed, this 3-dimensional term on $\mathbb{B}$ can be written locally as a 2-dimensional term on the space-time $\mathbb{W}=\p \mathbb{B}$:
\beqz
\W g = \iint_{\mathbb{W}} \dd t \, \dd x \; \lambda_{ij} \, \p_x \phi^i \, \p_t \phi^j,
\eeqz
independent of the ``bulk'' coordinate $\xi$. From the definition \eqref{Eq:WZ} of the current $W$, one finally finds that
\begin{equation}\label{Eq:IWZ}
\W g = \iint_{\mathbb{W}} \dd t \, \dd x \; \kappa\bigl( W, g^{-1} \p_t g \bigr).
\end{equation}

\subsubsection{Takiff realisation of multiplicity 2}
\label{SubSubSec:PrincipalReal}

\paragraph{Takiff currents.} Let us fix two real numbers $\kay$ and $\ell$ (we shall suppose that $\ell$ is non-zero). We introduce the currents
\begin{equation}\label{Eq:TakiffWZ}
\J\null 0(x) = X(x) - \kay \, j(x) - \kay \, W(x) \;\;\;\;\; \text{ and } \;\;\;\;\;\; \J\null 1(x) = \ell \, j(x).
\end{equation}
From the Poisson brackets \eqref{Eq:PBTstarG}, \eqref{Eq:PBj}, \eqref{Eq:PbW1} and \eqref{Eq:PbW2}, one shows that
\begin{subequations}
\begin{eqnarray}
\bigl\lbrace \J\null0\null\ti1(x),\J\null0\null\ti2(y) \bigr\rbrace &=& \bigl[ C\ti{12}, \J\null0\null\ti1(x) \bigr] \delta_{xy} + 2\kay \, C\ti{12} \delta'_{xy}, \\
\bigl\lbrace \J\null0\null\ti1(x),\J\null1\null\ti2(y) \bigr\rbrace &=& \bigl[ C\ti{12}, \J\null1\null\ti1(x) \bigr] \delta_{xy} - \ell \, C\ti{12} \delta'_{xy}, \\
\bigl\lbrace \J\null1\null\ti1(x),\J\null1\null\ti2(y) \bigr\rbrace &=& 0.
\end{eqnarray}
\end{subequations}
Thus the currents $\J\null0$ and $\J\null1$ are Takiff currents of multiplicity 2 and of levels $\ls\null0 = -2\kay$ and $\ls\null1 = \ell$. Moreover, these currents are valued in the real form $\g_0$ and are thus real Takiff currents.

\paragraph{Suitable Takiff realisation.} Let us consider a Takiff datum $\lt$ with a unique site, multiplicity two and levels as above. The currents introduced in the previous paragraph then define a Takiff realisation:
\beqz
\pi_{G_0} : \Tc_{\lt} \longrightarrow \Og,
\eeqz
from the corresponding Takiff algebra $\Tc_{\lt}$ to the algebra $\Og$ of fields on $\TG$. We shall call this the \textit{PCM+WZ realisation}, for reasons to be made obvious in Subsection \ref{SubSec:1site}. In the particular case where the level $\kay$ is equal to zero, we shall speak of the \textit{PCM realisation}.

From the definitions \eqref{Eq:TakiffWZ} of the currents $\J\null0(x)$ and $\J\null1(x)$ and the orthogonality relation \eqref{Eq:OrthoWj}, one finds that
\beqz
\Pc_{G_0} = \int_{\D} \dd x \; \kappa\bigl( j(x), X(x) \bigr) = \frac{1}{\ell}\int_\D \dd x \left(  \kappa\Bigl( \J\null0(x), \J\null1(x) \Bigr) +\frac{\kay}{\ell} \kappa\Bigl( \J\null1(x), \J\null1(x) \Bigr) \right).
\eeqz
We recognize on the right-hand sign of this equation the generalised Segal-Sugawara $\Dc\null0$ in the realisation $\Og$ (comparing to the Equation \eqref{Eq:SSMult2} giving the expression of the generalised Segal-Sugawara integral of a Takiff current of multiplicity 2). Thus the PCM+WZ realisation $\pi_{G_0}$ is suitable.

\subsection{Warm-up: the Principal Chiral Model with Wess-Zumino term as a realisation of a one-site local AGM}
\label{SubSec:1site}

In the previous subsection, we have exhibited a Takiff realisation in the algebra $\Og$ of canonical fields on $\TG$, the PCM+WZ realisation. Following the section \ref{Sec:AGM} of this article, one can then construct an integrable field theory with observables $\Og$ from the general construction of local AGM. In this subsection, we show that this integrable model can be identified as the Principal Chiral Model with Wess-Zumino term on the group $G_0$, as initially observed in the article~\cite{Vicedo:2017cge}. This will serve as a warm-up for the rest of this section as it is the first and simplest example of the construction of integrable $\s$-models from AGM.

\subsubsection{The model in Hamiltonian formulation}

\paragraph{Gaudin Lax matrix and twist function.} In addition to the algebra of observables $\Og$, fixed by the realisation $\pi_{G_0}$ and the Takiff datum $\lt$ described in the previous subsection, the model we aim to construct depends on the position of the unique site of the AGM, which should be a real number. We shall denote this position by $\po_1$. The Gaudin Lax matrix of the model is then given by
\begin{equation}\label{Eq:S1site}
\Sg(z,x) =  \frac{\ell \, j(x)}{(z-z_1)^2} + \frac{X(x)-\kay\,j(x)-\kay\,W(x)}{z-z_1},
\end{equation}
following the general expression \eqref{Eq:S}. Similarly, from Equation \eqref{Eq:Twist}, we get the twist function of the model:
\begin{equation}\label{Eq:Twist1site}
\vp(z) = \frac{\ell}{(z-z_1)^2} - \frac{2\kay}{z-z_1} - \ell^\infty = - \frac{\ell^\infty (z-z_1)^2 +2\kay (z-z_1) - \ell}{(z-z_1)^2}.
\end{equation}
The zeros of this twist function are
\beqz
\ze_1 = z_1 - \frac{\kay-\sqrt{\ell\, \ell^\infty+\kay^2}}{\ell^\infty} \;\;\;\;\; \text{ and } \;\;\;\;\; \ze_2 = z_1 - \frac{\kay+\sqrt{\ell\, \ell^\infty+\kay^2}}{\ell^\infty}.
\eeqz
If we suppose $\ell$ and $\ell^\infty$ to be of the same sign, these zeros are real and simple (recall that $\ell$ and $\ell^\infty$ are supposed to be non-zero). For future convenience, we will write these zeros as
\begin{equation}\label{Eq:Zeros1site}
\ze_{1,2} = z_1 - \frac{\kay \mp K}{\ell^\infty} \;\;\;\;\; \text{ with } \;\;\;\;\; K = \sqrt{\ell\, \ell^\infty+\kay^2}.
\end{equation}
Let us also note that the derivatives of $\vp$ at these zeros are given by:
\begin{equation}\label{Eq:DerPhiPCM}
\vp'(\ze_1) = - 2K \left(\frac{\ell^\infty}{K-\kay}\right)^2 \;\;\;\;\; \text{ and } \;\;\;\;\; \vp'(\ze_2) = 2K \left(\frac{\ell^\infty}{K+\kay}\right)^2.
\end{equation}

\paragraph{Hamiltonian.} From the expressions \eqref{Eq:S1site} and \eqref{Eq:Twist1site} of $\Sg(z,x)$ and $\vp(z)$, one can compute the spectral parameter dependent charge $\Q(z)$, defined as \eqref{Eq:QSpec}. One then extracts the charges $\Q_1$ and $\Q_2$ associated with the zeros $\ze_1$ and $\ze_2$, using the general definition \eqref{Eq:QRes}. One finds
\beqz
\Q_{1,2} = \pm \frac{K}{4} \int_{\D} \dd x \; \kappa\left( \frac{X(x)-\kay\,W(x)}{K} \pm j(x), \frac{X(x)-\kay\,W(x)}{K} \pm j(x) \right).
\eeqz
As expected from Subsection \ref{SubSubSec:EnergyMom}, one has $\Q_1+\Q_2=\Pc_{G_0}$, with $\Pc_{G_0}$ the momentum \eqref{Eq:MomTStar} of the model.\\

Let us now discuss the Hamiltonian of the model. Following Subsection \ref{SubSubSec:Zeros}, we shall parametrise this Hamiltonian as
\beqz
\Hc = \epsilon_1 \Q_1 + \epsilon_2 \Q_2,
\eeqz
Moreover, according to Subsection \ref{SubSubSec:Relat}, we shall choose the numbers $\epsilon_1$ and $\epsilon_2$ to be either $1$ or $-1$ to ensure the relativistic invariance of the model. As discussed above, the choice $\epsilon_1=\epsilon_2=\pm 1$ corresponds to taking $\Hc=\pm \Pc_{G_0}$: this choice hence leads to a trivial dynamics that we shall not consider.

Thus, we should choose either $\epsilon_1=-\epsilon_2=1$ or $\epsilon_1=-\epsilon_2=-1$. These two choices lead to opposite Hamiltonians and thus equivalent dynamics. We shall focus on the choice which makes the Hamiltonian positive (if $\g_0$ is a compact real form), which is obtained by taking $\epsilon_i$ of the same sign as $-\vp'(\ze_i)$. By inspecting Equation \eqref{Eq:DerPhiPCM}, we thus choose $\epsilon_1=1$ and $\epsilon_2=-1$. The corresponding Hamiltonian is
\begin{equation}\label{Eq:H1site}
\Hc = \frac{1}{2} \int_{\D} \dd x \; \left( \frac{1}{K}\,\kappa\bigl( X(x)-\kay\,W(x), X(x)-\kay\,W(x) \bigr) + K \,\kappa\bigl( j(x), j(x) \bigr) \right) .
\end{equation}

\subsubsection{The model in Lagrangian formulation}

\paragraph{Inverse Legendre transform on the fields.} Recall that the algebra of observables $\Og$ is described by the $G_0$-valued field $g$ and the $\g_0$-valued field $X$, encoding a field on $\TG$ equipped with its canonical symplectic structure. Thus, the model described above is an Hamiltonian field theory on a canonical cotangent bundle $\TG$ and hence should admit a Lagrangian description as a two-dimensional field theory on $G_0$. This Lagrangian description is the subject of this subsection.

To obtain it, one needs to perform an inverse Legendre transform. In particular, one needs to pass from the Hamiltonian fields $g(x)$ and $X(x)$ to a Lagrangian field $g(x,t)$. This is done by expressing the conjugate momenta of the model, encoded in $X$, in terms of the time derivative of $g$. For that, let us compute the time evolution of $g$ defined from the Hamiltonian $\Hc$ by
\beqz
\p_t g = \lbrace \Hc, g \rbrace.
\eeqz
From the expression \eqref{Eq:H1site} of $\Hc$ and the Poisson brackets \eqref{Eq:PBTstarG}, \eqref{Eq:PBj} and \eqref{Eq:PbW1}, one gets
\begin{equation}\label{Eq:Legendre1site}
g^{-1} \p_t g = \frac{ X - \kay\, W}{K}.
\end{equation}
This gives the Lagrangian expression of the Hamiltonian field $X$.

\paragraph{Action.} Recall that the canonical Hamiltonian field on $\TG$ is encoded in terms of coordinates fields $\phi^i$ and their conjugate momenta $\pi_i$. The action of the model is then obtained from the Hamiltonian as the inverse Legendre transform
\beqz
S[g] = \iint_{\R \times \D} \dd t \, \dd x \; \pi_i \, \p_t \phi^i - \int_\R \dd t \; \Hc,
\eeqz
where one should replace all Hamiltonian fields by their Lagrangian expressions. From the definition \eqref{Eq:DefX} of $X$, one can rewrite the first term of this action in terms of currents, giving
\begin{equation}\label{Eq:Legendre1siteAction}
S[g] = \iint_{\R \times \D} \dd t \, \dd x \; \kappa\bigl( X, g^{-1} \p_t g \bigr) - \int_\R \dd t \; \Hc,
\end{equation}
or again
\beqz
S[g] = \iint_{\R \times \D} \dd t \, \dd x \; \kappa\bigl( X-\kay W, g^{-1} \p_t g \bigr) - \int_\R \dd t \; \Hc + \kay \iint_{\R \times \D} \dd t \, \dd x \; \kappa\bigl(W, g^{-1} \p_t g \bigr).
\eeqz
One recognizes in the last term of this equation the Wess-Zumino term $\W g$ as expressed in \eqref{Eq:IWZ}. The first term is easily expressed in terms of the Lagrangian fields using Equation \eqref{Eq:Legendre1site}. Similarly, using this equation and the fact that $j=g^{-1} \p_x g$ by definition, one obtains the Lagrangian expression of the Hamiltonian \eqref{Eq:H1site}:
\beqz
\Hc = \frac{K}{2} \int_{\D} \dd x \, \Bigl( \kappa\bigl( g^{-1} \p_t g, g^{-1} \p_t g \bigr) + \kappa\bigl( g^{-1} \p_x g, g^{-1} \p_x g \bigr) \Bigr).
\eeqz
Combining all these, one gets the expression of the action of the model:
\begin{equation}\label{Eq:Action1site}
S[g] = \frac{K}{2} \iint_{\R \times \D} \dd x \, \dd t \; \Bigl( \kappa\bigl( g^{-1} \p_t g, g^{-1} \p_t g \bigr) - \kappa\bigl( g^{-1} \p_x g, g^{-1} \p_x g \bigr) \Bigr) + \kay \, \W g.
\end{equation}
We recover here the action of the Principal Chiral Model with arbitrary Wess-Zumino term, as announced. This justifies \textit{a posteriori} the name PCM+WZ realisation for the Takiff realisation defined in Subsection \ref{SubSubSec:PrincipalReal} (as well as the name PCM realisation in the case where $\kay=0$, as the model then coincides with the PCM itself, without Wess-Zumino term).

\paragraph{Lax pair.} By construction, the model that we are considering is integrable and in particular, possesses a Lax pair. The general expression of this Lax pair in light-cone coordinates is given by Equation \eqref{Eq:LightConeLax} in terms of the currents $\Sg(\ze_1,x)$ and $\Sg(\ze_2,x)$. From Equations \eqref{Eq:S1site} and \eqref{Eq:DerPhiPCM}, one can express these currents and thus the Lax pair in terms of the currents $j$ and $X-\kay\,W$. Using the Lagrangian expression \eqref{Eq:Legendre1site} of the latter, one then obtains the Lax pair of the model in its Lagrangian formulation. More concretely, one gets
\beqz
\Lc_+(z) = \frac{\kay-K}{\ell^\infty}\frac{j_+}{z-\ze_1} \;\;\;\;\; \text{ and } \;\;\;\;\; \Lc_-(z) = \frac{\kay+K}{\ell^\infty}\frac{j_-}{z-\ze_2},
\eeqz
with
\beqz
j_\pm = g^{-1} \p_\pm g = g^{-1} \p_ t g \pm g^{-1} \p_x g
\eeqz
the standard light-cone Maurer-Cartan currents.

\subsubsection{Free parameters of the model}
\label{SubSubSec:Param1site}

\paragraph{Redundancy in the parameters.} The model initially introduced possessed 4 parameters: the levels $\ell$, $\kay$ and $\ell^\infty$ and the position $\po_1$. One easily sees from the Hamiltonian \eqref{Eq:H1site} or the action \eqref{Eq:Action1site} that there is a redundancy in these parameters. Indeed, the only parameters of the models which appear in these expressions are $\kay$ and $K$.

\paragraph{Change of spectral parameter.} As explained in Subsection \ref{SubSubSec:ChangeSpec}, different realisations of AGM can correspond to the same model if they are related by a change of spectral parameter. More precisely, such a change of spectral parameter transforms the levels and the positions of the sites without changing the Hamiltonian. This is the origin of the redundancy in the parameters $\ell$, $\kay$, $\ell^\infty$ and $z_1$.

Let us be more precise. As explained in Subsection \ref{SubSubSec:ChangeSpec}, a change of spectral parameter is given by the combination $\tau_{a,b}$ of a dilation by $a\in\R^*$ and a translation by $b\in\R$. There are several ways to fix the redundancy of the parameters. One can for example adjust $a$ and $b$ to fix the position of one site and the constant term in the twist function to particular values. In the present case, we will choose to fix the values of the zeros of the twist function to $+1$ and $-1$. Thus, we need to find $a$ and $b$ such that
\beqz
\tau_{a,b}(\ze_1) = a \, \ze_1 + b = 1 \;\;\;\;\; \text{ and } \;\;\;\;\; \tau_{a,b}(\ze_2) = a \, \ze_2 + b = -1.
\eeqz
This is easily done from the expression \eqref{Eq:Zeros1site} of $\ze_1$ and $\ze_2$. One gets
\beqz
a = \frac{\ell^\infty}{K} \;\;\;\;\; \text{ and } \;\;\;\;\; b =  \frac{\kay-\ell^\infty z_1}{K}.
\eeqz
In what follows we describe the characteristics of the model with spectral parameter $\lat=\tau_{a,b}(z)$.

\paragraph{Twist function and Lax pair.} The transformation of the twist function under a change of spectral parameter is given by Equation \eqref{Eq:ChangeTwist}. In the present case, we get
\begin{equation}\label{Eq:Twist1siteGoodParam}
\vpt(\lat) = K \frac{1-\lat^2}{\big(\lat-\kt\big)^2}, \;\;\;\;\; \text{ with } \;\;\;\;\; \kt = \tau_{a,b}(z_1) = \frac{\kay}{K}.
\end{equation}
This twist function indeed has zeros at $\lat$ equal to $+1$ and $-1$. It coincides with the twist function of~\cite{Delduc:2014uaa}, with the parameter $A$ of~\cite{Delduc:2014uaa} equal to 0 (as we are not considering a Yang-Baxter deformation) and with the parameter $k$ of~\cite{Delduc:2014uaa} equal to $\kt$. This is coherent, as the action \eqref{Eq:Action1site} coincides with the one considered in~\cite{Delduc:2014uaa} with the Yang-Baxter deformation turned off.

Similarly, one can find the Lax pair of the model after the change of spectral parameter using Equation \eqref{Eq:TransfoLax}. Explicitly, this gives
\beqz
\widetilde{\Lc}_\pm (\lat) = \frac{1\mp \kt}{1 \mp \lat} j_\pm.
\eeqz

\subsection[Integrable coupled $\s$-models]{Integrable coupled $\bm\s$-models}
\label{SubSec:CoupledPCM}

\subsubsection{The model in Hamiltonian formulation.}
\label{SubSubSec:HamNSites}

\paragraph{Coupling several PCM+WZ $\bm\s$-models.} In the previous subsection, we identified the PCM with Wess-Zumino term as a realisation of AGM with one site. In Subsection \ref{SubSubSec:Coupling}, we developed a natural method to couple realisations of AGM while preserving integrability. Thus, one can apply this method to construct an integrable model coupling $N$ copies of the PCM with Wess-Zumino term. The description of this model, both in the Hamiltonian and Lagrangian formulations, is the subject of the present subsection.

As explained in the previous subsection, the model with a single copy possesses one real site of multiplicity 2, realised  in terms of canonical fields in $\TG$ through the PCM+WZ realisation. Following the construction of Subsection \ref{SubSubSec:Coupling}, the model with $N$ copies then possesses $N$ real sites of multiplicity 2, that we shall denote $(1)$ to $(N)$, each associated with a copy of the PCM+WZ realisation. These realisations are parametrised by levels $\lb r 0 \in \R$ and $\lb r 1 \in \R^*$. Similarly to the case $N=1$ discussed in the previous subsection, we will use the following simpler notation:
\begin{equation}\label{Eq:LevelsNsites}
\lb r 1 = \lc r \;\;\;\;\; \text{ and } \;\;\;\;\; \lb r 0 = -2 \kc r.
\end{equation}

Following Subsection \ref{SubSubSec:Coupling}, the algebra of observables of the coupled model with $N$ copies is given by $\Ac=\Ac_{G_0}^{\otimes N}$. The $r^{\rm{th}}$-tensor factor $\Ac_{G_0}$ in this algebra is described by a canonical field in $\TG$, as in Subsection \ref{SubSec:TStar}. As explained in this subsection \ref{SubSec:TStar}, this field can be decomposed into a $G_0$-valued field $\gb r(x)$ and a $\g_0$-valued current $\Xb r(x)$. Each site $(r)$ of the coupled model is then associated with two real Takiff currents $\J {(r)} 0(x)$ and $\J {(r)} 1(x)$, realised as
\begin{equation}\label{Eq:TakiffNsites}
\J {(r)} 1(x) = \lc r \, \jb r (x) \;\;\;\;\; \text{ and } \;\;\;\;\; \J {(r)} 0(x) = \Xb r (x) - \kc r \, \jb r (x) - \kc r \, \Wb r (x),
\end{equation}
with $\jb r = \gb r\null^{\,-1} \p_x \gb r$ and $\Wb r$ the equivalent of the currents $j=g^{-1} \p_x g$ and $W$ of Subsection \ref{SubSec:TStar} seen in the $r^{\rm{th}}$-tensor factor $\Ac_{G_0}$ in $\Ac$.

The fields $\gb r(x)$, $\Xb r(x)$, $\jb r(x)$ and $\Wb r(x)$ satisfy the same brackets \eqref{Eq:PBTstarG}, \eqref{Eq:PBj}, \eqref{Eq:PbW1} and \eqref{Eq:PbW2} as the ones in a unique copy of the algebra $\Og$. Moreover, two such fields Poisson commute if they are attached to different sites $r$. In particular, we shall need in the next subsection the following brackets:
\begin{subequations}\label{Eq:PBgNsites}
\begin{align}
\left\lbrace \gb r\ti{1}(x), \Xb s\ti{2}(y) \right\rbrace & = - \delta_{rs} \, \gb r\ti{1}(x) C\ti{12} \delta_{xy}, \\
\left\lbrace \gb r\ti{1}(x), \jb s\ti{2}(y) \right\rbrace &= 0, \\
\left\lbrace \gb r\ti{1}(x), \Wb s\ti{2}(y) \right\rbrace &= 0. 
\end{align}
\end{subequations}

For simplicity, we denote by $\po_r = \po_{(r)} \in \R$ the position of the site $(r)$ in the coupled model with $N$ copies. As explained in Subsection \ref{SubSubSec:Coupling} and more precisely in Theorem \ref{Thm:Decoupling}, the decoupling limit of the $r^{\rm{th}}$-copy corresponds to taking $\po_r \to \infty$ while keeping the other positions and all levels fixed.

\paragraph{Gaudin Lax matrix and twist function.} Applying the general definition \eqref{Eq:S} of the Gaudin Lax matrix to the present case, one gets
\begin{equation}\label{Eq:SN}
\Sg(z,x) = \sum_{r=1}^N \left( \frac{\lc r \, \jb r(x)}{(z-z_r)^2} + \frac{\Xb r (x) - \kc r \, \jb r (x) - \kc r \, \Wb r (x)}{z-z_r} \right).
\end{equation}
Similarly, the twist function of the coupled model is given by
\begin{equation}\label{Eq:TwistN}
\vp(z) = \sum_{r=1}^N \left( \frac{\lc r}{(z-z_r)^2} - \frac{2\kc r}{z-z_r} \right) - \ell^\infty.
\end{equation}
It possesses $M=2N$ zeros $\ze_i$, $i\in\lbrace 1,\cdots,M\rbrace$. Finding the expression of these zeros in terms of the levels and the positions of the sites is in general quite complicated, if not impossible, as it requires solving a polynomial equation of order $2N$. We will come back to this at the end of this subsection. For now, we will not try to express the $\ze_i$'s explicitly and will just see them as implicitly depending on the levels and the positions through the relation $\vp(\ze_i)=0$. Note that the twist function can be expressed as
\begin{equation}\label{Eq:TwistNZeros}
\vp(z) = -\ell^\infty \dfrac{\prod_{i=1}^M (z-\ze_i)}{\prod_{r=1}^N (z-\po_r)^2}.
\end{equation}

We have shown in Subsection \ref{SubSec:1site} that the twist function of the model with one copy has simple and real zeros, if $\ell$ and $\ell^\infty$ have the same sign. For simplicity, we will suppose that for the model with $N$ copies, all levels $\lc r$, $r\in\lbrace 1,\cdots,N \rbrace$, as well as the constant term $\ell^\infty$, are positive numbers. This ensures, that the twist function of the $r^{\rm{th}}$-copy alone possesses simple and real zeros, for all $r\in\lbrace 1,\cdots,N\rbrace$. According to Theorem \ref{Thm:Decoupling} and Appendix \ref{App:Decoupling}, this implies that, at least close enough to the decoupling limit, the zeros of the coupled twist function are also real and simple. More concretely, it means that there is a domain in the space of parameters (with the positions $z_r$ large enough), in which these zeros are real and simple. In this domain, we can construct an integrable relativistic coupled model following the construction of Section \ref{Sec:AGM} (which requires the zeros of the twist function to be simple). Moreover, if $G_0$ is compact, one can choose the parameters of the model such that its Hamiltonian is positive, as all the zeros are real.

\paragraph{Hamiltonian.}\label{Par:HamNsites} The Hamiltonian of the model is constructed as explained in Section \ref{Sec:AGM}, as the linear combination \eqref{Eq:Ham} of the quadratic charges $\Q_i$ with coefficients $\epsilon_i$ equal to $\pm 1$ to ensure the relativistic invariance of the model. For the rest of this subsection, we will suppose that there are as many $\epsilon_i$'s equal to $+1$ that there are $\epsilon_i$'s equal to $-1$ (this is possible as there are $M=2N$ such $\epsilon_i$'s)\footnote{It is interesting to note that if $G_0$ is compact, the choice of $\epsilon_i$'s which makes the Hamiltonian positive satisfies this hypothesis. This can be seen by considering the decoupling limit and observing that for a model with a single copy, the positive Hamiltonian corresponds to one $\epsilon_i$ equal to $+1$ and the other to $-1$ (see Subsection \ref{SubSec:1site}).}. This will be motivated later, in Subsection \ref{SubSubSec:LagNSites}, by observing that different numbers of $\epsilon_i$'s equal to $+1$ and $-1$ lead to a model on which we cannot perform an inverse Legendre transform and thus which does not possess a naive Lagrangian formulation. As a consequence of this hypothesis, note that the sets $I_\pm$ defined in \eqref{Eq:Ipm} are both of size $N$. This separation of the zeros allows us to write the twist function in a factorised form:
\begin{equation}\label{Eq:TwistFact}
\vp(z) = - \ell^\infty \vp_+(z)\vp_-(z), \;\;\;\;\; \text{ with } \;\;\;\;\; \vp_\pm(z) =  \dfrac{\prod_{i\in I_\pm} (z-\ze_i)}{\prod_{r=1}^N (z-\po_r)}.
\end{equation}

The charges $\Q_i$ are extracted from the spectral dependent quantity $\Q(z)$, which is defined in Equation \eqref{Eq:QSpec}. Using this definition together with the expressions \eqref{Eq:SN} and \eqref{Eq:TwistN} of the Gaudin Lax matrix $\Sg(z,x)$ and the twist function $\vp(z)$ for the model with $N$ copies, one can get an explicit formula for $\Q(z)$ and thus for $\Hc$ in terms of the currents $\jb r(x)$ and $\Xb r(x)-\kc r \, \Wb r(x)$. One can then compute the time evolution of different observables using $\p_t = \lbrace \Hc, \cdot \rbrace$ and describe the dynamics of the model. This is not the approach that we will develop in this article. Indeed, as we aim to describe integrable $\s$-models, we will focus more on the Lagrangian formulation, which is treated in the next subsection.

\subsubsection{The model in Lagrangian formulation}
\label{SubSubSec:LagNSites}

\paragraph{Inverse Legendre transform on the fields.} In the previous Subsection, we described an integrable Hamiltonian model with observables $\Ac=\Og^{\otimes N}$. Similarly to the case with one copy, this model should admit a Lagrangian formulation as a two-dimensional field theory on $G_0^N$, with fundamental fields $\gb r(x,t)$, for $r\in\lbrace 1,\cdots,N\rbrace$. This formulation is obtained from the Hamiltonian one by performing an inverse Legendre transform.

Usually, the first step in this transform would be to express the conjugate momenta of the theory, encoded in the Hamiltonian fields $\Xb r(x)$, in terms of the time derivative of the Lagrangian fields $\gb r (x,t)$, through the dynamics $\p_t=\lbrace \Hc,\cdot\rbrace$ of the model. This would require to have an explicit expression of the Hamiltonian $\Hc$, which we did not seek for in the previous subsection. Instead, we will use a slightly different but simpler approach: as we will see, it will be easier to describe the time derivatives $\p_t \gb r(x,t)$ in terms of the currents $\Sg(\ze_i,x)$, $i\in\lbrace 1,\cdots,M\rbrace$, instead of the currents $\Xb r(x)$. This relation will contain implicitly the inverse Legendre transform on the conjugate momenta and will turn out to be enough to obtain the Lagrangian formulation of the model.

The dynamics of the model is determined by the Hamiltonian $\Hc$, which itself is expressed in terms of the quadratic charges $\Q_i$. In what follows, we shall use the explicit expression \eqref{Eq:QHZeros} of the charges $\Q_i$ in terms of the currents $\Sg(\ze_i,x)$. To compute the Poisson bracket of $\gb r(x)$ with $\Q_i$, let us then first compute its bracket with $\Sg(\ze_i,x)$. From the expression \eqref{Eq:SN} of the Gaudin Lax matrix $\Sg(z,x)$ and the Poisson brackets \eqref{Eq:PBgNsites}, one finds
\beqz
\left\lbrace \gb r\ti1(x), \Sg\ti{2}(\ze_i,y) \right\rbrace = \gb r\ti1(x) \frac{C\ti{12}}{\po_r-\ze_i} \delta_{xy}.
\eeqz
Combining this with the expression \eqref{Eq:QHZeros} of $\Q_i$, one finds
\beqz
\gb r(x)^{-1}\left\lbrace \Q_i, \gb r(x) \right\rbrace = \frac{1}{\vp'(\ze_i)} \frac{\Sg(\ze_i,x)}{\po_r - \ze_i}.
\eeqz
Thus, the Hamiltonian expression of the field $\jb r_0 = \gb r\null^{\,-1} \p_t \gb r$ is given by
\begin{equation}\label{Eq:j0}
\jb r_0(x) = \gb r(x)^{-1} \left\lbrace \Hc, \gb r(x) \right\rbrace = \sum_{i=1}^M \frac{\epsilon_i}{\vp'(\ze_i)} \frac{\Sg(\ze_i,x)}{\po_r - \ze_i},
\end{equation}
where we used the expression \eqref{Eq:Ham} of the Hamiltonian. Note that this equation contains the relation between the Lagrangian fields $\jb r_0$ and the Hamiltonian fields $\Xb s$, as one can express $\Sg(\ze_i,x)$ in terms of $\Xb s$'s, $\jb s$'s and $\Wb s$'s by Equation \eqref{Eq:SN}. The usual way to perform the inverse Legendre transform on the fields of the model would be to invert this relation and get the expression of $\Xb r$ in terms of the $\jb s_0$'s, $\jb s$'s and $\Wb s$'s. As we shall see, we will not need to perform this computation explicitly to obtain the action of the model and will actually obtain it as a by-product of what follows.

\paragraph{Lagrangian Lax pair through interpolation.} A critical observation for the Lagrangian description of the model is that one recognizes in the right-hand side of Equation \eqref{Eq:j0} the evaluation of Equation \eqref{Eq:MZeros} at $z=\po_r$. Thus, we have
\begin{equation}\label{Eq:j0M}
\jb r_0(x,t) = \Mc(\po_r,x,t),
\end{equation}
where $\Mc(z,x,t)$ is the temporal component of the Lagrangian Lax pair. A similar computation to the one of the previous paragraph with all $\epsilon_i$'s replaced by $1$ leads to
\beqz
\jb r(x) = \gb r(x)^{-1} \left\lbrace \Pc_{\Ac}, \gb r(x) \right\rbrace = \sum_{i=1}^M \frac{1}{\vp'(\ze_i)} \frac{\Sg(\ze_i,x)}{\po_r - \ze_i},
\eeqz
using the expression \eqref{Eq:MomentumQi} of the momentum $\Pc_\Ac$. Comparing to Equation \eqref{Eq:LZeros}, we then get
\begin{equation}\label{Eq:j1L}
\jb r(x,t) = \Lc(\po_r,x,t),
\end{equation}
similarly to Equation \eqref{Eq:j0M}. Note as a consistency check that this relation can also be extracted from Equations \eqref{Eq:LaxPoles}, \eqref{Eq:LevelsNsites} and \eqref{Eq:TakiffNsites}.\\

Combining Equations \eqref{Eq:j0M} and \eqref{Eq:j1L}, we get similar relations for the light-cone components of the Lax pair:
\begin{equation}\label{Eq:jpmLpm}
\jb r_\pm (x,t) = \Lc_\pm(\po_r,x,t).
\end{equation}
Yet, we know from Equations \eqref{Eq:LightConeLax} that $\Lc_\pm(z,x,t)$ has simple poles exactly at the $\ze_i$'s, for $i\in I_\pm$. Moreover, recall from Paragraph \ref{Par:HamNsites} that we supposed that the subsets $I_\pm$ of $\lbrace 1,\cdots,M=2N \rbrace$ are both of size $N$. Thus, $\Lc_\pm(z,x,t)$, seen as a rational function of $z$, is the sum of $N$ simple fractions. It is a classical result, made precise by Lemma \ref{Lem:ZerosToPoles}, that such a function is entirely determined by its evaluation at $N$ distinct points. Thus, Equation \eqref{Eq:jpmLpm} entirely determines $\Lc_\pm(z,x,t)$ by interpolation. By applying Lemma \ref{Lem:ZerosToPoles}, one then gets the Lagrangian expression of the Lax pair:
\begin{equation}\label{Eq:LaxLag}
\Lc_\pm(z,x,t) = \sum_{r=1}^N \frac{\vppm r(z_r)}{\vppm r (z)} \jb r_\pm(x,t),
\end{equation}
where
\begin{equation}\label{Eq:Phipm}
\vppm r(z) = \dfrac{\displaystyle \prod_{i\in I_\pm} (z-\ze_i)}{\displaystyle\prod_{\substack{s=1 \\ s \neq r}}^N (z-z_s)} = (z-z_r)\vp_\pm(z).
\end{equation}

Note that a crucial hypothesis for the reasoning above to work is that the light-cone components $\Lc_\pm(z)$ of the Lax pair have exactly $N$ simple poles, so that we can interpolate them through their values at the positions $\po_r$, $r\in\lbrace 1,\cdots,N \rbrace$. This is true as we supposed that there are as many $\epsilon_i$'s equal to $+1$ than to $-1$. Let us suppose that this hypothesis is false: one of the two sets $I_\pm$ is then of size strictly less than $N$ and the corresponding Lax pair component $\Lc_\pm(z)$ thus is the sum of less than $N$ simple fractions. It is impossible that the evaluation of such an object at the $N$ points $\po_r$'s yields the $N$ linearly independent quantities $\jb r_\pm$, hence a contradiction. This observation is an indirect proof that one cannot perform the inverse Legendre transform if one does not have $|I_+|=|I_-|=N$ (in the sense that the relation between the time derivatives of the Lagrangian fields $\gb r(x,t)$ and the Hamiltonian conjugate momenta cannot be inverted).

\paragraph{Back to the inverse Legendre transform.} Recall that the Lax matrix is defined from the Gaudin Lax matrix and the twist function by Equation \eqref{Eq:Lax}. From their expressions \eqref{Eq:SN} and \eqref{Eq:TwistN} in the present case, one checks that the current $\Xb r$ can be extracted from the Lax matrix through
\begin{equation}\label{Eq:ExctractLaxX}
\Xb r - \kc r \, \Wb r = \lc r \,\Lc'(\po_r) - \kc r \,\jb r = \frac{\lc r}{2} \Bigl( \Lc'_+(\po_r) - \Lc'_-(\po_r) \Bigr) - \frac{\kc r}{2} \Bigl( \jb r_+ - \jb r_- \Bigr),
\end{equation}
where $\Lc'(z)$ denotes the derivatives of $\Lc(z)$ with respect to the spectral parameter $z$. Using the expression \eqref{Eq:LaxLag} of the light-cone Lax pair $\Lc_\pm(z)$, we find
\begin{equation*}
\Lc_\pm'(\po_r) = - \frac{\vppm r'(\po_r)}{\vppm r(\po_r)} \jb r_\pm + \sum_{\substack{s=1\\s\neq r}}^N \frac{1}{z_r-z_s} \frac{\vppm s(z_s)}{\vppm r(z_r)} \jb s_\pm,
\end{equation*}
using
\begin{equation}\label{Eq:IdentityDer}
\frac{\dd \;}{\dd z} \left. \left( \frac{1}{\vppm s(z)} \right) \right|_{z=z_r} = \frac{1}{z_r-z_s} \frac{1}{\vppm r(z_r)}.
\end{equation}
In order to get a presentation as uniform as possible, we will make the choice to express $\Xb r$ in terms of $\ell^\infty$, the $z_s$'s and the $\ze_i$'s only, and thus not to keep explicit dependences on the levels $\lc s$ and $\kc s$. To determine the expression of these levels in terms of the $z_s$'s and the $\ze_i$'s, we will use
\beqz
\lc r = \chi_r(z_r) \;\;\;\; \text{ and } \;\;\;\; \kc r = -\frac{1}{2} \chi'_r(z_r), \;\;\;\; \text{ with } \;\;\;\;\; \chi_r(z) = (z-z_r)^2 \vp(z),
\eeqz
together with the formula
\beqz
\chi_r(z) = -\ell^\infty \vpp r(z) \vpm r(z),
\eeqz
which follows from Equations \eqref{Eq:TwistNZeros} and \eqref{Eq:Phipm}. We then get
\begin{equation}\label{Eq:ZerosToLevels}
\lc r = - \ell^\infty \vpp r(z_r) \vpm r(z_r) \;\;\;\;\;\; \text{and} \;\;\;\;\;\; \kc r = \frac{\ell^\infty}{2}  \Bigl( \vpp r(z_r) \vpm r'(z_r) + \vpp r'(z_r) \vpm r(z_r) \Bigr).
\end{equation}
Finally, we obtain the Lagrangian expression of $\Xb r$, or more precisely $\Xb r - \kc r \, \Wb r$:
\begin{equation}\label{Eq:XLag}
\Xb r - \kc r \, \Wb r = \sum_{s=1}^N \Bigl( \rho_{sr} \, \jb s_+ + \rho_{rs} \, \jb s_- \Bigr),
\end{equation}
with
\begin{subequations}\label{Eq:Rho}
\begin{align}
\rho_{rr} &= \frac{\ell^\infty}{4}  \Bigl( \vpp r'(z_r) \vpm r(z_r) - \vpp r(z_r) \vpm r'(z_r) \Bigr),\label{Eq:Rhorr} \\
\rho_{rs} &= \frac{\ell^\infty}{2} \frac{\vpp r(z_r)\vpm s(z_s)}{\po_r-\po_s}, \;\;\;\;\; \text{ for } \; r\neq s. \label{Eq:Rhors}
\end{align}
\end{subequations}

\paragraph{Lagrangian expression of the currents $\bm{\Sg(\ze_i)}$.} Let us fix $i\in I_\pm$. From Equation \eqref{Eq:LightConeLax}, we note that
\beqz
\Sg(\ze_i) = \pm \frac{1}{2} \vp'(\ze_i) \, \res_{z=\ze_i} \Lc_\pm(z).
\eeqz
Yet, the Lagrangian expression of $\Lc_\pm(z)$ was found above using the interpolation lemma \ref{Lem:ZerosToPoles}. The last statement of this lemma also allows the computation of the residue of $\Lc_\pm(z)$ at $\ze_i$.  More precisely, we get
\beqz
\res_{z=\ze_i} \Lc_\pm(z) = \sum_{r=1}^N \frac{\vppm r(z_r)}{\vppm r'(\ze_i)} \jb r_\pm.
\eeqz
From Equation \eqref{Eq:TwistFact}, we have
\begin{equation}\label{Eq:DerTwistNzero}
\vp'(\ze_i) = -\ell^\infty \vp_\pm'(\ze_i)\vp_\mp(\ze_i).
\end{equation}
Moreover, note that
\beqz
\frac{\vp'_\pm(\ze_i)}{\vppm r'(\ze_i)} = \frac{1}{\ze_i-z_r}
\eeqz
Thus, we get
\begin{equation}\label{Eq:SZeroLag}
\Sg(\ze_i) = \pm \sum_{r=1}^N b_{ir}\, \jb r_\pm, \;\;\;\;\; \text{ with } \;\;\;\;\; b_{ir} = \frac{\ell^\infty}{2} \frac{\vppm r(z_r) \vp_\mp(\ze_i)}{z_r-\ze_i}.
\end{equation}

\paragraph{Lagrangian expression of the Hamiltonian.} To obtain the Lagrangian expression of the Hamiltonian, let us note that
\beqz
\Hc = \int_\D \dd x \left( \sum_{i\in I_-}^M  \frac{1}{2\vp'(\ze_i)}  \kappa\bigl( \Sg(\ze_i), \Sg(\ze_i) \bigr) - \sum_{i\in I_+}^M  \frac{1}{2\vp'(\ze_i)}  \kappa\bigl( \Sg(\ze_i), \Sg(\ze_i) \bigr) \right) .
\eeqz
Using the Lagrangian expression \eqref{Eq:SZeroLag} of the currents $\Sg(\ze_i)$, we obtain:
\beqz
\Hc = \int_\D \dd x \sum_{r,s=1}^N \left( \sum_{i\in I_-}^M \frac{b_{ir}b_{is}}{2\vp'(\ze_i)} \kappa\left( \jb r_-, \jb s_- \right) - \sum_{i\in I_+}^M \frac{b_{ir}b_{is}}{2\vp'(\ze_i)} \kappa\left( \jb r_+, \jb s_+ \right) \right).
\eeqz
From Equation \eqref{Eq:DerTwistNzero}, we get (for $i\in I_\pm$)
\beqz
\frac{b_{ir}b_{is}}{2\vp'(\ze_i)} = - \frac{\ell^\infty}{8} \vppm r(z_r)\vppm s(z_s) \frac{1}{(z_r-\ze_i)(z_s-\ze_i)} \frac{\vp_\mp(\ze_i)}{\vp_\pm'(\ze_i)} .
\eeqz
Thus, we can write
\begin{equation}\label{Eq:HLag}
\Hc = \int_\D \dd x \sum_{r,s=1}^N \left( c_{rs}^+ \, \kappa\left(\jb r_+, \jb s_+\right) + c_{rs}^- \, \kappa\left(\jb r_-, \jb s_-\right) \right),
\end{equation}
with
\begin{equation}\label{Eq:CoeffCrs}
c_{rs}^\pm = \pm \frac{\ell^\infty}{8} \vppm r(z_r) \vppm s(z_s) \sum_{i\in I_\pm} \frac{1}{(z_r-\ze_i)(z_s-\ze_i)} \frac{\vp_\mp(\ze_i)}{\vp_\pm'(\ze_i)}.
\end{equation}
For $r\neq s$, this coefficient can again be rewritten as
\beqz
c_{rs}^\pm = \pm \frac{\ell^\infty}{8} \frac{\vppm r(z_r) \vppm s(z_s)}{(z_r-z_s)} \sum_{i\in I_\pm} \left( \frac{1}{\po_s-\ze_i}-\frac{1}{\po_r-\ze_i} \right) \frac{\vp_\mp(\ze_i)}{\vp_\pm'(\ze_i)}
\eeqz
Yet, one has
\begin{equation}\label{Eq:PhiOverPhi}
1 + \sum_{i\in I_\pm} \frac{1}{z-\ze_i} \frac{\vp_\mp(\ze_i)}{\vp_\pm'(\ze_i)} = \frac{\vp_\mp(z)}{\vp_\pm(z)} = \frac{\vpmp r(z)}{\vppm r(z)} = \frac{\vpmp s(z)}{\vppm s(z)}.
\end{equation}
Thus, we obtain
\beqz
c_{rs}^\pm = \pm \frac{\ell^\infty}{8} \frac{\vppm r(z_r) \vppm s(z_s)}{(z_r-z_s)} \left( \frac{\vpmp s(z_s)}{\vppm s(z_s)} - \frac{\vpmp r(z_r)}{\vppm r(z_r)} \right)
\eeqz
or again
\begin{equation}\label{Eq:Crs}
c_{rs}^\pm = \pm \frac{\ell^\infty}{8} \frac{\vppm r(z_r) \vpmp s(z_s)-\vpmp r(z_r) \vppm s(z_s)}{(z_r-z_s)} = \frac{\rho_{rs}+\rho_{sr}}{4}, \;\;\;\;\; \text{ for } r \neq s.
\end{equation}
For $r=s$, the coefficient \eqref{Eq:CoeffCrs} becomes
\beqz
c_{rr}^\pm = \pm \frac{\ell^\infty}{8} \vppm r(z_r)^2 \sum_{i\in I_\pm} \frac{1}{(z_r-\ze_i)^2} \frac{\vp_\mp(\ze_i)}{\vp_\pm'(\ze_i)}.
\eeqz
Using the identity \eqref{Eq:PhiOverPhi}, we get
\beqz
c_{rr}^\pm = \mp \frac{\ell^\infty}{8} \vppm r(z_r)^2 \; \left. \frac{\dd\;}{\dd z} \frac{\vpmp r(z)}{\vppm r(z)} \right|_{z=\po_r},
\eeqz
or again
\begin{equation}\label{Eq:Crr}
c_{rr}^\pm = \frac{\ell^\infty}{8} \Bigl( \vpp r'(z_r) \vpm r(z_r) - \vpp r(z_r) \vpm r'(z_r) \Bigr) = \frac{\rho_{rr}}{2}.
\end{equation}

\paragraph{Action of the model.} The inverse Legendre transform is performed at the level of the action as in the case with one copy. Similarly to Equation \eqref{Eq:Legendre1siteAction} for one copy, we have here for $N$ copies:
\begin{equation*}
S\bigl[ \gb 1, \cdots, \gb N \bigr] = \sum_{r=1}^N \left( \iint_{\R \times \D} \dd t \, \dd x \; \kappa\left( \Xb r, \jb r_0 \right) \right) - \int_\R \dd t \; \Hc,
\end{equation*}
which can be rewritten as
\begin{equation}\label{Eq:LegendreInverseNSites}
S\bigl[ \gb 1, \cdots, \gb N \bigr] = \sum_{r=1}^N \left( \frac{1}{2} \iint_{\R \times \D} \dd t \, \dd x \; \kappa\left( \Xb r - \kc r \, \Wb r, \jb r_+ + \jb r_- \right) \right) - \int_\R \dd t \; \Hc + \sum_{r=1}^N \kc r \; \Ww {\gb r},
\end{equation}
using the expression \eqref{Eq:IWZ} of the Wess-Zumino term. Combining the Lagrangian expressions \eqref{Eq:XLag} of $\Xb r - \kc r\, \Wb r$ and \eqref{Eq:HLag} of $\Hc$, we get
\begin{eqnarray*}
S\bigl[ \gb 1, \cdots, \gb N \bigr] &=& \frac{1}{2}\sum_{r,s=1}^N \iint_{\R\times\D} \dd t \, \dd x \; \Bigl( \rho_{sr} \, \kappa\left(\jb s_+,\jb r_++\jb r_-\right) + \rho_{rs} \, \kappa\left(\jb s_-,\jb r_++\jb r_-\right) \Bigr) \\
& & \hspace{15pt} - \sum_{r,s=1}^N \iint_{\R\times\D} \dd t \, \dd x \; \Bigl( c_{rs}^+ \, \kappa\left(\jb r_+, \jb s_+\right) + c_{rs}^- \, \kappa\left(\jb r_-, \jb s_-\right) \Bigr) + \sum_{r=1}^N \kc r \; \Ww {\gb r},
\end{eqnarray*}
giving
\begin{eqnarray*}
S\bigl[ \gb 1, \cdots, \gb N \bigr] &=& \iint_{\R\times\D} \dd t \, \dd x \; \sum_{r,s=1}^N \; \left( \rho_{rs} \, \kappa\left(\jb r_+, \jb s_-\right) + \mu^+_{rs} \, \kappa\left(\jb r_+, \jb s_+\right)  + \mu^-_{rs}\, \kappa\left(\jb r_-, \jb s_-\right) \right), \\
& & \hspace{40pt} + \sum_{r=1}^N \kc r \; \Ww {\gb r},
\end{eqnarray*}
where
\beqz
\mu^\pm_{rs} = \frac{\rho_{rs}+\rho_{sr}-4c^\pm_{rs}}{4}.
\eeqz
From \eqref{Eq:Crs} and \eqref{Eq:Crr}, it is clear that
\beqz
\mu^\pm_{rs} = 0
\eeqz
for all $r,s \in \lbrace 1,\cdots, N \rbrace$, so that
\begin{equation}\label{Eq:Action}
S\bigl[ \gb 1, \cdots, \gb N \bigr] = \iint_{\R\times\D} \dd t \, \dd x \; \sum_{r,s=1}^N \; \rho_{rs} \, \kappa\left(\jb r_+, \jb s_-\right)  + \sum_{r=1}^N \kc r \; \Ww {\gb r}.
\end{equation}
We recall that the coefficients $\kay_r$ and $\rho_{rs}$ appearing in this expression are given in Equations \eqref{Eq:ZerosToLevels} and \eqref{Eq:Rho}. The action \eqref{Eq:Action} is the one announced previously in~\cite{Delduc:2018hty}.

\subsubsection{Parameters of the model.}

\paragraph{Positions and levels.} Let us discuss what are the parameters of the model described above. Following the construction of the model in the previous subsections, we see that it is defined by the following parameters:
\vspace{-3pt}\begin{itemize}\setlength\itemsep{0.1em}
\item the positions $\po_r$, $r\in\lbrace1,\cdots,N \rbrace$, of the sites ;
\item the levels $\lc r$ and $\kc r$, $r\in\lbrace1,\cdots,N \rbrace$, of the sites ;
\item the constant term $\ell^\infty$ ;
\item the choice of parameters $\epsilon_i=\pm 1$, $i\in\lbrace 1,\cdots,M \rbrace$, such that $|I_+|=|I_-|=N$.
\end{itemize}

There exists a redundancy in these parameters, coming from changes of spectral parameter, as explained in Subsection \ref{SubSubSec:ChangeSpec}. More precisely, such changes of spectral parameter remove two degrees of freedom to this set of parameters, without changing its Hamiltonian and action. For example, one can rescale the spectral parameter to fix the constant term $\ell^\infty$ (or one of the levels $\lc r$) to a fixed value. Similarly, one can shift the spectral parameter to fix one of the positions $z_r$. In addition to this change of spectral parameter, there exist some discrete redundancies in the parameters: any permutation of the positions and of the corresponding levels leads to an equivalent model (through the same permutation on the fields $\gb r$).\\

The choice of parameters described above is particularly adapted to discuss the decoupling limits of the model. Indeed, following Subsection \ref{SubSubSec:Coupling} and Theorem \ref{Thm:Decoupling}, one can make the $r^{\rm{th}}$-field $\gb r$ decouple from the others by sending the position $\po_r$ to infinity, while keeping all levels and all other positions fixed.

However this choice of parameters also has downsides. A rather harmless one is the fact that the levels and the positions should be such that the twist function of the model has simple zeros, so that one can apply the construction of the Hamiltonian. Moreover, it is sometimes useful (in particular to ensure the boundedness of the Hamiltonian from below) that these zeros are also real. This is true only on a certain domain of the parameters (see end of Subsection \ref{SubSubSec:HamNSites}), whose explicit description is quite involved.

The main downside of the present choice of parameters is that the action of the model depends on them only implicitly through the zeros $\ze_i$ of the twist function. Yet, the explicit computation of these zeros is in general a very complicated, if not impossible task, as it requires the resolution of a polynomial equation of degree $2N$.

\paragraph{Positions and zeros.} The observation above shows that the choice of the positions and levels as the parameters of the model is not a very appropriate one. Instead, it suggests to take the positions and the zeros of the twist function as the defining parameters of the model. More precisely, the model is determined by the following parameters
\vspace{-3pt}\begin{itemize}\setlength\itemsep{0.1em}
\item the $N$ positions $\po_r$, $r\in\lbrace1,\cdots,N \rbrace$, of the sites ;
\item the $N$ zeros $\ze_i$, $i\in I_+$, whose associated $\epsilon_i$ is equal to $+1$ ;
\item the $N$ zeros $\ze_i$, $i\in I_-$, whose associated $\epsilon_i$ is equal to $-1$ ;
\item the constant term $\ell^\infty$.
\end{itemize}
This choice of parameters indeed entirely determine the coefficients $\rho_{rs}$ and $\kc r$ appearing in the action \eqref{Eq:Action}, through Equations \eqref{Eq:ZerosToLevels} and \eqref{Eq:Rho}.\\

The parameters described above are not all free, as they are still subject to the redundancy from changing the spectral parameter. A dilation $\tau_{a,0}$ of the spectral parameter acts on the positions and the zeros by multiplication by $a$ ($\po_r \mapsto a\po_r$ and $\ze_i \mapsto a\ze_i$) and on the constant term $\ell^\infty$ by multiplication by $a^{-1}$ ($\ell^\infty \mapsto a^{-1} \ell^\infty$). It is clear from Equations \eqref{Eq:ZerosToLevels} and \eqref{Eq:Rho} that the coefficients $\rho_{rs}$ and $\kc r$ appearing in the action are invariant under such a change. One can for example use this freedom to fix $\ell^\infty$. A translation $\tau_{0,b}$ of the spectral parameter only affects the positions and the zeros ($\po_r \mapsto \po_r+b$ and $\ze_i\mapsto \ze_i+b$) but not the constant term $\ell^\infty$. For this case also, it is clear from Equations \eqref{Eq:ZerosToLevels} and \eqref{Eq:Rho} that the coefficients $\rho_{rs}$ and $\kc r$ are invariant under such a change.

Note also that there exist discrete redundancies between the parameters. Indeed, any permutation of the zeros $\lbrace\ze_i, \, i\in I_+ \rbrace$ or the zeros $\lbrace\ze_i,\,i\in I_- \rbrace$ does not change the model (permutations mixing the two families of zeros however do yield a distinct model).\\

This choice of parameters allows a simpler and more direct description of the model than the previous one, as the action of the model is explicitly expressed in terms of these parameters. Moreover, this choice of parameters is also natural to describe the integrable structure of the model. Indeed, the points $\ze_i$ define the poles of the Lax pair of the model. Moreover, the Hamiltonian integrability of the model is governed by its twist function, which is easily expressed in terms of the points $\po_r$ and $\ze_i$ as \eqref{Eq:TwistNZeros}.

At first, the main downside of this choice of parameters seems to be the discussion of the decoupling limits. Indeed, these limits require to let a position $\po_r$ flow to infinity, while keeping all levels $\lc s$ and $\kc s$ fixed. This, however, does not correspond to keeping the $\ze_i$'s fixed (in fact, as explained in Appendix \ref{App:Decoupling}, some of the $\ze_i$'s also go to infinity in this limit). Thus, such a decoupling limit in this parametrisation is difficult to discuss (although, we know that it exists).\\

However, there exists a slightly different notion of decoupling limit which is more adapted to the parametrisation in terms of positions and zeros. Let us write the number of sites of the model as $N=N_1+N_2$, with $N_1$ and $N_2$ positive integers. We introduce a coupling constant $\gamma$ in the parameters of the model as follows, at first following the initial idea described in Subsection \ref{SubSubSec:Coupling}. For $r\in\lbrace 1,\cdots,N_1 \rbrace$, we suppose that the positions $\po_r$ are independent of $\gamma$ and we write them as $\po^{(1)}_r$. For $r\in\lbrace 1,\cdots,N_2 \rbrace$, we suppose that the positions $z_{N_1+r}$ are given by $\po^{(2)}_r + \gamma^{-1}$.

In the ``standard'' decoupling limit proposed in subsection \ref{SubSubSec:Coupling}, the dependence of the zeros $\ze_i$, $i\in I_\pm$, on the coupling constant $\gamma$ is such that the levels $\lc r$ and $\kc r$ are fixed. This is where the alternative decoupling limit differs from the standard one. Let us divide the sets $I_\pm$ labelling the zeros into $I_\pm = I_\pm^{(1)} \sqcup I_\pm^{(2)}$, such that $|I_\pm^{(1)}|=N_1$ and $|I_\pm^{(2)}|=N_2$ (which is always possible as $|I_\pm|=N=N_1+N_2$). We will suppose that the zeros $\ze_i$ of the model depend on $\gamma$ as follows:
\beqz
\ze_i = \ze_i^{(1)} \; \text{ for } \; i\in I_\pm^{(1)} \;\;\;\;\; \text{ and } \;\;\;\;\; \ze_i = \ze_i^{(2)} + \frac{1}{\gamma} \; \text{ for } \; i\in I_\pm^{(2)},
\eeqz
with the $\ze_i^{(1)}$'s and $\ze_i^{(2)}$'s independent of $\gamma$. Then, one checks explicitly that in the decoupling limit $\gamma\to0$, one has
\beqz
S\bigl[ \gb 1, \cdots, \gb N \bigr] \xrightarrow{\gamma\to0} S_1\bigl[ \gb 1, \cdots, \gb {N_1} \bigr] + S_2\bigl[ \gb {N_1+1}, \cdots, \gb {N_1+N_2} \bigr],
\eeqz
where the actions $S_k$, for $k\in\lbrace 1,2\rbrace$, are the actions of the coupled model with $N_k$ copies, positions $z^{(k)}_r$ ($r\in\lbrace 1,\cdots,N_k\rbrace$) and zeros $\ze^{(k)}_i$ ($i\in I_\pm^{(k)}$, associated with the same parameters $\epsilon_i=\pm 1$). The decoupling limit at the level of the Lax pair $\Lc_\pm(z)$ is obtained in a similar way than for the standard one (see Paragraph \ref{Par:DecouplingLax}), by considering the $\g\!\times\!\g$-valued Lax pair $\bigl(\Lc_\pm(z),\Lc_\pm(z+\gamma^{-1})\bigr)$. Note that the introduction of the coupling parameter $\gamma$ in the parameters of the model can be interpreted nicely in terms of the factorisation of the twist functions
\beqz
\vp(z) = - \ell^\infty \vp_+(z) \vp_-(z) \;\;\;\;\; \text{ and } \;\;\;\;\; \vp^{(k)}(z) = - \ell^\infty \vp^{(k)}_+(z) \vp^{(k)}_-(z),
\eeqz
as one then has
\beqz
\vp_\pm(z) = \vp_\pm^{(1)}(z) \vp_\pm^{(2)}(z-\gamma^{-1}).
\eeqz

\subsubsection{Symmetries of the model}
\label{SubSubSec:Sym}

\paragraph{Conserved charge associated with a PCM+WZ realisation.}\label{Par:ChargeLeftSym} For this paragraph and the next one, let us come back to a more general realisation of local AGM, as described in Section \ref{Sec:AGM}. We suppose that there exists a fixed real site $\alpha_0 \in \Si_\rd$ which is associated with a PCM+WZ realisation, as described in Subsection \ref{SubSec:TStar} in terms of canonical fields $g(x)$ and $X(x)$ in the cotangent bundle $\TG$. Thus, the Takiff currents of the site $\alpha_0$ are given by
\begin{equation*}
\J{\alpha_0}1(x) = \ell \, j(x) \;\;\;\;\; \text{ and } \;\;\;\;\; \J{\alpha_0}0(x) = X(x) - \kay \, j(x) - \kay \, W(x),
\end{equation*}
with the fields $j(x)$ and $W(x)$ defined as in Subsection \ref{SubSec:TStar}. Let $\alpha\in\Si\setminus\lbrace \alpha_0 \rbrace$ be a site different from $\alpha_0$ and $p\in\lbrace 0,\cdots,m_\alpha-1 \rbrace$. The Takiff current $\J\alpha p$ Poisson commutes with the currents $\J{\alpha_0}0$ and $\J{\alpha_0}1$. In particular, it is in involution with $j$ and hence with $g$. We thus deduce that it is also in involution with $W$ and finally with $X$, using the above expression for $\J{\alpha_0}0$. As a conclusion, we have:
\begin{equation}\label{Eq:PCMOtherSites}
\bigl\lbrace g\ti{1}(x), \J\alpha p\null\ti{2}(y) \bigr\rbrace = 0 \;\;\;\; \text{ and } \;\;\;\; \bigl\lbrace X\ti{1}(x), \J\alpha p\null\ti{2}(y) \bigr\rbrace = 0, \;\;\;\; \forall \, \alpha\in\Si\setminus\lbrace \alpha_0 \rbrace, \;\; \forall \, p\in\lbrace 0,\cdots,m_\alpha-1 \rbrace. 
\end{equation}
Let us define the following $\g_0$-valued current and charge:
\beqz
\Kc^{\alpha_0}(x) = -g(x)\Bigl( X(x) + \kay\,j(x) - \kay\,W(x) \Bigr) g(x)^{-1} \;\;\;\;\; \text{ and } \;\;\;\;\; \Gc^{\alpha_0} = \int_{\D} \dd x\; \Kc^{\alpha_0}(x).
\eeqz
One shows that the Poisson brackets of the current $\Kc^{\alpha_0}(x)$ with the Takiff currents $\J{\alpha_0}0(x)$ and $\J{\alpha_0}1(x)$ are simply:
\beqz
\bigl\lbrace \J{\alpha_0}0\null\ti{1}(x), \Kc^{\alpha_0}\ti{2}(y) \bigr\rbrace = 0, \;\;\;\;\;\; \text{ and } \;\;\;\;\;\; \bigl\lbrace \J{\alpha_0}1\null\ti{1}(x), \Kc^{\alpha_0}\ti{2}(y) \bigr\rbrace = \ell \, g\ti{1}(x)^{-1} \,C\ti{12}\, g\ti{1}(x)\, \delta'_{xy}.
\eeqz
Integrating these brackets over $y\in\D$, we obtain
\beqz
\lbrace \J{\alpha_0}0(x), \Gc^{\alpha_0} \rbrace = \lbrace \J{\alpha_0}1(x), \Gc^{\alpha_0} \rbrace = 0.
\eeqz
As the current $\Kc^{\alpha_0}$ also Poisson commutes with the Takiff currents associated with other sites $\alpha\in\Si\setminus\lbrace\alpha_0\rbrace$ by Equation \eqref{Eq:PCMOtherSites}, we get
\begin{equation}\label{Eq:PbSG}
\lbrace \Sg(z,x), \Gc^{\alpha_0} \rbrace = 0, \;\;\;\;\;\;\; \forall \, z\in\C.
\end{equation}
In particular, the charge $\Gc^{\alpha_0}$ is in involution with the quadratic charges $\Q_i$ defined by \eqref{Eq:QHZeros}, or more generally the charges $\Q_i^d$ of the integrable hierarchy \eqref{Eq:Hierarchy}. As the Hamiltonian of the model is a linear combination of the $\Q_i$'s, the charge $\Gc^{\alpha_0}$ is conserved:
\beqz
\p_t \Gc^{\alpha_0} = \lbrace \Hc, \Gc^{\alpha_0} \rbrace = 0.
\eeqz

\paragraph{Global symmetry associated with a PCM+WZ realisation.} One checks that the current $\Kc^{\alpha_0}(x)$ is a Kac-Moody current of level $2\kay$:
\beqz
\bigl\lbrace \Kc^{\alpha_0}\ti{1}(x), \Kc^{\alpha_0}\ti{2}(y) \bigr\rbrace = \bigl[ C\ti{12}, \Kc^{\alpha_0}\ti{1}(x) \bigr] \delta_{xy} -2 \kay \, C\ti{12} \, \delta'_{xy}.
\eeqz
Thus, the conserved charge $\Gc^{\alpha_0}$ satisfies the Kirillov-Kostant bracket associated with the Lie algebra $\g_0$:
\beqz
\bigl\lbrace \Gc^{\alpha_0}\ti{1}, \Gc^{\alpha_0}\ti{2} \bigr\rbrace = \bigl[ C\ti{12}, \Gc^{\alpha_0}\ti{1} \bigr].
\eeqz
By the Noether theorem, this conserved charge is associated with an infinitesimal $\g_0$-symmetry of the model, defined on any observable $\mathcal{O}\in\Ac$ by
\beqz
\delta^{\alpha_0}_\epsilon \mathcal{O} = \kappa\left( \epsilon, \bigl\lbrace \Gc^{\alpha_0}, \mathcal{O} \rbrace \right),
\eeqz
with $\epsilon\in\g_0$ the infinitesimal parameter of the transformation. From Equation \eqref{Eq:PbSG}, we see that the Gaudin Lax matrix is invariant under this symmetry:
\beqz
\delta^{\alpha_0}_\epsilon \Sg(z,x) = 0, \;\;\;\;\; \forall\,\epsilon\in\g_0, \;\; \forall \, z\in\C.
\eeqz
From Poisson brackets \eqref{Eq:PBTstarG}, \eqref{Eq:PBj} and \eqref{Eq:PbW1}, one checks that the symmetry $\delta^{\alpha_0}_\epsilon $ corresponds to the left multiplication of the field $g(x)$:
\beqz
\delta^{\alpha_0}_\epsilon g(x) = - \epsilon \, g(x).
\eeqz
This infinitesimal transformation lifts to the global $G_0$-symmetry $g \mapsto h^{-1}g$ of the model, with global parameter $h\in G_0$.

\paragraph{Global symmetries of the integrable $\bm{\s}$-model with $\bm{N}$ coupled copies.} Let us now come back to the integrable coupled $\s$-model described in this subsection. It is constructed from $N$ independent copies of the PCM+WZ realisation. According to the two previous paragraphs, each of this realisation is associated with a global $G_0$-symmetry of the model. Thus, the model possesses a $G_0^N$-symmetry. The corresponding transformation of the fields $\gb r(x,t)$ of the model is simply
\begin{equation}\label{Eq:LeftSym}
\gb 1(x,t) \longmapsto h_1^{-1} \gb 1(x,t), \;\;\; \cdots, \;\;\; \gb N(x,t) \longmapsto h_N^{-1}\gb N(x,t),
\end{equation}
with $\bigl( h_1,\cdots,h_1\bigr) \in G_0^N$ the constant parameters of the symmetry. We will refer to this as the \textit{left $G^N_0$-symmetry} of the model. The global charges associated with this symmetry are simply  given (in the Hamiltonian formulation) by
\beqz
\Gc^{(r)} = -\int_\D \dd x\; \gb r \bigl( \Xb r + \kc r \,\jb r - \kc r\, \Wb r \bigr) \gb r\null^{\,-1}.
\eeqz
Using the Lagrangian expression \eqref{Eq:XLag} of the field $\Xb r - \kc r \, \Wb r$, as well as the expression \eqref{Eq:ZerosToLevels} of $\kc r$, one finds that the Lagrangian expression of the conserved charge $\Gc^{(r)}$. More precisely, one finds that its density $\Kc^{(r)}(x)$ coincides with the temporal component $\frac{1}{2}(\Kc^{(r)}_+(x)+\Kc^{(r)}_-(x))$ of the conserved current 
\beqz
\Kc^{(r)}_\pm = \pm \, \ell^\infty \,\vpmp r(z_r) \, \gb r \left(  \vppm r'(z_r)\, \jb r_\pm - \sum_{\substack{s=1\\s\neq r}}^N \frac{\vppm s(z_s)}{z_r-z_s} \jb s_\pm \right) \gb r\null^{\,-1}.
\eeqz

It is obvious that the light-cone currents $\jb r_\pm(x,t)$ are invariant under the transformation \eqref{Eq:LeftSym}. Moreover, it is a classical result that the Wess-Zumino term $\Ww{\gb r}$ is invariant under a global left-multiplication of $\gb r$. Thus, the action \eqref{Eq:Action} of the model is indeed invariant under this transformation. In more geometrical terms, this symmetry leaves invariant the metric and B-field on $G_0^N$ underlying the $\s$-model.\\

The model also possesses another natural symmetry, which comes from the diagonal symmetry present in all realisations of AGM, as discussed in Subsection \ref{SubSubSec:DiagSym}. It is easy to check that in the present case, the global conserved charge associated with this symmetry is
\beqz
\Gc^\infty = \int_\D \dd x \;  \sum_{r=1}^N \bigl( \Xb r - \kc r\, \jb r - \kc r \, \Wb r \bigr).
\eeqz
The corresponding symmetry acts by right multiplication on all the fields $\gb r(x,t)$:
\beqz
\gb 1(x,t) \longmapsto \gb 1(x,t)h, \;\;\; \cdots, \;\;\; \gb N(x,t) \longmapsto \gb N(x,t)h,
\eeqz
with $h\in G_0$. We will refer to this symmetry as the \textit{diagonal right $G_0$-symmetry} of the model. It acts by conjugacy on the light-cone currents:
\beqz
\jb 1_\pm(x,t) \longmapsto h^{-1}\jb 1_\pm(x,t)h, \;\;\; \cdots, \;\;\; \jb N_\pm(x,t) \longmapsto h^{-1}\jb N_\pm(x,t)h.
\eeqz
The invariance of the action \eqref{Eq:Action} is then ensured by the ad-invariance of the bilinear form $\kappa$. It is interesting to note that, due to the non trivial coupling $\rho_{rs}$ between different copies $r\neq s$, the right multiplication of the fields $\gb r(x,t)$ by different parameters $h_r \in G_0$ is not a symmetry of the action. This symmetry is however recovered in the decoupled limit, as expected from the symmetries of the PCM plus Wess-Zumino term alone.

Using the Lagrangian expression \eqref{Eq:XLag} of the field $\Xb r - \kc r \, \Wb r$ one can write the Lagrangian expression of the density $\Kc^{\infty}(x)$ of the conserved charge $\Gc^{\infty}$. After a few manipulations, one finds that it is equal to the temporal component $\frac{1}{2}(\Kc^{\infty}_+(x)+\Kc^{\infty}_-(x))$ of the quite simple conserved current 
\beqz
\Kc^{\infty}_\pm = \mp \ell^\infty \sum_{r=1}^N \vppm r(z_r) \, \jb r_\pm.
\eeqz

\section{Other realisations and deformations}
\label{Sec:otherreal}
In this Section, we present other suitable Takiff realisations than the PCM+WZ one. Except for the non-abelian T-dual realisation, these realisations share the same observables $\Og$ as the PCM+WZ one. One can use these realisations to construct other relativistic integrable field theories as realisations of local AGM. The simplest examples of such theories correspond to taking these realisations alone. As we shall see and as was shown in~\cite{Vicedo:2017cge}, these theories can be identified with other known integrable $\s$-models: the homogeneous and inhomogeneous Yang-Baxter deformations of the PCM, the non-abelian T-dual of the PCM and its $\lambda$-deformation.

As in the case of the PCM+WZ realisation, one of the interest of understanding these models as realisations of local AGM is that one can then couple them, using the general method developed in Subsection \ref{SubSubSec:Coupling} of this article. This results in a very rich landscape of deformations of the integrable coupled $\s$-model constructed in Subsection \ref{SubSec:CoupledPCM}. We do not aim to describe and analyse this entire landscape here, as this would require many further investigations. Instead, we present the general ideas allowing the construction of such deformed models and give a simple example to illustrate how this general method works.

\subsection{Homogeneous Yang-Baxter realisation and deformations}
\label{SubSubSec:hYB}

\subsubsection{The realisation}

Let us consider the algebra of observables $\Og$, parametrised by the canonical fields $g(x)$ and $X(x)$, respectively in $G_0$ and $\g_0$, as described in Subsection \ref{SubSec:TStar}. Let us also consider a solution $R:\g_0\rightarrow\g_0$ of the homogeneous Classical Yang-Baxter Equation (CYBE) on $\g_0$:
\begin{equation}\label{Eq:hCYBE}
[ RX, RY ] - R\bigl( [RX,Y] + [X,RY] \bigr) = 0, \;\;\;\;\; \forall \, X,Y\in\g_0,
\end{equation}
which we suppose to be skew-symmetric with respect to the non-degenerate form $\kappa$:
\begin{equation}\label{Eq:RSkew}
\kappa(RX,Y) = -\kappa(X,RY), \;\;\;\;\; \forall \, X,Y\in\g_0.
\end{equation}
For $\ell\in\R^*$, we define the following $\g_0$-valued currents~\cite{Vicedo:2015pna}:
\beqz
\J\null0(x) = X(x) \;\;\;\;\; \text{ and } \;\;\;\;\; \J\null1(x) = \ell \, j(x) - R_g X(x),
\eeqz
where
\begin{equation}\label{Eq:Rg}
R_g = \Ad_g^{-1} \circ R \circ \Ad_g.
\end{equation}
The skew-symmetry \eqref{Eq:RSkew} of $R$ implies that
\begin{equation}\label{Eq:RSkewCas}
R_g\null\ti{1}\,C\ti{12} = -R_g\null\ti{2}\,C\ti{12}.
\end{equation}
Using this equation and the CYBE \eqref{Eq:hCYBE} and starting with the Poisson brackets \eqref{Eq:PBTstarG} and \eqref{Eq:PBj}, one checks that
\begin{subequations}
\begin{eqnarray}
\bigl\lbrace \J\null0\null\ti1(x),\J\null0\null\ti2(y) \bigr\rbrace &=& \bigl[ C\ti{12}, \J\null0\null\ti1(x) \bigr] \delta_{xy}, \\
\bigl\lbrace \J\null0\null\ti1(x),\J\null1\null\ti2(y) \bigr\rbrace &=& \bigl[ C\ti{12}, \J\null1\null\ti1(x) \bigr] \delta_{xy} - \ell \, C\ti{12} \delta'_{xy}, \\
\bigl\lbrace \J\null1\null\ti1(x),\J\null1\null\ti2(y) \bigr\rbrace &=& 0.
\end{eqnarray}
\end{subequations}
Thus, $\J\null0(x)$ and $\J\null1(x)$ are real Takiff currents with multiplicity 2 and levels $\ls\null0=0$ and $\ls\null1=\ell$. They define a Takiff realisation in $\Og$, which we call the \textit{homogenous Yang-Baxter (hYB) realisation}. This realisation has the same multiplicity and levels that the PCM realisation (without the current $W(x)$, \textit{i.e.} for $\kay=0$, see Subsection \ref{SubSubSec:PrincipalReal}). In fact, it is easy to check from Equation \eqref{Eq:TakiffWZ} with $\kay=0$ that these two realisations coincide for $R=0$. Thus, the hYB realisation is a deformation of the PCM one through the additional introduction of a solution $R$ of the CYBE.

The generalised Segal-Sugawara integrals of the hYB realisation can be obtained using the formula \eqref{Eq:SSMult2} giving their expression for a general Takiff realisation of multiplicity 2. One then gets
\beqz
\Dc\null0 = \frac{1}{\ell} \int_\D \dd x \; \kappa\bigl( \J\null0(x),\J\null1 \bigr) = \int_\D \dd x\; \kappa\bigl( X(x), j(x) \bigr) = \Pc_{G_0},
\eeqz
where we used the skew-symmetry \eqref{Eq:RSkew} of $R$ and recognized the expression \eqref{Eq:MomTStar} of the momentum $\Pc_{G_0}$ of the algebra $\Og$. Thus, the hYB realisation is suitable, as defined in Paragraph \ref{Par:LocReal}.

\subsubsection{The model with one copy of the realisation}

As we defined a suitable Takiff realisation with observables $\Og$, one can construct an integrable relativistic field theory on $\Og$ as a local AGM in this realisation. The corresponding AGM will then be composed of only one site of multiplicity two. As the levels of the hYB realisation are the same as the one of a PCM realisation, the twist function of the model coincides with the one \eqref{Eq:Twist1site} with $\kay=0$. As explained in Subsection \ref{SubSubSec:Param1site}, there is a redundancy in the parameters of the twist function \eqref{Eq:Twist1site}, coming from changes of the spectral parameter. One way to fix this redundancy was to fix the values of the zeros of the twist function to $+1$ and $-1$. For simplicity, we shall make the same choice for the model with the hYB realisation. 
The twist function is then given by
\beqz
\vp(z) = K\frac{1-z^2}{z^2} = \frac{K}{z^2}-K,
\eeqz
similarly to \eqref{Eq:Twist1siteGoodParam} (with $\widetilde{\kay}=0$). The corresponding Gaudin Lax matrix is then
\beqz
\Sg_{\hYB}(z,x) = \frac{K \, j(x) - R_gX(x)}{z^2} + \frac{X(x)}{z}.
\eeqz
Similarly to the PCM case, we attach to the zero $\ze_1=+1$ the parameter $\epsilon_1=+1$ and to the zero $\ze_2=-1$ the parameter $\epsilon_2=-1$. One can then compute the Hamiltonian \eqref{Eq:Ham} of the model, by finding the expression of the quadratic charges $\Q_1$ and $\Q_2$ from the twist function and the Gaudin Lax matrix above. We get
\begin{equation}\label{Eq:HamhYB}
\Hc_{\hYB} = \int_\D \dd x \; \left( \frac{K}{2} \kappa\left(j(x) - \frac{R_gX(x)}{K} ,  j(x) - \frac{R_gX(x)}{K} \right) + \frac{1}{2K} \kappa\bigl( X(x), X(x) \bigr) \right).
\end{equation}
From this Hamiltonian, one computes the dynamics of the field $g(x)$ and finds the Hamiltonian expression of $j_0 = g^{-1} \p_t g$. Inverting this expression, we get
\beqz
X = \frac{K}{1-R_g^2}\bigl( j_0 - R_g j \bigr) = \frac{K}{2} \left( \frac{1}{1+R_g}j_+ + \frac{1}{1-R_g}j_- \right),
\eeqz
in terms of the light-cone currents $j_\pm=g^{-1}\p_\pm g$. One can then perform the inverse Legendre transform and find the action of the model, yielding
\beqz
S_{\hYB}[g] = \frac{K}{2} \iint_{\D\times\R} \dd x \, \dd t \; \kappa\left(j_+, \frac{1}{1-R_g}j_-\right).
\eeqz
One recognizes in this expression the action of the homogeneous Yang-Baxter deformation of the PCM, first introduced for the $AdS_5\times S^5$ superstring in~\cite{Kawaguchi:2014qwa}. Note that the diagonal symmetry present for all realisations of local AGM corresponds here to the right multiplication on the field $g(x,t)$. 

\subsubsection{Homogeneous Yang-Baxter deformations}

Let us consider a realisation of local AGM, with arbitrary sites and Takiff realisations. Let us suppose moreover that one of the real sites $\alpha_0\in\Si_\rd$ is associated with a PCM realisation (without current $W(x)$, \textit{i.e.} with $\ls{\alpha_0}0=0$).

One can then construct another integrable model, defined as a realisation of local AGM with exactly the same twist function, but with the site $\alpha_0$ realised in terms of the hYB realisation instead of the PCM one. By construction, this model is an integrable deformation of the one initially considered, which now depends on the matrix $R$ (it is a deformation in the sense that the limit $R=0$ gives back the initial model). This is the principle of \textit{homogeneous Yang-Baxter (hYB) deformations}. Of course, the simplest example of such a deformation is the one constructed in the previous paragraph: it is the hYB deformation of the PCM itself, seen as a realisation of local AGM with only one site.

However, this method now allows the construction of a wider class of hYB deformed models. In particular, one can apply this deformation to the integrable coupled $\s$-model with $N$ copies described in Subsection \ref{SubSec:CoupledPCM}, as long as it possesses a site $(r)$ with corresponding level $\ls{(r)}0=-2 \kc r=0$. One can even apply a hYB deformation to all sites with $\kc r=0$, yielding a multi-deformed model involving different $R$-matrices. As an example, we develop in Appendix \ref{App:hYB} the case where we apply a deformation to only one site of the model.

\subsection{Inhomogeneous Yang-Baxter realisation and deformations}
\label{SubSubSec:iYB}

\subsubsection{The realisation}

Let $R:\g_0\mapsto\g_0$ be a linear map on $\g_0$, skew-symmetric with respect to the bilinear form $\kappa$, \textit{i.e.} satisfying Equation \eqref{Eq:RSkew}. We also suppose that it is a solution of the modified (inhomogeneous) Classical Yang Baxter Equation (mCYBE):
\begin{equation}\label{Eq:mCYBE}
[RX,RY] - R\bigl( [RX,Y] + [X,RY] \bigr) = -c^2 [X,Y], \;\;\;\;\;\; \forall \, X,Y\in\g_0,
\end{equation}
with $c=1$ (we then talk of a split $R$-matrix) or $c=i$ (we then talk of a non-split $R$-matrix).

Let us consider the algebra of observables $\Og$, described by canonical fields $g(x)$ and $X(x)$ on $\TG$, as described in Subsection \ref{SubSec:TStar}. For $\gammah$ a non-zero real number, we define the following currents~\cite{Delduc:2013fga,Vicedo:2015pna}
\beqz
\J {(\pm)} 0(x) = \frac{1}{2c} \left( c X(x) \mp R_gX(x) \pm \frac{1}{\gammah} j(x) \right),
\eeqz
with $R_g=\Ad_g^{-1}\circ R\circ \Ad_g$, as in \eqref{Eq:Rg}. If we are considering the split case $c=1$, then the currents $\J {(\pm)} 0(x)$ are real, in the sense that they are $\g_0$-valued. If we are considering the non-split case $c=i$, then the currents $\J {(\pm)}0(x)$ are $\g$-valued and complex conjugate one of another, \textit{i.e.} $\tau\bigl(\J {(+)}0(x)\bigr)=\J{(-)}0(x)$. From the Poisson brackets \eqref{Eq:PBTstarG} and \eqref{Eq:PBj}, we find
\beqz
\bigl\lbrace \J{(\pm)}0\null\ti{1}(x), \J{(\pm)}0\null\ti{2}(y) \bigr\rbrace = \bigl[ C\ti{12}, \J{(\pm)}0\null\ti{1}(x) \bigr] \delta_{xy} \mp \frac{1}{2c\gammah} C\ti{12} \delta'_{xy} \;\;\;\;\;\; \text{and} \;\;\;\;\;\; \bigl\lbrace \J{(\pm)}0\null\ti{1}(x), \J{(\mp)}0\null\ti{2}(y) \bigr\rbrace = 0,
\eeqz
using the mCYBE \eqref{Eq:mCYBE} and the skew-symmetry \eqref{Eq:RSkewCas} of $R_g$. Thus, the currents $\J{(\pm)}0(x)$ are two commuting Kac-Moody currents (\textit{i.e.} Takiff currents of multiplicity 1), with levels
\beqz
\ls{(\pm)}0 = \pm \frac{1}{2c\gammah}.
\eeqz
According to the discussion above, they are real Kac-Moody currents in the split case and complex ones in the non-split case.\\

As the currents $\J{(\pm)}0(x)$ are Kac-Moody currents satisfying the appropriate reality conditions, they define a Takiff realisation of the Takiff algebra $\Tc_{\lt}$ with Takiff datum $\lt$ defined as follows. If $c=1$, the Takiff datum $\lt$ has two real sites $(+)$ and $(-)$ in $\Si_\rd$, with multiplicity 1 and levels $\ls{(\pm)} 0 = \pm (2\gammah)^{-1}$, and no complex sites ($\Si_{\cd}=\emptyset$). If $c=i$, the Takiff datum $\lt$ has no real sites ($\Si_{\rd}=\emptyset$) and one complex site $(+)$, with multiplicity 1 and level $\ls{(+)}0 = (2i\gammah)^{-1}$. In this non-split case, there is also a site $(-)$, corresponding to the conjugate site $\overline{(+)}$ (see Subsection \ref{SubSubSec:TakiffCurrents}), with level $\ls{(-)}0 = \overline{\ls{(+)}0}=-(2i\gammah)^{-1}$. In both cases, the set $\Si$ of all sites (real ones, complex ones and conjugate ones) then contains both sites $(+)$ and $(-)$.

As these sites are of multiplicity one, they are associated with a unique generalised Segal-Sugawara integral, whose general expression is given in Equation \eqref{Eq:SSMult1}. In the present case, using the skew-symmetry \eqref{Eq:RSkew} of $R$, one easily checks that
\beqz
\Dc{(+)}0 + \Dc{(-)}0 = \int_{\D} \dd x\; \kappa\bigl( j(x), X(x) \bigr) = \Pc_{G_0},
\eeqz
with $\Pc_{G_0}$ the momentum \eqref{Eq:MomTStar} of $\Og$. Thus the Takiff realisation defined above is suitable. We call it the \textit{inhomogeneous Yang-Baxter (iYB) realisation}.

\subsubsection{The model with one copy of the realisation}

The iYB realisation defined above can be used to construct relativistic integrable models using the general method of Section \ref{Sec:AGM}. As for the PCM+WZ realisation (see Subsection \ref{SubSec:1site}) and the hYB realisation (see Subsection \ref{SubSubSec:hYB}), let us first describe the model obtained when considering a realisation of local AGM with just one copy of the iYB realisation.

This model possesses two sites $(+)$ and $(-)$, either real in the split case, or complex conjugate in the non-split one. These sites are associated with positions $z_{(+)}$ and $z_{(-)}$, either real or complex conjugate. Recall that one has the freedom to shift all the positions of the sites by the same real number without changing the model (\textit{via} a translation of the spectral parameter). One can use this freedom to require $z_{(+)}=-z_{(-)}$: we will denote by $c\eta$ the value of $z_{(+)}$ (we then have $\eta$ real in both the split and non-split case). The twist function of the model is therefore given by
\begin{equation*}
\vp_\eta(z) = \frac{\ls{(+)}0}{z-\po_{(+)}} + \frac{\ls{(-)}0}{z-\po_{(-)}} - \ell_\infty = \frac{1}{2c\gammah}\frac{1}{z-c\eta} - \frac{1}{2c\gammah}\frac{1}{z+c\eta} - \ell^\infty = - \ell^\infty \frac{z^2 - \frac{\eta}{\ell^\infty \gammah} - c^2 \eta^2}{z^2 - c^2 \eta^2}.
\end{equation*}
This twist function possesses two opposite zeros. We shall suppose that they are real. By a dilation of the spectral parameter, one can then always send these zeros to $+1$ and $-1$, so that the deformed model shares the same zeros as the PCM (see Equation \eqref{Eq:Twist1siteGoodParam} with $\widetilde{\kay}=0$). This, up to a redefinition of the parameter $\eta$, amounts to fix $\ell^\infty$ to an appropriate value. In the end we get
\beqz
\vp_\eta(z) = \frac{K}{1-c^2\eta^2} \frac{1-z^2}{z^2-c^2\eta^2}, \;\;\;\;\; \text{ with } \;\;\;\;\; K = \frac{\eta}{\gammah}.
\eeqz
The Gaudin Lax matrix of the model is then given by (taking into account that $\gammah=\frac{\eta}{K}$)
\beqz
\Sg_\eta(z,x) = \frac{1}{2c}\frac{c X(x) - R_gX(x) + \frac{K}{\eta}j(x)}{z-c\eta} +  \frac{1}{2c}\frac{c X(x) + R_gX(x) - \frac{K}{\eta}j(x)}{z+c\eta}.
\eeqz
As in the undeformed case (or the homogeneous Yang-Baxter deformation), we attach to the zero $\ze_1=+1$ the parameter $\epsilon_1=+1$ and to the zero $\ze_2=-1$ the parameter $\epsilon_2=-1$, thus fixing the Hamiltonian \eqref{Eq:Ham} of the model. From the expressions of the twist function and Gaudin Lax matrix above, one compute the quadratic charges $\Q_1$ and $\Q_2$ using Equation \eqref{Eq:QHZeros} and thus the Hamiltonian. One gets
\beqz
\Hc_{\iYB} = \int_\D \dd x \; \left( \frac{K}{2} \kappa\left(j(x) - \frac{\eta}{K} R_gX(x),  j(x) - \frac{\eta}{K} R_gX(x) \right) + \frac{1}{2K} \kappa\bigl( X(x), X(x) \bigr) \right).
\eeqz
Note the similarity with the Hamiltonian \eqref{Eq:HamhYB} of the homogeneous Yang-Baxter deformation. One can then perform the inverse Legendre transform on this Hamiltonian and compute the action of the deformed model. One finds
\begin{equation}\label{Eq:iYBAction1site}
S_{\iYB}[g] = \frac{K}{2} \iint_{\D\times\R} \dd x \, \dd t \; \kappa\left(j_+, \frac{1}{1-\eta\,R_g}j_-\right).
\end{equation}
One recognizes the action of the inhomogeneous Yang-Baxter deformation of the PCM~\cite{Klimcik:2002zj,Klimcik:2008eq}. As for the hYB deformation of the PCM, the global diagonal symmetry is associated with the right multiplication on the field $g(x,t)$.

\subsubsection{Inhomogeneous Yang-Baxter deformations}

\paragraph{Construction.}\label{Par:iYBdef} As in the homogeneous case discussed in Subsection \ref{SubSubSec:hYB}, let us consider a realisation of local AGM which possesses a real site $\alpha_0\in\Si_{\rd}$ attached to a PCM realisation. The twist function and Gaudin Lax matrix of the model are thus of the form
\beqz
\vp(z) = \frac{\ls{\alpha_0}1}{(z-z_{\alpha_0})^2} + \widetilde{\vp}(z)
\eeqz
and
\beqz
\Sg(z,x) = \frac{\ls{\alpha_0}1\, j(x)}{(z-z_{\alpha_0})^2} + \frac{X(x)}{z-z_{\alpha_0}} + \widetilde{\Sg}(z,x),
\eeqz
where $\widetilde{\vp}(z)$ and $\widetilde{\Sg}(z,x)$ contain the informations about the parameter $\ell^\infty$ and the other sites $\alpha\in\Si\setminus\lbrace \alpha_0 \rbrace$ of the model. Let us consider the ``deformed twist function''
\begin{equation}\label{Eq:DefTwist}
\vp_\eta(z) = \frac{\ls{\alpha_0}1}{(z-z_{\alpha_0})^2-c^2\eta^2} + \widetilde{\vp}(z).
\end{equation}
It has simple poles at $z_{\alpha_0} \pm c \eta$, with residues
\beqz
\res_{z=z_{\alpha_0} \pm c \eta} \; \vp_\eta(z) \, \dd z = \pm \frac{1}{2c \gammah}, \;\;\;\;\; \text{ with } \;\;\;\;\; \gammah = \frac{\eta}{\ls{\alpha_0}1}.
\eeqz
It is thus the twist function of a local AGM with sites
\beqz
\Si^{(\eta)} = \bigl( \Si\setminus\lbrace \alpha_0 \rbrace \bigr) \sqcup \lbrace \alpha_0^+, \alpha_0^- \rbrace,
\eeqz
obtained from the initial model by replacing the site $\alpha_0$ (of multiplicity 2, levels $\ls{\alpha_0}1$ and $\ls{\alpha_0}0=0$ and position $z_{\alpha_0}$) by two sites $\alpha_0^\pm$ (of multiplicity 1, levels $\ls{\alpha_0^\pm}0 = \pm \frac{1}{2c \gammah}$ and positions $\po_{\alpha_0^\pm}= \po_{\alpha_0}\pm c \eta$)\footnote{These sites are either both real in the split case or complex conjugate to one another in the non-split case.}. We will suppose that the sites $\alpha\in\Si\setminus\lbrace \alpha_0 \rbrace$ keep the same realisations as in the initial model. Moreover, we shall realise the two sites $\alpha^\pm_0$ through the inhomogeneous Yang-Baxter realisation described above. The ``deformed'' Gaudin Lax matrix is then:
\beqz
\Sg_\eta(z,x) = \dfrac{1}{2c}\frac{c X(x) - R_gX(x) + \dfrac{\ls{\alpha_0}1}{\eta}j(x)}{z-\po_{\alpha_0}-c\eta} +  \frac{1}{2c}\dfrac{c X(x) + R_gX(x) - \dfrac{\ls{\alpha_0}1}{\eta}j(x)}{z-\po_{\alpha_0}+c\eta} + \widetilde{\Sg}(z,x).
\eeqz
One checks that the deformed twist function $\vp_\eta(z)$ and deformed Gaudin Lax matrix $\Sg_\eta(z,x)$ gives back the undeformed one $\vp(z)$ and $\Sg(z,x)$ in the limit $\eta \to 0$.

Thus, we have constructed a deformation of the initial model with deformation parameter $\eta$. By construction, this deformation is integrable for all values of $\eta$. This is the principle of what we call \textit{inhomogeneous Yang-Baxter deformations}, which thus provides a general scheme to construct integrable deformation of local AGM which possess a PCM realisation. The simplest example of such a deformation is the deformation of the PCM itself, seen as a realisation of local AGM with a single site. Up to a redefinition of the parameters, this simplest deformation coincides with the one \eqref{Eq:iYBAction1site} presented in the previous paragraph.\\

The deformation method presented here only changes the characteristics of the model which are associated with the site $\alpha_0$ (it does not modify the levels and positions of the other sites or the constant term $\ell^\infty$). Although this seems like the most natural way of introducing the deformation, it also has some downsides. The main one is that the modification \eqref{Eq:DefTwist} of the twist function changes its zeros. In particular, the zeros of $\vp_\eta(z)$ cannot be expressed in terms of the zeros of $\vp(z)$ in a easy way. As these zeros are one of the key ingredients in the construction of the integrable deformed model, this can make the description of the deformation quite involved in general.

One can already see this in the inhomogeneous Yang-Baxter deformation of the PCM alone: the parametrisation that we choose to describe this model in a simple way in the previous paragraph does not correspond exactly to the one we would get from the general deformation scheme introduced here. Indeed, in the previous paragraph, we choose to deform the twist function in a way which preserves its zeros, which then also requires to modify appropriately the constant term $\ell^\infty$ in the twist function. This model is however equivalent to the one we would obtain from the general deformation scheme (which would let $\ell^\infty$ unchanged but which would modify the zeros of the twist function), through a redefinition of the parameters.

This illustrates that there is not a unique way to apply inhomogeneous Yang-Baxter deformations. The general scheme presented here is the only one which can be applied to all local AGM with a PCM realisation, regardless of the other sites of the model. However, in some cases, it seems easier to introduce a deformation of the twist function which allows a simpler description of its zeros (by ensuring that they are not modified or at least that they can be easily expressed). Whether all these deformations are equivalent up to a redefinition of the parameters seems at the moment like a difficult question and would require further investigations.\\

It is natural to apply inhomogeneous Yang-Baxter deformations to the integrable coupled $\s$-model of Subsection \ref{SubSec:CoupledPCM}. In principle, one can construct many multi-parameter deformations of this model, as one can apply the general deformation scheme to different copies of the model (as long as the corresponding level $\ls {(r)} 0=-2\kc r$ is equal to 0). The description of the whole landscape of these deformations is out of the reach of the present article, as we just aim to describe the general idea at the origin of the construction.

\paragraph{Inhomogeneous Yang-Baxter deformations as $\bm q$-deformations.}

It is a well-known fact~\cite{Delduc:2013fga} (see also
\cite{Kawaguchi:2012ve,Kawaguchi:2011pf,Kawaguchi:2012gp}) that the inhomogeneous Yang-Baxter deformation of the PCM breaks the left multiplication symmetry of the model and changes it into a $q$-deformed symmetry (when we are considering a model on the real line $\D=\R$). We shall now show that this a general feature of the inhomogeneous Yang-Baxter deformation scheme introduced in the previous paragraph. We shall use the notations of this paragraph: in particular, the undeformed model possesses a PCM realisation, attached to the site $\alpha_0$. Recall from Subsection \ref{SubSubSec:Sym} that this implies the existence of a global symmetry in the undeformed model, corresponding to a left-multiplication $g \mapsto h^{-1}g$ of the field $g\in G_0$ attached to the site $\alpha_0$, which generalises the left symmetry of the PCM alone. It is easy to check that this symmetry is broken by the deformation scheme introduced above.

Let us now consider the Lax pair $\bigl( \Lc_\eta(z,x), \Mc_\eta(z,x) \bigr)$ of the deformed model. One can apply to it a formal gauge transformation, which preserves the zero curvature equation. In particular, let us apply such a gauge transformation with gauge parameter the field $g$ itself:
\beqz
\Lc^g_\eta(z,x) = g(x) \Lc_\eta(z,x) g(x)^{-1} + g(x) \p_x g(x)^{-1}.
\eeqz
We will be more precisely interested in the evaluation of this gauge-transformed Lax matrix at the poles $\po_{\alpha_0^\pm}$ of the twist function (which are the positions of the sites $\alpha_0^\pm$ of the deformed model):
\beqz
\Lc^g_\eta\bigl(\po_{\alpha_0^\pm},x\bigr) = g(x) \Bigl( \Lc_\eta \bigl( \po_{\alpha_0^\pm},x \bigr) - j(x) \Bigr) g(x)^{-1}.
\eeqz
One can evaluate the field $\Lc_\eta \bigl( \po_{\alpha_0^\pm},x \bigr)$ using Equation \eqref{Eq:LaxPoles}. We get
\beqz
\Lc_\eta \bigl( \po_{\alpha_0^\pm},x \bigr) = \frac{\J{\alpha_0^\pm}0}{\ls{\alpha_0^\pm}0} = j(x) - \gammah \, R^gX(x) \pm c\gammah\, X(x).
\eeqz
Thus, we have
\begin{equation}\label{Eq:LgYB}
\Lc^g_\eta\bigl(\po_{\alpha_0^\pm},x\bigr) = - \gammah \, R^\mp \bigl( g(x) X(x) g(x)^{-1} \bigr),
\end{equation}
with
\beqz
R^\pm = R \pm c.
\eeqz
It is explained in~\cite{Delduc:2013fga,Vicedo:2015pna,Delduc:2016ihq,Lacroix:2018njs} how Equation \eqref{Eq:LgYB} implies the existence of a $q$-deformed Poisson-Hopf algebra of conserved charges of the model, with $q=e^{- i c \gammah}$, extracted from the monodromy matrices of the currents $\Lc^g_\eta\bigl(\po_{\alpha_0^\pm},x\bigr)$ (which are conserved if $\D=\R$). Moreover, these $q$-deformed conserved charges are associated with a Poisson-Lie symmetry of the model, whose explicit expression can be found in~\cite{Delduc:2016ihq,Lacroix:2018njs} and which gives back the global left symmetry $g \mapsto h^{-1}g$ in the limit $\eta\to0$.

\subsubsection{Inhomogeneous Yang-Baxter deformations with Wess-Zumino term}

In the previous subsections, we presented iYB deformations of PCM realisations, without Wess-Zumino term. However, it is known~\cite{Delduc:2014uaa} that there also exists an inhomogeneous Yang-Baxter deformation of the PCM with Wess-Zumino term (at least in the non-split case and when $R$ is the standard Drinfel'd-Jimbo $R$-matrix). According to the section 3.4 of the article~\cite{Delduc:2014uaa}, this deformed model is characterised by two commuting Kac-Moody currents (obtained from the evaluation of the Lax matrix at the poles of the twist function). These currents are a generalisation of the ones $\J{(\pm)}0$ introduced in this subsection, with an additional dependence on the  current $W(x)$. The main difference with the currents $\J{(\pm)}0$ is that the Kac-Moody currents of~\cite{Delduc:2014uaa} do not possess opposite levels. They define a generalisation of the iYB realisation described in this subsection.

Similarly to the deformation scheme described above, this more general realisation allows the deformation of local AGM which possess a PCM+WZ realisation. In particular, this generalised deformation scheme can be applied to all copies of the integrable coupled $\s$-model of Subsection \ref{SubSec:CoupledPCM}, without any assumptions on the undeformed model. For brevity and simplicity, we choose not to develop this more general deformation here.

\subsection{Non-abelian T-dual realisation}

\subsubsection{The realisation}\label{Par:TDualReal}

In this subsection, we will consider the phase space $\Oga$ generated by a coordinate 
field $\vd(x)$ valued in $\g_0$ and its conjugate momentum. The coordinate field is 
described by scalar fields $v^a(x)$ by expanding it in the basis 
$\lbrace I_a \rbrace_{a\in\lbrace1,\cdots,n\rbrace}$ of $\g_0$:
\beqz
\vd(x) = v^a(x) I_a.
\eeqz
Each of the fields $v^a(x)$ possesses a conjugate momentum $m_a(x)$. We encode 
these conjugate momenta in the following $\g_0$-valued field
\beqz
\m(x) = \kappa^{ab} \, m_a(x) I_b.
\eeqz
The canonical Poisson brackets
\beqz
\lbrace v^a(x), v^b(y) \rbrace = \lbrace m_a(x), m_b(y) \rbrace = 0 \;\;\;\;\; \text{ and } 
\;\;\;\;\; \lbrace m_a(x), v^b(y) \rbrace = \delta^{b}_{\;a} \delta_{xy}
\eeqz
can be rewritten as
\begin{equation}\label{Eq:PBvm}
\bigl\lbrace \vd\ti{1}(x), \vd\ti{2}(y) \bigr\rbrace = \bigl\lbrace \m\ti{1}(x), \m\ti{2}(y) \bigr\rbrace = 0 \;\;\;\;\; \text{ and } \;\;\;\;\; \bigl\lbrace \m\ti{1}(x), \vd\ti{2}(y) \bigr\rbrace = C\ti{12} \delta_{xy}.
\end{equation}
For $\ell\in\R^*$, let us define the following $\g_0$-valued currents
\beqz
\J\null0(x) = \ell \, \p_x \vd(x) + \bigl[ \m(x),\vd(x) \bigr] \;\;\;\;\; \text{ and } \;\;\;\;\; \J\null1(x) = \m(x).
\eeqz
From the Poisson brackets \eqref{Eq:PBvm}, we get
\begin{subequations}
\begin{eqnarray}
\bigl\lbrace \J\null0\null\ti1(x),\J\null0\null\ti2(y) \bigr\rbrace &=& \bigl[ C\ti{12}, \J\null0\null\ti1(x) \bigr] \delta_{xy}, \\
\bigl\lbrace \J\null0\null\ti1(x),\J\null1\null\ti2(y) \bigr\rbrace &=& \bigl[ C\ti{12}, \J\null1\null\ti1(x) \bigr] \delta_{xy} - \ell \, C\ti{12} \delta'_{xy}, \\
\bigl\lbrace \J\null1\null\ti1(x),\J\null1\null\ti2(y) \bigr\rbrace &=& 0.
\end{eqnarray}
\end{subequations}
Thus, the currents $\J\null0(x)$ and $\J\null1(x)$ are Takiff currents of multiplicity 2 and levels $\ls\null0=0$ and $\ls\null1=\ell$ and hence define a Takiff realisation in $\Oga$, that we shall call \textit{non-abelian T-dual realisation}, for reasons to be made clear below.

The generalised Segal-Sugawara integrals $\Dc\null0$ and $\Dc\null1$ of this Takiff realisation can be computed from the general formula \eqref{Eq:SSMult2}. In particular, we get
\beqz
\Dc\null0 = \frac{1}{\ell} \int_{\D} \dd x \; \kappa\bigl( \J\null0(x), \J\null1(x) \bigr) = \int_{\D} \dd x \; \kappa\bigl( \m(x), \p_x \vd(x) \bigr).
\eeqz
One checks that this quantity generates the spatial derivatives $\p_x$ on the fields $\vd(x)$ and $\m(x)$ and thus coincides with the momentum $\Pc_{\g_0}$ of the algebra $\Ac_{\g_0}$. The non-abelian T-dual realisation is hence suitable.

\subsubsection{The model with one copy of the realisation}

One can construct an integrable relativistic model as a realisation of a local AGM with one site, associated with the non-abelian T-dual realisation described above. \textit{Via} a change of spectral parameter, the twist function of this model can be chosen to be
\beqz
\vp(z) = K \frac{1-z^2}{z^2} = \frac{K}{z^2}-K,
\eeqz
as for the PCM with no Wess-Zumino term (see Subsection \ref{SubSubSec:Param1site}). The corresponding Gaudin Lax matrix is
\beqz
\Sg(z,x) = \frac{\m(x)}{z^2} + \frac{K\,\p_x \vd(x)+\bigl[\m(x),\vd(x)\bigr]}{z}.
\eeqz
As for the PCM, we associate the zeros $\ze_1=+1$ and $\ze_2=-1$ of the twist function $\vp(z)$ with the parameters $\epsilon_1=+1$ and $\epsilon_2=-1$. One can then compute the Hamiltonian \eqref{Eq:Ham} of the model, finding
\beqz
\Hc = \frac{1}{2K} \int_{\D} \dd x \; \left( \kappa\bigl( \m(x), \m(x) \bigr) +  \kappa\bigl( K \, \p_x\vd(x) + [ \m(x),\vd(x) ], K\,\p_x\vd(x) + [ \m(x),\vd(x) ] \bigr) \right).
\eeqz
To perform the inverse Legendre transform of this model, we first express the momenta $\m$ in terms of the time derivative $\p_t \vd = \lbrace \Hc,\vd \rbrace$. Rewriting the result in terms of the light-cone derivatives $\p_\pm \vd = \p_t \vd \pm \p_x \vd$, we find
\begin{equation}\label{Eq:mLag}
\m = \frac{K}{2} \left( \frac{1}{1+\ad_\vd} \p_+ \vd + \frac{1}{1-\ad_\vd} \p_-\vd \right).
\end{equation}
Note in particular that this implies
\begin{equation}\label{Eq:NATDJ0Lag}
K \, \p_x\vd + [ \m,\vd ] = \frac{K}{2} \left( \frac{1}{1+\ad_\vd} \p_+ \vd - \frac{1}{1-\ad_\vd} \p_-\vd \right).
\end{equation}
The action of the model is then given by the inverse Legendre transform
\beqz
S[\vd] = \frac{1}{2} \iint_{\D\times\R} \dd x \, \dd t \; \kappa\bigl( \m, \p_+ \vd + \p_-\vd \bigr) - \int_{\R} \Hc,
\eeqz
where we replace $\m$ and $K \, \p_x\vd + [\m,\vd]$ by their Lagrangian expression \eqref{Eq:mLag} and \eqref{Eq:NATDJ0Lag}. Using the skew-symmetry of $\ad_\vd$ with respect to the bilinear form $\kappa$, one finds
\begin{equation} \label{Eq:actionofNATD}
S[\vd] = \frac{K}{2} \iint_{\D\times\R} \dd x \, \dd t \; \kappa\left(\p_+ \vd, \frac{1}{1-\ad_\vd} \p_- \vd \right).
\end{equation}
One recognizes here the action of the non-abelian T-dual of the 
PCM~\cite{Fridling:1983ha,Fradkin:1984ai}. Let us note that the global diagonal symmetry discussed in \S\ref{SubSubSec:DiagSym} acts here on $\vd(x,t)$ as $\vd(x,t) \to h^{-1} \vd(x,t) h$.

\subsubsection{Equivalence with the PCM realisation}

The non-abelian T-dual model obtained above is equivalent to the PCM. More precisely, it was 
shown in~\cite{Lozano:1995jx} that the two models are related by a canonical transformation, \textit{i.e.} a Poisson-preserving isomorphism from the algebra $\Og$ to the algebra $\Oga$. Let us consider a realisation of local AGM which possesses a site $\alpha_0$ associated with a PCM realisation. One can consider another realisation of AGM with the same sites and levels, but with the PCM realisation replaced by the non-abelian T-dual one. These two models are then related by a canonical transformation (which changes the algebra $\Og$ attached to the site $\alpha_0$ into the algebra $\Oga$ and leaves the other Takiff realisations unchanged) and are thus equivalent.

This construction can be applied in particular to the integrable coupled $\s$-model of Subsection \ref{SubSec:CoupledPCM}, applying this transformation to part or to all of the $N$ copies of the algebra $\Og$ (under the additional condition that the corresponding levels $\kay_r$ are equal to 0). For the sake of brevity, we will not develop this model further here.

\subsection[The $\lambda$-realisation and $\lambda$-deformations]{The $\bm\lambda$-realisation and $\bm\lambda$-deformation}

\subsubsection{The realisation}

Let us consider the algebra $\Og$ of canonical fields in $\TG$, as described in Subsection \ref{SubSec:TStar}. Let us consider~\cite{Hollowood:2014rla,Vicedo:2015pna,Schmidtt:2018hop,Schmidtt:2017ngw} the $\g_0$-valued current
\begin{equation}\label{Eq:J+Lambda}
\J{(+)}0(x) = X(x) - \kay \, j(x) - \kay \, W(x)
\end{equation}
and its ``gauge transformation''
\begin{equation}\label{Eq:J-Lambda}
\J{(-)}0(x) = -g(x) \J{(+)}0(x) g(x)^{-1} - 2 \kay \, g(x) \p_x g(x)^{-1} = -g(x) \bigl( X(x) + \kay \, j(x) - \kay \, W(x) \bigr) g(x)^{-1}.
\end{equation}
From the Poisson brackets \eqref{Eq:PBTstarG}, \eqref{Eq:PBj}, \eqref{Eq:PbW1} and \eqref{Eq:PbW2}, one finds
\beqz
\bigl\lbrace \J{(\pm)}0\null\ti{1}(x), \J{(\pm)}0\null\ti{2}(y) \bigr\rbrace = \bigl[ C\ti{12}, \J{(\pm)}0\null\ti{1}(x) \bigr] \delta_{xy} \pm 2 \kay \, C\ti{12} \, \delta'_{xy} \;\;\;\;\; \text{ and } \;\;\;\;\; \bigl\lbrace \J{(\pm)}0\null\ti{1}(x), \J{(\mp)}0\null\ti{2}(y) \bigr\rbrace = 0.
\eeqz
Thus, the currents $\J{(+)}0(x)$ and $\J{(-)}0(x)$ are commuting Kac-Moody currents of levels $\ls{(\pm)}0 = \mp 2 \kay$. We actually already observed that for the current $\J{(+)}0(x)$, as it coincides with the $0^{\rm{th}}$-Takiff mode of the PCM+WZ realisation described in Subsection \ref{SubSubSec:PrincipalReal}. The current $\J{(-)}0(x)$ is the equivalent of $\J{(+)}0(x)$ considering the canonical fields $(g^{-1},-gXg^{-1})$ instead of $(g,X)$ and taking $\kay$ to $-\kay$. Note that it coincides with the density of the conserved charge associated with a PCM+WZ realisation, as introduced in Paragraph \ref{Par:ChargeLeftSym} (we already observed in that paragraph that it is a Kac-Moody current and that it Poisson commutes with the Kac-Moody current of the PCM+WZ realisation, \textit{i.e.} with $\J{(+)}0(x)$). The two Kac-Moody currents $\J{(\pm)}0(x)$ together define a Takiff realisation with two sites of multiplicity one, in the algebra $\Og$, which we call the \textit{$\lambda$-realisation}.

The generalised Segal-Sugawara integrals $\Dc{(\pm)}0$ of the $\lambda$-realisation can be computed from the general expression \eqref{Eq:SSMult1}. In particular, using the identity \eqref{Eq:OrthoWj}, one finds
\beqz
\Dc{(+)}0 + \Dc{(-)}0 = \int_{\D} \dd x \; \kappa\bigl( X(x),j(x) \bigr) = \Pc_{G_0},
\eeqz
with $\Pc_{G_0}$ the momentum \eqref{Eq:MomTStar} of the algebra $\Og$. Thus, the $\lambda$-realisation is suitable.

\subsubsection{The model with one copy of the realisation}

As for the other realisations above, let us describe the local AGM which one obtains if we consider only one copy of the $\lambda$-realisation. The 
twist function of this model possesses two simple poles,  whose residues are opposite one 
to another (as the $\lambda$-realisation is obtained from two Kac-Moody currents with 
opposite levels). \textit{Via} a change of spectral parameter, one can always choose this twist function to have zeros at $+1$ and $-1$, similarly to all the other cases described above. The twist function can then be parametrised as
\beqz
\vp_\lambda(z) = \frac{K}{1-\alpha^2} \frac{1-z^2}{z^2-\alpha^2},
\eeqz
for some non-zero real numbers $\alpha$ and $K$ (the notation $\vp_\lambda(z)$ is introduced here for further convenience, as we shall use later a new parameter $\lambda$ instead of $\alpha$). This function can be written as
\beqz
\vp_\lambda(z) = -\frac{2\kay}{z-\alpha} + \frac{2\kay}{z+\alpha} - \frac{K}{1-\alpha^2}, \;\;\;\;\; \text{ with } \;\;\;\;\; \kay = - \frac{K}{4\alpha}.
\eeqz
This partial fraction decomposition allows us to read the positions $z_\pm = \pm \alpha$ and the levels $\ls{(\pm)}0=\mp 2 \kay$ of the two sites $(+)$ and $(-)$ of the model. The Gaudin Lax matrix of the model is then given by
\beqz
\Sg_\lambda(z,x) = \frac{\J{(+)}0(x)}{z-\alpha} + \frac{\J{(-)}0(x)}{z+\alpha},
\eeqz
with $\J{(\pm)}0(x)$ the Kac-Moody currents \eqref{Eq:J+Lambda} and \eqref{Eq:J-Lambda} of the $\lambda$-realisation. From the expressions above of the twist function and the Gaudin Lax pair of the model, one can compute the quadratic charges $\Q_1$ and $\Q_2$ associated with the zeros $\ze_1=+1$ and $\ze_2=-1$. The Hamiltonian $\Hc$ of the model is given from these zeros by the linear combination \eqref{Eq:Ham}, where we choose $\epsilon_1=+1$ and $\epsilon_2=-1$ (as in the other cases of this section). After a few manipulations, one finds
\beqz
\Hc_\lambda = K \int_{\D} \dd x \; \Bigl( \kappa\bigl( A_+(x), A_+(x) \bigr) + \kappa\bigl( A_-(x), A_-(x) \bigr) \Bigr),
\eeqz
with
\begin{equation*}
A_+ = \frac{1}{4} \frac{\lambda+\Ad_g}{\lambda-1} j + \frac{1}{K} \frac{\lambda-\Ad_g}{\lambda+1} \bigl( X-\kay\,W \bigr)  \;\;\;\;\; \text{ and } \;\;\;\;\; A_- = \frac{1}{4} \frac{1+\lambda\,\Ad_g}{\lambda-1} j + \frac{1}{K} \frac{1-\lambda\,\Ad_g}{\lambda+1} \bigl( X-\kay\,W \bigr),
\end{equation*}
where we introduced
\beqz
\lambda = \frac{1+\alpha}{1-\alpha}.
\eeqz
Note that, in terms of $\lambda$, the level $\kay$ reads
\beqz
\kay = \frac{K}{4} \frac{1+\lambda}{1-\lambda}.
\eeqz
From the expression of the Hamiltonian, one can compute the time evolution of the field $g(x)$ and more precisely the current $j_0=g^{-1}\p_t g$ in terms of the currents $j$ and $X-\kay\,W$. One can then invert this relation and find the Lagrangian expression of the Hamiltonian field $X-\kay\,W$. Here, we give this expression in terms of the light-cone currents $j_\pm = g^{-1}\p_\pm g$:
\beqz
X-\kay\,W = -\frac{\kay}{2} \left( \frac{\lambda\,\Ad_g+1}{\lambda\,\Ad_g-1} j_+ + \frac{\lambda+\Ad_g}{\lambda-\Ad_g} j_- \right).
\eeqz
One can then perform the inverse Legendre transform of the model and compute its action. In the end, one finds
\begin{equation}\label{Eq:ActionLambda}
S_\lambda[g] = S_{\text{WZW},\kay}[g] + \kay \iint_{\D\times\R} \dd x\,\dd t \; \kappa\left( \frac{1}{\lambda^{-1}-\Ad_g} (\p_+ g)g^{-1}, g^{-1}\p_- g \right),
\end{equation}
where $S_{\text{WZW},\kay}[g]$ is the Wess-Zumino-Witten action
\beqz
S_{\text{WZW},\kay}[g] = \frac{\kay}{2} \iint_{\D\times\R} \dd x\,\dd t \; \kappa\bigl( j_+, j_- \bigr) + \kay\, \W g.
\eeqz
We recognize in the expression \eqref{Eq:ActionLambda} the action of the 
$\lambda$-deformed model~\cite{Sfetsos:2013wia}. The symmetry $g(x,t) \to h^{-1} g(x,t) h$ of this action corresponds to the global diagonal symmetry present for all local AGM. 

\subsubsection[$\lambda$-deformations]{$\bm{\lambda}$-deformations}

Let us consider a realisation of local AGM which possesses a real site $\alpha_0\in\Si_{\rd}$ attached to a non-abelian T-dual realisation. The twist function and Gaudin Lax matrix of the model are thus of the form
\beqz
\vp(z) = \frac{\ls{\alpha_0}1}{(z-z_{\alpha_0})^2} + \widetilde{\vp}(z)
\eeqz
and
\begin{equation}\label{Eq:LambdaUndef}
\Sg(z,x) = \frac{\m(x)}{(z-z_{\alpha_0})^2} + \frac{\ls{\alpha_0}1\, \p_x \vd(x)+\bigl[\m(x),\vd(x)\bigr]}{z-z_{\alpha_0}} + \widetilde{\Sg}(z,x),
\end{equation}
where $\widetilde{\vp}(z)$ and $\widetilde{\Sg}(z,x)$ contain the informations 
 about the parameter $\ell^\infty$
and  the other sites $\alpha\in\Si\setminus\lbrace \alpha_0 \rbrace$ of the model. Let us consider the ``deformed twist function''
\begin{equation}\label{Eq:DefTwistLambda}
\vp_\alpha(z) = \frac{\ls{\alpha_0}1}{(z-z_{\alpha_0})^2-\alpha^2} + \widetilde{\vp}(z).
\end{equation}
It has simple poles at $z_{\alpha_0} \pm \alpha$, with residues
\beqz
\res_{z=z_{\alpha_0} \pm \alpha} \; \vp_\alpha(z) \, \dd z = \mp 2 \kay , \;\;\;\;\; \text{ with } \;\;\;\;\; \kay = -\frac{\ls{\alpha_0}1}{4\alpha}.
\eeqz
It is thus the twist function of a local AGM with sites
\beqz
\Si^{(\alpha)} = \bigl( \Si\setminus\lbrace \alpha_0 \rbrace \bigr) \sqcup \lbrace \alpha_0^+, \alpha_0^- \rbrace,
\eeqz
obtained from the initial model by replacing the site $\alpha_0$ (of multiplicity 2, 
levels $\ls{\alpha_0}1$ and $\ls{\alpha_0}0=0$ and position $z_{\alpha_0}$) by two 
sites $\alpha_0^\pm$ (of multiplicity 1, levels $\ls{\alpha_0^\pm}0 = \mp 2\kay$ and 
positions $\po_{\alpha_0^\pm}= \po_{\alpha_0}\pm \alpha$). We will suppose that the 
sites $\alpha\in\Si\setminus\lbrace \alpha_0 \rbrace$ keep the same realisations as in 
the initial model. Moreover, we shall realise the two sites $\alpha^\pm_0$ through the 
$\lambda$-realisation described above. The ``deformed'' Gaudin Lax matrix is then:
\begin{equation}\label{Eq:SLambdaDef}
\Sg_\alpha(z) = \frac{X-\kay\,j-\kay\,W}{z-\po_{\alpha_0}-\alpha} -  \frac{g \bigl( X + \kay\, j - \kay\, W\bigr)g^{-1}}{z-\po_{\alpha_0}+\alpha} + \widetilde{\Sg}(z).
\end{equation}
It is easy to see that, in the limit $\alpha\to0$, one has
\beqz
\vp_\alpha(z) \xrightarrow{\alpha\to0} \vp(z).
\eeqz
Thus, one can expect that the model with twist function $\vp_\alpha(z)$ is a deformation 
(with deformation parameter $\alpha$) of the model initially considered, with twist 
function $\vp(z)$. To prove this statement, one would need a similar limit with the 
corresponding Gaudin Lax matrix $\Gamma_\alpha(z,x)$ of the model. However, it is easy to see that the naive limit $\alpha\to0$ cannot be taken in the Gaudin Lax matrix \eqref{Eq:SLambdaDef}. 
This is in contrast to the analogous limit in the context of 
the inhomogeneous Yang-Baxter deformation. 
This limit 
requires therefore a more subtle treatment. 
We will suppose \cite{Sfetsos:2013wia} that in the limit $\alpha\to0$  we have 
\begin{equation}\label{Eq:galpha}
g(x) = \Id + 2\alpha\, \vd(x) + O(\alpha^2),
\end{equation}
where $\vd(x)$ is thus a $\g_0$-valued field. One then checks that
\begin{equation}\label{Eq:jWalpha}
j(x) = 2\alpha\, \p_x \vd(x) + O(\alpha^2) \;\;\;\;\; \text{ and } \;\;\;\;\;\; 
W(x) =  O(\alpha).
\end{equation}
We shall also suppose that:
\begin{equation}\label{Eq:Xalpha}
X(x) = \frac{1}{2\alpha}\m(x) +  O(\alpha^0),
\end{equation}
for some $\g_0$-valued field $\m(x)$. One can then read the Poisson brackets of the fields $\vd(x)$ and $\m(x)$ from the ones \eqref{Eq:PBTstarG} of the fields $g$ and $X$, by considering the limit $\alpha\to0$. We find that the fields $\vd(x)$ and $\m(x)$ satisfy the same bracket \eqref{Eq:PBvm} as the fields of the same introduced in Paragraph \ref{Par:TDualReal} and that appears in the undeformed Gaudin Lax matrix \eqref{Eq:LambdaUndef}.  

The limit $\alpha\to0$ of the deformed Gaudin Lax matrix \eqref{Eq:SLambdaDef} can be computed from the asymptotic expansions \eqref{Eq:galpha}, \eqref{Eq:jWalpha} and \eqref{Eq:Xalpha}, noticing that
\beqz
\Ad_g = \Id + 2\alpha\, \ad_\vd + O(\alpha^2).
\eeqz
In the ends one finds
\beqz
\Sg_\alpha(z) \xrightarrow{\alpha\to0} \Sg(z),
\eeqz
as expected. Thus, the construction of this paragraph allows to define, for any local AGM which possesses a non-abelian T-dual representation, an integrable deformation of this model with deformation parameter $\alpha$. The simplest example of such a deformation is the deformation of non-abelian T-dual model alone, which is the model presented in the previous paragraph.

\section{Conclusion}

The reformulation of the class of integrable field theories admitting a twist function as affine Gaudin models was proposed in \cite{Vicedo:2017cge} as a natural framework for addressing the problem of quantisation of such models. In this article we have shown that this language can already be put to good use at the classical level to build new classical integrable field theories with infinitely many parameters. One of the virtues of this approach is that, by working directly at the Hamiltonian level, the classical integrability of the resulting field theory is automatic. A usual drawback of working at the Hamiltonian level is that Lorentz invariance is not explicit, making it more difficult to identify which of the integrable field theories constructed are relativistic. However, it turns out that there are quite simple conditions ensuring Lorentz invariance for the class of models we have considered.

\medskip

It is worth emphasising that one of the key properties of Gaudin models, which lies at the heart of the construction presented in this article, is the fact that they are `ultralocal' as spin chains, in the sense that observables associated with different sites mutually Poisson commute. Indeed, in the affine Gaudin model description of an integrable field theory the fundamental fields are combined into particular Takiff currents, thought of as attached to individual sites, which commute with one another. This is exemplified in both the Yang-Baxter deformation and $\lambda$-deformation of the principal chiral model, both of which are realisations of affine Gaudin models with two sites to which are attached mutually Poisson commuting Kac-Moody currents.

This property of `ultralocality' is the one which ultimately makes it possible to assemble together integrable field theories by treating them as affine Gaudin models. We gave a detailed description in Subsection \ref{SubSec:1site} and Section \ref{Sec:otherreal} of various integrable field theories that can be used as building blocks in such a construction. We illustrated the method for assembling together an arbitrary number of these building blocks on two examples, in Subsection \ref{SubSec:CoupledPCM} and Appendix \ref{App:hYB}. In turn, the resulting theories can themselves be used as building blocks in subsequent constructions.

Since our starting point is the affine Gaudin model description of an integrable field theory, the construction also makes apparent how the parameters of the affine Gaudin model, which are encoded in the twist function, enter in the corresponding action and the Lagrangian expressions for the Lax pairs of these models.

\medskip

It is clearly very desirable to apply the method presented in this paper to assemble together other integrable $\sigma$-models in a variety of different ways, in order to determine the full scope of possibilities. It will then also be interesting to explore the infinite family of integrable field theories arising from this construction. We expect that some integrable $\sigma$-models which have appeared recently in the literature belong to this infinite family. This should, for instance, be the case for the coupled integrable $\lambda$-models constructed in \cite{Georgiou:2018hpd,Georgiou:2018gpe} (see also \cite{Georgiou:2017jfi}). It is expected that the integrable field theory constructed even more recently in \cite{Georgiou:2019jws} is also an affine Gaudin model.

On another front, it would be interesting to study the renormalisation group flow for the integrable $\sigma$-models obtained by using the present construction. A rich structure has already been found in the renormalisation group flow of the integrable $\lambda$-models considered in \cite{Georgiou:2018hpd,Georgiou:2018gpe}.

\medskip

To borrow terminology used in the context of Gaudin models, the ones considered in this article are all `non-cyclotomic'. A natural follow-up will therefore be to generalise the present construction to the family of integrable field theories which are realisations of cyclotomic affine Gaudin models. This family includes \cite{Vicedo:2017cge} the symmetric space $\sigma$-models and, more generally, integrable $\mathbb{Z}_T$-coset models \cite{Young:2005jv}. In fact, more importantly, all these theories have only a regular singularity at infinity, which is responsible for their gauge symmetry. Specifically, the absence of an irregular singularity at infinity gives rise to a first-class constraint \cite{Vicedo:2017cge} which generates the gauge invariance in the model. Another example of a field theory where this happens is a gauge-invariant version of the bi-Yang-Baxter model which was described in \cite{Delduc:2015xdm} as a one-parameter deformation of the $\eta$-deformed $\mathbb{Z}_2$-permutation coset space $\sigma$-model. As an affine Gaudin model it is non-cyclotomic and has only a regular singularity at infinity \cite{Vicedo:2017cge}.

Recall, by contrast, that the integrable field theories considered in this article all have a double pole at infinity. More precisely, the double pole manifests itself only in the form of a constant term $- \ell^\infty$ in the partial fraction decomposition of the twist function $\varphi(z)$. The effect of such a term, which is clear from \eqref{Eq:KConserved}, is to break the gauge symmetry inherent in affine Gaudin models with a regular singularity at infinity down to the global symmetry discussed in Paragraph \ref{SubSubSec:DiagSym}.

It will be interesting to understand how the method of assembling integrable $\sigma$-models presented here can be generalised to the case when the individual theories have gauge symmetries. In particular, since first-class constraints are always associated with the same point, namely infinity, there will still only be one single constraint in the assembled theory which will now correspond to a diagonal gauge symmetry.
Specifically, as already announced in \cite{Delduc:2018hty}, one may expect that starting from a Lie group $G_0$ and a subgroup $H$ of $G_0$ such that the $\sigma$-model on the coset $G_0/H$ is integrable, it is possible to construct a $G_0^{\times N}/H_{\rm diag}$ $\sigma$-model which is integrable. This certainly requires a detailed investigation.

\paragraph{Acknowledgments.} We thank G. Arutyunov and C. Bassi for useful discussions.
This work is partially supported by the French Agence Nationale de la Recherche (ANR) under grant ANR-15-CE31-0006 DefIS. The work of S.L. is funded by the Deutsche Forschungsgemeinschaft (DFG, German Research Foundation) under Germany’s Excellence Strategy – EXC 2121 ``Quantum Universe'' – 390833306.

\appendix

\section[High order poles of the quadratic Hamiltonian $\Hc(z)$]{High order poles of the quadratic Hamiltonian $\bm{\Hc(z)}$}
\label{App:SSHam}

Consider the quadratic Hamiltonian $\Hc(z)$ of a realisation of local AGM, defined as \eqref{Eq:HamSpec}. This appendix is mainly dedicated to the proof of the following proposition:
\begin{proposition}\label{Prop:HamSS}
The Hamiltonian $\Hc(z)$ has a pole at $\po_\alpha$ of order at most $m_\alpha$.
\end{proposition}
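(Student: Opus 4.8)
The plan is to compute the principal part of $\Hc(z)=\tfrac12\int_\D dx\,\kappa(\Sg(z,x),\Sg(z,x))-\vp(z)\Pc(z)$ at a fixed site $\po_\alpha$ and show that the coefficients of $(z-\po_\alpha)^{-k}$ for $k=m_\alpha+1,\dots,2m_\alpha$ all vanish. Since $\Sg(z,x)$, $\vp(z)$ and $\Pc(z)$ all have poles of order $m_\alpha$ at $\po_\alpha$, and since the contributions from the other sites are regular there, it suffices to work locally: write $\Sg(z,x)=\Sg_\alpha(z,x)+(\text{regular at }\po_\alpha)$ with $\Sg_\alpha(z,x)=\sum_{p=0}^{m_\alpha-1}\J\alpha p(x)/(z-\po_\alpha)^{p+1}$, and similarly $\vp(z)=\vp_\alpha(z)+\text{reg.}$, $\Pc(z)=\Pc_\alpha(z)+\text{reg.}$ with $\vp_\alpha(z)=\sum_{p}\ls\alpha p/(z-\po_\alpha)^{p+1}$ and $\Pc_\alpha(z)=\sum_p\Dc\alpha p/(z-\po_\alpha)^{p+1}$. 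The only terms that can produce poles of order exceeding $m_\alpha$ are the ``pure $\alpha$'' pieces $\tfrac12\int_\D dx\,\kappa(\Sg_\alpha,\Sg_\alpha)$ and $\vp_\alpha(z)\Pc_\alpha(z)$; the cross terms between site $\alpha$ and the rest contribute poles of order at most $m_\alpha$. So the statement reduces to a purely local identity about a single Takiff site.

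The key step is then to expand both pure pieces in powers of $(z-\po_\alpha)^{-1}$ and match coefficients. For $s\in\{1,\dots,2m_\alpha\}$ the coefficient of $(z-\po_\alpha)^{-s}$ in $\tfrac12\int_\D dx\,\kappa(\Sg_\alpha,\Sg_\alpha)$ is $\tfrac12\sum_{q+r=s-2}\int_\D dx\,\kappa(\J\alpha q,\J\alpha r)$, and the coefficient in $\vp_\alpha(z)\Pc_\alpha(z)$ is $\sum_{q+r=s-2}\ls\alpha q\,\Dc\alpha r$, where in both sums $q,r$ run over $\{0,\dots,m_\alpha-1\}$. For $s>m_\alpha$, i.e. $s-2\geq m_\alpha-1$, one must substitute the explicit formula \eqref{Eq:SS} for $\Dc\alpha r=\pi(D^\alpha_{[r]})$, namely $D^\alpha_{[r]}=\tfrac12\sum_{q'+r'\geq r}\kb\alpha{q'+r'-r}\int_\D dx\,\kappa(J^\alpha_{[q']},J^\alpha_{[r']})$, and use the defining linear equations \eqref{Eq:DefKb} for the numbers $\kb\alpha p$ together with the vanishing \eqref{Eq:KbZero}. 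The combinatorial identity \eqref{Eq:DefKb} is designed precisely so that $\sum_p\kb\alpha{p+q}\ls\alpha{p+r}=\delta_{qr}$; carefully shuffling the index ranges should show that for each $s>m_\alpha$ the contribution $\sum_{q+r=s-2}\ls\alpha q\Dc\alpha r$ reproduces exactly $\tfrac12\sum_{q+r=s-2}\int_\D dx\,\kappa(\J\alpha q,\J\alpha r)$, so that the two pure pieces cancel in the coefficient of $(z-\po_\alpha)^{-s}$.

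I would organize the bookkeeping around the generating-function identity: let $f_\alpha(u)=\sum_{p\geq0}\ls\alpha p u^p$ (truncated), $g_\alpha(u)=\sum_p\kb\alpha p u^p$, so that \eqref{Eq:DefKb} becomes the statement that the coefficients of $u^0,\dots,u^{m_\alpha-1}$ in $f_\alpha(u)g_\alpha(u)$ are $1,0,\dots,0$; the matching of pole coefficients above-order $m_\alpha$ then follows from a formal manipulation of $f_\alpha$, $g_\alpha$ and the current generating series. The main obstacle — and the part that requires genuine care rather than routine algebra — is getting the summation ranges exactly right: the restrictions $q,r\leq m_\alpha-1$ in the Takiff structure, the constraint $q'+r'\geq r$ in \eqref{Eq:SS}, and the range $p\leq m_\alpha-1-r$ in \eqref{Eq:DefKb} interact nontrivially, and one must verify that the truncations do not spoil the cancellation for $m_\alpha<s\leq2m_\alpha$ while (correctly) leaving the coefficients for $s\leq m_\alpha$ unconstrained. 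Once the single-site cancellation is established, assembling the global statement is immediate since, as noted, the inter-site cross terms cannot raise the pole order at $\po_\alpha$ beyond $m_\alpha$. This is the content deferred to Appendix \ref{App:SSHam}; I would present the generating-function reformulation first and then carry out the index-shuffling lemma as a separate sub-step.
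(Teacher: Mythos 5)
Your proposal is correct and follows essentially the same route as the paper: the reduction to the single‑site data at $\po_\alpha$, the use of the relations \eqref{Eq:DefKb} as a series‑product identity, and your generating‑function statement that $f_\alpha(u)g_\alpha(u)=1+O(u^{m_\alpha})$ are precisely the content of Lemma \ref{Lem:InvertTwist} (which identifies the $\kb\alpha p$ as the Taylor coefficients of $\vp(z)^{-1}$ at $\po_\alpha$); the only organisational difference is that the paper divides by $\vp$ and shows that the principal part of $\Q(z)=-\tfrac{1}{2\vp(z)}\int_\D\dd x\,\kappa\bigl(\Sg(z,x),\Sg(z,x)\bigr)$ at $\po_\alpha$ is $-\sum_p\Dc\alpha p(z-\po_\alpha)^{-p-1}$ (Proposition \ref{Prop:SSinQ}), whence $\Hc=-\vp\cdot(\Q+\Pc)=-\vp\,\Q_{\text{reg}}$, whereas you multiply $\Pc_\alpha$ by $\vp_\alpha$ and match coefficients of $(z-\po_\alpha)^{-s}$ for $s>m_\alpha$ directly. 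The index‑shuffle you defer does close: for $s\in\lbrace m_\alpha+1,\dots,2m_\alpha\rbrace$ the coefficient of $\tfrac12\int_\D\dd x\,\kappa\bigl(\J\alpha{q'},\J\alpha{r'}\bigr)$ produced by $\sum_{q+r=s-2}\ls\alpha q\,\Dc\alpha r$ is $\sum_{q}\ls\alpha q\,\kb\alpha{q'+r'-s+2+q}$ with $q$ running over $\lbrace s-1-m_\alpha,\dots,m_\alpha-1\rbrace$, which after the shift $p=q-(s-1-m_\alpha)$ is exactly the instance $Q=q'+r'-m_\alpha+1$, $R=s-1-m_\alpha$ of \eqref{Eq:DefKb} when $q'+r'\geq m_\alpha-1$, and vanishes term by term via \eqref{Eq:KbZero} when $q'+r'<m_\alpha-1$, yielding $\delta_{q'+r',\,s-2}$ as required.
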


\subsection{Inverting the twist function}

For $\alpha\in\Si$, recall the set of numbers $\bigl(\kb \alpha p\bigr)_{p\in\lbrace 0,\cdots,2m_\alpha-2\rbrace}$, defined as the unique solution of Equation \eqref{Eq:DefKb}. We will need the following lemma.
\begin{lemma}\label{Lem:InvertTwist}
The Taylor expansion of $\vp(z)^{-1}$ around $z=z_\alpha$ is given by
\begin{equation}
\frac{1}{\vp(z)} = \sum_{p=0}^{2m_\alpha-2} \kb \alpha p (z-z_\alpha)^{p+1} + O\bigl((z-z_\alpha)^{2m_\alpha-1}\bigr).
\end{equation}
\end{lemma}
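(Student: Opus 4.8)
The plan is to prove Lemma \ref{Lem:InvertTwist} by a direct local analysis of the twist function near a pole $z_\alpha$, combined with the defining linear system \eqref{Eq:DefKb} for the numbers $\kb\alpha p$. First I would isolate the singular part of $\vp(z)$ at $z_\alpha$: writing $u = z - z_\alpha$, the expansion \eqref{Eq:Twist} gives
\beqz
\vp(z) = \sum_{r=0}^{m_\alpha-1} \frac{\ls\alpha r}{u^{r+1}} + g_\alpha(u),
\eeqz
where $g_\alpha(u)$ is holomorphic at $u=0$ (it collects the contributions of all the other sites together with the constant $-\ell^\infty$). Factoring out $u^{-m_\alpha}$, we get $\vp(z) = u^{-m_\alpha}\,\psi_\alpha(u)$ with $\psi_\alpha(u) = \sum_{r=0}^{m_\alpha-1}\ls\alpha r\, u^{m_\alpha-1-r} + u^{m_\alpha} g_\alpha(u)$, which is holomorphic at $u=0$ and satisfies $\psi_\alpha(0) = \ls\alpha{m_\alpha-1} \neq 0$ by the standing assumption that the highest level is non-zero. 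Hence $\psi_\alpha(u)^{-1}$ is holomorphic near $u=0$, and
\beqz
\frac{1}{\vp(z)} = u^{m_\alpha}\, \psi_\alpha(u)^{-1} = u^{m_\alpha}\sum_{k\geq 0} c_k\, u^k
\eeqz
for some coefficients $c_k$. So $\vp(z)^{-1}$ vanishes to order exactly $m_\alpha$ at $z_\alpha$, and the claim is precisely that $c_k = \kb\alpha{m_\alpha-1+k}$ for $k = 0,\ldots,m_\alpha-1$ (equivalently, writing $p = m_\alpha-1+k$, the coefficient of $u^{p+1}$ in $\vp(z)^{-1}$ is $\kb\alpha p$, and these are the only ones appearing up to order $u^{2m_\alpha-1}$), consistent with $\kb\alpha p = 0$ for $p \leq m_\alpha-2$ as recorded in \eqref{Eq:KbZero}.

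Next I would extract the recursion characterising the $c_k$. The crucial observation is that only the \emph{singular} part of $\vp$ matters up to the order we need: since $u^{m_\alpha} g_\alpha(u) = O(u^{m_\alpha})$, multiplying by $\vp(z)^{-1} = O(u^{m_\alpha})$ contributes only at order $O(u^{2m_\alpha})$, which is beyond the $O(u^{2m_\alpha-1})$ error term in the statement. Therefore, modulo $u^{2m_\alpha-1}$,
\beqz
1 = \vp(z)\cdot\frac{1}{\vp(z)} \equiv \left(\sum_{r=0}^{m_\alpha-1}\frac{\ls\alpha r}{u^{r+1}}\right)\left(\sum_{p=0}^{2m_\alpha-2} \eta_p\, u^{p+1}\right) \pmod{u^{2m_\alpha-1}},
\eeqz
where I temporarily write $\eta_p$ for the unknown coefficient of $u^{p+1}$ in $\vp(z)^{-1}$ (so $\eta_p = 0$ for $p \leq m_\alpha-2$ as just argued, but I need not assume this — it will follow). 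Expanding the product and collecting the coefficient of $u^{s}$ for $s = -m_\alpha,\ldots, m_\alpha-2$: the coefficient of $u^0$ must be $1$ and all other coefficients must vanish. Reindexing this gives exactly the system \eqref{Eq:DefKb}: the coefficient of $u^{r-q}$ in the product is $\sum_{p} \ls\alpha p \, \eta_{p+q-r}$ suitably ranged, which upon setting the Kronecker-delta pattern reproduces $\sum_{p=0}^{m_\alpha-1-r}\eta_{p+q}\ls\alpha{p+r} = \delta_{q,r}$ for $q,r \in \{0,\ldots,m_\alpha-1\}$. Since by \cite{Vicedo:2017cge}, Lemma 4.1, this system has a \emph{unique} solution, namely $\eta_p = \kb\alpha p$, we conclude $\eta_p = \kb\alpha p$ for all $p \in \{0,\ldots,2m_\alpha-2\}$, which is the lemma.

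The main obstacle — really the only delicate point — is the bookkeeping in matching the coefficient-comparison of the product $\vp(z)\,\vp(z)^{-1} \equiv 1$ to the precise index ranges in \eqref{Eq:DefKb}; one has to be careful that the comparison only yields $m_\alpha$ genuinely independent equations (for the coefficients of $u^{-m_\alpha},\ldots,u^{-1}$ and $u^0$, that is $m_\alpha+1$ conditions, but the equations for coefficients of $u^{s}$ with $s < 0$ overlap in a way that must be reconciled with the $q,r$-indexed form), and that the higher coefficients $\eta_p$ for $p \geq m_\alpha-1$ up to $2m_\alpha-2$ are exactly the ones pinned down while those for $p \leq m_\alpha-2$ are forced to vanish. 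I would handle this by substituting $q - r = s + $ const explicitly and verifying the two index sets coincide; alternatively, one can simply cite that \eqref{Eq:DefKb} is \emph{equivalent} to the statement that the polynomial $\sum_r \ls\alpha r u^{m_\alpha-1-r}$ times $\sum_p \eta_p u^{p+1-m_\alpha}$ is $\equiv 1$ modulo $u^{m_\alpha}$, which is transparently what the computation above produces. Everything else — holomorphy of $\psi_\alpha^{-1}$, the order-of-vanishing count, the negligibility of $g_\alpha$ — is routine. Note that as an immediate corollary one recovers Remark \ref{Rem:Eta}, i.e. \eqref{Eq:KbZero}, since $\vp(z)^{-1}$ starts at order $u^{m_\alpha}$, forcing $\kb\alpha p = 0$ for $0 \leq p \leq m_\alpha-2$.
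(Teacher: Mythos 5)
Your strategy coincides with the paper's: expand $\vp(z)^{-1}$ in powers of $u=z-z_\alpha$ starting at order $u^{m_\alpha}$ (using $\ls\alpha{m_\alpha-1}\neq 0$), compare coefficients in the identity $\vp(z)\,\vp(z)^{-1}=1$, identify the resulting linear system with \eqref{Eq:DefKb}, and invoke the uniqueness of its solution. The paper performs the comparison keeping the full Laurent series of $\vp(z)$ (regular part included) and uses the vanishing of the low-order coefficients of $\vp(z)^{-1}$ to cut the sums down to the index ranges of \eqref{Eq:DefKb}.

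The one step that is wrong as stated is your order estimate for discarding the regular part. What you drop from $\vp(z)$ is $g_\alpha(u)$ itself, and its contribution to the product is $g_\alpha(u)\,\vp(z)^{-1}=O(1)\cdot O(u^{m_\alpha})=O(u^{m_\alpha})$, not $O(u^{2m_\alpha})$: the quantity $u^{m_\alpha}g_\alpha(u)$, whose product with $\vp(z)^{-1}$ is indeed $O(u^{2m_\alpha})$, is a summand of $\psi_\alpha(u)$, so that estimate pertains to the identity $\psi_\alpha(u)\,\vp(z)^{-1}=u^{m_\alpha}$ rather than to $\vp\,\vp^{-1}=1$. Consequently your displayed congruence holds only modulo $u^{m_\alpha}$; for $m_\alpha\geq 2$ the coefficient of $u^{m_\alpha}$ on the right-hand side is generically non-zero (it picks up $-g_\alpha(0)\,\kb\alpha{m_\alpha-1}$, among other terms), so the claimed congruence modulo $u^{2m_\alpha-1}$ is false. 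The argument survives nonetheless, because the equations you actually use are the coefficient identities for $u^{q-r}$ with $q,r\in\lbrace 0,\cdots,m_\alpha-1\rbrace$, all of which have exponent at most $m_\alpha-1$ and therefore lie in the range where the congruence is valid; the tail of the series beyond $p=2m_\alpha-2$ likewise only affects exponents $\geq m_\alpha$. Replacing ``modulo $u^{2m_\alpha-1}$'' by ``modulo $u^{m_\alpha}$'' repairs the step, and the remaining index bookkeeping that you flag as delicate closes up exactly as in the paper's own computation.
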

\begin{proof}
As $z_\alpha$ is a pole of $\vp(z)$ of order $m_\alpha$, the Laurent series expansion of $\vp(z)$ and $\vp(z)^{-1}$ around $z=z_\alpha$ can be written as
\beqz
\vp(z) = \sum_{p=-\infty}^{m_\alpha-1} \frac{\ls \alpha p}{(z-z_\alpha)^{p+1}} \;\;\;\;\;\; \text{and} \;\;\;\;\;\; \vp(z)^{-1} = \sum_{p=m_\alpha-1}^{+\infty} \mb \alpha p (z-z_\alpha)^{p+1},
\eeqz
for some complex numbers $\ls \alpha p$ and $\mb \alpha p$. In particular, according to Equation \eqref{Eq:Twist}, the numbers $\ls\alpha p$ for $p\in\lbrace 0,\cdots,m_\alpha-1\rbrace$ coincide with the levels introduced in Subsection \ref{SubSubSec:TakiffCurrents}, hence justifying the notation. One can freely extend these Laurent series expansion as
\beqz
\vp(z) = \sum_{p\in\Z} \frac{\ls \alpha p}{(z-z_\alpha)^{p+1}} \;\;\;\;\;\; \text{and} \;\;\;\;\;\; \vp(z)^{-1} = \sum_{p\in\Z} \mb \alpha p (z-z_\alpha)^{p+1},
\eeqz
by letting
\begin{equation}\label{Eq:ExtendSum}
\ls \alpha p=0 \; \text{ for } \; p> m_\alpha-1 \;\;\;\; \text{ and } \;\;\;\; \mb \alpha p=0 \; \text{ for } \; p< m_\alpha-1.
\end{equation}
By definition, one must have $1=\vp(z)\vp(z)^{-1}$, hence
\beqz
1 = \sum_{p,q\in\Z} \ls \alpha p \, \mb \alpha q \, (z-z_\alpha)^{q-p} \;\;\; \stackrel{s=q-p}{=} \;\;\; \sum_{s\in\Z} \left( \sum_{p\in\Z} \ls \alpha p \mb \alpha {p+s} \right) (z-z_\alpha)^s.
\eeqz
Note that the sum $\sum_{p\in\Z} \ls \alpha p \mb \alpha {p+s}$ is actually finite as Equation \eqref{Eq:ExtendSum} restricts it to $p \leq m_\alpha-1$ and $p \geq m_\alpha-1-s$. Thus, we have, for all $s\in\Z$,
\beqz
\delta_{s,0} = \sum_{p\in\Z} \mb \alpha {p+s} \ls \alpha p.
\eeqz
Let us apply this for $s=q-r$, for some numbers $q$ and $r$ in $\lbrace 0,\cdots,m_\alpha-1\rbrace$. We get
\beqz
\delta_{q-r,0} = \sum_{p\in\Z} \mb \alpha {p+q-r} \ls \alpha p  = \sum_{p\in\Z} \mb \alpha {p+q} \ls \alpha {p+r}, 
\eeqz
where we performed the change $p\mapsto p+r$ on the abstract index $p$ summed over $\Z$. Note that $\mb \alpha {p+q}=0$ for $p<m_\alpha-1-q$ and so in particular for $p<0$ (as we supposed $q\leq m_\alpha-1$). Similarly, note that $\ls \alpha {p+r}=0$ for $p>m_\alpha-1-r$. Thus, we get that the numbers $\mb \alpha p$ satisfy
\beqz
\delta_{q,r} = \sum_{p=0}^{m_\alpha-1-r} \mb \alpha {p+q} \ls \alpha {p+r}.
\eeqz
One recognizes here the Equation \eqref{Eq:DefKb} defining the numbers $\kb \alpha p$ (for $p\in\lbrace 0,\cdots,2m_\alpha-2\rbrace$). As this equation has a unique solution, we get $\mb \alpha p=\kb \alpha p$, ending the demonstration.
\end{proof}

\begin{remarkx}\label{Rem:Eta}
Lemma \ref{Lem:InvertTwist} sheds light on the origin and the interpretation of the numbers $\kb \alpha p$ introduced in Subsection \ref{SubSubSec:SS} and~\cite{Vicedo:2017cge}. It also provides a simple proof of the fact that $\kb \alpha p=0$ for $p\leq m_\alpha-1$, as $\vp(z)^{-1}$ has a zero of order $m_\alpha$ at $z=z_\alpha$.
\end{remarkx}

\subsection[Extracting the generalised Segal-Sugawara integrals from $\Q(z)$]{Extracting the generalised Segal-Sugawara integrals from $\bm{\Q(z)}$}

Let us consider the charge $\Q(z)$ as defined in \eqref{Eq:QSpec}. The following Proposition shows that one can naturally extract the generalised Segal-Sugawara integrals $\Dc\alpha p = \pi\bigl( D^\alpha_{[p]} \bigr)$ (in the realisation $\pi:\Tc_{\lt}\rightarrow\Ac$) from $\Q(z)$.

\begin{proposition}\label{Prop:SSinQ}
One has
\beqz
\Q(z) = -\sum_{\alpha\in\Si} \sum_{p=0}^{m_\alpha-1} \frac{\Dc\alpha p}{(z-z_\alpha)^{p+1}} + \Q_{\text{reg}}(z),
\eeqz
where $\Q_{\text{reg}}$ is regular at all positions $z_\alpha$.
\end{proposition}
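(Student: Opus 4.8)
\textbf{Proof proposal for Proposition \ref{Prop:SSinQ}.}

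The plan is to expand $\Q(z) = -\tfrac12 \vp(z)^{-1} \int_\D \dd x\, \kappa(\Sg(z,x),\Sg(z,x))$ around a fixed site $z=\po_\alpha$ and match the singular part to the generalised Segal-Sugawara integrals. First I would write the Laurent expansion of $\Sg(z,x)$ around $z=\po_\alpha$: from the definition \eqref{Eq:S}, the only singular contribution at $\po_\alpha$ comes from the terms $\J\alpha p(x)/(z-\po_\alpha)^{p+1}$, so
\beqz
\Sg(z,x) = \sum_{p=0}^{m_\alpha-1} \frac{\J\alpha p(x)}{(z-\po_\alpha)^{p+1}} + \Sg_{\alpha,\text{reg}}(z,x),
\eeqz
where $\Sg_{\alpha,\text{reg}}$ collects the contributions of all the other sites and is holomorphic near $\po_\alpha$. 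Then $\kappa(\Sg(z,x),\Sg(z,x))$ has a pole of order at most $2m_\alpha$ at $\po_\alpha$, whose principal part is governed by $\sum_{q,r=0}^{m_\alpha-1} \kappa(\J\alpha q(x),\J\alpha r(x))\,(z-\po_\alpha)^{-(q+r+2)}$, together with cross terms with $\Sg_{\alpha,\text{reg}}$ that only raise the power of $(z-\po_\alpha)$.

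Next I would multiply by $-\tfrac12\vp(z)^{-1}$ and use Lemma \ref{Lem:InvertTwist}: near $\po_\alpha$ one has $\vp(z)^{-1} = \sum_{s\geq 0} \kb\alpha{s}\,(z-\po_\alpha)^{s+1}$ (with the convention that the sum can be taken over $s\in\{0,\dots,2m_\alpha-2\}$ modulo higher-order terms, and recalling $\kb\alpha s=0$ for $s\leq m_\alpha-2$ by \eqref{Eq:KbZero}). The singular part of $\Q(z)$ at $\po_\alpha$ therefore comes entirely from pairing the $(z-\po_\alpha)^{s+1}$ factors of $\vp(z)^{-1}$ against the $(z-\po_\alpha)^{-(q+r+2)}$ terms of the squared Gaudin Lax matrix; all cross terms involving $\Sg_{\alpha,\text{reg}}$ contribute only to $\Q_{\text{reg}}$ because they carry at least $(z-\po_\alpha)^{-(p+1)+(s+1)}$ with $s+1\geq m_\alpha-1$ and thus are regular once combined. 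Reading off the coefficient of $(z-\po_\alpha)^{-(p+1)}$ gives
\beqz
-\frac12 \int_\D \dd x \sum_{\substack{q,r=0\\ q+r\geq p}}^{m_\alpha-1} \kb\alpha{q+r-p}\,\kappa\bigl(\J\alpha q(x),\J\alpha r(x)\bigr),
\eeqz
which is exactly $-\Dc\alpha p = -\pi(D^\alpha_{[p]})$ by comparison with \eqref{Eq:SS} and \eqref{Eq:ImageSS}. Summing over all sites $\alpha\in\Si$ and collecting the regular remainder yields the claimed formula.

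The main obstacle, and the step to be careful with, is the bookkeeping of which cross terms between the singular part at $\po_\alpha$ and the regular part $\Sg_{\alpha,\text{reg}}$, as well as the bilinear terms purely from other sites, genuinely land in $\Q_{\text{reg}}(z)$ rather than contributing spurious poles at $\po_\alpha$. This requires carefully tracking the order of vanishing: the cross term $\kappa(\J\alpha p(x), \Sg_{\alpha,\text{reg}}(z,x))/(z-\po_\alpha)^{p+1}$ multiplied by $\vp(z)^{-1}$ picks up a factor with a zero of order $\geq m_\alpha$ at $\po_\alpha$, which dominates the pole of order $p+1 \leq m_\alpha$, so the product is regular — and similarly for the purely-other-site terms, where $\vp(z)^{-1}$ is already regular at $\po_\alpha$. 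Once this vanishing is made precise, the identity follows by matching principal parts site by site, using that a rational function is determined by its principal parts plus its value at infinity.
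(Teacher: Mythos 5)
Your proposal is correct and follows essentially the same route as the paper's proof: expand $\vp(z)^{-1}$ around $z=\po_\alpha$ via Lemma \ref{Lem:InvertTwist}, expand $\Sg(z,x)$ into its singular part at $\po_\alpha$ plus a regular remainder, multiply, and read off the coefficient of $(z-\po_\alpha)^{-(p+1)}$ as $-\Dc\alpha p$ by comparison with \eqref{Eq:SS}. One small slip: the nonvanishing terms of $\vp(z)^{-1}$ have $s+1\geq m_\alpha$ (not $s+1\geq m_\alpha-1$), which is exactly the bound you need — and do correctly use — in your final paragraph to kill the cross terms.
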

\begin{proof}
Let us fix a site $\alpha\in\Si$. Using Equations \eqref{Eq:S} and Lemma \ref{Lem:InvertTwist}, we can write:
\begin{equation*}
\frac{1}{\vp(z)} = \sum_{p=m_\alpha-1}^{2m_\alpha-2} \kb \alpha p (z-z_\alpha)^{p+1} + O\bigl((z-z_\alpha)^{2m_\alpha-1}\bigr) \;\;\;\;\; \text{ and } \;\;\;\;\; \Sg(z,x) = \sum_{p=0}^{m_\alpha-1} \frac{\J\alpha p(x)}{(z-z_\alpha)^{p+1}} + O\bigl((z-z_\alpha)^0\bigr).
\end{equation*}
From the definition \eqref{Eq:QSpec} of $\Q(z)$, we then get
\begin{equation*}
\Q(z) = - \frac{1}{2}  \sum_{q,r=0}^{m_\alpha-1} \;  \sum_{s=m_\alpha-1}^{2m_\alpha-2} \frac{\kb \alpha s}{(z-z_\alpha)^{q+r-s+1}} \, \int_{\D} \dd x \; \kappa\bigl( \J\alpha q(x), \J\alpha r(x) \bigr)  + O\bigl((z-z_\alpha)^0\bigr).
\end{equation*}
In this expression, we have $-2m_\alpha+2\leq q+r-s \leq m_\alpha-1$. The terms with $-2m_\alpha+2 \leq q+r-s \leq -1$ are regular at $z=z_\alpha$ and can thus be reabsorbed in $O\bigl((z-z_\alpha)^0\bigr)$. We can then write
\begin{equation*}
\Q(z) = - \frac{1}{2} \sum_{p=0}^{m_\alpha-1} \left( \sum_{q,r=0}^{m_\alpha-1} \;  \sum_{s=0}^{2m_\alpha-2} \delta_{p,q+r-s} \, \kb \alpha s \, \int_{\D} \dd x\; \kappa\bigl( \J\alpha q(x), \J\alpha r(x) \bigr) \right) \frac{1}{(z-z_\alpha)^{p+1}} + O\bigl((z-z_\alpha)^0\bigr),
\end{equation*}
where we extended the sum over $0 \leq s \leq 2m_\alpha-2$ as $\kb \alpha s=0$ for $s\in\lbrace 0,\cdots,m_\alpha-1\rbrace$. For fixed $p$, $q$ and $r$, the index $s$ is fixed to the value $q+r-p$ by the presence of the term $\delta_{p,q+r-s}$, under the condition that $0 \leq q+r-p \leq 2m_\alpha-2$. As, we have $p\geq 0$ and $q,r \leq m_\alpha-1$, the second inequality in this condition is always fulfilled. Thus, we get
\begin{equation*}
\Q(z) = - \frac{1}{2} \sum_{p=0}^{m_\alpha-1} \left( \sum_{\substack{q,r=0\\q+r \geq p}}^{m_\alpha-1} \, \kb \alpha {q+r-p} \, \int_{\D} \dd x\; \kappa\bigl( \J\alpha q(x), \J\alpha r(x) \bigr) \right) \frac{1}{(z-z_\alpha)^{p+1}} + O\bigl((z-z_\alpha)^0\bigr).
\end{equation*}
This ends the demonstration, as one gets
\begin{equation*}
\Dc\alpha p = \frac{1}{2} \sum_{\substack{q,r=0\\q+r \geq p}}^{m_\alpha-1} \, \kb \alpha {q+r-p} \, \int_\D \dd x\; \kappa\bigl( \J\alpha q(x), \J\alpha r(x) \bigr)
\end{equation*}
by applying the realisation $\pi$ to Equation \eqref{Eq:SS}.
\end{proof}

\noi We now have enough to prove Proposition \ref{Prop:HamSS}.
\begin{proof}
From the definition \eqref{Eq:HamSpec} and \eqref{Eq:QSpec} of $\Hc(z)$ and $\Q(z)$, we have
\beqz
\Hc(z) = -\vp(z) \bigl( \Q(z) + \Pc(z) \bigr).
\eeqz
By Proposition \ref{Prop:SSinQ} and the definition \eqref{Eq:P} of $\Pc(z)$, we then get
\beqz
\Hc(z) = - \vp(z) \Q_{\text{reg}}(z).
\eeqz
This proves Proposition \ref{Prop:HamSS}, as $\vp(z)$ has a pole of order $m_\alpha$ at $z=z_\alpha$ and $\Q_{\text{reg}}(z)$ is regular at $z=z_\alpha$. 
\end{proof}

\section{Interpolating rational functions}
\label{App:Interpolingrat}
\begin{lemma}\label{Lem:PolesToZeros}
Let $V$ be a vector space, let $N$ and $m_1, \cdots, m_N$ be positive integers and let $z_1,\cdots,z_N$ be pairwise distinct complex numbers. Let us consider the following $V$-valued rational function of a complex parameter $z$:
\beqz
f(z) = \sum_{r=1}^N \sum_{p=0}^{m_r-1} \frac{v_{r,p}}{(z-z_r)^{p+1}},
\eeqz
where the $v_{r,p}$'s, for $r\in\lbrace 1,\cdots,N\rbrace$ and $p\in\lbrace 0,\cdots,m_r-1\rbrace$, are elements of $V$. Let $M=m_1+\cdots+m_N$ and $\ze_1,\cdots,\ze_M$ be pairwise distinct complex numbers, also pairwise distinct with the $z_r$'s. Then the $v_{r,p}$'s are linear combinations of the $f(\ze_i)$'s.
\end{lemma}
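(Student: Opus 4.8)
The plan is to recognise that the passage from the tuple of coefficients $(v_{r,p})$ to the tuple of evaluations $(f(\zeta_i))_{i=1}^M$ is implemented by a \emph{fixed $M\times M$ complex matrix} acting component-wise on $V$, and then to show that this matrix is invertible. Concretely, index the rows by $i\in\lbrace 1,\dots,M\rbrace$ and the columns by the pairs $(r,p)$ with $r\in\lbrace 1,\dots,N\rbrace$, $p\in\lbrace 0,\dots,m_r-1\rbrace$, and set $A_{i,(r,p)}=(\zeta_i-z_r)^{-(p+1)}$. By the very definition of $f$ one has $f(\zeta_i)=\sum_{r,p}A_{i,(r,p)}\,v_{r,p}$ for every $i$. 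Once we know $A\in \mathrm{GL}_M(\C)$, we may simply write $v_{r,p}=\sum_{i=1}^M (A^{-1})_{(r,p),i}\,f(\zeta_i)$, which is exactly the asserted linear combination, with $\C$-valued coefficients; this last step is valid for an arbitrary $V$ precisely because $A^{-1}$ has numerical entries.

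It therefore remains to check $\det A\neq 0$, and for this one may work entirely with scalars. Suppose $c=(c_{r,p})\in\C^M$ satisfies $Ac=0$, and consider the scalar rational function
\[
g(z)=\sum_{r=1}^N\sum_{p=0}^{m_r-1}\frac{c_{r,p}}{(z-z_r)^{p+1}},
\]
so that $g(\zeta_i)=0$ for all $i$. Writing $g$ over the common denominator $D(z)=\prod_{r=1}^N(z-z_r)^{m_r}$, which has degree $M$, gives $g(z)=Q(z)/D(z)$ with $\deg Q\le M-1$. Since the $\zeta_i$ are pairwise distinct and distinct from every $z_r$, the conditions $g(\zeta_i)=0$ force $Q(\zeta_i)=0$ at $M$ distinct points; as $\deg Q\le M-1$, the Vandermonde argument yields $Q\equiv 0$, hence $g\equiv 0$. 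Comparing principal parts at each $z_r$ (uniqueness of the partial fraction decomposition) then gives $c_{r,p}=0$ for all $r,p$. Thus $\ker A=0$ and $A$ is invertible, completing the argument.

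\textbf{Expected difficulty.} I do not anticipate any real obstacle: the statement is a confluent Cauchy/Vandermonde invertibility fact dressed up over a vector space. The one point needing a little discipline is to keep the reduction to scalars clean — one should not argue ``injective $\Rightarrow$ surjective'' directly on the $V$-valued map, since $V$ may be infinite dimensional, but instead first extract invertibility of the numerical matrix $A$ and only afterwards apply $A^{-1}$ to the $V$-valued relation. (If an explicit formula for $A^{-1}$ were wanted, one could compute the relevant residues, but this is not needed for the statement as phrased.)
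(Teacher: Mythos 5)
Your proof is correct, but it reaches the conclusion by a different (dual, non-constructive) route than the paper. The paper's proof is constructive: it multiplies $f$ by $\alpha(z)=\prod_r(z-z_r)^{m_r}$ to get a polynomial $g$ of degree at most $M-1$, writes $g$ explicitly via Lagrange interpolation at the $\zeta_i$, and then recovers each $v_{r,p}$ by differentiating $(z-z_r)^{m_r}f(z)$ the appropriate number of times at $z=z_r$, which yields closed-form coefficients in terms of $\alpha(\zeta_i)$ and the Lagrange polynomials. You instead package the evaluation map into the confluent Cauchy-type matrix $A$ and prove $\ker A=0$: your kernel argument (clear denominators, note $\deg Q\le M-1$, kill $Q$ at $M$ distinct points, then invoke uniqueness of the partial fraction decomposition) is the same degree-count that underlies the paper's interpolation step, so the two proofs share their essential mechanism, but yours stops at existence of $A^{-1}$ while the paper's produces the inverse explicitly. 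Your explicit caution about not arguing ``injective implies bijective'' directly on the $V$-valued map, and instead inverting the scalar matrix first, is well taken and correctly handled; the only thing your version gives up is the explicit formula for the coefficients, which the paper does not actually need for the statement as used either. Both arguments are complete and correct.
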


\begin{proof}
Let us define the following polynomial
\begin{equation*}
\alpha(z) = \prod_{r=1}^N (z-z_r)^{m_r}.
\end{equation*}
It is clear that $g(z)=\alpha(z)f(z)$ is a polynomial of degree at most $M-1$. It is a classical result that such a polynomial is uniquely determined by its evaluation at the $M$ distinct points $\ze_i$. More precisely, we have
\begin{equation*}
g(z) = \sum_{i=1}^M \lambda_i(z) g(\ze_i),
\end{equation*}
with $\lambda_i(z)$ the Lagrange interpolation polynomials. We then have
\begin{equation*}
f(z) = \sum_{i=1}^M \frac{\alpha(\ze_i)\lambda_i(z)}{\alpha(z)} f(\ze_i).
\end{equation*}
For $r\in\lbrace 1,\cdots,N\rbrace$ and $p\in\lbrace 0,\cdots,m_r-1 \rbrace$, we have
\begin{equation*}
v_{r,p} = \frac{1}{(m_r-1-p)!} \left. \frac{\dd^{m_r-1-p}\;}{\dd z^{m_r-1-p}} \Bigl( (z-z_r)^{m_r} f(z) \Bigr) \right|_{z=z_r}.
\end{equation*}
Thus, we have
\begin{equation*}
v_{r,p} = \sum_{i=1}^M \frac{\alpha(\ze_i)}{(m_r-1-p)!} \left. \frac{\dd^{m_r-1-p}\;}{\dd z^{m_r-1-p}} \left( \frac{(z-z_r)^{m_r}\lambda_i(z)}{\alpha(z)} \right) \right|_{z=z_r} \, f(\ze_i),
\end{equation*}
hence the lemma.
\end{proof}

We will also need the following lemma, which is a more precise version of the one above in the case where all poles of the function $f(z)$ are simple. Note however that we change some of the notations between the two lemmas (in particular we exchange the roles of the $z_r$'s and the $\ze_i$'s), to adapt to the different contexts in which they are used in the main text of the article.

\begin{lemma}\label{Lem:ZerosToPoles}
Let $V$ be a vector space, let $N$ be a positive integer and let $z_1,\cdots,z_N, \, \ze_1, \cdots, \ze_N$ be $2N$ pairwise distinct complex numbers. Let us consider the following $V$-valued rational function of a complex parameter $z$:
\beqz
f(z) = \sum_{i=1}^N \frac{v_i}{z-\ze_i},
\eeqz
where the $v_i$'s, for $i\in\lbrace 1,\cdots,N\rbrace$, are elements of $V$. Then the function $f(z)$ is uniquely determined by its evaluation at the $N$ points $z_r$, for $r\in\lbrace 1,\cdots,N\rbrace$. More precisely, one has the following interpolating formula:
\beqz
f(z) = \sum_{r=1}^N \frac{\vp_r(z_r)}{\vp_r(z)} f(z_r),
\eeqz
with
\beqz
\vp_r(z) = \dfrac{\displaystyle \prod_{i=1}^N (z-\ze_i)}{\displaystyle\prod_{\substack{s=1 \\ s \neq r}}^N (z-z_s)}.
\eeqz
Moreover, the residues $v_i$ of $f(z)$ are given by
\beqz
v_i = \sum_{r=1}^N \frac{\vp_r(z_r)}{\vp'_r(\ze_i)} f(z_r).
\eeqz
\end{lemma}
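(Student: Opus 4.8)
The plan is to prove Lemma \ref{Lem:ZerosToPoles} by a direct application of Lemma \ref{Lem:PolesToZeros} together with a partial-fraction computation. The statement asserts three things: that $f(z)$ is determined by its values at the $z_r$, the explicit interpolation formula, and the formula for the residues $v_i$. I would treat the interpolation formula as the core claim and deduce the other two as immediate consequences.

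First I would verify the interpolation formula by checking that the right-hand side, call it $\widetilde f(z) = \sum_{r=1}^N \frac{\vp_r(z_r)}{\vp_r(z)} f(z_r)$, is a rational function with the correct pole structure and agrees with $f(z)$ at $N+1$ points (or equivalently has the same poles with the same residues). Concretely, each $\vp_r(z)$ is a ratio of a degree-$N$ monic polynomial $\prod_i (z-\ze_i)$ over a degree-$(N-1)$ polynomial $\prod_{s\neq r}(z-z_s)$, so $1/\vp_r(z) = \frac{\prod_{s\neq r}(z-z_s)}{\prod_i(z-\ze_i)}$ is a proper rational function whose only poles are simple poles at the $\ze_i$. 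Hence $\widetilde f(z)$ has at worst simple poles at $\ze_1,\dots,\ze_N$ and vanishes at infinity, so $\widetilde f(z) = \sum_{i=1}^N \frac{w_i}{z-\ze_i}$ for some $w_i \in V$; thus $f-\widetilde f$ has the same form. It then suffices to show $f(z_r) = \widetilde f(z_r)$ for all $r$, since a function of the form $\sum_i \frac{u_i}{z-\ze_i}$ that vanishes at the $N$ distinct points $z_r$ must be identically zero — this is exactly the uniqueness half of Lemma \ref{Lem:PolesToZeros} (with all multiplicities equal to $1$), or can be argued directly since clearing denominators yields a polynomial of degree $\le N-1$ with $N$ roots. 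The evaluation $\widetilde f(z_r)$ reduces to checking that $\frac{\vp_s(z_s)}{\vp_s(z_r)} = \delta_{rs}$: for $s\neq r$ the numerator $\prod_i(z_r-\ze_i)$ of $\vp_s(z_r)$ is nonzero while its denominator $\prod_{t\neq s}(z_r - z_t)$ contains the factor $(z_r-z_r)=0$, so $1/\vp_s(z_r)=0$; for $s=r$ one gets $1$ trivially. This establishes both the uniqueness statement and the interpolation formula.

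Next I would extract the residue formula. Since $f(z) = \sum_{r=1}^N \frac{\vp_r(z_r)}{\vp_r(z)} f(z_r)$ and $v_i = \res_{z=\ze_i} f(z)\,\dd z = \lim_{z\to\ze_i}(z-\ze_i) f(z)$, I compute $\res_{z=\ze_i}\frac{1}{\vp_r(z)} = \res_{z=\ze_i}\frac{\prod_{s\neq r}(z-z_s)}{\prod_j(z-\ze_j)}$. Because the pole at $\ze_i$ is simple, this residue equals $\frac{\prod_{s\neq r}(\ze_i - z_s)}{\prod_{j\neq i}(\ze_i-\ze_j)}$. On the other hand, differentiating $\vp_r(z) = \frac{\prod_j(z-\ze_j)}{\prod_{s\neq r}(z-z_s)}$ and evaluating at $z=\ze_i$ (where the numerator vanishes) gives $\vp_r'(\ze_i) = \frac{\prod_{j\neq i}(\ze_i-\ze_j)}{\prod_{s\neq r}(\ze_i-z_s)}$, so $\res_{z=\ze_i}\frac{1}{\vp_r(z)} = \frac{1}{\vp_r'(\ze_i)}$. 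Multiplying by $\vp_r(z_r)$ and summing over $r$ yields $v_i = \sum_{r=1}^N \frac{\vp_r(z_r)}{\vp_r'(\ze_i)} f(z_r)$, as claimed.

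I do not expect any serious obstacle here; the lemma is elementary and the only mild subtlety is bookkeeping with the two families of nodes and making sure every denominator that appears is nonzero, which is guaranteed by the hypothesis that $z_1,\dots,z_N,\ze_1,\dots,\ze_N$ are $2N$ pairwise distinct numbers. The one step that needs a little care is the reduction ``a function $\sum_i \frac{u_i}{z-\ze_i}$ vanishing at $N$ points is zero'': I would either cite Lemma \ref{Lem:PolesToZeros} directly or, more self-containedly, note that multiplying through by $\prod_i(z-\ze_i)$ produces a $V$-valued polynomial of degree at most $N-1$ vanishing at the $N$ distinct points $z_r$, hence identically zero (applying this coordinatewise in $V$, or to $\langle \ell, \cdot\rangle$ for every linear functional $\ell$ on $V$), which forces all $u_i=0$.
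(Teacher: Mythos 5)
Your proposal is correct and follows essentially the same route as the paper's proof: both hinge on the evaluation identity $\vp_s(z_s)/\vp_s(z_r)=\delta_{rs}$, reduce the uniqueness step to a $V$-valued polynomial of degree at most $N-1$ vanishing at the $N$ distinct points $z_r$, and obtain the residue formula from $\lim_{z\to\ze_i}(z-\ze_i)/\vp_r(z)=1/\vp_r'(\ze_i)$ using $\vp_r(\ze_i)=0$. The only cosmetic difference is that you first argue the interpolant itself has the form $\sum_i w_i/(z-\ze_i)$ before subtracting, whereas the paper directly multiplies the difference by $\prod_i(z-\ze_i)$; the content is identical.
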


\begin{proof}
Let us define the functions
\begin{equation*}
g(z) = f(z) - \sum_{r=1}^N \frac{\vp_r(z_r)}{\vp_r(z)} f(z_r) \;\;\;\;\; \text{ and } \;\;\;\;\; h(z) = g(z) \prod_{i=1}^N (z-\ze_i).
\end{equation*}
It is clear that
\begin{equation*}
\left.\frac{\vp_s(z_s)}{\vp_s(z)} \right|_{z=z_r} = \delta_{rs},
\end{equation*}
for all $r,s\in\lbrace 1,\cdots,N\rbrace$, hence
\begin{equation*}
g(z_r) = 0, \; \text{ or again } \; h(z_r) = 0, \;\;\;\;\; \forall \, r\in\lbrace 1,\cdots,N\rbrace.
\end{equation*}
It is also clear that $h(z)$ is a polynomial of degree at most $N-1$. As it vanishes at $N$ different points, it is identically zero, which proves
\begin{equation*}
f(z) = \sum_{r=1}^N \frac{\vp_r(z_r)}{\vp_r(z)} f(z_r).
\end{equation*}
We then get the last statement of the lemma by
\begin{equation*}
v_i = \lim_{z\to\ze_i} (z-\ze_i)f(z) = \sum_{r=1}^N \left( \lim_{z\to\ze_i} \frac{z-\ze_i}{\vp_r(z)} \right) \vp_r(z_r) f(z_r) = \sum_{r=1}^N \frac{1}{\vp_r'(\ze_i)} \vp_r(z_r) f(z_r),
\end{equation*}
where we used $\vp_r(\ze_i)=0$.
\end{proof}

\section{Coupling and decoupling of realisations of local AGM}
\label{App:Decoupling}

In this appendix, we fill up missing details about the method for coupling and decoupling realisations of local AGM which was summarised in Subsection \ref{SubSubSec:Coupling}. The main results of the appendix are the proofs of Theorem \ref{Thm:Decoupling} and Proposition \ref{Prop:DecouplingLax}.

\subsection{Coupling and decoupling the twist function and the Gaudin Lax matrix}

\paragraph{Twist function and Gaudin Lax matrix of the coupled model.}
We use the notations of Subsection \ref{SubSubSec:Coupling}. In particular, we are considering a realisation of AGM with observables $\Ac_1 \otimes \Ac_2$, corresponding to the Takiff datum $\lt_{1\otimes2}$ and the realisation $\pi_{1\otimes 2}$ defined in \eqref{Eq:CoupledDatum} and \eqref{Eq:CoupledReal}. Any observable $\mathcal{O}$ in $\Ac_1$ can be seen as an observable $\mathcal{O}\otimes 1$ in $\Ac_1 \otimes \Ac_2$. For simplicity, we shall still write this observable $\mathcal{O}$. In particular, one can consider inside the algebra $\Ac_1 \otimes \Ac_2$ the Takiff currents $\J\alpha p$ for sites $\alpha$ both in $\Si_1$ and $\Si_2$. These currents form the realisation $\pi_{1\otimes 2}$ of the Takiff algebra $\Tc_{\lt_{1\otimes 2}}$.

The way we define the level at infinity $\ell^\infty$ and the positions $w_\alpha$ of the sites of the coupled model is explained in Paragraph \ref{Par:Coupling}. Following this definition, the twist functions of the decoupled models are
\beqz
\vp_k(z) = \sum_{\alpha\in\Si_k} \sum_{p=0}^{m_\alpha-1} \frac{\ls\alpha p}{(z-\po_\alpha)^{p+1}} - \ell^\infty = \chi_k(z) - \ell^\infty,
\eeqz
for $k$ equal to 1 or 2, while the twist function of the coupled model is
\begin{equation}\label{Eq:CoupledTwist}
\vp_{1\otimes 2,\gamma}(z) = \sum_{\alpha\in\Si_1} \sum_{p=0}^{m_\alpha-1} \frac{\ls\alpha p}{(z-\po_\alpha)^{p+1}} + \sum_{\alpha\in\Si_2} \sum_{p=0}^{m_\alpha-1} \frac{\ls\alpha p}{(z-\po_\alpha-\gamma^{-1})^{p+1}} - \ell^\infty = \chi_1(z) + \chi_2(z-\gamma^{-1}) - \ell^\infty.
\end{equation}
Similarly, the Gaudin Lax matrix of the coupled model is given by
\beqz
\Sg_{1\otimes2,\gamma}(z,x) = \sum_{\alpha\in\Si_1} \sum_{p=0}^{m_\alpha-1} \frac{\J\alpha p(x)}{(z-\po_\alpha)^{p+1}} + \sum_{\alpha\in\Si_2} \sum_{p=0}^{m_\alpha-1} \frac{\J\alpha p(x)}{(z-\po_\alpha-\gamma^{-1})^{p+1}} = \Sg_1(z,x) + \Sg_2(z-\gamma^{-1},x).
\eeqz

\paragraph{Decoupling limit.} Let us consider the decoupling limit $\gamma \to 0$, as defined in Subsection \ref{SubSubSec:Coupling}. It is easy to check that
\begin{equation}\label{Eq:Limit}
\vp_{1\otimes2,\gamma}(z) \xrightarrow{\gamma \to 0} \vp_1(z) \;\;\;\;\; \text{ and } \;\;\;\;\; \Sg_{1\otimes2,\gamma}(z,x) \xrightarrow{\gamma \to 0} \Sg_1(z,x).
\end{equation}
Thus, the twist function and Gaudin Lax matrix of the coupled model go to the ones of the first model in the decoupling limit.

This asymmetry between the two decoupled models can seem surprising at first. It actually comes from the way we introduced the coupling parameter $\gamma$ in Equation \eqref{Eq:CoupledPositions}, as a shift of the positions of the second model with respect to the positions of the second one, which we keep fixed. One can also consider the same model by keeping the position of the second model fixed while shifting the positions of the first. This is done by translating the spectral parameter by $\gamma^{-1}$. For this new spectral parameter, the twist function and Gaudin Lax matrix of the coupled model go to the ones of the first model in the decoupling limit, \textit{i.e.}
\begin{equation}\label{Eq:LimitShift}
\vp_{1\otimes2,\gamma}(z+\gamma^{-1}) \xrightarrow{\gamma \to 0} \vp_2(z) \;\;\;\;\; \text{ and } \;\;\;\;\; \Sg_{1\otimes2,\gamma}(z+\gamma^{-1},x) \xrightarrow{\gamma \to 0} \Sg_2(z,x).
\end{equation}

\subsection{Zeros of the coupled twist function}

\paragraph{Decoupling limit of the zeros.} In this paragraph, we will need the following lemma, which is a classical result of algebraic geometry:

\begin{lemma}\label{Lem:LimitSol}
Let $f^{(\gamma)}(z)=0$ be an algebraic equation on a complex number $z$, depending algebraically on a complex parameter $\gamma$. If $x(0)$ is an isolated solution when $\gamma=0$, then for sufficiently small $\gamma$, there exists an isolated solution $x(\gamma)$, depending algebraically on $\gamma$, such that
\beqz
x(\gamma) = x(0) + O(\gamma).
\eeqz
\end{lemma}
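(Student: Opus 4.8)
The plan is to reduce the statement to the Newton--Puiseux theorem on the local structure of plane algebraic curves, together with Rouch\'e's theorem for the qualitative part. First I would clear denominators so that the equation $f^{(\gamma)}(z)=0$ is replaced by $F(\gamma,z)=0$ for a polynomial $F$ in $z$ whose coefficients depend algebraically on $\gamma$ (and, shrinking the neighbourhood of $\gamma=0$ if necessary, holomorphically, by passing to a single branch of that algebraic dependence). The hypothesis that $x(0)$ is an isolated solution of $f^{(0)}(z)=0$ means precisely that $F(0,\cdot)$ is not identically zero, so it is a nonzero polynomial with $F(0,x(0))=0$; let $m\ge 1$ be the multiplicity of this root.

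Next I would establish the existence of the perturbed solution. Choose $r>0$ small enough that $x(0)$ is the only zero of $F(0,\cdot)$ in the closed disc $\overline{D(x(0),r)}$, with no zero of $F(0,\cdot)$ on $\partial D(x(0),r)$. By continuity of the coefficients of $F$ in $\gamma$ and compactness of $\partial D(x(0),r)$, there is $\delta>0$ such that $|F(\gamma,z)-F(0,z)|<|F(0,z)|$ on $\partial D(x(0),r)$ for all $|\gamma|<\delta$; Rouch\'e's theorem then gives exactly $m$ zeros (counted with multiplicity) of $F(\gamma,\cdot)$ inside $D(x(0),r)$. Picking one of them defines $x(\gamma)$, and letting $r\to 0$ shows $x(\gamma)\to x(0)$ as $\gamma\to 0$, so the solution degenerates continuously onto $x(0)$ and stays isolated for small $\gamma$.

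It then remains to get the algebraic dependence and the rate $x(\gamma)=x(0)+O(\gamma)$. The branch of the curve $\{F=0\}$ through $(0,x(0))$ selected above is, by the Newton--Puiseux theorem, parametrised near that point by a convergent series $z=x(0)+\sum_{j\ge 1}c_j\,\gamma^{j/k}$ for some integer $k\ge 1$; in particular $x(\gamma)$ is an algebraic function of $\gamma$ tending to $x(0)$. When the root $x(0)$ is simple, i.e. $\partial_z F(0,x(0))\neq 0$ --- which is exactly the situation in which the lemma gets applied, since the zeros $\ze_i^{(k)}$ of the twist functions $\vp_k$ are assumed simple --- one has $m=k=1$ and the holomorphic implicit function theorem applies directly, yielding a single-valued holomorphic $x(\gamma)$ with $x(\gamma)=x(0)+O(\gamma)$. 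I expect the only genuine subtlety to be this last point: fixing the precise meaning of ``isolated'' and the order of the error term, since for a truly multiple root one only obtains $x(\gamma)=x(0)+O(\gamma^{1/k})$; for the uses made of the lemma in Theorem~\ref{Thm:Decoupling} and Proposition~\ref{Prop:DecouplingLax} the simple-root case is what is needed.
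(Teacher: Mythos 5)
The paper offers no proof of this lemma at all --- it is simply invoked as ``a classical result of algebraic geometry'' --- so your argument fills a gap rather than paralleling an existing one. Your route is correct and is essentially the standard continuity-of-roots argument: clearing denominators and fixing a branch so the coefficients of $F(\gamma,\cdot)$ are holomorphic near $\gamma=0$, noting that the isolated-solution hypothesis forces $F(0,\cdot)\not\equiv 0$, producing the $m$ perturbed roots in a small disc by Rouch\'e, and then getting algebraicity and the asymptotics from Newton--Puiseux, with the implicit function theorem handling the simple-root case. Your closing observation is the most valuable part and is not a defect of your proof but of the statement: taken literally the lemma is false for a multiple root, since $F(\gamma,z)=z^2-\gamma$ has the isolated root $x(0)=0$ whose perturbations are $\pm\gamma^{1/2}=O(\gamma^{1/2})$, not $O(\gamma)$; the stated error term requires $\partial_z F(0,x(0))\neq 0$. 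This is harmless for the paper, because in the only place the lemma is used (the proof of Lemma \ref{Lem:DecouplingZeros}) it is applied to the simple zeros $\ze_i^{(k)}$ of the twist functions $\vp_k$ --- indeed the surrounding text there derives ``isolated'' from ``simple'' --- so your proof both establishes what is actually needed and pins down the hypothesis under which the advertised $O(\gamma)$ rate genuinely holds.
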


We denote by $f_1(z)$, $f_2(z)$ and $f^{(\gamma)}_{1\otimes2}(z)$ the numerators of the twist function $\vp_1(z)$, $\vp_2(z)$ and $\vp_{1\otimes2,\gamma}(z)$, which are then polynomials in $z$. One can choose the normalisation of these polynomials such that both $f^{(\gamma)}_{1\otimes2}(z)$ and $g^{(\gamma)}_{1\otimes2}=f^{(\gamma)}_{1\otimes2}(z+\gamma^{-1})$ depend algebraically on $\gamma$ and such that
\beqz
f^{(0)}_{1\otimes2}(z) = f_1(z) \;\;\;\;\;\; \text{ and } \;\;\;\;\;\; g^{(0)}_{1\otimes2}(z) = f_2(z),
\eeqz
in accordance with the limits \eqref{Eq:Limit} and \eqref{Eq:LimitShift}.

As in Subsection \ref{SubSubSec:Coupling}, we denote by $\ze_i^{(1)}$, $i\in\lbrace 1,\cdots, M_1 \rbrace$, the zeros of $\vp_1(z)$ and by $\ze_i^{(2)}$, $i\in\lbrace M_1+1,\cdots, M \rbrace$, the zeros of $\vp_2(z)$. Let us fix $i\in\lbrace 1,\cdots,M_1\rbrace$. By definition, $\ze_i^{(1)}$ is a solution of the algebraic equation $f_1(z)=f^{(0)}_{1\otimes2}(z)=0$. Moreover, as we suppose that $\vp_1(z)$ has simple zeros, it is an isolated solution. According to Lemma \ref{Lem:LimitSol}, for $\gamma$ small enough, there exists an isolated solution $\ze_i(\gamma)$ of the algebraic equation $f^{(\gamma)}_{1\otimes2}(z)=0$ such that
\begin{equation}\label{Eq:AsympZero1}
\ze_i(\gamma) = \ze_i^{(1)} + O(\gamma).
\end{equation}
By definition, $\ze_i(\gamma)$ is a simple zero of the coupled twist function $\vp_{1\otimes2,\gamma}(z)$.

Let us now fix $i\in\lbrace M_1+1,\cdots,M \rbrace$. By definition, $\ze_i^{(2)}$ is an isolated solution of $f_2(z)=g^{(0)}_{1\otimes2}(z)=0$. Thus, by Lemma \ref{Lem:LimitSol}, for $\gamma$ small enough, there exists an isolated solution $\widetilde{\ze}_i(\gamma)$ of the algebraic equation $g^{(\gamma)}_{1\otimes2}(z)=0$ such that
\beqz
\widetilde{\ze}_i(\gamma) = \ze_i^{(2)} + O(\gamma).
\eeqz
Let us then define
\beqz
\ze_i(\gamma) = \widetilde{\ze}_i(\gamma) + \frac{1}{\gamma}.
\eeqz
We then have
\beqz
f^{(\gamma)}_{1\otimes2} \bigl( \ze_i(\gamma) \bigr) = f^{(\gamma)}_{1\otimes2} \bigl( \widetilde{\ze}_i(\gamma) + \gamma^{-1} \bigr) = g^{(\gamma)}_{1\otimes2} \bigl( \widetilde{\ze}_i(\gamma) \bigr) = 0.
\eeqz
Thus, $\ze_i(\gamma)$ is a simple zero of the coupled twist function $\vp_{1\otimes2,\gamma}(z)$, such that
\begin{equation}\label{Eq:AsympZero2}
\ze_i(\gamma) = \frac{1}{\gamma} + \ze_i^{(2)} + O(\gamma).
\end{equation}
Considering the asymptotic expansions \eqref{Eq:AsympZero1} and \eqref{Eq:AsympZero2} and the fact that the zeros $\ze_i^{(k)}$ are simple, it is clear that the zeros $\ze_i(\gamma)$ of $\vp_{1\otimes2,\gamma}(z)$ are all different, at least for $\gamma$ small enough. As there are $M$ of them, they exhaust the whole set of zeros of $\vp_{1\otimes2,\gamma}(z)$. In conclusion, we have proved the following result, which makes the first part of Theorem \ref{Thm:Decoupling}.

\begin{lemma}\label{Lem:DecouplingZeros}
For $\gamma$ small enough, one can order the zeros $\ze_i(\gamma)$, $i\in\lbrace 1,\cdots,M\rbrace$ in such a way that $\ze_i(\gamma)$ is canonically associated with the zero $\ze_i^{(k)}$, with $k$ equal to 1 or 2 whether $i\in\lbrace 1,\cdots, M_1\rbrace$ or $i\in\lbrace M_1+1,\cdots,M \rbrace$. More precisely, this canonical labelling is the unique one satisfying the following asymptotic expansions:
\begin{equation*}
\ze_i(\gamma) = \ze_i^{(1)} + O(\gamma), \;\;\; \forall \, i \in\lbrace1,\cdots,M_1\rbrace \;\;\;\;\; \text{ and } \;\;\;\;\; \ze_i(\gamma) = \frac{1}{\gamma} + \ze_i^{(2)} + O(\gamma), \;\;\; \forall \, i\in\lbrace M_1+1,\cdots,M\rbrace
\end{equation*}
Moreover, for $\gamma$ small enough, these zeros are simple.
\end{lemma}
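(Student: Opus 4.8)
The plan is to read off Lemma \ref{Lem:DecouplingZeros} from the classical continuity-of-roots statement of Lemma \ref{Lem:LimitSol}, applied separately to the two ``blocks'' of zeros, and then to close the argument with a degree count that forces the $M$ roots produced in this way to be \emph{all} the zeros of $\vp_{1\otimes2,\gamma}$.

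First I would set up the numerator polynomials carefully. Writing $\vp_{1\otimes2,\gamma}(z)$ over the common denominator $\prod_{\alpha\in\Si_1}(z-\po_\alpha)^{m_\alpha}\prod_{\alpha\in\Si_2}(z-\po_\alpha-\gamma^{-1})^{m_\alpha}$ and clearing the $\gamma^{-1}$'s, one obtains a numerator $f^{(\gamma)}_{1\otimes2}(z)$ that, after an appropriate $\gamma$-dependent rescaling, is a polynomial in $z$ depending algebraically on $\gamma$ and satisfies $f^{(0)}_{1\otimes2}=f_1$, the numerator of $\vp_1$; this is just the algebraic shadow of the pointwise limit $\vp_{1\otimes2,\gamma}(z)\to\vp_1(z)$ in \eqref{Eq:Limit}, which holds because $\chi_2(z-\gamma^{-1})\to0$. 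Symmetrically, the shifted polynomial $g^{(\gamma)}_{1\otimes2}(z)\coloneqq f^{(\gamma)}_{1\otimes2}(z+\gamma^{-1})$ can be normalised so that $g^{(0)}_{1\otimes2}=f_2$, which is the algebraic shadow of \eqref{Eq:LimitShift}.

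Next I would invoke Lemma \ref{Lem:LimitSol} twice. For $i\in\lbrace1,\dots,M_1\rbrace$, the point $\ze_i^{(1)}$ is, by the standing assumption that the zeros of $\vp_1$ are simple, an isolated solution of $f^{(\gamma)}_{1\otimes2}(z)=0$ at $\gamma=0$, so Lemma \ref{Lem:LimitSol} gives a solution $\ze_i(\gamma)=\ze_i^{(1)}+O(\gamma)$ for small $\gamma$; since it stays bounded away from the poles $\po_\alpha$ it is a genuine zero of $\vp_{1\otimes2,\gamma}$, and this yields \eqref{Eq:AsympZero1}. For $i\in\lbrace M_1+1,\dots,M\rbrace$, applying Lemma \ref{Lem:LimitSol} to the simple zero $\ze_i^{(2)}$ of $g^{(0)}_{1\otimes2}=f_2$ gives $\widetilde\ze_i(\gamma)=\ze_i^{(2)}+O(\gamma)$ with $g^{(\gamma)}_{1\otimes2}(\widetilde\ze_i(\gamma))=0$, hence $\ze_i(\gamma)\coloneqq\widetilde\ze_i(\gamma)+\gamma^{-1}$ solves $f^{(\gamma)}_{1\otimes2}(\ze_i(\gamma))=0$ and satisfies \eqref{Eq:AsympZero2}. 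Since $\ell^\infty\neq0$, the twist function $\vp_{1\otimes2,\gamma}$ is a quotient of coprime polynomials with numerator of degree $M=M_1+M_2$, so it has exactly $M$ zeros. The $M$ points just produced are pairwise distinct for $\gamma$ small enough --- within the first block because the $\ze_i^{(1)}$ are distinct, within the second because the $\ze_i^{(2)}$ are distinct, and across blocks because the first block stays in a fixed bounded region while the second block runs off to infinity like $\gamma^{-1}$ --- so they exhaust all zeros of $\vp_{1\otimes2,\gamma}$. In particular every zero has multiplicity one, and the asymptotic expansions \eqref{Eq:AsympZero1}--\eqref{Eq:AsympZero2} pin down the claimed canonical labelling uniquely, since the leading data (the limit $\ze_i^{(1)}$ for a bounded zero, the divergent part $\gamma^{-1}$ together with $\ze_i^{(2)}$ for the others) determine which $\ze_i(\gamma)$ corresponds to which $\ze_i^{(k)}$.

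The step I expect to be the main obstacle is the bookkeeping in the second paragraph: making precise that a single $\gamma$-dependent rescaling turns the naive numerator (which contains negative powers of $\gamma$ through the shifted poles) into something genuinely algebraic in $\gamma$ with $f^{(0)}_{1\otimes2}=f_1$, and that the \emph{same} polynomial, shifted by $\gamma^{-1}$, is likewise algebraic with $g^{(0)}_{1\otimes2}=f_2$; this is also where the degree drop from $M$ to $M_1$ at $\gamma=0$ must be reconciled with the $M_2$ zeros escaping to infinity. Once this is in place, the two applications of Lemma \ref{Lem:LimitSol} and the counting argument are routine. (This lemma is the first half of Theorem \ref{Thm:Decoupling}; the convergence of the coupled Hamiltonian in the second half then follows by writing each $\Q_i(\gamma)$ as a contour integral of $\Q_{1\otimes2,\gamma}(z)\,\dd z$ around $\ze_i(\gamma)$ and passing to the limit using \eqref{Eq:Limit}--\eqref{Eq:LimitShift} and the coordinate-invariance of residues.)
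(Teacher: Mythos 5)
Your proposal is correct and follows essentially the same route as the paper's proof in Appendix \ref{App:Decoupling}: two applications of Lemma \ref{Lem:LimitSol} to the normalised numerator $f^{(\gamma)}_{1\otimes2}$ and its shift $g^{(\gamma)}_{1\otimes2}(z)=f^{(\gamma)}_{1\otimes2}(z+\gamma^{-1})$, followed by the distinctness and counting argument to conclude that the $M$ roots so produced exhaust the zeros and are simple. Your extra care about the $\gamma$-dependent rescaling of the numerator and the explicit degree count (using $\ell^\infty\neq0$) only make explicit what the paper states in one line.
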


\paragraph{Derivatives of the coupled twist function at the zeros.} To compute the expression \eqref{Eq:QHZeros} of the charges $\Q_i$ for the coupled model, one needs the value of the derivative of the coupled twist function $\vp_{1\otimes2,\gamma}'(z)$ evaluated at the zeros $\ze_i(\gamma)$. Let us first consider $i\in\lbrace 1,\cdots,M_1 \rbrace$. Using Equation \eqref{Eq:CoupledTwist} and the asymptotic expansion \eqref{Eq:Limit}, we get
\begin{eqnarray*}
\vp_{1\otimes2,\gamma}'\bigl( \ze_i(\gamma) \bigr)
 & = & -\sum_{\alpha\in\Si_1} \sum_{p=0}^{m_\alpha-1} \frac{(p+1)\ls\alpha p}{\bigl(\ze_i(\gamma)-\po_\alpha\bigr)^{p+2}} - \sum_{\alpha\in\Si_2} \sum_{p=0}^{m_\alpha-1} \frac{(p+1)\ls\alpha p}{\bigl(\ze_i(\gamma)-\po_\alpha-\gamma^{-1}\bigr)^{p+2}}
 \\
 & = & -\sum_{\alpha\in\Si_1} \sum_{p=0}^{m_\alpha-1} \frac{(p+1)\ls\alpha p}{\bigl(\ze^{(1)}_i
-\po_\alpha + O(\gamma)\bigr)^{p+2}} - \sum_{\alpha\in\Si_2} \sum_{p=0}^{m_\alpha-1} \frac{\gamma^{p+2}(p+1)\ls\alpha p}{\bigl(\gamma\ze^{(1)}_i-\gamma\po_\alpha-1+O(\gamma^2)\bigr)^{p+2}} \\
& = & -\sum_{\alpha\in\Si_1} \sum_{p=0}^{m_\alpha-1} \frac{(p+1)\ls\alpha p}{\bigl(\ze^{(1)}_i
-\po_\alpha\bigr)^{p+2}} + O(\gamma) \\
& = & \vp'_1\left( \ze_i^{(1)} \right) + O(\gamma).
\end{eqnarray*}
A similar computation can be done for $i\in\lbrace M_1+1,\cdots,M \rbrace$. In the end, we get the following result.

\begin{lemma}\label{Lem:LimitDerTwist}
Let $i\in\lbrace 1,\cdots,M \rbrace$. Then, one has
\beqz
\vp_{1\otimes2,\gamma}'\bigl( \ze_i(\gamma) \bigr) = \vp'_k\left( \ze_i^{(k)} \right) + O(\gamma),
\eeqz
with $k$ equal to $1$ or $2$ whether $i\in\lbrace 1,\cdots,M_1 \rbrace$ or $i\in\lbrace M_1+1,\cdots,M \rbrace$.
\end{lemma}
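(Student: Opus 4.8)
The plan is to prove the statement by a direct asymptotic evaluation of the explicit partial fraction expression for $\vp_{1\otimes2,\gamma}'(z)$, feeding in the precise behaviour of the zeros established in Lemma \ref{Lem:DecouplingZeros}. Differentiating the coupled twist function \eqref{Eq:CoupledTwist} term by term gives
\begin{equation*}
\vp_{1\otimes2,\gamma}'(z) = -\sum_{\alpha\in\Si_1} \sum_{p=0}^{m_\alpha-1} \frac{(p+1)\ls\alpha p}{(z-\po_\alpha)^{p+2}} - \sum_{\alpha\in\Si_2} \sum_{p=0}^{m_\alpha-1} \frac{(p+1)\ls\alpha p}{(z-\po_\alpha-\gamma^{-1})^{p+2}},
\end{equation*}
and the whole content of the lemma is to determine the limiting behaviour of each of the two sums once $z$ is set equal to $\ze_i(\gamma)$.

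First I would treat $i\in\lbrace 1,\cdots,M_1\rbrace$, for which Lemma \ref{Lem:DecouplingZeros} provides $\ze_i(\gamma)=\ze_i^{(1)}+O(\gamma)$. In the first (``$\Si_1$'') sum each denominator $\bigl(\ze_i(\gamma)-\po_\alpha\bigr)^{p+2}$ tends to $\bigl(\ze_i^{(1)}-\po_\alpha\bigr)^{p+2}$, which is nonzero since the zeros of $\vp_1$ are distinct from its poles; hence this sum converges to exactly $\vp_1'\bigl(\ze_i^{(1)}\bigr)$ up to an $O(\gamma)$ correction. In the second (``$\Si_2$'') sum the relevant quantity is $\ze_i(\gamma)-\po_\alpha-\gamma^{-1}=-\gamma^{-1}\bigl(1-\gamma(\ze_i^{(1)}-\po_\alpha)+O(\gamma^2)\bigr)$, so each denominator grows like $\gamma^{-(p+2)}$ and the corresponding term is $O(\gamma^{p+2})=O(\gamma)$; the entire $\Si_2$ contribution therefore vanishes in the limit. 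Combining the two estimates yields $\vp_{1\otimes2,\gamma}'\bigl(\ze_i(\gamma)\bigr)=\vp_1'\bigl(\ze_i^{(1)}\bigr)+O(\gamma)$.

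The case $i\in\lbrace M_1+1,\cdots,M\rbrace$ is handled by the mirror-image argument, using instead $\ze_i(\gamma)=\gamma^{-1}+\ze_i^{(2)}+O(\gamma)$ from Lemma \ref{Lem:DecouplingZeros}. Now it is the $\Si_2$ sum that is ``near'': $\ze_i(\gamma)-\po_\alpha-\gamma^{-1}=\ze_i^{(2)}-\po_\alpha+O(\gamma)$ produces $\vp_2'\bigl(\ze_i^{(2)}\bigr)$ in the limit, while the $\Si_1$ sum now has denominators $\bigl(\ze_i(\gamma)-\po_\alpha\bigr)^{p+2}\sim\gamma^{-(p+2)}$ and is suppressed by the same dominant-balance estimate. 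Equivalently, one can invoke the translated-spectral-parameter description \eqref{Eq:LimitShift}, under which this case maps onto the previous one with the roles of models $1$ and $2$ exchanged.

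The computation itself is elementary; the one point requiring care is the uniformity of the $O(\gamma)$ bookkeeping. Specifically, I would make sure that for all sufficiently small $\gamma$ the argument $\ze_i(\gamma)$ stays uniformly bounded away from the near poles $\po_\alpha$ (so that the near denominators are bounded below) and that the leading factor in $1-\gamma(\ze_i^{(1)}-\po_\alpha)+O(\gamma^2)$ remains bounded away from zero (so that the far denominators genuinely scale like $\gamma^{-(p+2)}$); both properties follow from the simplicity and mutual distinctness of the limiting zeros and the poles. This dominant-balance control of the far sites is the only mildly delicate step, and once it is in place the stated asymptotics follow immediately.
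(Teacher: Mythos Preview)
Your proposal is correct and follows essentially the same approach as the paper: differentiate the partial fraction expansion \eqref{Eq:CoupledTwist} term by term, then use the asymptotics of Lemma \ref{Lem:DecouplingZeros} to show that the ``near'' sum reproduces $\vp_k'\bigl(\ze_i^{(k)}\bigr)$ while the ``far'' sum is suppressed by powers of $\gamma$. The paper carries out the $k=1$ case explicitly and then appeals to a similar computation for $k=2$, exactly as you do; your additional remarks on uniformity of the $O(\gamma)$ estimates are a welcome bit of extra care but do not alter the argument.
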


\paragraph{Reality of the coupled zeros.} The following result is useful for Paragraph \ref{Par:CouplingRelat}.

\begin{lemma}\label{Lem:RealityCoupling}
Let $i\in\lbrace 1,\cdots,M \rbrace$ and $k$ be equal to $1$ or $2$ whether $i\in\lbrace 1,\cdots,M_1 \rbrace$ or $i\in\lbrace M_1+1,\cdots,M \rbrace$. Let us suppose that $\ze_i^{(k)}$ is real. Then for $\gamma$ small enough, the corresponding zero $\ze_i(\gamma)$ of the coupled twist function is also real. Moreover, for $\gamma$ small enough, the quantities $\vp_{1\otimes2,\gamma}'\bigl( \ze_i(\gamma) \bigr)$ and $\vp'_k\left( \ze_i^{(k)} \right)$ are of the same sign.
\end{lemma}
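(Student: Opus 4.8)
The plan is to exploit the reality conditions obeyed by the twist functions together with the asymptotic control on the coupled zeros already established in Lemmas \ref{Lem:DecouplingZeros} and \ref{Lem:LimitDerTwist}. Recall from \eqref{Eq:RealSPhi} that any twist function $\vp$ of a local AGM satisfies $\overline{\vp(z)}=\vp(\overline z)$; since the coupling parameter $\gamma$ and all the shifts appearing in \eqref{Eq:CoupledPositions} and \eqref{Eq:CoupledTwist} are real, this holds in particular for $\vp_1$, $\vp_2$ and $\vp_{1\otimes2,\gamma}$. Equivalently, the numerator polynomials $f_1(z)$, $f_2(z)$, $f^{(\gamma)}_{1\otimes2}(z)$ and $g^{(\gamma)}_{1\otimes2}(z)=f^{(\gamma)}_{1\otimes2}(z+\gamma^{-1})$ all have real coefficients.

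First I would prove that $\ze_i(\gamma)$ is real for $\gamma$ small enough. Take $i\in\lbrace 1,\dots,M_1\rbrace$, so $k=1$; the case $i\in\lbrace M_1+1,\dots,M\rbrace$ is identical after replacing $f^{(\gamma)}_{1\otimes2}$ by $g^{(\gamma)}_{1\otimes2}$, $\ze_i(\gamma)$ by $\widetilde\ze_i(\gamma)$, and adding the real number $\gamma^{-1}$ at the end. Since $\ze_i^{(1)}$ is a \emph{simple} real zero of $f^{(0)}_{1\otimes2}=f_1$, Lemma \ref{Lem:LimitSol} provides, for small $\gamma$, a \emph{unique} zero $\ze_i(\gamma)$ of $f^{(\gamma)}_{1\otimes2}$ inside a fixed small neighbourhood of $\ze_i^{(1)}$, with $\ze_i(\gamma)=\ze_i^{(1)}+O(\gamma)$. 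Because $f^{(\gamma)}_{1\otimes2}$ has real coefficients, $\overline{\ze_i(\gamma)}$ is also a zero of $f^{(\gamma)}_{1\otimes2}$; and as $\ze_i^{(1)}\in\R$, we have $\overline{\ze_i(\gamma)}=\ze_i^{(1)}+O(\gamma)$ as well, so $\overline{\ze_i(\gamma)}$ lies in the same neighbourhood. The uniqueness statement then forces $\overline{\ze_i(\gamma)}=\ze_i(\gamma)$, i.e. $\ze_i(\gamma)\in\R$.

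Then I would deduce the sign statement from Lemma \ref{Lem:LimitDerTwist}. Since $\ze_i^{(k)}$ is real and $\vp_k$ has real Laurent coefficients, $\vp_k'\bigl(\ze_i^{(k)}\bigr)$ is real, and it is non-zero because $\ze_i^{(k)}$ is a simple zero. Likewise, $\ze_i(\gamma)$ having just been shown real and $\vp_{1\otimes2,\gamma}$ having real Laurent coefficients, $\vp_{1\otimes2,\gamma}'\bigl(\ze_i(\gamma)\bigr)$ is real. Lemma \ref{Lem:LimitDerTwist} gives $\vp_{1\otimes2,\gamma}'\bigl(\ze_i(\gamma)\bigr)=\vp_k'\bigl(\ze_i^{(k)}\bigr)+O(\gamma)$, an equality of real numbers whose error term tends to $0$ as $\gamma\to0$; hence for $|\gamma|$ small enough the error is smaller in modulus than $\bigl|\vp_k'(\ze_i^{(k)})\bigr|$, so the two real numbers $\vp_{1\otimes2,\gamma}'\bigl(\ze_i(\gamma)\bigr)$ and $\vp_k'\bigl(\ze_i^{(k)}\bigr)$ share the same sign. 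As there are only finitely many indices $i$, a single small-enough range of $\gamma$ works for all of them simultaneously.

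The only genuinely delicate point is the reality of $\ze_i(\gamma)$: one must be sure that the uniqueness of the isolated root near $\ze_i^{(k)}$ furnished by Lemma \ref{Lem:LimitSol} really does rule out a complex-conjugate pair of roots collapsing onto the same real limit. This is guaranteed precisely because $\ze_i^{(k)}$ is a \emph{simple} zero of the limiting polynomial, so exactly one root (with multiplicity) sits near it for small $\gamma$. Everything else is a soft continuity-and-reality argument, so no serious obstacle remains.
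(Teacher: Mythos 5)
Your proposal is correct and follows essentially the same route as the paper: reality of $\ze_i(\gamma)$ is obtained by noting that $\overline{\ze_i(\gamma)}$ is also a zero satisfying the same asymptotic expansion (equivalently, lying in the same small neighbourhood of the simple real limit zero), so uniqueness forces it to coincide with $\ze_i(\gamma)$; the sign statement then follows from the reality of the derivatives and the convergence in Lemma \ref{Lem:LimitDerTwist}. The only cosmetic difference is that you phrase the reality argument via the real coefficients of the numerator polynomials rather than directly via the conjugacy-equivariance condition \eqref{Eq:RealSPhi} on the twist function, which is equivalent.
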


\begin{proof}
Let us consider the complex conjugate $\overline{\ze_i(\gamma)}$ of the zero $\ze_i(\gamma)$. By the reality condition \eqref{Eq:RealSPhi}, it is also a zero of the coupled twist function $\vp_{1\otimes2,\gamma}(z)$. The zero $\ze_i(\gamma)$ satisfies one of the two asymptotic expansions of Lemma \ref{Lem:DecouplingZeros}, depending on $k$. As $\ze_i^{(k)}$ and $\gamma$ are real, it is clear that the zero $\overline{\ze_i(\gamma)}$ satisfies the same asymptotic expansion. As this expansion is verified by a unique zero of $\vp_{1\otimes2,\gamma}(z)$, we indeed get $\ze_i(\gamma)=\overline{\ze_i(\gamma)}$.

From the reality of $\ze_i(\gamma)$ and the reality condition \eqref{Eq:RealSPhi} for $\vp_{1\otimes2,\gamma}(z)$, we get that the quantity $\vp_{1\otimes2,\gamma}'\bigl( \ze_i(\gamma) \bigr)$ is real. According to Lemma \ref{Lem:LimitDerTwist}, it also tends to the real and non-zero quantity $\vp'_k\bigl( \ze_i^{(k)} \bigr)$ when $\gamma$ goes to 0. Thus, for $\gamma$ small enough, $\vp_{1\otimes2,\gamma}'\bigl( \ze_i(\gamma) \bigr)$ and $\vp'_k\bigl( \ze_i^{(k)} \bigr)$ have the same sign.
\end{proof}

\subsection{Decoupling of the Hamiltonians}

Let us now prove the main result of Theorem \ref{Thm:Decoupling}, \textit{i.e.} the decoupling of the Hamiltonians. Applying Equations \eqref{Eq:Ham} and \eqref{Eq:QHZeros} to the case of the coupled model $\mathbb{M}^{\vp_{1 \otimes 2,\gamma}\,,\,\pi_{1\otimes 2}}_{\eb}$, we see that the coupled Hamiltonian is defined as a linear combination
\beqz
\Hc^{\vp_{1 \otimes 2,\gamma}\,,\,\pi_{1\otimes 2}}_{\eb} = \sum_{i=1}^M \epsilon_i \, \Q_i(\gamma)
\eeqz
of the quadratic charges
\beqz
\Q_i(\gamma) = -\frac{1}{2\vp'_{1\otimes 2,\gamma}\bigl(\zeta_i\bigr)} \int_\D \dd x \; \kappa \Bigl( \Sg_{1\otimes 2,\gamma}\bigl(\ze_i(\gamma),x\bigr), \Sg_{1\otimes 2,\gamma}\bigl(\ze_i(\gamma),x\bigr) \Bigr).
\eeqz
Let us fix $i\in\lbrace 1,\cdots, M\rbrace$ and set $k$ equal to 1 or 2, whether $i\in\lbrace 1,\cdots,M_1\rbrace$ or $i\in\lbrace M_1+1,\cdots,M\rbrace$. Combining the limits \eqref{Eq:Limit}, \eqref{Eq:LimitShift} and the asymptotic expansions of Lemmas \ref{Lem:DecouplingZeros}, \ref{Lem:LimitDerTwist}, we get that
\beqz
\Q_i(\gamma) \xrightarrow{\gamma\to0} \Q_i^{(k)} = -\frac{1}{2\vp'_k\left(\ze_i^{(k)}\right)} \int_\D \dd x \; \kappa \left( \Sg_k\left(\ze_i^{(k)},x\right), \Sg_k\left(\ze_i^{(k)},x\right) \right).
\eeqz
As in Theorem \ref{Thm:Decoupling}, we set the parameter $\epsilon_i$ to its corresponding value $\epsilon_i^{(k)}$ in the decoupled model. It is then clear that we have
\beqz
\Hc^{\vp_{1 \otimes 2,\gamma}\,,\,\pi_{1\otimes 2}}_{\eb} \xrightarrow{\gamma\to0} \sum_{i=1}^{M_1} \epsilon_i^{(1)} \Q_i^{(1)} + \sum_{i=M_1+1}^M \epsilon_i^{(2)} \Q_i^{(2)}.
\eeqz
This ends the demonstration of Theorem \ref{Thm:Decoupling}, as by definition, we have
\beqz
\Hc^{\vp_1,\pi_1}_{\ebb 1} = \sum_{i=1}^{M_1} \epsilon_i^{(1)} \Q_i^{(1)} \;\;\;\;\; \text{ and } \;\;\;\;\;\Hc^{\vp_2,\pi_2}_{\ebb 2} = \sum_{i=M_1+1}^M \epsilon_i^{(2)} \Q_i^{(2)}.
\eeqz

\subsection{Decoupling of the Lax pair}

Let us end this appendix by briefly discussing the decoupling of the Lax pair of the model, in complement of the discussion of Paragraph \ref{Par:DecouplingLax}. By construction, the coupled model $\mathbb{M}^{\vp_{1 \otimes 2,\gamma}\,,\,\pi_{1\otimes 2}}_{\eb}$ possesses a Lax pair $\bigl( \Lc_\gamma(z), \Mc_\gamma(z) \bigr)$. Similarly, the decoupled models $\mathbb{M}^{\vp_k,\pi_k}_{\ebb k}$ ($k=1,2$) 
possess Lax pairs $\bigl( \Lc^{(k)}(z), \Mc^{(k)}(z) \bigr)$.

\begin{lemma}
We have
\begin{equation*}
\bigl( \Lc_\gamma(z), \Mc_\gamma(z) \bigr) \xrightarrow{\gamma \to 0} \bigl( \Lc^{(1)}(z), \Mc^{(1)}(z) \bigr)
\end{equation*}
and
\begin{equation*}
\bigl( \Lc_\gamma(z+\gamma^{-1}), \Mc_\gamma(z+\gamma^{-1}) \bigr) \xrightarrow{\gamma \to 0} \bigl( \Lc^{(2)}(z), \Mc^{(2)}(z) \bigr).
\end{equation*}
\end{lemma}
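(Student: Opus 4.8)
The plan is to start from the representations \eqref{Eq:LZeros} and \eqref{Eq:MZeros} of the spatial and temporal Lax matrices as sums over the zeros of the twist function, applied to the coupled model $\mathbb{M}^{\vp_{1 \otimes 2,\gamma}\,,\,\pi_{1\otimes 2}}_{\eb}$:
\beqz
\Lc_\gamma(z,x) = \sum_{i=1}^M \frac{1}{\vp_{1\otimes2,\gamma}'\bigl(\ze_i(\gamma)\bigr)} \frac{\Sg_{1\otimes2,\gamma}\bigl(\ze_i(\gamma),x\bigr)}{z-\ze_i(\gamma)}, \qquad
\Mc_\gamma(z,x) = \sum_{i=1}^M \frac{\epsilon_i}{\vp_{1\otimes2,\gamma}'\bigl(\ze_i(\gamma)\bigr)} \frac{\Sg_{1\otimes2,\gamma}\bigl(\ze_i(\gamma),x\bigr)}{z-\ze_i(\gamma)},
\eeqz
and then to pass to the limit $\gamma\to0$ term by term, splitting each sum according to whether $i\in\lbrace1,\cdots,M_1\rbrace$ or $i\in\lbrace M_1+1,\cdots,M\rbrace$, exactly as in the proof of the decoupling of the Hamiltonians given above.

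For the first assertion, I would fix $z$ away from the zeros $\ze_i^{(1)}$ and examine the two blocks separately. For $i\le M_1$, Lemma \ref{Lem:DecouplingZeros} gives $\ze_i(\gamma)\to\ze_i^{(1)}$, Lemma \ref{Lem:LimitDerTwist} gives $\vp_{1\otimes2,\gamma}'\bigl(\ze_i(\gamma)\bigr)\to\vp_1'\bigl(\ze_i^{(1)}\bigr)\neq0$, and the limit \eqref{Eq:Limit} together with the continuity of $\Sg_{1\otimes2,\gamma}$ at the regular point $\ze_i^{(1)}$ (legitimate since the zeros are distinct from the positions of the sites) yields $\Sg_{1\otimes2,\gamma}\bigl(\ze_i(\gamma),x\bigr)\to\Sg_1\bigl(\ze_i^{(1)},x\bigr)$; hence the $i$-th term converges to the $i$-th term of $\Lc^{(1)}(z,x)$. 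For $i>M_1$, Lemma \ref{Lem:DecouplingZeros} gives $\ze_i(\gamma)=\gamma^{-1}+\ze_i^{(2)}+O(\gamma)$, so $\bigl(z-\ze_i(\gamma)\bigr)^{-1}\to0$, while the coefficient $\Sg_{1\otimes2,\gamma}\bigl(\ze_i(\gamma),x\bigr)/\vp_{1\otimes2,\gamma}'\bigl(\ze_i(\gamma)\bigr)$ stays bounded as $\gamma\to0$ (by \eqref{Eq:LimitShift} and Lemma \ref{Lem:LimitDerTwist} it tends to $\Sg_2\bigl(\ze_i^{(2)},x\bigr)/\vp_2'\bigl(\ze_i^{(2)}\bigr)$), so these terms vanish in the limit. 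This gives $\Lc_\gamma(z,x)\to\Lc^{(1)}(z,x)$, and the identical computation with the factors $\epsilon_i$ (which are fixed to the $\epsilon_i^{(k)}$'s as in Theorem \ref{Thm:Decoupling}) gives $\Mc_\gamma(z,x)\to\Mc^{(1)}(z,x)$.

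For the second assertion I would perform the translation $z\mapsto z+\gamma^{-1}$ of the spectral parameter and split the sum again. Now the roles of the two blocks are reversed: for $i>M_1$ one has $z+\gamma^{-1}-\ze_i(\gamma)=z-\ze_i^{(2)}+O(\gamma)$, so the denominators converge to the finite nonzero values $z-\ze_i^{(2)}$, and with the numerator limits coming from \eqref{Eq:LimitShift} and Lemmas \ref{Lem:DecouplingZeros} and \ref{Lem:LimitDerTwist} the $i$-th term tends to the $i$-th term of $\Lc^{(2)}(z,x)$; whereas for $i\le M_1$ one has $z+\gamma^{-1}-\ze_i(\gamma)\to\infty$ while the coefficient stays bounded, so these terms drop out. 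This yields $\Lc_\gamma(z+\gamma^{-1},x)\to\Lc^{(2)}(z,x)$ and likewise $\Mc_\gamma(z+\gamma^{-1},x)\to\Mc^{(2)}(z,x)$. The step I expect to require the most care is the uniform control of the ``wrong-block'' coefficients $\Sg_{1\otimes2,\gamma}\bigl(\ze_i(\gamma),x\bigr)/\vp_{1\otimes2,\gamma}'\bigl(\ze_i(\gamma)\bigr)$ whose arguments run off to infinity: this is precisely where one uses the decomposition $\Sg_{1\otimes2,\gamma}(w,x)=\Sg_1(w,x)+\Sg_2(w-\gamma^{-1},x)$ with each single-model Gaudin Lax matrix regular at, and vanishing at, infinity, so that the relevant evaluations stay bounded while the denominators, kept away from zero by Lemma \ref{Lem:LimitDerTwist}, never cause a blow-up.
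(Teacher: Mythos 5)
Your proposal is correct and follows essentially the same route as the paper's own proof: both start from the zero-based representations \eqref{Eq:LZeros} and \eqref{Eq:MZeros} of the coupled Lax pair and pass to the limit using \eqref{Eq:Limit}, \eqref{Eq:LimitShift} and Lemmas \ref{Lem:DecouplingZeros} and \ref{Lem:LimitDerTwist}. The only difference is that the paper leaves the term-by-term analysis implicit, whereas you spell out why the ``wrong-block'' contributions vanish (bounded coefficient over a denominator running off to infinity), which is exactly the detail the paper's ``direct application'' phrase is hiding.
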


\begin{proof}
From Equations \eqref{Eq:LZeros} and \eqref{Eq:MZeros}, we have
\begin{equation*}
\Lc^{(1)}(z) = \sum_{i=1}^{M_1} \frac{1}{\vp_1'\left( \ze_i^{(1)}\right)} \frac{\Sg_1\left( \ze_i^{(1)}\right)}{z-\ze_i^{(1)}} \;\;\;\;\; \text{ and } \;\;\;\;\; \Mc^{(1)}(z) = \sum_{i=1}^{M_1} \frac{\epsilon_i^{(1)}}{\vp_1'\left( \ze_i^{(1)}\right)} \frac{\Sg_1\left( \ze_i^{(1)}\right)}{z-\ze_i^{(1)}},
\end{equation*}
as well as
\begin{equation*}
\Lc^{(2)}(z) = \sum_{i=M_1+1}^{M} \frac{1}{\vp_1'\left( \ze_i^{(2)}\right)} \frac{\Sg_2\left( \ze_i^{(2)}\right)}{z-\ze_i^{(2)}} \;\;\;\;\; \text{ and } \;\;\;\;\; \Mc^{(2)}(z) = \sum_{i=M_1+1}^{M} \frac{\epsilon_i^{(2)}}{\vp_1'\left( \ze_i^{(2)}\right)} \frac{\Sg_2\left( \ze_i^{(2)}\right)}{z-\ze_i^{(2)}}.
\end{equation*}
Similarly,
\begin{equation*}
\Lc_\gamma(z) = \sum_{i=1}^M \frac{1}{\vp_{1\otimes2,\gamma}'\bigl( \ze_i(\gamma) \bigr)} \frac{\Sg_{1\otimes2,\gamma}\bigl( \ze_i(\gamma) \bigr)}{z-\ze_i(\gamma)} \;\;\;\;\; \text{ and } \;\;\;\;\; \Mc_\gamma(z) = \sum_{i=1}^M \frac{\epsilon_1}{\vp_{1\otimes2,\gamma}'\bigl( \ze_i(\gamma) \bigr)} \frac{\Sg_{1\otimes2,\gamma}\bigl( \ze_i(\gamma) \bigr)}{z-\ze_i(\gamma)} .
\end{equation*}
The lemma then follows from a direct application of the limits \eqref{Eq:Limit}, \eqref{Eq:LimitShift} and the asymptotic expansions of Lemmas \ref{Lem:DecouplingZeros}, \ref{Lem:LimitDerTwist}.
\end{proof}

\section[Homogeneous Yang-Baxter deformation of the integrable coupled $\s$-model]{Homogeneous Yang-Baxter deformation of the integrable coupled $\bm\s$-model}
\label{App:hYB}

\subsection{The undeformed model}
\label{App:Undef}

We aim to apply a homogeneous Yang-Baxter deformation to the integrable coupled model of Subsection \ref{SubSec:CoupledPCM}. This deformation, described in Subsection \ref{SubSubSec:hYB}, can only be applied to a site without Wess-Zumino term. Thus, we shall suppose that one of the sites of the coupled model has no Wess-Zumino term. For simplicity, let us suppose that it is the site $(N)$. We thus require the level $\kc N$ to be equal to 0. From Equation \eqref{Eq:ZerosToLevels}, we see that this is equivalent to require
\begin{equation}\label{Eq:kNZero}
\frac{\vpp N '(z_N)}{\vpp N(z_N)} + \frac{\vpm N '(z_N)}{\vpm N(z_N)} = \sum_{i=1}^{2N} \frac{1}{\po_N-\ze_i} - \sum_{r=1}^{N-1} \frac{2}{z_N-z_r} = 0.
\end{equation}
This constraint can be solved by fixing one of the zeros of the twist function. We shall then suppose that $\ze_{2N}$ is given by
\begin{equation}\label{Eq:Z2NUndef}
\ze_{2N} = z_N + \left( \sum_{i=1}^{2N-1} \frac{1}{\po_N-\ze_i} - \sum_{r=1}^{N-1} \frac{2}{z_N-z_r} \right)^{-1},
\end{equation}
thus ensuring $\kc N=0$. The free parameters of the model are then the level $\ell^\infty$, the positions $\po_1,\cdots,\po_N$ and the zeros $\ze_1,\cdots,\ze_{2N-1}$ (up to dilation and translation of the spectral parameter).

\subsection{Twist function, Gaudin Lax matrix and Hamiltonian}

The hYB deformation of the coupled model described above possesses the same twist function 
\begin{equation}\label{Eq:TwisthYB}
\vp_{\hYB}(z) = \sum_{r=1}^{N-1} \left( \frac{\lc r}{(z-z_r)^2} - \frac{2\kc r}{z-z_r} \right) + \frac{\lc N}{(z-z_N)^2} - \ell^\infty,
\end{equation}
but differs through its Gaudin Lax matrix, which is now given by
\begin{equation}\label{Eq:ShYb}
\Sg_{\hYB}(z) = \sum_{r=1}^{N-1} \left( \frac{\lc r \, \jb r}{(z-z_r)^2} + \frac{\Xb r  - \kc r \, \jb r  - \kc r \, \Wb r }{z-z_r} \right) + \frac{\lc N \, \jb N - \Rb N \Xb N}{(z-z_N)^2} + \frac{\Xb N}{z-z_N},
\end{equation}
with $\Rb N = R_{\gb N} = \Ad_{\gb N}^{-1} \circ R \circ \Ad_{\gb N}$.

The Hamiltonian  $\Hc_{\hYB}$ of the deformed model is defined as usual as the linear combination \eqref{Eq:Ham} of the quadratic charges $\Q_i$. In order to construct an actual deformation (which gives back the initial model when $R=0$), we choose the parameters $\epsilon_i$ of the deformed Hamiltonian to be the same as the undeformed one. As in the undeformed case, we shall not try to describe explicitly the Hamiltonian of the model and instead will focus on finding its Lagrangian formulation, starting with the expression of the Lax pair.

\subsection{Lagrangian Lax pair of the model}

\paragraph{Lagrangian Lax pair from interpolation.} Recall that in the undeformed case, we found the expression of the Lagrangian Lax pair by interpolation, using mainly the fact that the evaluation of the Lax pair $\Lc_\pm(z)$ at the position $z=\po_r$ yields the light-cone current $\jb r_\pm$ (see Equation \eqref{Eq:jpmLpm}). Going through the proof of this result in Subsection \ref{SubSubSec:LagNSites}, one checks that it just relies on the facts that the site $(r)$ is associated with the PCM+WZ realisation. Thus, this proof also applies in the present deformed case for the sites $(1)$ to $(N-1)$, which are still associated with a PCM+WZ realisation, hence:
\beqz
\Lch_\pm(\po_r) = \jb r_\pm, \;\;\;\;\;\; \forall\,r\in\lbrace 1,\cdots,N-1\rbrace.
\eeqz
This property suggests to use a similar interpolation method as the one we used in the undeformed case. Recall that this method could be applied because the light-cone components $\Lc_\pm(z)$ of the Lax pair possess exactly $N$ simple poles (the zeros $\ze_i$ for $i \in I_\pm$ of the twist function). This property is still true for the deformed case as we did not change the coefficients $\epsilon_i$. Thus, similarly to Equation \eqref{Eq:LaxLag} in the undeformed case, we can write
\begin{equation}\label{Eq:hYBLaxLag}
\Lch_\pm(z) = \sum_{r=1}^{N-1} \frac{\vppm r(z_r)}{\vppm r(z)} \jb r_\pm + \frac{\vppm N(z_N)}{\vppm N(z)} \Jb N_\pm,
\end{equation}
with some $\g_0$-valued current $\Jb N_\pm$ satisfying
\begin{equation} \label{Eq:D33}
\Lch_\pm(\po_N) = \Jb N_\pm.
\end{equation}

\paragraph{Dynamics of the field $\bm{\gb N(x)}$.} To find the expression of the current $\Jb N_\pm$, let us follow the method of Subsection \ref{SubSubSec:LagNSites} that led us to Equation \eqref{Eq:jpmLpm} in the underformed case. From the Poisson brackets \eqref{Eq:PBgNsites}, we get
\beqz
\gb N\ti{1}(x)^{-1} \bigl\lbrace \gb N\ti{1}(x), \Sg_{\hYB}(\ze_i,y) \bigr\rbrace = \frac{C\ti{12}}{\po_N-\ze_i}\delta_{xy} + \frac{\Rb N\ti{2}C\ti{12}}{(\po_N-\ze_i)^2}\delta_{xy}.
\eeqz
From the expression \eqref{Eq:QHZeros} of $\Q_i^{\hYB}$ and the skew-symmetry of $\Rb N$, we get
\beqz
\gb N(x)^{-1} \bigl\lbrace \Q^{\hYB}_i, \gb N(x) \bigr\rbrace = \frac{1}{\vp'(\ze_i)} \frac{\Sg_{\hYB}(\ze_i,x)}{\po_N-\ze_i} - \frac{1}{\vp'(\ze_i)} \frac{\Rb N \Sg_{\hYB}(\ze_i,x)}{(\po_N-\ze_i)^2}.
\eeqz
Thus, the Hamiltonian expression of the field $\jb N_0=\gb N\null^{-1}\p_t \gb N$ is given by
\beqz
\jb N_0(x) = \gb N(x)^{-1} \bigl\lbrace \Hc_{\hYB}, \gb N(x) \bigr\rbrace = \sum_{i=1}^{2N} \left( \frac{\epsilon_i}{\vp'(\ze_i)} \frac{\Sg_{\hYB}(\ze_i,x)}{\po_N-\ze_i} - \frac{\epsilon_i}{\vp'(\ze_i)} \frac{\Rb N \Sg_{\hYB}(\ze_i,x)}{(\po_N-\ze_i)^2} \right) .
\eeqz
From the expression \eqref{Eq:MZeros} of the time component of the Lax pair, we get
\beqz
\jb N_0(x) = \Mch(\po_N,x) + \Rb N \Mch\,'\,(\po_N,x),
\eeqz
where $\Mch\,'\,(z,x)$ denotes the derivative of $\Mch(z,x)$ with respect to the spectral parameter $z$. A similar computation with all $\epsilon_i$'s replaced by $1$, \textit{i.e.} with $\Hc$ replaced by the momentum $\Pc_{\Ac}$, yields the same relation with $\jb N(x)$ and $\Lch(\po_N,x)$. Thus, we also have
\begin{equation}\label{Eq:hYBjNLag}
\jb N_\pm(x) = \Lch_\pm(\po_N,x) + \Rb N \Lch_\pm\,'\,(\po_N,x).
\end{equation}
This is the generalisation for the deformed case of the relation \eqref{Eq:jpmLpm}.

\paragraph{Finding the current $\bm{\Jb N_\pm}$.} We will use this equation to find the 
Lagrangian expression of the current $\Jb N_\pm$. Indeed, as explained in the previous 
paragraph, the current $\Lch_\pm(\po_N)$ is equal to $\Jb N_\pm$. Moreover, from the 
interpolation formula \eqref{Eq:hYBLaxLag}, one can compute explicitly the derivative of $\Lch_\pm(z)$ with respect to $z$ in terms of the $\jb r_\pm$'s ($r\in\lbrace1,\cdots,N-1\rbrace$) and $\Jb N_\pm$. In particular, using the identity \eqref{Eq:IdentityDer} and the constraint \eqref{Eq:kNZero}, one gets
\begin{equation}\label{Eq:DerLhYB}
\Lch_\pm\,'\,(\po_N) = \pm \frac{2}{\lc N} \left( \rho_{NN} \,\Jb N_\pm + \sum_{r=1}^{N-1} a^\pm_{r} \,\jb r_\pm \right),
\end{equation}
with the coefficients $\lc N$ and $\rho_{NN}$ as in Equations \eqref{Eq:ZerosToLevels} 
and \eqref{Eq:Rhorr} respectively,
and where for simplicity we defined
\beqz
a^+_{r} = \rho_{rN} \;\;\;\; \text{ and } \;\;\;\; a^-_r = \rho_{Nr},
\eeqz
with $\rho_{rN}$ and $\rho_{Nr}$ given by Equation \eqref{Eq:Rhors}. Thus, from Equation \eqref{Eq:hYBjNLag}, we get
\beqz
\jb N_\pm = \bigl( 1 \pm \vt \, R^{(N)} \bigr) \Jb N_\pm \pm \frac{2}{\lc N} \sum_{r=1}^{N-1} a^\pm_{r} \, \Rb N\jb r_\pm,
\eeqz
with
\begin{equation}\label{Eq:hYBkappa}
\vt = \frac{2\rho_{NN}}{\lc N} = - \frac{\vpp N'(z_N)}{2\vpp N(z_N)} + \frac{\vpm N'(z_N)}{2\vpm N(z_N)}.
\end{equation}
One can then finally obtain the Lagrangian expression of the current $\Jb N_\pm$ as
\begin{equation}\label{Eq:DefJN}
\Jb N_\pm = \frac{1}{1 \pm \vt \, R^{(N)}} \left( \jb N_\pm \mp \frac{2}{\lc N} \sum_{r=1}^{N-1} a^\pm_{r} \,\Rb N\jb r_\pm \right).
\end{equation}
In the undeformed limit $R=0$, we recover $\Jb N_\pm=\jb N_\pm$, as expected.

\subsection{Inverse Legendre transformation and action of the model}

\paragraph{Inverse Legendre transform of the fields.} To perform the inverse Legendre transform of the model, one needs to find the relation between the conjugated momenta, encoded in the currents $\Xb r$, and the time derivatives of the coordinates fields, encoded in the currents $\jb r_0=\gb r\null^{-1}\p_t \gb r$. In the undeformed case, this was done in Equation \eqref{Eq:XLag}, using the identity \eqref{Eq:ExctractLaxX}. The proof of this identity relies solely on the fact that the site $(r)$ is attached to a PCM+WZ realisation. Thus, it is also true in the present deformed case for the sites $(1)$ to $(N-1)$. This allows one to extract the Lagrangian expression of $\Xb r - \kc r \, \Wb r$, for $r\in\lbrace 1,\cdots,N-1\rbrace$, in a very similar way than Equation \eqref{Eq:XLag} in the undeformed case. The main difference with the undeformed case is that the derivatives $\Lc'_\pm(\po_r)$ appearing in the identity \eqref{Eq:ExctractLaxX} have now to be computed with the expression \eqref{Eq:hYBLaxLag} of the deformed Lax pair $\Lch_\pm(z)$, where the current $\jb N_\pm$ is replaced by $\Jb N_\pm$. In the end, one then gets, for $r\in\lbrace 1,\cdots,N-1\rbrace$,
\begin{equation}\label{Eq:XhYB}
\Xb r - \kc r \, \Wb r = \sum_{s=1}^N \Bigl( \rho_{sr} \, \Jb s_+ + \rho_{rs} \, \Jb s_- \Bigr),
\end{equation}
where we defined $\Jb s_\pm = \jb s_\pm$ when $s\in\lbrace 1,\cdots, N-1 \rbrace$ for simplicity and with the coefficients $\rho_{rs}$ defined as in the undeformed case by Equation \eqref{Eq:Rho}.

Moreover, using the expressions \eqref{Eq:TwisthYB} and \eqref{Eq:ShYb} of $\vp_{\hYB}(z)$ and $\Sg_{\hYB}(z)$, together with the definition \eqref{Eq:Lax} of $\Lch(z)$, one finds
\beqz
\Xb N = \lc N \, \Lch\,'\,(z_N).
\eeqz
Using Equation \eqref{Eq:DerLhYB} and the fact that $\Lch(z) = \frac{1}{2} \left( \Lch_+(z) - \Lch_-(z) \right)$, one checks that
\begin{equation}\label{Eq:XNhYB}
\Xb N = \sum_{r=1}^N \Bigl( \rho_{rN} \, \Jb r_+ + \rho_{Nr} \, \Jb r_- \Bigr),
\end{equation}
\textit{i.e.} that Equation \eqref{Eq:XhYB} is also satisfied for $r=N$, as $\kc N =0$.

\paragraph{Lagrangian expression of the currents $\bm{\Sg_{\hYB}(\ze_i)}$ and the Hamiltonian.} Following the method developed in the undeformed case, we know turn to the Lagrangian expression of the current $\Sg_{\hYB}(\ze_i)$. One easily checks that the computations of the undeformed case stay true in the present one if one replaces the currents $\jb N_\pm$ by $\Jb N_\pm$. Thus, in the end, one finds
\beqz
\Sg_{\hYB}(\ze_i) = \pm \sum_{r=1}^N b_{ir} \, \Jb r_\pm,
\eeqz
similarly to Equation \eqref{Eq:SZeroLag} in the undeformed case (and where the sign $\pm$ depends whether $\epsilon_i=\pm 1$, \textit{i.e.} whether $i\in I_\pm$).

The same argument works for the computation of the Lagrangian expression of the Hamiltonian, which led to Equation \eqref{Eq:HLag} in the undeformed case. Here, we then get
\begin{equation}\label{Eq:HhYB}
\Hc_{\hYB} = \int_\D \dd x \sum_{r,s=1}^N \left( c_{rs}^+ \, \kappa\left(\Jb r_+, \Jb s_+\right) + c_{rs}^- \, \kappa\left(\Jb r_-, \Jb s_-\right) \right),
\end{equation}
with the coefficients $c^\pm_{rs}$ given by Equations \eqref{Eq:Crs} and \eqref{Eq:Crr}.

\paragraph{Action in terms of $\bm{\Jb r_\pm}$'s.} Let us finally turn to the computation of 
the action. The expression \eqref{Eq:LegendreInverseNSites} of the inverse Legendre transform 
 is also valid for the deformed case (remembering that now $\kc N=0$). Thus, one can write
\begin{eqnarray}
S\bigl[ \gb 1, \cdots, \gb N \bigr] &=& \sum_{r=1}^N \left( \frac{1}{2} \iint_{\R \times \D} \dd t \, 
\dd x \; \kappa\left( \Xb r - \kc r \, \Wb r, \Jb r_+ + \Jb r_- \right) \right) - \int_\R \dd t \; \Hc 
\label{Eq:hYBLegendreInverse}\\ 
&& \hspace{20pt} + \frac{1}{2} \iint_{\R \times \D} \dd t \, \dd x \; \kappa\left( \Xb N , 
\jb N_+ - \Jb N_+ + \jb N_- - \Jb N_- \right) + \sum_{r=1}^{N-1} \kc r \; \Ww {\gb r}. \notag
\end{eqnarray}
The expressions \eqref{Eq:XhYB} and \eqref{Eq:HhYB} of $\Xb r - \kc r \, \Wb r$ and $\Hc_{\hYB}$ are 
the same as the ones \eqref{Eq:XLag} and \eqref{Eq:HLag} in the undeformed case but 
with $\jb r_\pm$ replaced by $\Jb r_\pm$. Thus, the computation of the first line in 
Equation \eqref{Eq:hYBLegendreInverse} follows exactly the method developed in 
the undeformed case, yielding
\begin{eqnarray}
S\bigl[ \gb 1, \cdots, \gb N \bigr] &=&   \iint_{\R \times \D} \dd t \, \dd x \; \sum_{r,s=1}^N \rho_{rs} \, 
\kappa\left( \Jb r_+, \Jb s_- \right)  + \sum_{r=1}^{N-1} \kc r \; \Ww {\gb r} \\
&& \hspace{60pt} + \frac{1}{2} \iint_{\R \times \D} \dd t \, \dd x \; \kappa\left( \Xb N ,
\jb N_+ - \Jb N_+ + \jb N_- - \Jb N_- \right). \notag
\end{eqnarray}
To compute the last line in the above expression, we use \eqref{Eq:XNhYB} and \eqref{Eq:DefJN} in the form
\beqz
\jb N_+ - \Jb N_+ = \frac{2}{\lc N} \sum_{r=1}^N \rho_{rN}\, R^{(N)} \Jb r_+ \;\;\;\; \text{ and } \;\;\;\; \jb N_- - \Jb N_- = -\frac{2}{\lc N} \sum_{r=1}^N \rho_{Nr}\, R^{(N)} \Jb r_-.
\eeqz
We get
\begin{align*}
\kappa\left( \Xb N , \jb N_+ - \Jb N_+ + \jb N_- - \Jb N_- \right)
 =& \frac{2}{\lc N} \sum_{r,s=1}^N \kappa\left( \rho_{rN} \, \Jb r_+ + \rho_{Nr} \, \Jb r_-, \rho_{sN} \, R^{(N)}\Jb s_+ - \rho_{Ns} \, R^{(N)}\Jb s_- \right) \\
 =& \frac{2}{\lc N} \sum_{r,s=1}^N \left( \rho_{rN} \rho_{sN}\, \kappa\left( \Jb r_+ , R^{(N)}\Jb s_+  \right) - \rho_{Nr} \rho_{Ns}\, \kappa\left( \Jb r_- , R^{(N)}\Jb s_-  \right) \right. \\
 & \hspace{30pt} \left. - \rho_{rN} \rho_{Ns}\, \kappa\left( \Jb r_+ , R^{(N)}\Jb s_-  \right) + \rho_{Nr} \rho_{sN}\, \kappa\left( \Jb r_- , R^{(N)}\Jb s_+  \right) \right) \\
 =& \frac{2}{\lc N} \sum_{r,s=1}^N \left( \rho_{rN} \rho_{sN}\, \kappa\left( \Jb r_+ , R^{(N)}\Jb s_+  \right) - \rho_{Nr} \rho_{Ns}\, \kappa\left( \Jb r_- , R^{(N)}\Jb s_-  \right) \right. \\
 & \hspace{60pt} \left.  - 2 \rho_{rN} \rho_{Ns}\, \kappa\left( \Jb r_+ , R^{(N)}\Jb s_-  \right) \right),
\end{align*}
where in the last line, we used the skew-symmetry of $R^{(N)}$. Also by this skew-symmetry, it is clear that $\kappa\left( \Jb r_\pm , R^{(N)}\Jb s_\pm  \right)$ are skew-symmetric under the exchange of $r$ and $s$. Yet, these terms are summed on $r$ and $s$, contracted with $\rho_{rN} \rho_{sN}$ and $\rho_{Nr} \rho_{Ns}$ which are symmetric. Thus, the corresponding sum vanishes and we get
\beqz
\kappa\left( \Xb N , \jb N_+ - \Jb N_+ + \jb N_- - \Jb N_- \right) =  -\frac{4}{\lc N} \sum_{r,s=1}^N \rho_{rN} \rho_{Ns} \, \kappa\left( \Jb r_+ , R^{(N)}\Jb s_-  \right).
\eeqz
Inserting this in the expression for the action above, we get
\begin{equation}\label{Eq:hYBActionJ}
S\bigl[ \gb 1, \cdots, \gb N \bigr] = \iint_{\R \times \D} \dd t \, \dd x \; \sum_{r,s=1}^N  \kappa\left( \Jb r_+, \Ot rs\, \Jb s_- \right)  + \sum_{r=1}^{N-1} \kc r \; \Ww {\gb r},
\end{equation}
with the operators $\Ot rs$ defined as
\begin{equation}\label{Eq:DefOtilde}
\Ot rs = \rho_{rs} - \frac{2\rho_{rN} \rho_{Ns}}{\lc N} R^{(N)}.
\end{equation}

\paragraph{Action in terms of $\bm{\jb r_\pm}$'s.} In the previous paragraph, we obtained the closed expression \eqref{Eq:hYBActionJ} of the action in terms of the currents $\Jb r_\pm$. Using $\Jb r_\pm=\jb r_\pm$ for $r\in\lbrace1,\cdots,N-1\rbrace$ and the expression \eqref{Eq:DefJN} of $\Jb N_\pm$, one can rewrite the action in terms of the currents $\jb r_\pm$:
\begin{equation}\label{Eq:hYBActionj}
S\bigl[ \gb 1, \cdots, \gb N \bigr] = \iint_{\R \times \D} \dd t \, \dd x \; \sum_{r,s=1}^N  \, \kappa\left( \jb r_+, \Oc rs\, \jb s_- \right)  + \sum_{r=1}^{N-1} \kc r \; \Ww {\gb r},
\end{equation}
where we define
\beqz
\Oc rs = \Ot rs + \frac{2 \rho_{rN}\,\Ot Ns + 2 \rho_{Ns}\,\Ot rN}{\lc N} \frac{R^{(N)}}{1-\vt\, R^{(N)}} + \frac{4 \rho_{rN}\rho_{Ns}}{\lc N^2} \frac{\Ot NN \, R^{(N)\,2}}{\bigl(1-\vt\,R^{(N)}\bigr)^2},
\eeqz
for $r,s\in\lbrace 1,\cdots,N\rbrace$. Using the definitions \eqref{Eq:hYBkappa} of $\vt$ and \eqref{Eq:DefOtilde} of $\Ot rs$, we find that
\beqz
\Oc rs = \rho_{rs} + \frac{2\rho_{rN}\rho_{Ns}}{\lc N} \frac{R^{(N)}}{1-\vt\,R^{(N)}}
\eeqz
for all $r,s\in\lbrace 1,\cdots,N\rbrace$. Furthermore, we note that for one of the indices equal to $N$, this expression simply reduces to
\beqz
\Oc rN = \frac{\rho_{rN}}{1-\vt\,R^{(N)}} \;\;\;\;\; \text{ and } \;\;\;\;\; \Oc Nr = \frac{\rho_{Nr}}{1-\vt\,R^{(N)}},
\eeqz
for $r\in\lbrace 1,\cdots,N\rbrace$. 

Let us conclude by pointing out that it follows directly from Equation \eqref{Eq:hYBLaxLag} that the field equations of the deformed model are obtained from those of the undeformed model after replacing the current $j_\pm^{(N)}$ by the current $J_\pm^{(N)}$ given in \eqref{Eq:DefJN}. Moreover, it also follows from Equation \eqref{Eq:hYBLaxLag}, and from its consequence \eqref{Eq:D33}, that the current $J_\pm^{(N)}$ is flat on shell. Both properties may be checked directly in the Lagrangian formalism, using \eqref{Eq:DefJN} and the action \eqref{Eq:hYBActionj}.
 
 \section{Building blocks} 
  \label{App:Table}
In this appendix we summarise the local AGM description  
of integrable $\sigma$-models which may serve as elementary building blocks to construct 
other integrable $\sigma$-models. 

\bigskip

 \begin{tabular}{|c|c|c|}
\hline
  \vspace{-7pt} & & \\
 Model  & Twist function $\vp(z)$ & Gaudin Lax matrix $\Sg(z,x)$  \\[1.5mm]
  \hline
  \vspace{-7pt} & & \\
 PCM + WZ 
 &$\frac{K-K^{-1}\kay^2}{(z-K^{-1}\kay)^2} - \frac{2\kay}{z-K^{-1}\kay} - K$&
 $\frac{(K-K^{-1}\kay^2) \, j(x)}{(z-K^{-1}\kay)^2} + \frac{X(x)-\kay\,j(x)-\kay\,W(x)}{z-K^{-1}\kay}
$\\[2.8mm]
 \hline
  \vspace{-7pt} & & \\
 PCM 
 &$\frac{K}{z^2}-K$&
 $ \frac{K \, j(x) }{z^2} + \frac{X(x)}{z}$\\[2.3mm]
 \hline
  \vspace{-6pt} & & \\
 hYB 
 &$\frac{K}{z^2}-K$&
 $ \frac{K \, j(x) - R_gX(x)}{z^2} + \frac{X(x)}{z}
$\\[2.3mm]
   \hline
  \vspace{-7pt} & & \\
    NATD 
 &$\frac{K}{z^2}-K$&
 $\frac{\m(x)}{z^2} + \frac{K\,\p_x \vd(x)+[\m(x),\vd(x)]}{z}$\\[2.3mm]
 \hline
  \vspace{-7pt} & & \\
 iYB 
 &$\frac{K/( 2 c \eta)}{z- c \eta}+\frac{-K/( 2 c \eta)}{z+ c \eta} 
 - \frac{K}{1-c^2 \eta^2}$&
 $\frac{1}{2c}\frac{c X(x) - R_gX(x) + \frac{K}{\eta}j(x)}{z-c\eta} +  \frac{1}{2c}\frac{c X(x) 
 + R_gX(x) - \frac{K}{\eta}j(x)}{z+c\eta}
$\\[2.8mm]
 \hline
  \vspace{-7pt} & & \\
 $\lambda$
 &$\frac{K/( 2 \alpha)}{z- \alpha}+\frac{-K/( 2 \alpha)}{z+ \alpha} 
 - \frac{K}{1-\alpha^2}$&
 $ \frac{X(x) - \kay \, j(x) - \kay \, W(x)}{z-\alpha} - \frac{g(x)( X(x) + \kay \, j(x) - \kay \, W(x)) g(x)^{-1} }{z+\alpha}
$\\[2.8mm]
 \hline
 \end{tabular}

\providecommand{\href}[2]{#2}\begingroup\raggedright\endgroup

\end{document}